\DeclareSymbolFont{usualmathcal}{OMS}{cmsy}{m}{n}
\DeclareSymbolFontAlphabet{\mathcal}{usualmathcal}
\newcommand\abstr[1]{{\boldmath\textbf{#1}}}
\newlength\myheight  \settoheight\myheight{$\square$}
\theoremstyle{definition}
\newtheorem{theorem}{Theorem}[section]
\newtheorem{definition}[theorem]{Definition}
\newtheorem{proposition}[theorem]{Proposition}
\newtheorem{lemma}[theorem]{Lemma}
\newtheorem{example}[theorem]{Example}
\DeclareRobustCommand{\brho}
{\text{\reflectbox{$\mathbb{9}$}}}
\newcommand{\cF}{{\mathcal{F}}}
\newcommand{\cD}{{\mathcal{D}}}
\newcommand{\cG}{{\mathcal{G}}}
\newcommand{\bbd}{{\mathbb{d}}}
\newcommand{\bbi}{{\mathbb{i}}}
\newcommand{\bbL}{{\mathbb{L}}}
\newcommand{\fg}{{\mathfrak{g}}}
\newcommand{\bbDelta}{\mathbb{\Delta}}
\newcommand{\ad}{{\mathrm{ad}}}
\newcommand{\Ad}{{\mathrm{Ad}}}
\newcommand{\tE}{{\widetilde{E}}}
\renewcommand{\th}{{\tilde{h}}}
\newcommand{\brhoR}{{\mathbb{r}}}
\newcommand{\brhoL}{{\mathbb{l}}}
\newcommand{\pp}{{\partial}}
\newcommand{\tell}{\tilde \ell}
\newcommand{\tello}{\tilde \ell^{-1}}
\newcommand{\cH}{{\mathcal{H}}}
\newcommand{\klimcik}{\text{Klim\v{c}\'{i}k}}
\newcommand{\severa}{\text{\v{S}evera}}
\g@addto@macro\bfseries{\boldmath}
\begin{document}

\pagestyle{SPstyle}

\begin{center}{\Large \textbf{\color{scipostdeepblue}{
Beyond Noether: A covariant study of Poisson-Lie\\ symmetries in low dimensional field theory\\
}}}\end{center}

\begin{center}
\textbf{
Florian Girelli\textsuperscript{1$\star$},
Christopher Pollack\textsuperscript{1,2$\dagger$} and
Aldo Riello\textsuperscript{1,2$\ddagger$}
}
\end{center}

\begin{center}
{\bf 1} Department of Applied Mathematics, University of Waterloo,\\ 200 University Avenue West, Waterloo, Ontario, Canada, N2L 3G1
\\
{\bf 2} Perimeter Institute for Theoretical Physics, 31 Caroline St. N.,\\ Waterloo, Ontario, Canada, N2L 2Y5
\\[\baselineskip]
$\star$ \href{mailto:florian.girelli@uwaterloo.ca}{\small florian.girelli@uwaterloo.ca}\,,\quad
$\dagger$ \href{mailto:cajpolla@uwaterloo.ca}{\small cajpolla@uwaterloo.ca}\,,\quad
$\ddagger$ \href{mailto:ariello@perimeterinstitute.ca}{\small ariello@perimeterinstitute.ca}
\end{center}

\section*{\color{scipostdeepblue}{Abstract}}
\abstr{%
We explore global Poisson-Lie (PL) symmetries using a Lagrangian, or ``covariant phase space'' approach, that manifestly preserves spacetime covariance. PL symmetries are the classical analog of quantum-group symmetries. In the Noetherian framework symmetries leave the Lagrangian invariant up to boundary terms and necessarily yield (on closed manifolds) $\fg^*$-valued conserved charges which serve as Hamiltonian generators of the symmetry itself. Non-trivial PL symmetries transcend this framework by failing to be symplectomorphisms and by admitting (conserved) non-Abelian group-valued momentum maps. In this paper we discuss various structural and conceptual challenges associated with the implementation of PL symmetries in field theory, focusing in particular on non-locality. We examine these issues through explicit examples of low-dimensional field theories with non-trivial PL symmetries: the deformed spinning top (or, the particle with curved momentum and configuration space) in 0+1D; the non-linear $\sigma$-model by \klimcik{} and \severa{} (KS) in 1+1D; and gravity with a cosmological constant in 2+1D. Although these examples touch on systems of different dimensionality, they are all ultimately underpinned by 2D $\sigma$-models, specifically the A-model and KS model.
}

\begin{center}
\begin{tabular}{lr}
\begin{minipage}{0.6\textwidth}
\raisebox{-1mm}[0pt][0pt]{\includegraphics[width=12mm]{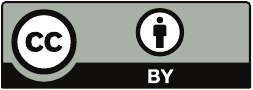}}
{\small Copyright F. Girelli {\it et al}. \newline
This work is licensed under the Creative Commons \newline
\href{http://creativecommons.org/licenses/by/4.0/}{Attribution 4.0 International License}. \newline
Published by the SciPost Foundation.
}
\end{minipage}
&
\begin{minipage}{0.4\textwidth}
    \noindent\begin{minipage}{0.26\textwidth}
        {\small Received \newline Accepted \newline Published}
    \end{minipage}
    \noindent\begin{minipage}{0.4\textwidth}
        {\small 2025-08-04 \newline 2026-02-05 \newline 2026-02-24}
    \end{minipage} 
    \begin{minipage}{0.25\textwidth}
    \begin{center}
    \href{https://crossmark.crossref.org/dialog/?doi=10.21468/SciPostPhys.20.2.057&amp;domain=pdf&amp;date_stamp=2026-02-24}{\includegraphics[width=7mm]{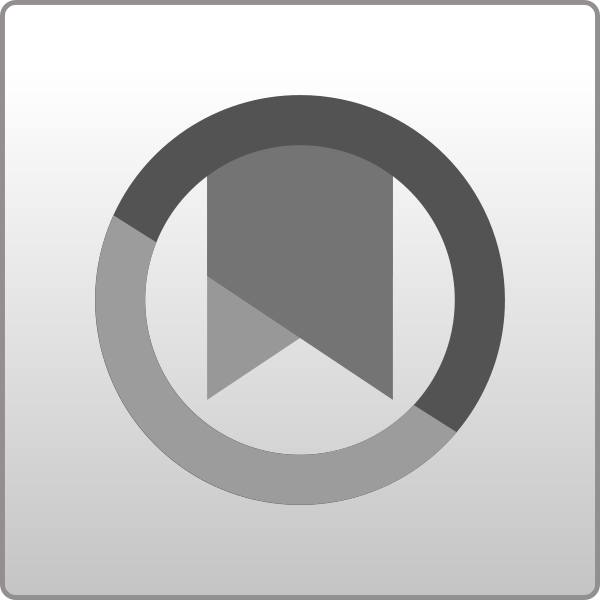}}\\
    \tiny{Check for}\\
    \tiny{updates}
    \end{center}
    \end{minipage}
    \\\\
    \small{\doi{10.21468/SciPostPhys.20.2.057}}
\end{minipage}
\end{tabular}
\end{center}
 
\vspace{10pt}
\noindent\rule{\textwidth}{1pt}
\tableofcontents
\noindent\rule{\textwidth}{1pt}
\vspace{10pt}

\section{Introduction}\label{sec:Introduction}

Symmetries are a fundamental aspect of analyzing and solving partial differential equations (PDEs). When the studied PDEs originate in a variational problem, through Noether's theorem, symmetries provide conserved or invariant quantities called \textit{charges}, which help find and classify solutions \cite{noether_invariante_1918,noethertranslation}. When a system has a maximal number of charges, in involution, it is said to be \textit{integrable} \cite{babelon-book_2003}.

In physics, equations of motion are {(usually)} PDEs  derived from {Hamilton's variational action principle involving a local} field functional called the \textit{Lagrangian} {which features at most first order derivatives of the fields}. 

Given a Lagrangian, the space of solutions $\cF_\text{EL}$ to its equations of motion,  called the \emph{shell}, comes equipped with a (weakly) symplectic structure $\Omega\in\Omega^2(\cF_\text{EL})$, yielding the so-called \emph{covariant phase space} $(\cF_\text{EL},\Omega)$ \cite{peierls, souriau1970structure,chernoffmarsden1974, kijowski1979symplectic, Zuckerman, AshtekarBombelliKoul, crnkovicwittencovphasespace, Crnkovic_1988, lee_local_1990,  Khavkine2014, blohmann_lagrangian_2023}.

Actions of a Lie group $\cG$, or Lie algebra $\fg$, on the space of (field) configurations that leave the Lagrangian invariant, possibly up to a boundary term, are called \textit{Noether symmetries}. Noether symmetries necessarily preserve the equations of motion, i.e. they map solutions of the equations of motion into other solutions thence defining actions on the covariant phase space. This action can be shown to be not only a symplectomorphism but also Hamiltonian.

The converse is in general \emph{not} true: there can be Lie group or Lie algebra symmetries of the equations of motion that are \emph{not} symmetries of the corresponding Lagrangian \cite[Section 4.2]{olver_applications_1986}. These \emph{non}-Noether symmetries, as we will discuss at length later, can still lead to physically meaningful conserved charges.

\begin{tolerant}{3000}
Noether's first theorem assigns to each Noether symmetry $\fg$ and to each solution of the equations of motion $\varphi\in \cF_{\text{EL}}$ a $\fg^*$-valued \emph{Noether current} $j^\mu$ that is conserved on-shell, $\nabla_\mu j^\mu|_\varphi = 0$. By integration over any Cauchy surface $C$, the Noether current yields a \emph{Noether charge} $q|_\varphi \in \fg^*$, which is \emph{independent} of the choice of $C$. Understood as a map $q \in \Omega^0(\cF_\text{EL})\otimes \fg^*$, the Noether charge is the Hamiltonian generator, or ``momentum map'', of the flow of the Noether symmetry over the covariant phase space.
\end{tolerant}

This analysis can also be performed in a non-spacetime covariant setting. Choosing a foliation of spacetime, one can recast the second order Lagrangian equations of motion into a Hamiltonian form, where configurations over a canonical phase space $\mathcal{P}_\text{can}$ evolve in time through a set of first order differential equations encoded in the Hamiltonian functional. The canonical and covariant phase spaces are in fact symplectomorphic, whence Noetherian symmetries can be expressed in terms of Hamiltonian actions over the canonical phase space. In the canonical---or Hamiltonian---framework, a Noether symmetry compatibility with the Lagrangian function is replaced with its compatibility with the Hamiltonian functional.

The study of integrable systems in the Hamiltonian framework led to the discovery of a new class of so-called \emph{Poisson-Lie symmetries}, which are a ``classical limit'' of quantum group (Hopf algebra) symmetries  \cite{gelfand_hamiltonian_1982, semenov-tyan-shanskii_what_1983,semenov-tian-shansky_dressing_1985, drinfeld_hamiltonian_1990, lu_poisson_1990, babelon_dressing_1991,babelon_dressing_1992, majid_1995,Chari_Pressley_2000, Babelon_Bernard_Talon_2003}. These symmetries feature conserved charges valued in a non-linear space, in fact a Lie group, denoted $\cG^*$, rather than in a vector space such as the dual of the Lie algebra $\fg^*$. Furthermore, Poisson-Lie symmetries fail to be symplectomorphisms, although this failure is tightly controlled by the group structure on $\cG^*$. In fact, for this class of symmetries it is possible to restore a generalized notion of symplectomorphism by equipping the symmetry group itself $\cG$ with a non-trivial (necessarily degenerate) Poisson structure so that its action on the (symplectic) phase space, seen as a map $\cG \times \mathcal{P}_\text{can} \to \mathcal{P}_\text{can}$ is a Poisson map. This generalization of the notion of symplectomorphism is thus possible if the Poisson structure on $\cG$ is related to the Lie group structure of $\cG^*$. The standard, Hamiltonian, notion is then recovered in the limit where the Poisson structure on $\cG$ is trivial (vanishes) and thus $\cG^*$ becomes Abelian and identifies with its Lie algebra, the vector space $(\fg^*,+)$.

From this overview one thing is clear: Poisson-Lie symmetries \emph{cannot} be Noetherian symmetries, since the latter necessarily yield conserved charges valued in a linear space generating Hamiltonian actions on phase space, whereas the former yield conserved charges valued in a non-linear space generating actions that fail to be symplectomorphisms. This is probably why, historically, these symmetries were often characterized as ``hidden symmetries'' \cite{Alvarez-Gaume:1988bek, Gomez:1992bj, Reshetikhin:1989qg} made apparent deep into the analysis of the model (e.g., from the existence of braiding relations amongst scattering amplitudes \cite{Alvarez-Gaume:1988bek, frohlich_spin-statistics_1991, witten_2_1988, babelon-book_2003}) far beyond the typical Lagrangian or Hamiltonian starting point of any field theoretic analysis.

In an effort to uncover a unifying covariant framework that encompasses both Noetherian and Poisson-Lie symmetries, in this manuscript we investigate in a number of examples of increasing complexity, the properties of Poisson-Lie symmetries from a \emph{spacetime covariant} perspective, focusing in particular on those aspects in which Poisson-Lie symmetries depart from the standard Noetherian framework and genuinely go beyond it.
For example, from very early on, we will stress how a tension exists between Poisson-Lie symmetries and \emph{locality}.

Our article is organized as follows. Section \S\ref{sec:locality} is dedicated to reviewing notions of locality for different structures on field space. Special emphasis is put on this as we will see that one can expect Poisson-Lie symmetries and their generators to be \textit{non-local}. In  section \S\ref{subsec:TheCPS}, we revisit the covariant phase space method. In section \S\ref{subsec:NewNoetherToPLgroupSymm}, we recall and emphasize the strength of Noether's \emph{construction in the covariant phase space} to build {conserved} charges in terms of \textit{local} currents when we have Lagrangian/Noetherian symmetries. We  propose then a framework for Poisson-Lie symmetries and charges in the covariant phase space context which goes beyond Noether's framework for Lagrangian symmetries. In \S\ref{subsec:PLsymmetries}, we recall for the unfamiliar reader the notions of Poisson-Lie groups, Drinfel'd and Heisenberg doubles, as well as that of Poisson-Lie symmetries of a symplectic (or Poisson) manifold. The subsequent sections study various examples of Poisson-Lie symmetric field theories in different spacetime dimensions:
\begin{itemize}
    \item  \textit{0+1D Field Theory: The Generalized Spinning Top \& Deformed Spinning Top} ({\S \ref{sec:PLin0+1D}})
    
    These models are about mechanical systems which have either configurations or momenta valued in a (non-Abelian) group, the regular spinning top having a phase space of the form $T^*SO(3)$. For cases involving both non-trivial configuration \emph{and} momentum space geometries, the particle's action must be extended from a worldline to a ``topological'' string worldsheet, as discussed in \cite{zaccaria_universal_1983}.
	\item {	\textit{1+1D Field Theory: The \klimcik{} \& \severa{} (KS) Model} (\S \ref{sec:KS}) }
	
This model describes a string with non-Abelian Lie group target space via a non-linear $\sigma$-model which dynamically satisfies a non-Abelian T-duality, as was introduced and extensively studied by \klimcik{} and \severa{}\cite{klimcik_dual_1995, klimcik_poisson-lie-loop_1996, klimcik_t-duality_1996}. We will detail the Poisson-Lie symmetries of a manifestly covariant version of the original second order KS model \cite{klimcik_dual_1995}, for both the open and closed string.
	\item	\textit{2+1D Field Theory: 3D Gravity as a BF Theory} (\S\ref{sec:PLin2+1D})
	
This model is a first order description of 3D gravity with a cosmological constant. We will recall how  Poisson-Lie groups are relevant in organizing and solving for the physical data of the theory \cite{dupuis_origin_2020}, with non-trivial Poisson-Lie symmetries arising upon discretization. As a topological theory, its holographic nature reduces it to a 1+1D field theory at the boundary, showing parallels in its symplectic structure with the KS first order open string model \cite{severa_CS_2016}. 
\end{itemize}
These models and their associated Poisson-Lie symmetries will be shown to share a key feature: the codimension-$d$ localization of Poisson-Lie charges.
Obtaining a non-trivial group-valued conserved scalar charge (Poisson-Lie charge) associated to the Poisson-Lie symmetry is shown to require codimension-$d$ substructures (marked points) in spacetime. For the 0+1D particle case, this condition is clearly met by picking a 0D ``Cauchy surface''. In the 1+1D KS model, the string must be open to satisfy this condition. In 2+1D gravity we show that non-trivial Poisson-Lie charges arise upon discretization of the 2D Cauchy surface into cells, from which it follows that the Poisson-Lie charges localize to the boundary of the edges (i.e., to vertices) defining cell intersections.

We conclude in \S\ref{sec:Conclusion} with a summary and outlook.

\section{From Noether to Poisson-Lie symmetries: A spacetime covariant approach}\label{sec:CPSNoetherToPLSymmetries}
 
In this section we begin by reviewing aspects of the differential geometry of fiber bundles, their jet spaces, and the infinite dimensional spaces of their sections as well as other ``field spaces'', with a focus on the notion of locality. We then introduce the covariant phase space formalism as a tool for treating symmetries geometrically and manifestly covariantly in field theory which we will make much use of throughout this manuscript. Then, we outline how one treats Noether symmetries in field theory using this formalism. We will in particular provide a change of perspective on the usual Noetherian construction that will help us motivate the extension to \textit{Poisson-Lie symmetries} and their group valued momentum maps---which, unless trivial, do \emph{not} fit the standard Noetherian framework. This will lead us to our proposal for a general, manifestly covariant, formalism for Poisson-Lie symmetries in field theory. Finally, we will review background material on Poisson-Lie groups, doubles, and dressing transformations which the reader familiar with the topic can skip. 

\subsection{Spaces of fields \& locality}\label{sec:locality}

While mechanical systems can be described in terms of finite dimensional symplectic manifolds (phase spaces), a symplectic treatment of field theories requires infinite dimensional manifolds. A fundamental organizing principle when studying field theories and their symmetries is \emph{locality}. We recall here some formal definition of this concept \cite{blohmann_lagrangian_2023}. 

\paragraph*{Local vector fields \& local actions.} Let $\pi:F\to M$ be a (finite dimensional) fiber bundle, and denote $\cF := \Gamma(F)$ the space of its smooth sections. This can be given the structure of an infinite dimensional manifold\footnote{Often, of an infinite dimensional nuclear Fr\'{e}chet manifold  \cite{KrieglMichor}. See also  \cite[Appendices A and B]{riello_hamiltonian_2024} for a brief explanation and detailed references to the functional analytic literature.} which we will interpret as the space of unconstrained field configurations over spacetime $M$, the ``off-shell histories''.

The $k$-th jet bundle $\pi_k:J^kF\to M$ is the bundle over $M$ whose fiber $\pi_k^{-1}(x)$ is the set of equivalence classes of sections $\varphi\in \cF$ with the same $k$-th jet \cite{Anderson1992IntroductionTT,Giachetta1997}
\begin{equation}
j^k_x\varphi = (x, \varphi^I(x), \pp_\mu \varphi^I(x), \dots , \pp^k_{\mu_1,\dots,\mu_k}\varphi(x))\,.
\end{equation} 
Given two fiber bundles over $M$, $F_{1,2}\to M$, a map $f:\cF_1 \to \cF_2$ is said to be \emph{local} if there exist a $k<\infty$ and a smooth map $f_k$ such that the following diagram commutes:
\begin{equation}
    \xymatrix{
    M\times \cF_1 \ar[r]^-{\;\mathrm{id}_M \times f\;} \ar[d]^-{j^k} & M\times \cF_2\ar[d]^{j^0}\\
    J^k F_1 \ar[r]^-{f_k} & F_2
    }
\end{equation}
In other words, a map between spaces of sections of vector bundles over $M$ is local if the target section at $x\in M$ is a smooth function of a finite number of derivatives of the source section at the same point. 

A field variation around $\varphi\in\cF$ is an element of the (kinematical) tangent space $T_\varphi\cF$ \cite[Section 28 and 32]{KrieglMichor}. A vector field on $\cF$, denoted  $\mathbb{X}\in \Gamma(T\cF)$, can be seen as a map $\cF \to T\cF$. This is itself a map between spaces of sections of bundles over $M$, since $T\cF= \Gamma(VF)$ with $VF \to M$ the vertical subbundle of $TF$.\footnote{Let $(x^\mu,u^I)$ be local coordinates on $F\to M$ and denote $(x^\mu,u^I, \delta x^\mu, \delta u^I)$ a set of local coordinates on $TF$. Then $VF\to M$ is the subbundle defined by $(x^\mu, u^I, \delta u^I)\mapsto x^\mu$.} Whence, $\mathbb{X}$ is said to be \emph{local} if it is local as a map between $\Gamma(F)\to \Gamma(VF)$ \cite{blohmann_lagrangian_2023}.

The action of a (finite dimensional, real) Lie algebra $\fg$ on $\cF$ is a smooth Lie algebra homomorphism
\begin{equation}
    \brho : \fg \to \mathfrak{X}^1(\cF)\,,
\end{equation}
where $ \mathfrak{X}^1(\cF) = (\Gamma(T\cF),[\cdot,\cdot])$ is the Lie algebra of smooth vector fields on $\cF$. An action $\brho$ is said to be \emph{local} if its image is in the space of local vector fields.

In other words, if $\brho$ is local then the infinitesimal (symmetry) transformation of $\varphi(x)$ given by $\delta_\alpha\varphi(x) := (\brho(\alpha) \varphi)(x)$, $\alpha\in\fg$, depends on the value of $\varphi$ at $x\in M$ together with a finite number of its derivatives, also evaluated at $x$.

In this article we will deal both with local and \emph{non-local} actions. 

\paragraph*{Mixed \& local forms.} 
On the space $M\times \cF$, one can define the \emph{bi-complex} of mixed differential forms\footnote{$\Omega^k(\cF):=\Gamma(\mathrm{Lin}(\wedge^kT\cF,\cF\times \mathbb{R})\to \cF)$ \cite[Section 33]{KrieglMichor}.} $\Omega^\bullet(M\times \cF)$ equipped with the total differential $\mathbf{d}:=d+\bbd$, where $d$ is the exterior differential on $M$ and $\bbd$ that on $\cF$. The nilpotency of $\mathbf{d}$ implies $d \bbd + \bbd d = 0$.

A mixed form $\alpha \in \Omega^\bullet(M\times \cF)$ is said to be \emph{local} if there exists a $k<\infty$ and a $\check \alpha \in \Omega^\bullet(J^kF)$ such that $\alpha = j^k\check\alpha$. By means of a direct limit one can define forms on the infinite jet-bundle, and through the infinite jet-evaluation
\begin{equation}
j^\infty:M\times \cF \to J^\infty (F)\,, \quad (x, \varphi) \mapsto (x, \varphi(x), \pp_\mu \varphi(x), \pp^2_{\mu\nu}\varphi(x), \dots )    \,,
\end{equation}
define the space of \textit{local forms}:
\begin{equation}
   \Omega^\bullet_{\text{(loc)}}(M\times \cF) \doteq  (j^\infty)^*\Omega^\bullet(J^\infty (F)) \subset \Omega^\bullet(M\times \cF)\,.
\end{equation} 

On the space of local forms, $d$ and $\bbd$ agree with the pullback along $j^\infty$ of the horizontal and vertical differentials on $J^\infty(F)$ \cite{Giachetta1997,Anderson1992IntroductionTT,blohmann_lagrangian_2023}. 

We denote the space of mixed (local) forms of bi-degree $(k,\ell)$ in the $(d,\bbd)$-bicomplex as $\Omega^{k,\ell}_\text{(loc)}(M\times \cF)$, so that 
\begin{equation}
  \Omega^\bullet_{\text{(loc)}}(M\times \cF) = \bigoplus_{k,\ell \geq 0}  \Omega^{k,\ell}_{\text{(loc)}}(M\times \cF)\,. 
\end{equation}
In the following, we will refer to the differentiation of a local form $\mu$ with respect to $\varphi\in \mathcal{F}$, i.e. $\bbd : \Omega_{(\text{loc})}^{k,\ell}(M\times \cF) \to \Omega_{(\text{loc})}^{k,\ell+1}(M\times \cF)$, $\mu \mapsto \bbd\mu$, as the \textit{variation} of $\mu$.

Intuitively, a local form $\mu \in \Omega_{\text{(loc)}}^{k,\ell}(M\times \cF)$ is a differential form of degree $k$ on spacetime \emph{and} of degree $\ell$ on field space, that is furthermore characterized by a local spacetime dependence, meaning that $\mu(x,\varphi)$ depends on the value at $x\in M$ of: (\textit{i}) the field $\varphi$, (\textit{ii}) a finite number of its derivatives $\pp^n \varphi$, (\textit{iii}) a finite number of its variations $\bbd \varphi$, and (\textit{iv}) a finite number of derivatives of its variations $\pp^m \bbd \varphi \equiv \bbd \pp^m \varphi$. For example, if $F\to M$ is a line bundle, $\cF$ is the space of scalar field configurations $\varphi$ and e.g.
\begin{equation}
    \mu(x,\varphi) = \bbd \varphi(x) \wedge d \bbd\varphi(x) = - \bbd \varphi(x) \wedge \pp_\mu \bbd \varphi(x) \wedge dx^\mu \in \Omega^{1,2}_{\text{(loc)}}(M\times \cF)\,,
\end{equation}
is a local $(1,2)$-form.

For simplicity in tracking minus signs due to $\bbd d =- d \bbd $, it can be convenient to present results of physical interests using tensor (densities) rather than differential forms. 

On the space of forms and vector fields on $\cF$ one can define a Cartan calculus \cite[Section 33]{KrieglMichor} that we denote\footnote{Respectively: exterior differentiation of forms, interior differentiation of forms by vectors, and Lie differentiation of forms along vectors.} $(\bbd,\mathbb{i}_{\bullet},\mathbb{L}_{\bullet})$ in analogy with the Cartan calculus on spacetime $M$, denoted instead $(d,i_{\bullet}, L_{\bullet})$. These operations naturally extend to mixed forms, where they appropriately restrict to local forms. Although slightly redundant, this notation is apt because we never consider mixed vector fields and thus it helps the reader to recognize whether we are acting on mixed forms with a spacetime or a field space vector field.

\paragraph*{A remark on locality.} 
The notion of locality make sense only on field spaces $\cF$ which are (unconstrained) sections of vector bundles. A subspace $\tilde \cF$ of such an $\cF$ might not be a local space of fields, i.e. it might not be itself a space of unconstrained sections of a subbundle $\tilde F \to M$ of $F\to M$. Although pullbacks of local forms from $\cF$ to $\tilde\cF$ and, when defined, restrictions of local vector fields from $\cF$ to $\tilde \cF$, keep being meaningful as forms and vector fields on $\tilde\cF$; on such non-local $\tilde \cF$ one cannot define spaces of \emph{local} forms and \emph{local} vector fields \emph{intrinsically}.

In physical applications, the ``shell'' $\cF_\text{EL}\subset\cF$---the space of solutions to the equation of motion defined shortly in \eqref{eq:theshell}---can be isomorphic to a space of sections of a bundle over a Cauchy surface $C\hookrightarrow M$ (corresponding to the specification of ``initial data''). However, this generally yields a notion of locality that depends on the specific choice of embedding $C\hookrightarrow M$ and is therefore \emph{un}satisfactory for our purposes since we want to maintain manifest spacetime covariance.
In fact, the shell $\cF_\text{EL}$ generally fails to be a space of unconstrained sections of some fiber bundle over \emph{spacetime} $M$.  Intuitively this is because we cannot \emph{independently} choose the value of an on-shell field at causally connected points: any changes in the field values in one region will propagate by the equations of motion to all causally connected regions, and from one Cauchy surface to another. 
Therefore, if all one has access to is the shell, and thus defines e.g. the action of a symmetry on on-shell configurations only, one cannot make sense of what ``spacetime locality'' of said action should even mean. 

\subsection{The covariant phase space}\label{subsec:TheCPS}

The covariant phase space method \cite{Zuckerman, lee_local_1990, souriau1970structure,chernoffmarsden1974, kijowski1979symplectic, AshtekarBombelliKoul, crnkovicwittencovphasespace, Crnkovic_1988, Khavkine2014} is a framework used to study the Hamiltonian (phase space) dynamics of Lagrangian field theories without having to relinquish manifest spacetime covariance. It allows one to construct a symplectic structure on the space of on-shell fields $\cF_{\text{EL}}$ admitted by the governing Lagrangian without committing to any foliation of spacetime. The method is a powerful tool for phrasing Noether's theorems and thus studying the symmetry content of Lagrangian field theories. Applications of the method throughout physics have been abundant, especially in recent years in the context of gauge theories. Here we review the application of the covariant phase space method to \emph{global} symmetries, and highlight its limitations when it comes to the study of non-trivial Poisson-Lie symmetries in field theory.

Let $M$ be a topological manifold of the form
\begin{equation}
    M\simeq C\times \mathbb{R}\,.
\end{equation}
Here $M$ has the interpretation of spacetime, and $\mathbb{R}$ of time.
However, for us it is important not to commit to any specific foliation of $M$. This is what we refer to as ``manifest spacetime covariance'' (the only exception is \S\ref{sec:1st-order}). 

In the following, unless otherwise stated, the reader can assume that $C$ is a closed, i.e. compact and boundary-less, (spacelike) Cauchy hypersurface. This said, throughout the paper, we \emph{will} often consider cases in which $C$ has boundaries, a circumstance that will always be declared explicitly. As we are going to review below, if $\pp C\neq \emptyset$ the space of (local) fields $\cF$ is assumed to be restricted by a suitable set of boundary conditions along the timelike boundary $\pp C \times \mathbb{R}$. In this case the breaking of locality is very mild, since locality is maintained in the interior of $M$.

For us it is convenient to fix on $M$ a spacetime ``background'' Lorentzian metric $\gamma_{\mu\nu}$ once and for all. Fixing a metric allows us to identify $k$-forms over spacetime with skew contravariant $(\dim(M)-k)$-tensors (multivector fields), and exterior derivatives of the former with divergences of the latter. Denoting $\boldsymbol{\epsilon}=\star_\gamma1$ the volume form on $M$ associated to $\gamma$, for any $(\dim(M)-k)$-form $\boldsymbol{\alpha}$ we set\footnote{Mind that if $k=\dim(M)/2$ this notation can create confusion, since the contravariant tensor $\alpha^{\mu_1 \mu_2\dots}$ is \emph{not} the same as the covariant tensor $\boldsymbol{\alpha}$ with the indices raised by $\gamma^{\mu\nu}$. When this issue arises, we will introduce appropriate notation to disambiguate.}
\begin{equation}
    \boldsymbol{\alpha}= \frac{1}{k!}\alpha^{\mu_1\dots \mu_{k}} \boldsymbol{\epsilon}_{\mu_1 \dots \mu_{k}}\,,
    \quad\text{and}\quad
    d\boldsymbol{\alpha} = \nabla_{\nu} \alpha^{\mu_1\dots \mu_{k-1}\nu} \boldsymbol{\epsilon}_{\mu_2\dots \mu_{k-1}}\,,
\end{equation}
where $\boldsymbol{\epsilon}_{\mu_1 \dots \mu_k} := i_{\pp_{\mu_k}}\dots i_{\pp_{\mu_1}}\boldsymbol{\epsilon}$ and $\nabla_\mu$ is the Levi-Civita connection associated to $\gamma$. 

Mixed forms of spacetime rank $(\dim(M)-1)$ will be called \emph{currents}, $j^\mu$. When closed, they are said to be \emph{conserved}, $\nabla_\mu j^\mu = 0$.

Over the spacetime $M$ and the space of fields $\cF = \Gamma(F)\ni\varphi$, consider a local \emph{Lagrangian} $\mathcal{L}\in \Omega_{\text{(loc)}}^{{\rm dim}(M),0}(M\times \cF)$, involving at most first derivatives of the fields.

\begin{tolerant}{3000}
Varying $\mathcal{L}$ with respect to $\varphi\in\cF$, and applying Takens' theorem for the decomposition of $({\rm dim}(M),1)$-forms into their \emph{source} and \emph{boundary} components\footnote{Owing to Takens' theorem the space of local $(\mathrm{dim}(M),1)$-forms decomposes into the direct sum of a space of \emph{source} and \emph{boundary} forms. Heuristically, \textit{source forms} are such that they depend on the variation $\bbd \varphi$ but none of its derivatives, while \textit{boundary forms} are $d$-exact. Takens' decomposition can be understood as a procedure akin to ``integration by parts'' at the level of $(\mathrm{dim}(M),1)$-forms. This decomposition generalizes to other $(k,\ell)$-forms through Anderson's acyclicity theorem and ``homotopy operators'' \cite{Anderson1992IntroductionTT, blohmann_lagrangian_2023}; see \cite{BarnichBrandt} for a physical application.} \cite{takens_global_1979,Zuckerman}, one obtains respectively the Euler-Lagrange equations of motion as the source component $\mathcal{E} = \mathcal{E}_I\bbd \varphi^I$ $\in\Omega_{\text{(loc)}}^{{\rm dim}(M),1}(M\times\cF)$ and the divergence of a pre-symplectic \textit{potential} (pSP) current $\theta^\mu\in \Omega_{\text{(loc)}}^{{\rm dim}(M)-1,1}(M\times\cF)$ as the boundary component: 
\begin{equation}\label{eq:basic}
\bbd \mathcal{L} = \mathcal{E} + \nabla_\mu\theta^\mu \equiv \mathcal{E}_I \bbd \varphi^{I} + \nabla_{\mu} \theta^{\mu}\,,
\end{equation}
where we used $\bullet^I$ as a multi-index for the various field ``species'' and components.
\end{tolerant}

Denote the subset of fields solving the equations of motion, called the \emph{shell}, by
\begin{equation}\label{eq:theshell}
\cF_\text{EL} := \{ \varphi \in \cF \ : \ \mathcal{E}_I (\varphi) = 0\}\,.
\end{equation}
We will work formally as if $\cF_\text{EL}\subset\cF$ were both smooth (generally infinite dimensional) manifolds. An equality that holds ``on-shell'' is denoted ``$\approx$'', which means ``upon pullback from $\cF$ to $\cF_\text{EL}$'' along the embedding $\iota_{EL} :\cF_{EL}\hookrightarrow\cF$.

The pre-symplectic \textit{form} (pSF) current $\omega^{\mu}\in \Omega_{\text{(loc)}}^{{\rm dim}(M)-1,2}(M\times\cF)$ of $\theta^\mu$ is defined as the variation 
\begin{equation}
\omega^\mu := \bbd \theta^\mu\,.
\end{equation}
It is manifestly $\bbd$-closed and, \textit{on-shell}, also conserved i.e.
\begin{equation}\label{eq:omega-conserved}
\nabla_\mu \omega^\mu \approx 0\,.
\end{equation}
Indeed, $0 \equiv \bbd^2 \mathcal{L} = \bbd \mathcal{E} + \nabla_\mu \omega^\mu \approx \nabla_\mu \omega^\mu$.

If $M$ is globally hyperbolic with Cauchy surface $C$, $\pp C =\emptyset$, we can define a $\bbd$-closed 2-form $\Omega$ on $\cF_{\text{EL}}$ by integration over $C$:
\begin{equation}{\label{eq:fullCPSSymplecticForm}}
\Omega := \int_C n_\mu \omega^\mu \,,
\end{equation}
where we introduced $n_\mu$ the unit conormal to $C \hookrightarrow M$, and left the induced volume factors implicit.
The fact that the pSF current is conserved on-shell guarantees that on-shell $\Omega$ is independent of the choice of Cauchy hypersurface $C$. Since, at the end of the day, we will only be interested in on-shell equalities, we have here abused notation and omitted to label $\Omega$ by the Cauchy hypersurface $C$ on which it is defined off-shell. 

In the absence of gauge freedom one expects $\Omega$ to be non-degenerate, hence symplectic.\footnote{More precisely, in infinite dimensions, the best one can often hope for is that $\Omega$ is ``weakly'' symplectic meaning that $\Omega^\flat : T\cF \to T^*\cF$ is only required to be injective. We take this to be understood throughout this manuscript. See e.g. \cite{chernoffmarsden1974,Diez:2019pkg,riello_hamiltonian_2024}.} The pair 
\begin{equation}
    (\cF_\text{EL},\Omega)\,,
\end{equation} 
is then called \textit{the covariant phase space} (CPS) of the theory, and it is expected to be isomorphic to the canonical phase space constructed by selecting a Cauchy surface $C\hookrightarrow M$ \cite{souriau1970structure,kijowski1979symplectic,Khavkine2014}.

Conversely, in the presence of gauge symmetries, $\Omega$ is \emph{not} expected to be non-degenerate; however, in this case, one still expects $\Omega$ to descend to a non-degenerate, and hence symplectic, 2-form on the \emph{reduced} covariant phase space $\cF_\text{red} \ni [\varphi]$ of on-shell field configurations \emph{modulo} gauge \cite{lee_local_1990,Diez:2024dts} (see also \cite[Appendix D]{riello_hamiltonian_2024}). The focus of this article is on \emph{global} symmetries, with gauge symmetries only appearing in \S\ref{sec:PLin2+1D}.

In the following, we will also consider  expressions such as 
\begin{equation}\label{eq:Omega-openC}
    \Omega = \int_C n_\mu \omega^\mu\,,
    \quad\text{with}\quad
    \pp C\neq \emptyset\,.
\end{equation} 
That is, instead of a closed Cauchy surface $C$ we will be considering the phase space of the theory over a spacetime manifold with timelike boundaries, $\pp M = \partial C \times \mathbb{R} \neq \emptyset$. The boundary $\partial C$ is codimension-2 and will be called the \emph{corner}.

Note that in this case $\Omega = \int_C n_\mu \omega^\mu$ is a priori ill-defined, for there is an ambiguity in the definition of $\omega^\mu$ from the Lagrangian:  Equation \eqref{eq:basic} defines $\theta^\mu$ only up to a $d$-exact form.\footnote{All $d$-closed $(k,l)$-forms for $k\neq \dim(M)$ and $l\geq 1$ are $d$-exact \cite{Anderson1992IntroductionTT}. See also \cite{lee_local_1990} for a physicists' argument.} In some cases, e.g. if the Lagrangian is quadratic in the velocities and contains no higher derivatives, this ambiguity can be fixed by demanding that $\theta^\mu$ is linear in $\bbd \varphi^I$ with no derivatives acting on it, i.e. demanding that it is of the form $\theta^\mu(x) = \pi^\mu_I(x,\varphi, \partial\varphi, ...)\bbd \varphi^I(x)$. In all cases, given a theory, we will choose a $\theta^\mu$ and work with it, neglecting this type of ambiguities.

In all cases in which $\pp C \neq \emptyset$, we will implicitly assume that the definition of the shell is augmented by the imposition of boundary conditions restricting the ``allowed'' field configurations to ensure that $\Omega = \int_C n_\mu \omega^\mu$ is conserved, i.e. independent of the choice of $C\hookrightarrow M$. In other words, denoting $s_\mu$ the unit conormal to the timelike boundary $\pp M = \pp C \times \mathbb{R}$, we demand that on shell $s_\mu \omega^\mu|_{\pp M} \approx 0$.

\subsection{Symmetries: Noether \& Poisson-Lie}\label{subsec:NewNoetherToPLgroupSymm}

We want to define a notion of symmetries on the CPS. We will now put forward a framework that explicitly highlights where one can go beyond the Noetherian notion of Lagrangian symmetries since, as we shall see, non-trivial Poisson-Lie symmetries are an example of such an extension.

\paragraph*{Noether symmetries revisited.} The idea is to define a symmetry by its desired properties: a Lie group (or algebra) action on phase space leading to conserved charges that moreover serve as symplectic generators of the symmetry itself. More formally, we introduce the following:

\begin{definition}[Noetherian symmetry]\label{def:NoetherSym}
Given a local field theory $(F\to M,\mathcal{L})$, a \emph{Noetherian symmetry} over the CPS $(\cF_\text{EL},\Omega)$ is
\begin{enumerate}
    \item a smooth Lie algebra \emph{action} 
    \begin{equation}
        \brho : \fg \to \mathfrak{X}^1(\cF_\text{EL})\,,
    \end{equation}
    \item whose flows are \emph{Hamiltonian},
    \begin{equation}\label{eq:noethercharge-flow}
        \bbi_{\brho(\cdot)}\Omega \approx \bbd q\,,
    \end{equation}
    for a momentum map denoted $q\in\Omega^{0}(\cF)\otimes \fg^*$,
    \item such that there exists a \emph{local} off-shell current $j^\mu\in\Omega_\text{(loc)}^{\dim(M)-1,0}(M\times \cF)\otimes \fg^*$ that satisfies for all $C \hookrightarrow M$ the equation
    \begin{equation}
        q \approx \int_C n_\mu j^\mu\,.
    \end{equation}\qedhere
\end{enumerate}
\end{definition}

Points 1 and 2 simply state that a symmetry should be a Hamiltonian action on the covariant phase space. This is however not enough to select physically relevant symmetries associated to physically interesting conserved charges. This is where point 3 comes into play. 

Before we explain its significance, let us come back to the remark at the end of section \ref{sec:locality}. The covariant phase space $\cF_\text{EL}$ is not a space of unconstrained sections of a fiber bundle over $M$. Therefore, it lacks a meaningful notion of locality and hence of conservation---which implicitly presupposes the fact that a quantity can be locally computed, e.g. ``locally in time'', and thus compared across different time instances, or Cauchy surfaces. This is why 3 needs to make reference to an off-shell current. We will come back to this point later. 

In a $(d+1)$-Hamiltonian/canonical picture, conservation is a form of compatibility with the dynamics. This compatibility can be expressed in terms of the invariance of the Hamiltonian function $H$ under the action $\brho$, i.e. $\bbL_{\brho(\alpha)}H =0$; in the case of a Hamiltonian action, this condition can be written as $\{q,H\}=0$. 
However, in the covariant phase space picture that we are pursuing here, it is less obvious what compatibility with the dynamics means. Although a priori any vector field on $\cF_\text{EL}$ ``respects the dynamics'', it can still do so in a rather meaningless way: in fact, the Hamiltonian flow of \emph{any}\footnote{Bar functional-analytic subtleties which are irrelevant for this argument.} function on $\cF_\text{EL}$ defines a 1-parameter family of solutions to the equations motion that could be seen as the action of a 1-dimensional Lie algebra, even if neither that function nor its action have any distinguished physical meaning at all.\footnote{Note that any function on $\cF_\text{EL}$ by construction defines a quantity constant along ``time evolution'', simply in virtue of the fact that it associates one number to the \emph{entire} spacetime ``history'' of an on-shell field configuration. However, most of these constant quantities do \emph{not} correspond to physically meaningful \emph{conserved} quantities.}

Point 3 is meant to replace compatibility with the Hamiltonian function with a more spacetime covariant notion of conservation of the charge. As noticed above, this requires us to refer to the space of off-shell fields in order to borrow a viable notion of locality on which to anchor our notion of conservation.

This is essentially how conserved currents appear via Noether's first theorem from a Lagrangian symmetry:
\begin{definition}[Lagrangian symmetry]\label{def:LagrSym}
\begin{tolerant}{3000}
    Given a local field theory $(F\to M,\mathcal{L})$, a \emph{Lagrangian symmetry} is a \emph{local} Lie algebra action $\brho : \fg \to \mathfrak{X}^1_\text{(loc)}(\cF)$ that leaves the Lagrangian invariant up to boundary terms, namely such that there exists a \emph{remainder current} $R^\mu \in \Omega^{\dim(M)-1,0}_{{\text{(loc)}}}(M\times \cF)\otimes \fg^*$ for which
\begin{equation}\label{eq:inv-Lagr}
    \bbL_{\brho(\alpha)} \mathcal{L} = \nabla_\mu \langle R^\mu,\alpha\rangle\,, \quad \forall \alpha\in\fg\,.
\end{equation}
\end{tolerant}
\end{definition}
The remainder $R^\mu$ is valued in $\fg^*$ simply as a reflection of the linearity of $\brho$ in $\fg$. 

\begin{theorem}\label{thm:Lagr2Noether}
Lagrangian symmetries preserve the equations of motion
\begin{align}
    \bbL_{\brho(\cdot)}\mathcal{E} \approx 0\,,
\end{align}
and thus (assuming smoothness of $\cF_\text{EL}\hookrightarrow\cF$) descends to an action on the covariant phase space. Furthermore, if $\pp C = \emptyset$, this action is Noetherian, i.e. it is Hamiltonian with conserved (i.e. $C$-independent) \emph{local} generators 
\begin{align}
    q \approx \int_C n_\mu j^\mu\,, \quad \forall C\hookrightarrow M\,,
\end{align} 
which are given by the integral of the conserved Noether current
\begin{align}
    j^\mu := (\bbi_{\brho(\cdot)}\theta^\mu - R^\mu) \in \Omega^{\dim(M)-1,0}_\text{(loc)}(M\times\cF)\otimes \fg^*\,, \quad \nabla_\mu j^\mu \approx 0\,.
\end{align}
In sum, if $\pp C=\emptyset$, all Lagrangian symmetries are Noetherian. 
\end{theorem}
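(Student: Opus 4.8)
The plan is to derive every clause from the single master identity $\bbd\mathcal{L}=\mathcal{E}+\nabla_\mu\theta^\mu$ together with the symmetry hypothesis $\bbL_{\brho(\alpha)}\mathcal{L}=\nabla_\mu\langle R^\mu,\alpha\rangle$, using the field-space Cartan calculus $(\bbd,\bbi,\bbL)$ and the bicomplex sign rule $d\bbd=-\bbd d$. First I would contract the master identity with $\brho(\alpha)$. Because $\mathcal{L}$ has field-space degree zero, $\bbi_{\brho(\alpha)}\mathcal{L}=0$ and Cartan's formula gives $\bbL_{\brho(\alpha)}\mathcal{L}=\bbi_{\brho(\alpha)}\bbd\mathcal{L}$; since $\bbi_{\brho(\alpha)}$ commutes with $\nabla_\mu$ (they act on independent factors of the bicomplex), the hypothesis collapses to the off-shell Noether identity
\[
\bbi_{\brho(\alpha)}\mathcal{E}+\nabla_\mu\langle j^\mu,\alpha\rangle=0\,,\qquad \langle j^\mu,\alpha\rangle:=\bbi_{\brho(\alpha)}\theta^\mu-\langle R^\mu,\alpha\rangle\,.
\]
On the shell $\mathcal{E}\approx0$ the first term drops and $\nabla_\mu\langle j^\mu,\alpha\rangle\approx0$: the current is conserved. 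Comparing $q$ on two Cauchy surfaces via Stokes' theorem, the difference is the spacetime integral of $\nabla_\mu\langle j^\mu,\alpha\rangle\approx0$ over the enclosed region, whose only boundary are the two surfaces because $\pp C=\emptyset$; hence $q$ is independent of $C$. Locality of $j^\mu$ is immediate, since $\theta^\mu$ is local by Takens' theorem, while $R^\mu$ and $\brho$ are local by the definition of a Lagrangian symmetry.

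For the preservation of the equations of motion I would apply $\bbL_{\brho(\alpha)}$ to the master identity. Using $[\bbL_{\brho(\alpha)},\bbd]=0$ and $[\bbL_{\brho(\alpha)},\nabla_\mu]=0$, the left-hand side becomes $\bbd\bbL_{\brho(\alpha)}\mathcal{L}=\pm\nabla_\mu\bbd\langle R^\mu,\alpha\rangle$, which is manifestly $d$-exact, while the right-hand side is $\bbL_{\brho(\alpha)}\mathcal{E}+\nabla_\mu\bbL_{\brho(\alpha)}\theta^\mu$; hence $\bbL_{\brho(\alpha)}\mathcal{E}$ is $d$-exact. The subtle point is that this does \emph{not} by itself give $\bbL_{\brho(\alpha)}\mathcal{E}\approx0$, because off-shell $\bbL_{\brho(\alpha)}\mathcal{E}$ is \emph{not} a source form: differentiating the local characteristic $\delta_\alpha\varphi$ generates derivatives of $\bbd\varphi$. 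The way through is the \emph{naturality of Takens' source/boundary decomposition} under the contact transformations generated by $\brho$: since $\bbL_{\brho(\alpha)}\mathcal{L}$ is a total divergence its source part vanishes, and naturality then forces the source part of $\bbL_{\brho(\alpha)}\mathcal{E}$ to be a local expression linear in $\mathcal{E}$ and its derivatives, hence $\approx0$ on-shell. This is the statement that variational symmetries are symmetries of the Euler--Lagrange equations. It shows $\brho(\alpha)$ is tangent to $\cF_\text{EL}$ and, assuming smoothness of the embedding $\cF_\text{EL}\hookrightarrow\cF$, descends to it.

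The same naturality argument, applied one bidegree lower, is what I would use for Hamiltonicity, which I expect to be the main obstacle. Writing $\omega^\mu=\bbd\theta^\mu$ and using Cartan's formula, $\bbi_{\brho(\alpha)}\omega^\mu=\bbL_{\brho(\alpha)}\theta^\mu-\bbd\,\bbi_{\brho(\alpha)}\theta^\mu$, so that, after integrating $n_\mu$ over $C$ and recognizing $\int_C n_\mu\bbi_{\brho(\alpha)}\theta^\mu=\langle q,\alpha\rangle+\int_C n_\mu\langle R^\mu,\alpha\rangle$, one finds
\[
\bbi_{\brho(\alpha)}\Omega\approx\pm\bbd\langle q,\alpha\rangle+\int_C n_\mu\big(\bbL_{\brho(\alpha)}\theta^\mu-\bbd\langle R^\mu,\alpha\rangle\big)\,.
\]
The leftover current $\bbL_{\brho(\alpha)}\theta^\mu-\bbd\langle R^\mu,\alpha\rangle$ is on-shell conserved (its divergence is $\mp\bbL_{\brho(\alpha)}\mathcal{E}\approx0$ by the previous step), but this only makes its integral $C$-independent, not zero. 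To kill it I would again invoke naturality: matching the \emph{boundary} parts of the flowed master identity shows that $\bbL_{\brho(\alpha)}\theta^\mu-\bbd\langle R^\mu,\alpha\rangle$ is, on-shell, the total divergence of a \emph{local antisymmetric corner potential} $\nabla_\nu k^{[\mu\nu]}$. Its contribution $\int_C n_\mu\nabla_\nu k^{[\mu\nu]}$ is then the integral over $C$ of a codimension-two exact form, which vanishes precisely because $\pp C=\emptyset$. This yields $\bbi_{\brho(\alpha)}\Omega\approx\bbd\langle q,\alpha\rangle$ (the overall sign fixed by the conventions $d\bbd=-\bbd d$), the Hamiltonian/Noetherian condition with momentum map $q$.

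The hard part, and the conceptual hinge of the argument, is thus this final corner-term cancellation: it is the only place the hypothesis $\pp C=\emptyset$ is used, and it rests on the naturality of the variational bicomplex decomposition (Takens--Anderson) under the symmetry flow. Whenever $\pp C\neq\emptyset$ the corner term $\int_{\pp C}k^{[\mu\nu]}$ survives, the flow fails to be strictly Hamiltonian, and---as anticipated in the introduction---this is exactly the crack through which non-Noetherian, Poisson--Lie behaviour can enter.
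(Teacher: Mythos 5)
Your proposal is correct and follows essentially the same route as the proof the paper itself defers to (it cites Olver and Blohmann rather than spelling one out): contracting the Takens decomposition with $\brho(\alpha)$ to get the off-shell Noether identity and on-shell current conservation, invoking the naturality of the source projection under local evolutionary vector fields to get $\bbL_{\brho(\cdot)}\mathcal{E}\approx 0$, and using Cartan calculus plus a corner potential for the Hamiltonian flow equation. Your hinge step is sound once made precise in the way you gesture at: the residual current $\bbL_{\brho(\alpha)}\theta^\mu - \bbd\langle R^\mu,\alpha\rangle$, corrected by the integration-by-parts boundary term of $\bbL_{\brho(\alpha)}\mathcal{E}$ (a local current lying in the ideal generated by $\mathcal{E}$ and its derivatives, hence vanishing on-shell), is closed \emph{off}-shell, so Anderson's acyclicity theorem---the paper's own footnote that $d$-closed $(k,l)$-forms with $k\neq\dim M$ and $l\geq 1$ are $d$-exact---supplies the local antisymmetric corner potential $k^{[\mu\nu]}$, whose integral over $\pp C$ is the sole place the hypothesis $\pp C=\emptyset$ enters, exactly matching the paper's subsequent remark about corner obstructions when $\pp C\neq\emptyset$.
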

The (local) quantities $q$ and $j^\mu$ appearing in this theorem are called the \emph{Noether charge} and \emph{Noether current}, respectively. The conservation of $j^\mu$ is independent of the nature of $\pp C$ and is the content of Noether's first theorem \cite{noether_invariante_1918,noethertranslation}.

See \cite{olver_applications_1986,blohmann_lagrangian_2023} for a proof of the theorem.

In the previous theorem we set $\pp C = \emptyset$. If $\pp C\neq \emptyset$, a Lagrangian symmetry need not be Hamiltonian, even if one picks boundary conditions at the timelike boundary $\pp M = \pp C\times \mathbb{R}$ such that both $\Omega$ and $\int_C n_\mu j^\mu$ are independent of the choice of hypersurface $C\hookrightarrow M$. The mismatch is due to corner contributions supported on $\pp C$ that might spoil the Hamiltonian flow equation \eqref{eq:noethercharge-flow}; cf. \S\ref{subsec:deformedspinningtop} and \S\ref{sec:1st-order} for examples of this scenario.

In the case of Lagrangian symmetries, the extra input that selects dynamics-compatible actions is given by the condition that they leave the Lagrangian invariant up to boundary terms \eqref{eq:inv-Lagr}. This condition presupposes that its action is well defined \emph{off}-shell and that it is local. These assumptions trickle down the construction and are ultimately responsible for the existence of an (off-shell) local expression of the charges, which in turn allow one to talk about their conservation in terms of their (on-shell) independence of their values from the choice of hypersurface $C\hookrightarrow M$. 

From this perspective, the Noether current construction for local (in the sense laid out in \S\ref{sec:locality}) Lagrangian symmetries is a powerful trick to be able to solve the conditions $1, 2, 3$ of Definition \ref{def:NoetherSym} all at once. If one wishes to consider symmetries such that the Noether current construction is not available, finding a proper notion of current to build the charge becomes a non-trivial problem especially in relation to locality.

As we will now discuss, this is what happens when studying Poisson-Lie symmetries where one loses the Lagrangian symmetry prescription \eqref{eq:inv-Lagr} and at times even the locality of the symmetry action and of its charges. We will circumvent these problems by modifying the points $2$ and $3$ of the Definition \ref{def:NoetherSym} of what constitutes a viable symmetry and charge is.

\paragraph*{Group-valued momentum maps.} To motivate the deformation of Noetherian symmetries into PL symmetries, we now turn to the diffeological perspective on Hamiltonian actions on a phase space $(\cF_\text{EL},\Omega)$ advanced in \cite{iglesias2010moment}.

For this discussion it is convenient to introduce the following: given an (on-shell) action $\brho$, define the \emph{variational current} $\mathbb{J^\mu}$ as the mixed form
\begin{equation}{\label{eq:varcurrent}}
    \mathbb{J}^\mu :\approx \bbi_{\brho(\cdot)} \omega^\mu \in \Omega^{\dim(M)-1,1}(M\times\cF_\text{EL})\otimes \fg^*\,,
\end{equation}
as well as the \emph{variational charge} $\mathbb{Q}$ as the field space 1-form
\begin{equation}
    \mathbb{Q} :\approx \bbi_{\brho(\cdot)}\Omega  \approx \int_C n_\mu\mathbb{J}^\mu \in \Omega^1(\cF_\text{EL})\otimes \fg^*\,.
\end{equation}
Notice that these definitions are done at the level of mixed forms, not necessarily local ones.

If $\brho$ is a (local) Lagrangian action, then both quantities are conserved on-shell in virtue of $\brho$ preserving the shell and $\omega^\mu$ (or $\Omega$) being conserved. Nonetheless, both are \emph{1-forms} over $\cF_\text{EL}$, and therefore fail to define a conserved current/charge, in the usual sense.

An action on $(\cF_\text{EL},\Omega)$ is Hamiltonian if and only if its variational charge is $\bbd$-exact, $\mathbb{Q}=\bbd q$.

In \cite{iglesias2010moment}, it was shown that momentum maps, i.e. Hamiltonian generators, can be defined in diffeology---namely under extremely mild hypotheses---by considering field space line integrals of the variational charge $\mathbb{Q}$.

Indeed, integrating the $\bbd$-exact variational charge $\mathbb{Q}$ along any field space path, symbolically denoted ``$\varphi_0 \rightarrow \varphi$'', between  an arbitrary reference configuration\footnote{If $\cF$ is the space of sections of a vector bundle, one can choose $\varphi_0 = 0$ if this is a solution of the equations of motion. More generally, one can might want to choose $\varphi_0$ to be a ``vacuum solution'', whenever this concept makes sense.} $\varphi_0$ and $\varphi$ in $\cF_{\text{EL}}$ one obtains:
\begin{equation}
    \psi(\varphi_0,\varphi) :\approx  \int_{\varphi_0 \rightarrow \varphi} \mathbb{Q}\,, \quad \text{and}\quad \psi(\varphi_0,\varphi) :\approx q(\varphi) - q(\varphi_0) \,.
\end{equation}
In \cite{iglesias2010moment}, the map $\psi$ is called the \emph{2-point momentum map} and the existence of its primitive $q$, uniquely defined up to a choice of reference $\varphi_0$ and a constant, follows from the cocycle properties of $\psi$ obtained from the concatenation of paths in its definition:
\begin{equation}
    \psi(\varphi_1,\varphi_2)+\psi(\varphi_2,\varphi_3) = \psi(\varphi_1,\varphi_3)\,, \quad \forall \varphi_1,\varphi_2,\varphi_3\in\cF_\text{EL}\,.
\end{equation}

From this point of view, the generalization to a PL momentum map can then be motivated by modifying the cocycle condition by introducing a non-Abelian group operation for what is now a \emph{group-valued} (Poisson-Lie) 2-point momentum map $\Psi : \cF_\text{EL}\times \cF_\text{EL}\to \cG^*$:
\begin{align}
    \Psi(\varphi_1,\varphi_2)\Psi(\varphi_2,\varphi_3) = \Psi(\varphi_1,\varphi_3)\,, \quad \forall \varphi_1,\varphi_2,\varphi_3\in\cF_\text{EL}\,.
\end{align}
The target (Lie) group denoted $\cG^*$ can be thought of as a non-linear, and generally non-Abelian, generalization of the (Abelian) linear Lie \emph{group} $(\fg^*,+)$ in which $\psi$ and $q$ are naturally valued.\footnote{Note that $\fg^*$ and the space of left-invariant 1-forms on the symmetry group $\cG$, $\Omega^1_\text{left}(\cG)$, are naturally isomorphic whenever both are defined, although the $\Omega^1_\text{left}(\cG)$ has larger applicability in diffeology. Whereas the dual of a Lie algebra $\fg^*$ can be defined in diffeological satisfactory terms as $\Omega^1_\text{left}(\cG)$, we don't know at this stage if a diffeologically meaningful definition of $\cG^*$ can be given.}

This generalization is equivalent to asking what kind of 1-forms $\mathbb{Q}$ generalize the fundamental theorem of calculus available for exact 1-forms $\mathbb{d}q$, meaning that their line integrals depend only on the initial and final points of the paths. Path ordered exponentials (Wilson lines) of a flat Lie-algebra valued 1-forms do exactly this. Then, the relevant \emph{group}-valued 2-point momentum map is:
\begin{equation}{\label{eq:PLmommap2}}
    \Psi(\varphi_0,\varphi) :\approx  \mathrm{Pexp}\int_{\varphi_0 \rightarrow \varphi}\mathbb{Q}\,,
    \quad \text{and}\quad \Psi(\varphi,\varphi_0) :\approx Q(\varphi_0)^{-1} Q(\varphi)\,. 
\end{equation}
The generalized $\mathbb{d}$-exactness condition then becomes
\begin{align}
    \mathbb{Q} \approx Q^{-1}\bbd Q\,,
\end{align}
where the right hand side is the pullback by $Q:\cF_\text{EL}\to \cG^*$ of the left Maurer-Cartan form on $\cG^*$, and $Q$ is called the \emph{left Poisson-Lie momentum map}.
(Of course, a definition of a \emph{right} Poisson-Lie momentum map can also be introduced, see below.) 

From the definition of the variational charge, we see that the above is equivalent to asking that the following non-linear generalization of the Hamiltonian flow equation, which we call the \emph{Poisson-Lie flow equation}, holds\footnote{Using Poisson brackets: $\brho = Q^{-1}\{Q \stackrel{\otimes}{,} \cdot \}$ or, more explicitly, $ \brho(\xi)f = (\Omega_{IJ})^{-1} \langle Q^{-1}\pp_I Q , \xi\rangle \pp_J f$ \cite{babelon_dressing_1992}.} \cite{babelon_dressing_1992,lu_poisson_1990, kosmann-schwarzbach_lie_1997}: 
\begin{align}\label{eq:PLFlow-new}
    \bbi_{\brho(\alpha)}\Omega \approx \langle Q^{-1}\bbd Q , \alpha\rangle\,, \quad\forall \alpha \in \fg\,.
\end{align}
This equation shows in particular that as a vector space\footnote{For a mathematically rigorous approach to Poisson-Lie symmetries in the infinite dimensional context see \cite{diez_group-valued_2020}, where a series of interesting and non-trivial \emph{Abelian} Poisson-Lie symmetries related to the (horizontal) automorphisms of certain symplectic bundles is also covered in detail. The viewpoint developed there is however detached from an action principle of a Lagrangian field theory.}
\begin{align}
    \mathrm{Lie}(\cG^*) = \fg^*\,.
\end{align}
We anticipate here that not any Lie algebra structure $[\cdot,\cdot]_\ast$ on $\fg^*$ will do. Reviewing which such structures are compatible with the Poisson-Lie flow equation \eqref{eq:PLFlow-new} is the topic of  \S\ref{subsec:PLsymmetries}.

In the remainder of this section, we will focus instead on what conditions are needed to turn this non-linear version of (kinematical) Hamiltonian actions to a generalization of the (dynamically meaningful) Noetherian symmetries given in Definition \ref{def:NoetherSym}.

\paragraph*{Dynamical Poisson-Lie symmetry \& (non-)locality.}
The goal of this section is to modify Definition \ref{def:NoetherSym} in a way that can encompass non-trivial Poisson-Lie (PL) symmetries. 

First, we observe that in view of Theorem \ref{thm:Lagr2Noether} non-trivial PL symmetries are in general not Lagrangian. This is because the latter always lead to Hamiltonian, rather than PL charges. To highlight the difference between the two cases, we note that PL actions, as opposed to Hamiltonian actions, are \emph{not} symplectomorphisms. Indeed, from \eqref{eq:PLFlow-new} and the $\bbd$-closure (in fact, exactness) of $\Omega$ it follows that 
\begin{equation}\label{eq:symplectomorphism-new}
    \bbL_{\brho(\alpha)}\Omega \approx - \frac12 \langle [Q^{-1}\bbd Q, Q^{-1}\bbd Q]_\ast, \alpha\rangle\,,
\end{equation}
which does not vanish unless $\cG^*$ is Abelian.  

(This said, if $\pp C\neq \emptyset$, there is a loophole in the above conclusions, namely a Lagrangian action might happen to be Hamiltonian. Indeed, it could so happen that the Noether charge $\int_C n_\mu j^\mu$ of a certain Lagrangian symmetry vanishes while at the same time the corner obstruction to Theorem \ref{thm:Lagr2Noether} happens to take the form $Q^{-1}\bbd Q$. This somewhat fortunate situation describes precisely what happens in certain examples, cf. \S\ref{subsec:deformedspinningtop} and \S\ref{sec:1st-order}.)

This is why here we shall focus on adapting the Definition \ref{def:NoetherSym} of Noetherian symmetries, rather than that of a Lagrangian symmetry. The obvious modification concerns replacing point 2, which can be rephrased as $\mathbb{Q}\approx \bbd q$, with its PL analogue, namely $\mathbb{Q} \approx Q^{-1}\bbd Q$. Less obvious is that point 3, which is about the locality and thus the conservation of the charge, must also be adapted.

Indeed, there is another way in which PL actions are expected to markedly lie outside of the Noetherian framework: namely one also expects PL actions and their generators to be \textit{non-local}; a related fact is that some examples also show that dynamically relevant PL actions are sometimes best defined on-shell.\footnote{These latter expectations are not strictly speaking necessary features, but one of the goals of the paper is to get a better grasp on them, through examples. See \S\ref{sec:KS} and in particular the discussion around \eqref{eq:4.79}.}

An action is Hamiltonian if and only if its variational charge is $\bbd$-exact, $\mathbb{Q}=\bbd q$, and it is left PL if and only if its variational charge satisfies the ``flatness'' equation:\footnote{For a right PL action, the relative sign in the flatness equation is flipped.}
\begin{equation}\label{eq:PLflatness}
    \bbd \mathbb{Q} + \frac12 [\mathbb{Q},\mathbb{Q}]_* = 0\,. 
\end{equation}
This equation, which is responsible for the right hand side of  \eqref{eq:symplectomorphism-new}, can be used to argue that, in field theory, there is a tension between PL symmetries and locality. Indeed, assume that $\brho$ were local. Then, since $\Omega$ is by construction a local functional of the fields, so would be $\mathbb{Q}$. Consider now equation \eqref{eq:symplectomorphism-new} (or, equivalently, \eqref{eq:PLflatness}): on its left-hand side we have \emph{one} integral over $C$ of a local 2-form on $\cF$ evaluated at an on-shell configuration, while on its right-hand side we find a Lie bracket of \emph{two} integrals of local 1-forms over $C$.

Beside the somewhat trivial cases where $\dim(C)=0$ or $\fg^*$ is Abelian, we are aware of two possible resolutions of this tension. In the first scenario, the action $\brho$ is itself non-local; this is what happens in the second-order \klimcik-\severa{} model discussed in \S\ref{sec:KS}, see especially \S\ref{twisted}. In the second scenario, the integral defining $\mathbb{Q}$ localizes at a point of $C$; this is what happens in the first-order \klimcik-\severa{} model discussed in \S\ref{sec:1st-order}, \emph{which is however not manifestly spacetime covariant being tied to a choice of foliation of $M$}. In this case, although everything appears local, it is the physical interpretation of the charge $Q$ that is in tension with locality, since then what is naturally interpreted as a (deformed) ``\emph{total} momentum'' of the \klimcik-\severa{} string with respect to a given global symmetry is captured by the value of the worldsheet fields at a \emph{single} point of the string. In this scenario this is possible because, in passing from the second- to the first order formulation, one employs a non-local field redefinition. In \S\ref{sec:PLin2+1D}, a similar point-localization of $\mathbb{Q}$ and $Q$ happens in a 2+1 dimensional model, this time thanks to the non-locality intrinsic to the constrained (and reduced) phase space of a gauge theory (cf. remark on locality from \S\ref{sec:locality}).

This remark emphasizes that a very delicate balance between local and non-local features is needed to construct field theories with non-trivial PL symmetries. This will be exemplified in the two definitions below
\begin{definition}
    Let $F\to M$ be a fiber bundle over $M \simeq C\times \mathbb{R}$, and $\iota:C\hookrightarrow M$ any embedded codimension-1 (Cauchy) surface. Denote $F_C \to C$ the pullback bundle of $F\to M$ along $\iota$ and $\cF_C := \Gamma(C,F_C)$ the space of its sections. 
    The jet space $J^r_t(\mathbb{R},\cF_C)$ is the space of equivalence classes of smooth maps $\gamma:\mathbb{R}\to\cF_C$ with the same Taylor expansion at $t\in\mathbb{R}$ up to order $r$. 
    All embeddings give rise to isomorphic jet spaces $J^r_0(\mathbb{R},\cF_C)$, which we abstractly denote $\mathcal{P}^{(r)}$. Finally, we introduce the evaluation map:
    \begin{align}
         j_1^r :  \{ C\} \times \cF \to \{C\} \times \mathcal{P}^{(r)} \,, \quad ( C, \varphi) \mapsto 
         (C, \varphi|_C, \pp_n \varphi|_C , \pp^2_n \varphi|_C, \dots, \pp^r_n \varphi|_C)\,,\nonumber
    \end{align}
    where $\{ C\}$ is the set of embedded Cauchy surfaces in $(M,\gamma_{\mu\nu})$ and $\pp_n$ is the derivative along $n^\mu_C$ the unit (future pointing) normal to $C$ in $M$.
\end{definition}

\begin{definition}[Poisson-Lie symmetries]\label{def:PL}
Given a local field theory $(F\to M,\mathcal{L})$, a \emph{left Poisson-Lie symmetry} over the CPS $(\cF_\text{EL},\Omega)$ is
\begin{enumerate}
    \item[$1'$.] a smooth Lie algebra \emph{action} 
    \begin{equation}
        \brho : \fg \to \mathfrak{X}^1(\cF_\text{EL})\,,
    \end{equation}
    \item[$2'$.] whose flows are \emph{left Poisson-Lie},
    \begin{equation}\label{eq:PLcharge-flow}
        \bbi_{\brho(\cdot)}\Omega \approx Q^{-1}\bbd Q\,,
    \end{equation}
    for a (group-valued) momentum map denoted $Q\in\Omega^{0}(\cF)\otimes \cG^*$,
    \item[$3'$.] such that there exists an $r<\infty$ and a map
    $\check Q\in  \mathrm{Maps}(\{C\} \times \mathcal{P}^{(r)}, \cG^*)$ that satisfies for \emph{all} embedded $ C \hookrightarrow M$ the equation
    \begin{equation}
        Q(\cdot) \approx \left( (j^r_1)^*\check Q\right)(C, \cdot)\,.
    \end{equation}\qedhere
\end{enumerate}
Right Poisson-Lie symmetries are defined analogously, by replacing $Q^{-1}\bbd Q$ with $\bbd Q Q^{-1}$ in point $2'$. 
\end{definition}

Note that each $\varphi\in \cF_\text{EL}$ can be seen as an element of $\cF=\Gamma(M,F)$.
Point $3'$ says that the charge $Q$ can always be written as a function of $\varphi$ restricted to $C\subset M$ and a finite number of its normal derivatives:
\begin{align}
    Q(\varphi) = \check Q(C,  \varphi|_C , \underbrace{\pp_n\varphi|_C, \dots}_r)\,.
\end{align}
This function is not required to be local over $C$ and it can depend on geometric data of the embedding $C \hookrightarrow M$ (e.g. induced metric, extrinsic curvature, etc.).

The formulation of point $3'$ of this definition generalizes that of point 3 above in terms of local forms over $M$. This is meant to balance a potential non-local expression ``over $C$'' with the requirement of ``time locality'', or dependence on the choice of $C$, which is necessary to even talk about a notion of conservation.

\paragraph*{Remark.}
This formulation is still somewhat convoluted and therefore not fully satisfactory. We think that a more geometric treatment might be given using Dirac's hypersurface deformation algebra (HDA) \cite{DiracLectures,Teitelboim1973}. For \emph{non} general-covariant theories like those studied in this manuscript, this formulation would require an extension of the phase space to include ``embedding (or, surface) variables'' as fields---that is, it would require writing the field theory in a parametrized form. 
The advantage of this formulation is to avoid talking about the locality of $Q$ and simply checking its conservation, i.e. $C$-independence, by means of its Poisson-algebra with the HDA generators.
We postpone the investigation of this formulation to future work.

\subsection{Poisson-Lie groups, their actions, \& their doubles}\label{subsec:PLsymmetries}

In this section we review and recall standard material on Poisson-Lie groups, Poisson-Lie actions, the Heisenberg and Drinfel'd doubles, and dressing transformations. The discussion is self-contained. The reader familiar with these topics can safely skip this section after familiarizing themselves with our notation.

\paragraph*{Poisson-Lie groups \& Poisson actions.}
We start off by defining Poisson-Lie groups, i.e. Lie groups equipped with a Poisson structure compatible with group multiplication:

\begin{definition}[Poisson-Lie Group]\label{def:PLgroup}

A Lie group $\cG$ is called a \textit{Poisson-Lie group} if it is also a Poisson manifold such that the group multiplication $m:\cG\times \cG\to \cG$ is a Poisson map, where $\cG\times \cG$ is equipped with the product Poisson structure. In this case we say that the Poisson structure on $\cG$ is \textit{multiplicative}.
In terms of a Poisson bivector on $\cG$, $\pi_{\cG}\in \mathfrak{X}^{2}(\cG)$, the Poisson structure on $\cG$ is multiplicative if and only if,
\begin{align}{\label{multiplicative}}
    \pi_{\cG}(gh) = (l_{g})_{*}\pi_{\cG}(h)+(r_{h})_{*}\pi_{\cG}(g)\,,\quad\forall g,h\in \cG\,,
\end{align}
where $l_{g}$ and $r_{h}$ denote respectively the left and right translations in $\cG$ by $g$ and $h$, and $(l_{g})_{*}$, $(r_{h})_{*}$ their linearizations extended to multivector fields. 
\end{definition}

It is not hard to check that this multiplicative condition implies all Poisson-Lie groups have degenerate Poisson brackets at the identity $1\in\cG$:
\begin{align}
    \pi_\cG(1) = 0\,.
\end{align}
In particular, Poisson-Lie groups are \textit{never} themselves symplectic.

\begin{example}\label{ex:R3}
   The Poisson structure on the Abelian Lie group $\cG = (\mathbb{R}^3, +)$ given by
    \begin{align}
        \acomm{x_{i}}{x_{j}} = \epsilon_{ijk}x_{k}
        \iff
        \pi_\cG(df_1,df_2)(x) = \acomm{f_1}{f_2}(x) = \vec x \cdot (\vec \nabla f_{1} \times \vec \nabla f_{2})\,,
    \end{align}
    for all $f_{1},f_{2}\in C^{\infty}(\mathbb{R}^{3})$, and $x\in \mathbb{R}^{3}$, is multiplicative and therefore $(\mathbb{R}^3, +, \{\cdot,\cdot\})$ is a Poisson-Lie group. The multiplicative property \eqref{multiplicative} can be checked explicitly. Let $x,y\in\mathbb{R}^3$; then the left hand side of \eqref{multiplicative} is given by
    \begin{align}
        \pi_{\cG}(df_{1}, df_{2})(x+y) = \acomm{f_{1}}{f_{2}}(x+y) = \left. \vec z \cdot (\vec \nabla f_{1} \times \vec \nabla f_{2})\right|_{z = x+y}\,,
    \end{align}
    and the right hand side by
    \begin{align}
        (l_{x})_{*}\pi_{\cG}(df_{1}, df_{2})(y) = \pi_{\cG}(d(f_{1}\circ l_{x}), d(f_{2}\circ l_{x}))(y)  &=\acomm{f_1^{(x)}}{f_2^{(x)}}(y)\nonumber\\
        &= \left. \vec z \cdot (\vec \nabla f^{(x)}_{1} \times \vec \nabla f^{(x)}_{2})\right|_{z = y}\,,\\
         (r_{y})_{*}\pi_{\cG}(df_{1}, df_{2})(x) =\pi_{\cG}(d(f_{1}\circ r_{y}), d(f_{2}\circ r_{y}))(x)  &=\acomm{f_1^{(y)}}{f_2^{(y)}}(x) \nonumber\\
        &= \left. \vec z \cdot (\vec \nabla f^{(y)}_{1} \times \vec \nabla f^{(y)}_{2})\right|_{z = x}\,,
    \end{align}
where we denoted $(f\circ l_x )(z)=f(x+z)=: f^{(x)}(z)$ and $(f\circ r_y )(z)=f(z+y) =: f^{(y)}(z)$. The conclusion follows noting that e.g. $(\vec \nabla f^{(x)})(y) =( \vec \nabla f)(x+y)$.
     Therefore, \eqref{multiplicative} is satisfied and $\cG = (\mathbb{R}^{3},+)$ is a Poisson-Lie group when endowed with the Poisson bivector $\pi_{\cG}$ given above.
\end{example}

As the infinitesimal version of a Lie group is a Lie algebra, the infinitesimal version of a Poisson-Lie group is a \textit{Lie bialgebra}  $(\mathfrak{g},\mathfrak{g}^{*})$, where $\mathfrak{g}^{*}$ is the vector space dual of $\fg$ endowed with a Lie bracket $[\cdot,\cdot]_\ast$ derived from the linearization of the Poisson bivector $\pi_\cG$ at the identity $1\in\cG$. Namely, recalling the earlier observation that $\pi_\cG(1) = 0$, and denoting its linearization at the identity by $\delta$,
\begin{align}\label{eq:delta=dpi}
\delta := d \pi_\cG|_{g=1} : \fg \mapsto \fg \wedge \fg\,,
\end{align}
one defines $[\cdot,\cdot]_\ast : \fg^* \wedge \fg^* \to \fg^*$ by dualizing $\delta$, i.e. demanding that for all $\alpha\in \mathfrak{g}$ and $X_{1},X_{2}\in \mathfrak{g}^{*}$
\begin{align}\label{eq:delta=dualliealgebra}
    \langle \alpha, \comm{X_{1}}{X_{2}}_{*}\rangle := \langle \delta (\alpha), X_{1}\otimes X_{2}\rangle\,. 
\end{align}
The Jacobi identity together with the multiplicativity condition for $\pi_\cG$ then implies that $\delta$ is a (Chevalley-Eilenberg) cocycle of $\fg$, $\delta \in (Z^1_\text{ad}(\fg,\fg\otimes\fg), \partial_\text{CE})$ \cite[Def.8.1.1]{majid_1995}, or equivalently that $(\fg^*,[\cdot,\cdot]_*)$ is a Lie algebra \cite[Thm. 2.18]{lu_multiplicative_nodate}.

In practice, one often considers {a special class of cocycles which are coboundary, namely, they are characterized in terms of an $r$-matrix in $\fg\otimes\fg$, i.e. $\delta = \partial_\text{CE} r = [r,\cdot]$\cite[Ch.2]{Chari_Pressley_2000}.}

 Like regular Lie groups, it is fruitful to consider the action of Poisson-Lie groups on various spaces. Of particular interest to physics is the action of Poisson-Lie groups on symplectic or Poisson manifolds. 
\begin{definition}[Poisson Action]\label{def:poissonaction}
    The left action $\sigma:\cG\times P\to P$ of a Poisson-Lie group $\cG$ on a Poisson manifold $P$ is called a \textit{Poisson action} if $\sigma$ is a Poisson map, where the manifold $\cG\times P$ has the product Poisson structure. Similarly for a right action.
\end{definition}

From any action we may define the induced maps $\sigma_{p}:\cG\to P, \sigma_{p}(g) := \sigma(g,p) = gp$ and $\sigma_{g}:P\to P, \sigma_{g}(p) := \sigma(g,p) = gp$. The corresponding induced Lie algebra action $\brho : \fg \to \mathfrak{X}(P)$ can then be defined as
\begin{align}{\label{PLliealgaction}}
    \brho(\alpha)(p) = (\sigma_{p})_{*}(\alpha)\,, \quad \forall \alpha\in \fg, p\in P\,,
\end{align}
where  $(\sigma_{p})_{*} := d\sigma_p|_{g=1} : \fg \to T_pP$. 

This leads to the following Theorems which provide a Poisson-Lie group theoretic origin for the analogous field space statements presented in \S\ref{subsec:NewNoetherToPLgroupSymm}.
\begin{theorem}[Lu and Weinstein \protect{\cite{lu_poisson_1990}}]\label{thm:PLequiv}
    Given a Poisson-Lie group $\cG$ and a Poisson manifold $P$, respectively with Poisson bivectors $\pi_{\cG}$ and $\pi_{P}$, the following conditions are equivalent:
    \begin{enumerate}
        \item $\sigma:\cG\times P \to P$ is a Poisson action;
        \item for all $g\in \cG$ and $p\in P$
        \begin{align}{\label{eq:PoissonMap}}
            \pi_{P}(gp) = (\sigma_{g})_{*}\pi_{P}(p)+(\sigma_{p})_{*}\pi_{\cG}(g)\,;
        \end{align}
        \item assuming $\cG$ is connected, then for each $\alpha\in\mathfrak{g}$ we have,
        \begin{align}\label{lusymplectomorphism}
            \mathbb{L}_{\brho(\alpha)}\pi_{P} = (\brho\wedge \brho)\delta(\alpha)\,;
        \end{align}
         \item assuming that $\cG$ is connected, then for any 1-forms $f_{1}, f_{2}\in \Omega^{1}(P)$ we have,
    \begin{align}{\label{nosymplectomorphism}}
        (\mathbb{L}_{\brho(\alpha)}\pi_{P})(f_{1},f_{2}) = \langle \comm{X_{f_{1}}}{X_{f_{2}}}_{\ast},\alpha\rangle \,,
    \end{align}
    where $X_{f}$ is the $\mathfrak{g}^{*}$-valued function on $P$ defined by
    \begin{align}
        \langle X_{f},\alpha\rangle = \langle f, \brho(\alpha)\rangle \,,\quad \alpha\in\mathfrak{g}\,,
    \end{align}
    and $\comm{X_{f_{1}}}{X_{f_{2}}}_{\ast}$ denotes the pointwise bracket in $\mathfrak{g}^{*}$.
    \end{enumerate}
\end{theorem}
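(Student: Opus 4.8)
The plan is to prove the four conditions equivalent through the chain $(1)\Leftrightarrow(2)$, $(2)\Leftrightarrow(3)$, $(3)\Leftrightarrow(4)$, with condition $(3)$ playing the role of an infinitesimal hub. First, $(1)\Leftrightarrow(2)$ is essentially definitional. I would unpack what it means for $\sigma:\cG\times P\to P$ to be a Poisson map when $\cG\times P$ carries the product bivector $\pi_{\cG}\oplus\pi_P$. At a point $(g,p)$ the differential of $\sigma$ splits across the two factors as $d\sigma_{(g,p)}(u,v)=(\sigma_p)_* u+(\sigma_g)_* v$, and since the product bivector has no cross term, pushing it forward produces exactly $(\sigma_p)_*\pi_{\cG}(g)+(\sigma_g)_*\pi_P(p)$. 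Demanding that this equals $\pi_P(gp)$ reproduces \eqref{eq:PoissonMap}; there is no subtlety beyond tracking which factor each pushforward acts on.

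Next I would obtain $(2)\Rightarrow(3)$ by differentiation. Since $\sigma_g$ is a diffeomorphism of $P$ with inverse $\sigma_{g^{-1}}$, I apply $(\sigma_{g^{-1}})_*$ to both sides of \eqref{eq:PoissonMap} to transport everything back to $p$, then pick a curve $g(t)$ with $g(0)=1$ and $\dot g(0)=\alpha$ and differentiate at $t=0$. The transported left-hand side $(\sigma_{g(t)^{-1}})_*\pi_P(g(t)p)$ is precisely the pullback $(\phi_t^*\pi_P)(p)$ for the flow $\phi_t=\sigma_{g(t)}$ generated by $\brho(\alpha)$, so its $t$-derivative is $\mathbb{L}_{\brho(\alpha)}\pi_P$. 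On the right the term $\pi_P(p)$ is constant, and in $(\sigma_{g(t)^{-1}})_*(\sigma_p)_*\pi_{\cG}(g(t))$ only the derivative hitting $\pi_{\cG}(g(t))$ survives, because $\pi_{\cG}(1)=0$ and $(\sigma_1)_*=\mathrm{id}$; this yields $(\sigma_p)_*\delta(\alpha)=(\brho\wedge\brho)\delta(\alpha)$, using $\delta=d\pi_{\cG}|_{1}$ from \eqref{eq:delta=dpi} and the definition \eqref{PLliealgaction} of $\brho$. This is exactly \eqref{lusymplectomorphism}.

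The equivalence $(3)\Leftrightarrow(4)$ is pure dualization. Writing $\delta(\alpha)\in\fg\wedge\fg$ and pairing $(\brho\wedge\brho)\delta(\alpha)$ against $f_1\otimes f_2$, the defining relation $\langle X_f,\alpha\rangle=\langle f,\brho(\alpha)\rangle$ turns each $\brho$-contraction into an evaluation of $X_{f_1}$ and $X_{f_2}$ on the legs of $\delta(\alpha)$. By the definition \eqref{eq:delta=dualliealgebra} of the dual bracket this equals $\langle\delta(\alpha),X_{f_1}\otimes X_{f_2}\rangle=\langle[X_{f_1},X_{f_2}]_\ast,\alpha\rangle$, which is \eqref{nosymplectomorphism}. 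Since $1$-forms separate bivectors pointwise, the tensorial identity $(3)$ and its paired form $(4)$ carry the same content.

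The main obstacle is the converse $(3)\Rightarrow(2)$, which is where connectedness of $\cG$ and a genuine integration enter. Fixing $p$, I would compare the two $\cG$-dependent bivectors at $p$, namely $C_1(g):=(\sigma_{g^{-1}})_*\pi_P(gp)-\pi_P(p)$ and $C_2(g):=(\sigma_{g^{-1}})_*(\sigma_p)_*\pi_{\cG}(g)$, whose equality is precisely the transported form of \eqref{eq:PoissonMap}. Both vanish at $g=1$ and, by the computation above, have the same derivative at the identity along every $\alpha$ thanks to $(3)$. The key point is that each is a \emph{group $1$-cocycle} for the $\cG$-action on bivectors at $p$: for $C_2$ this cocycle property is exactly the multiplicativity \eqref{multiplicative} of $\pi_{\cG}$ pushed through the equivariance of $\sigma_p$, and for $C_1$ it follows from $\sigma_{gh}=\sigma_g\sigma_h$. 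Two smooth $1$-cocycles with the same linearization at the identity must agree on a connected group, forcing $C_1\equiv C_2$ and hence \eqref{eq:PoissonMap}. I expect the delicate computational core to be the careful matching of the pushforward cocycle from $\sigma_{gh}=\sigma_g\sigma_h$ with the multiplicative cocycle of $\pi_{\cG}$; once this is in place the reconstruction reduces to the standard fact that a connected Lie group is generated by any neighborhood of the identity, so that agreement to first order propagates globally.
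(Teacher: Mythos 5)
The paper itself gives no proof of this theorem: it is quoted from Lu and Weinstein \cite{lu_poisson_1990}, so your proposal can only be measured against the standard argument of the cited reference---which is essentially what you reconstruct. Your $(1)\Leftrightarrow(2)$ (the product bivector has no cross term, so $\wedge^2 d\sigma$ splits as claimed), your $(2)\Rightarrow(3)$ (differentiate the transported identity at $g=1$; the unwanted product-rule term dies because $\pi_\cG(1)=0$ and $(\sigma_1)_*=\mathrm{id}$, leaving $(\brho\wedge\brho)\delta(\alpha)$ via \eqref{eq:delta=dpi} and \eqref{PLliealgaction}), and your $(3)\Leftrightarrow(4)$ (dualization through \eqref{eq:delta=dualliealgebra}, plus the fact that 1-forms separate bivectors pointwise) are all correct and are exactly the mechanisms of the Lu--Weinstein proof. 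You also correctly place the connectedness hypothesis: it is needed only for the integration step $(3)\Rightarrow(2)$.

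That integration step is the one place where your formulation needs repair. As written, $C_1(g)$ and $C_2(g)$ are based at a \emph{fixed} $p$, but there is no $\cG$-action on $\wedge^2T_pP$: $\sigma_h$ moves $p$, so the cocycle identity you invoke mixes basepoints---e.g. $C_1(gh)$ at $p$ involves the analogue of $C_1(g)$ based at $hp$. The fix is to promote to bivector-\emph{field}-valued maps $\bar C_1(g):=\sigma_g^*\pi_P-\pi_P$ and $\bar C_2(g)(p):=(\sigma_{g^{-1}})_*(\sigma_p)_*\pi_\cG(g)$, which both satisfy the twisted cocycle identity $\bar C(gh)=\bar C(h)+\sigma_h^*\bar C(g)$ in $\mathfrak{X}^2(P)$; for $\bar C_2$ this follows from multiplicativity \eqref{multiplicative} together with the intertwining relations $\sigma_p\circ l_g=\sigma_g\circ\sigma_p$ and $\sigma_p\circ r_h=\sigma_{hp}$, which is the ``delicate matching'' you anticipated. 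Second, ``a connected group is generated by a neighborhood of the identity'' does not by itself propagate first-order agreement; what does is the cocycle identity combined with ODE uniqueness: setting $D:=\bar C_1-\bar C_2$ and $F(t):=D\big(g_0\exp(t\alpha)\big)$, the identity yields $\dot F=\bbL_{\brho(\alpha)}F$, and since $\frac{d}{dt}\big[\sigma_{\exp(-t\alpha)}^*F(t)\big]=0$ one gets $F(t)=\sigma_{\exp(t\alpha)}^*D(g_0)$ exactly; hence $D(g_0)=0$ forces $D=0$ along every one-parameter subgroup through $g_0$, and induction over finite products of exponentials covers all of the connected group. With these two repairs---which preserve your architecture unchanged---the proof is complete and coincides with the standard one.
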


\begin{theorem}[Lu \protect{\cite[Theorem 3.7]{lu_multiplicative_nodate}}, Babelon and Bernard \cite{babelon_dressing_1992}]\label{thm:PL=flatness}
    A left (right) action $\brho$ on a symplectic manifold $(P,\Omega)$ is Poisson if and only if  $\mathbb{Q} := \bbi_{\brho(\cdot)}\Omega$ is such that
    \begin{align}
    \bbd \mathbb{Q} + \frac12 [ \mathbb{Q}, \mathbb{Q}]_{*} = 0
    \qquad
    \left(\text{or }
    \bbd \mathbb{Q} - \frac12 [ \mathbb{Q}, \mathbb{Q}]_{*} = 0\right)\,,
    \end{align}
    that is, if and only if locally there exists a $Q$ such that $\mathbb{Q} = Q^{-1}\bbd Q$ (or $\mathbb{Q} = \bbd Q Q^{-1}$, respectively).
\end{theorem}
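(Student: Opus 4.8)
The plan is to establish the two equivalences $1\Leftrightarrow 2$ and $2\Leftrightarrow 3$ separately, where ``$2$'' denotes the Maurer--Cartan flatness equation $\bbd\mathbb{Q}+\tfrac12[\mathbb{Q},\mathbb{Q}]_\ast=0$ and ``$3$'' the local existence of a primitive $Q$ with $\mathbb{Q}=Q^{-1}\bbd Q$. The first equivalence I would extract directly from the already-proven Theorem \ref{thm:PLequiv} (Lu--Weinstein) by rewriting its condition $4$ in symplectic language; the second is an instance of the non-Abelian Poincar\'e lemma, i.e.\ the classical fact that flatness is precisely the integrability condition for the existence of a group-valued primitive.

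For $1\Leftrightarrow 2$, I would begin from the observation that, since $\Omega$ is closed, Cartan's formula gives $\bbL_{\brho(\alpha)}\Omega=\bbd\,\bbi_{\brho(\alpha)}\Omega=\langle\bbd\mathbb{Q},\alpha\rangle$ for every $\alpha\in\fg$. Because $(P,\Omega)$ is symplectic, the Poisson bivector $\pi_P$ is the pointwise inverse of $\Omega$, so differentiating the relation $\Omega\,\pi_P=\mathrm{id}$ along $\brho(\alpha)$ expresses $\bbL_{\brho(\alpha)}\pi_P$ as $\bbL_{\brho(\alpha)}\Omega$ sandwiched between two copies of $\pi_P^\sharp$. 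Evaluating condition $4$ of Theorem \ref{thm:PLequiv} on one-forms $f_i=\bbi_{Y_i}\Omega$ then turns its left-hand side into $\pm(\bbL_{\brho(\alpha)}\Omega)(Y_1,Y_2)=\pm\langle\bbd\mathbb{Q},\alpha\rangle(Y_1,Y_2)$. For the right-hand side I would use the defining relation $\langle X_f,\alpha\rangle=\langle f,\brho(\alpha)\rangle$ to identify $X_{f_i}=\mp\,\bbi_{Y_i}\mathbb{Q}$, whence $\langle[X_{f_1},X_{f_2}]_\ast,\alpha\rangle=\langle\tfrac12[\mathbb{Q},\mathbb{Q}]_\ast,\alpha\rangle(Y_1,Y_2)$, the factor $\tfrac12$ coming from the convention $[\mathbb{Q},\mathbb{Q}]_\ast(Y_1,Y_2)=2[\mathbb{Q}(Y_1),\mathbb{Q}(Y_2)]_\ast$ for a one-form. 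Since the $f_i$ (hence $Y_i$) are arbitrary and $\Omega$ is non-degenerate, the resulting equality of $\fg^\ast$-valued two-forms holds in all arguments iff $\bbd\mathbb{Q}+\tfrac12[\mathbb{Q},\mathbb{Q}]_\ast=0$; the overall sign produced by the inverse relation and by the identification of $X_{f_i}$ is exactly what toggles between the left and the right case.

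For $2\Leftrightarrow 3$, the ``if'' direction is a direct computation: if $\mathbb{Q}=Q^{-1}\bbd Q$ is the pullback by $Q:P\to\cG^\ast$ of the left Maurer--Cartan form on $\cG^\ast$, then it automatically satisfies the structural equation $\bbd\mathbb{Q}+\tfrac12[\mathbb{Q},\mathbb{Q}]_\ast=0$, since pullback commutes with $\bbd$ and with the bracket (concretely, $\bbd(Q^{-1}\bbd Q)=-Q^{-1}\bbd Q\wedge Q^{-1}\bbd Q=-\tfrac12[\mathbb{Q},\mathbb{Q}]_\ast$). For the ``only if'' direction I would invoke the standard integrability theorem for $\fg^\ast$-valued forms: the overdetermined system $\bbd Q=Q\,\mathbb{Q}$ admits, around any point, a solution $Q$ valued in the connected, simply connected group $\cG^\ast$ integrating $(\fg^\ast,[\cdot,\cdot]_\ast)$ precisely when the Frobenius integrability condition---which is again the Maurer--Cartan equation obeyed by $\mathbb{Q}$---holds, the solution being unique up to left multiplication by a constant element of $\cG^\ast$. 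This is the content of ``locally there exists a $Q$'', and it accounts for the base-point ambiguity $\varphi_0$ already seen in \eqref{eq:PLmommap2}.

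The routine but error-prone part is the sign bookkeeping in the first equivalence: the relative sign between $\bbd\mathbb{Q}$ and $[\mathbb{Q},\mathbb{Q}]_\ast$ is exactly what encodes whether $\brho$ is a left or a right action, and pinning it down requires fixing once and for all the conventions for $\pi_P=\Omega^{-1}$, for $\pi_P^\sharp$, and for $[\cdot,\cdot]_\ast$ on $\fg^\ast$ (itself defined by dualizing the cocycle $\delta$ as in \eqref{eq:delta=dualliealgebra}). The genuinely substantive input, by contrast, is the Frobenius integrability invoked in $2\Rightarrow 3$; everything else is bilinear algebra layered on top of Theorem \ref{thm:PLequiv}.
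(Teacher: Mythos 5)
Your proof is correct, but there is an important structural point: the paper offers no proof of this theorem at all --- it is quoted as a known result, with the citations to Lu's thesis and to Babelon--Bernard standing in for the argument. So there is no in-paper proof to compare against; what you have written is essentially a reconstruction of the standard argument from that literature, organized so that the first equivalence leans only on machinery the paper has already stated. That organization is sound. Your step $1\Leftrightarrow 2$ via condition 4 of Theorem \ref{thm:PLequiv} checks out in detail: differentiating $\Omega^\flat\circ\pi_P^\sharp=\mathrm{id}$ gives $(\bbL_{\brho(\alpha)}\pi_P)(f_1,f_2)=-(\bbL_{\brho(\alpha)}\Omega)(Y_1,Y_2)$ for $f_i=\bbi_{Y_i}\Omega$; the identification $\langle X_{f_i},\alpha\rangle=\Omega(Y_i,\brho(\alpha))=-\langle\bbi_{Y_i}\mathbb{Q},\alpha\rangle$ makes the two minus signs cancel in the bracket; and the wedge convention $[\mathbb{Q},\mathbb{Q}]_*(Y_1,Y_2)=2[\mathbb{Q}(Y_1),\mathbb{Q}(Y_2)]_*$ produces exactly the factor $\tfrac12$, yielding $\bbd\mathbb{Q}+\tfrac12[\mathbb{Q},\mathbb{Q}]_*=0$ in the left case. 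Your step $2\Leftrightarrow 3$ is the non-Abelian Poincar\'e lemma, correctly split into the trivial pullback direction and the Frobenius integrability of $\bbd Q=Q\,\mathbb{Q}$, with uniqueness up to a constant left factor matching the base-point ambiguity of \eqref{eq:PLmommap2}.

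Two minor caveats, neither fatal. First, the equivalence $1\Leftrightarrow 4$ in Theorem \ref{thm:PLequiv} assumes $\cG$ connected; you should carry that hypothesis explicitly, since it is what lets an infinitesimal flatness condition integrate back to the Poisson property of the finite action. Second, your attribution of the left/right sign toggle to ``the inverse relation and the identification of $X_{f_i}$'' is slightly misplaced: those conventions are fixed once and for all, and the toggle actually originates in the right-action analogue of condition 4 --- equivalently, in whether $\brho$ is a homomorphism or an anti-homomorphism of Lie algebras under the paper's bracket conventions. Since you flagged the sign bookkeeping as the fragile part rather than asserting a specific mechanism, this does not undermine the proof, but if you write it out in full you should derive the right-action version of \eqref{nosymplectomorphism} rather than hope the conventions conspire.
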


Note that this theorem generalizes the statement that an action $\brho$ on $(P,\Omega)$ is symplectic, namely $\bbL_{\brho(\alpha)}\Omega =0$ for all $\alpha \in \fg$ (cf. \eqref{lusymplectomorphism}), if and only if $\mathbb{Q}:=\bbi_{\brho(\cdot)}\Omega$ is $\mathbb{d}$-closed, that is, if and only if locally there exists a $q$ such that $\mathbb{Q} = \bbd q$. 

\begin{example}
    Let $\cG$ be a Lie group with corresponding Lie algebra $\mathfrak{g}$. Then the action of $\mathfrak{g}^{*}$ in the right trivialization of the cotangent bundle $T^{*}\cG$ by $(Y,(g,X_{g}))\mapsto (g,X_g+l^{*}_{g^{-1}}Y)$ for $Y\in \mathfrak{g}^{*}, g\in \cG$ and $X_{g}\in T^{*}_{g}\cG$ is a Poisson action. In a left trivialization of $T^*\cG $, this action becomes $(Y,(g,X)) \mapsto (g, X + Y)$. Note that for $\fg=\mathfrak{su}(2)$, this is related to Example \ref{ex:R3}.
\end{example}

It remains to elucidate one last piece of formalism inherent to Poisson-Lie groups which we will need for their realization as symmetries in field theory. That is, the notion of Heisenberg and Drinfel'd doubles. In a nutshell, the Heisenberg double is a group $\cD$ equipped with a symplectic structure, while the Drinfel'd double is the same group $\cD$ now equipped with a (degenerate) Poisson-Lie structure, such that the action of the latter on the former is Poisson.

\paragraph*{The Heisenberg \& Drinfel'd doubles, \& dressing actions.}

The notion of doubles are a natural framework to characterize Poisson-Lie symmetries. This is in part because given any Poisson-Lie group $\cG$, there is a one-to-one correspondence between the induced classical doubles and the induced Lie bialgebra structures it admits \cite{Chari_Pressley_2000}.  In this section, we review standard material about such objects, which can be found in many references such as \cite{lu_multiplicative_nodate}, \cite{Chari_Pressley_2000}, and \cite{majid_1995}. 

Let us consider a real finite dimensional Lie algebra $\fg$ and its dual Lie algebra $\fg^*$. 
On $\fg$ and $\fg^*$, we pick dual bases $\{\tau_a\}$ and $\{\tau_*^a\}$ respectively, satisfying $\langle \tau_*^a, \tau_b\rangle = \delta^a_b$, $\comm{\tau_{a}}{\tau_{b}}=\tensor{c}{_{ab}^{c}}\tau_{c}$, and $\comm{\tau_{*}^{a}}{\tau_{*}^{b}}_{*}=\tensor{\tilde{c}}{^{ab}_{c}}\tau_{*}^{c}$.

Because of the duality between $\fg$ and $\fg^*$ (note, $\fg^{**} \cong \fg$), we have both the co-adjoint action ${\rm ad}^*$ of $\fg$ on $\fg^*$ and the (dual) co-adjoint action  $\widetilde{{\rm ad}}^*$ of $\fg^*$ on $\fg$. Thanks to this pair of actions, we can build a bracket on $\mathfrak{d} := \fg\oplus \fg^*$:\footnote{The signs in the last formula can be checked as follows: $[\alpha,\bullet]_\mathfrak{d}$ and $[\bullet,X]_\mathfrak{d}$ are a left and a right action, respectively, thus so are $-\ad^*_\alpha$ and $\widetilde{\ad}{}_X^*$. This is because $\ad$ are by definition left actions and $\ad^*$ is its transpose. (Recall: with this convention, the co-adjoint \emph{representation} is given by $-\ad^*$.)}
\begin{equation}\label{eq:Ddoublealg}
[\alpha,\beta]_\mathfrak{d}:= [\alpha,\beta]\,, \ \  
[{X},{Y}]_\mathfrak{d} := [{X},{Y}]_*\,,
\ \ 
[\alpha,{X}]_\mathfrak{d} := -\mathrm{ad}^*_{\alpha}{X} + \widetilde{\mathrm{ad}}{}^*_{{X}}\alpha\,,\ \  \forall\alpha,\beta\in \fg,X,Y\in \fg^{*}\,.
\end{equation}
This bracket is a Lie bracket if and only if the structure constants of $\fg$ and $\fg^*$ satisfy the compatibility condition
\begin{equation}\label{eq:Jacobidrinfeld}
    \tensor{c}{_{ab}^{c}}\tensor{\tilde{c}}{^{de}_{c}} = \tensor{c}{_{ac}^{d}}\tensor{\tilde{c}}{^{ce}_{b}}+\tensor{c}{_{ac}^{e}}\tensor{\tilde{c}}{^{dc}_{b}}-\tensor{c}{_{bc}^{d}}\tensor{\tilde{c}}{^{ce}_{a}}-\tensor{c}{_{bc}^{e}}\tensor{\tilde{c}}{^{dc}_{a}}\,.
\end{equation}
A theorem of Manin (see \cite[Thm.2.22]{lu_multiplicative_nodate} or \cite[Prop.1.3.4 and Lemma 1.3.5]{Chari_Pressley_2000}) guarantees that equation \eqref{eq:Jacobidrinfeld} is \textit{always} satisfied for $(\fg,\fg^*)$ a Lie bialgebra, along with the existence of an adjoint invariant symmetric bilinear pairing $\langle\cdot,\cdot\rangle:\mathfrak{d}\times\mathfrak{d}\to\mathbb{R}$ on $\mathfrak{d}:=\fg\oplus \fg^{*}$ for which $\fg$ and $\fg^{*}$ are isotropic.\footnote{``Isotropic'' means that $\langle \fg,\fg\rangle = \langle \fg^{*},\fg^{*}\rangle =0$.} This leads to the following:
\begin{definition}[Classical Double]\label{def:classicaldouble} 
Let $(\fg,\fg^*)$ be a Lie  bialgebra. The Lie algebra structure on $\mathfrak{d}:= \fg\oplus\fg^*$ defined by (\ref{eq:Ddoublealg},\ref{eq:Jacobidrinfeld}) along with a non-degenerate symmetric adjoint invariant bilinear pairing $\langle\cdot,\cdot\rangle:\mathfrak{d}\times\mathfrak{d}\to\mathbb{R}$ for which $\fg$ and $\fg^{*}$ are isotropic is called the \emph{classical (Drinfel'd) double} and is denoted
\begin{align}
    \mathfrak{d}:=\fg \bowtie \fg^*\cong \fg^* \bowtie \fg\,.
\end{align}
\end{definition}
The notation $\bowtie$ generalizes that of the semidirect products $\ltimes$ and $\rtimes$, emphasizing that each subalgebra carries a (non-trivial) action on the other.
We will use the symbols $\rhd$ and $\lhd$ to encode the different left and right actions\footnote{To keep the notation uncluttered, we use the same symbols in different cases, as the context should make it clear what acts on what and from which side.} of $\fg$ on $\fg^*$ and vice versa.
For all $\alpha,\beta\in \fg,X,Y\in \fg^{*}$, denote
\begin{subequations}
    \label{eq:triangleactions}
\begin{align}
    \alpha \rhd X :=  \left.[\alpha,{X}]_\mathfrak{d}\right\vert_{\fg^*} = -\mathrm{ad}^*_{\alpha}{X}\,,
    \quad\text{and}\quad
    \alpha \lhd X := \left.[\alpha,{X}]_\mathfrak{d}\right\vert_{\fg}  = \widetilde{\mathrm{ad}}{}^*_{{X}}\alpha\,,
\end{align}
where $\bullet_{|_\fg}$ denotes the projection of the vector space $\mathfrak{d} = \fg \oplus \fg^*$ on the subspace $\fg$ etc. And similarly,
\begin{align}
    X \lhd \alpha :=  \left.[X, \alpha]_\mathfrak{d}\right\vert_{\fg^*} = \mathrm{ad}^*_{\alpha}{X}\,,
    \quad\text{and}\quad
    X \rhd\alpha := \left.[X,\alpha]_\mathfrak{d}\right\vert_{\fg}  = - \widetilde{\mathrm{ad}}{}^*_{{X}}\alpha\,.
\end{align}
Finally, note that by exponentiating the infinitesimal actions to a finite flow, setting\linebreak $(h,\ell) = (e^\alpha,e^X)$, one finds
\begin{align}{\label{eq:semiabelaindressing}}
    X \lhd h = h^{-1} \rhd X = \Ad^*_{h} X\,,
    \quad\text{and}\quad
    \alpha \lhd \ell =   \ell^{-1} \rhd \alpha = \widetilde{\Ad}{}^*_\ell \alpha\,.
\end{align}
\end{subequations}

\begin{theorem}[Drinfel'd \protect{\cite[Theorem 1.3.2]{Chari_Pressley_2000}}]\label{thm:Drinfeld}
If $(\cG, \pi_\cG)$ is a Poisson-Lie group, then its linearization at $1\in\cG$ defines the classical double $\mathfrak{d} = \fg\bowtie\fg^*$, as per equations \eqref{eq:delta=dpi}, \eqref{eq:delta=dualliealgebra}, and Definition \ref{def:classicaldouble}. Conversely, if $\cG$ is connected
and simply connected, then every classical double $\mathfrak{d} =\fg \bowtie \fg^*$ over $\fg$ defines a unique multiplicative Poisson structure $\pi_\cG$ on $\cG$ such that $(\fg, \fg^*)$ is the linearization of the Poisson-Lie group $(\cG, \pi_\cG)$.
\end{theorem}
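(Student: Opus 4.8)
The plan is to establish the two implications separately. The forward (differentiation) direction largely repackages constructions already recorded above, whereas the converse (integration) direction carries the genuine content and is where I would concentrate the effort.

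\emph{Forward direction.} Starting from a Poisson-Lie group $(\cG,\pi_\cG)$, I take the linearization $\delta=d\pi_\cG|_{g=1}:\fg\to\fg\wedge\fg$ of \eqref{eq:delta=dpi} and dualize it to the bracket $[\cdot,\cdot]_*$ on $\fg^*$ via \eqref{eq:delta=dualliealgebra}. Differentiating the multiplicativity condition \eqref{multiplicative} at the identity shows that $\delta$ is a Chevalley-Eilenberg $1$-cocycle, while the Jacobi identity of the Poisson bracket on $\cG$ linearizes to the co-Jacobi identity, i.e. the Jacobi identity for $[\cdot,\cdot]_*$. Hence $(\fg,\fg^*)$ is a Lie bialgebra, and Manin's theorem (invoked above) supplies the compatibility \eqref{eq:Jacobidrinfeld} together with the invariant isotropic pairing, assembling $\mathfrak{d}=\fg\bowtie\fg^*$ as in Definition \ref{def:classicaldouble}.

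\emph{Converse direction.} Given $\mathfrak{d}=\fg\bowtie\fg^*$, I first extract $\delta:\fg\to\fg\wedge\fg$ as the transpose of $[\cdot,\cdot]_*$ and reduce the construction of $\pi_\cG$ to an integration problem. Writing any multiplicative bivector as $\pi_\cG(g)=(r_g)_*\Lambda(g)$ with $\Lambda:\cG\to\fg\wedge\fg$, a short computation turns \eqref{multiplicative} into the group $1$-cocycle identity
\begin{equation}
\Lambda(gh)=\Lambda(g)+\Ad_g\,\Lambda(h)\,,
\end{equation}
whose infinitesimal counterpart is exactly the statement that $\delta$ is a Lie algebra $1$-cocycle for the adjoint action on $\fg\wedge\fg$---a property read off from the mixed bracket relations in \eqref{eq:Ddoublealg}. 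Since $\cG$ is connected and simply connected, this infinitesimal cocycle integrates to a \emph{unique} group cocycle $\Lambda$ (path-independence of the reconstructing integral being guaranteed by simple connectedness), and I set $\pi_\cG(g):=(r_g)_*\Lambda(g)$, which is multiplicative with linearization $\delta$ by construction.

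\emph{The hard part} is verifying that this $\pi_\cG$ is genuinely Poisson, i.e. that its Schouten-Nijenhuis self-bracket $[\pi_\cG,\pi_\cG]$ vanishes. The structural facts I would lean on are that the Schouten bracket of a multiplicative bivector with itself is again a \emph{multiplicative} trivector, and that on a connected group a multiplicative multivector field is determined by its intrinsic derivative at the identity. It then suffices to show that the linearization of $[\pi_\cG,\pi_\cG]$ vanishes; a direct computation identifies this linearization with the Jacobiator of $[\cdot,\cdot]_*$ on $\fg^*$ (equivalently, the co-Jacobi expression for $\delta$), which vanishes precisely because $\mathfrak{d}$ obeys the Jacobi identity \eqref{eq:Jacobidrinfeld}. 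Uniqueness of $\pi_\cG$ follows from the uniqueness of the cocycle integration. The two points I expect to be delicate---and on which I would spend the most care---are (i) the integration of the infinitesimal cocycle $\delta$ to the group cocycle $\Lambda$, which rests on the integrability of Lie algebra $1$-cocycles on a simply connected group, and (ii) the injectivity of the linearization map on multiplicative multivectors, which licenses the passage from the vanishing of the linearization of $[\pi_\cG,\pi_\cG]$ to its global vanishing. As an alternative to the cocycle route, one could instead integrate $\mathfrak{d}$ to its simply connected double group $\cD$, carrying a canonical Poisson-Lie structure built from the pairing, and obtain $\pi_\cG$ by restriction to the isotropic (hence Poisson-Lie) subgroup $\cG\subset\cD$.
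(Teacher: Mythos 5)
The paper offers no proof of this theorem: it is quoted from the literature (Drinfel'd, via \cite[Theorem 1.3.2]{Chari_Pressley_2000}), so there is no in-text argument to compare against. Judged on its own merits, your sketch is the standard proof (essentially that of Lu--Weinstein \cite{lu_poisson_1990,lu_multiplicative_nodate}) and its skeleton is sound. The forward direction correctly repackages what the paper already records around \eqref{eq:delta=dpi}--\eqref{eq:delta=dualliealgebra}. For the converse, the correspondence $\pi_\cG(g)=(r_g)_*\Lambda(g)$ between multiplicative bivectors and group $1$-cocycles $\Lambda:\cG\to\fg\wedge\fg$ for the adjoint action is right (one checks $\Lambda(gh)=\Lambda(g)+\Ad_g\Lambda(h)$ exactly as you state, using that left and right translations commute), simple connectedness does integrate the Lie algebra cocycle $\delta$ uniquely to $\Lambda$, and the Jacobi verification via the multiplicativity of $[\pi_\cG,\pi_\cG]$ plus injectivity of the intrinsic derivative on multiplicative multivector fields over a connected group is the correct mechanism; the two steps you flag as delicate are indeed the load-bearing lemmas, and both are true as stated.

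Two small corrections, neither fatal. First, an attribution slip: the vanishing of the Jacobiator of $[\cdot,\cdot]_*$ is \emph{not} a consequence of \eqref{eq:Jacobidrinfeld} --- that equation is the mixed compatibility condition, equivalent to the cocycle property of $\delta$ that you already spent in the integration step. It is instead the pure-$\fg^*$ component of the Jacobi identity in $\mathfrak{d}$, i.e. the datum that $(\fg^*,[\cdot,\cdot]_*)$ is a Lie algebra, which Definition \ref{def:classicaldouble} presupposes; since the full Jacobi identity of $\mathfrak{d} = \fg\bowtie\fg^*$ contains both components (cf. \eqref{eq:Ddoublealg}), your conclusion stands, but the specific equation you cite is the wrong piece of it. Second, in your alternative route through the double group: the subgroup of the simply connected $\cD$ integrating $\fg$ need not coincide with the given simply connected $\cG$ (it may be a quotient), so one must pull the restricted Poisson--Lie structure back along the covering homomorphism, which preserves multiplicativity; as written the restriction argument is incomplete on this point, though easily repaired.
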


Since the classical double $\mathfrak{d} = \fg\bowtie\fg^*$ is a Lie algebra with $\fg$ and $\fg^*$ as subalgebras, asumming that $\cG$ is connected and $\cG^*$ is connected and simply connected, then $\mathfrak{d}$ can be exponentiated to yield a unique connected and simply connected Lie group $\cD$ which has $\cG$ and $\cG^*$ as subgroups.
For simplicity, we assume that every element in $\cD$ can be globally written as a product of elements in $\cG$ and $\cG^*$.\footnote{See \cite{lu_poisson_1990} for sufficient conditions that guarantee that this is the case.} This tells that $\cD\simeq\cG \times \cG^* $ and, since there are two ways of multiplying two group elements together, i.e. from the left and the right, 
\begin{align}
\forall G \in \cD\,, \quad \exists \ell,  \tilde \ell \in \cG^* \,,\quad \text{and}
\quad
h,\tilde h \in \cG\,,
\quad \text{such that} \quad
G =  \ell h =  \tilde h \tilde \ell\,.
\end{align}
This two-fold decomposition allows one to define the \emph{dressing actions}:\footnote{That this is an action, namely that $\ell_1 \rhd(\ell_2 \rhd h) = (\ell_1\ell_2)\rhd h$ etc, can be proved as follows. Let $G = \ell_1\ell_2 h \in \cD$; using \eqref{eq:dressing1}, compute in $\cD$:
$$
\ell_1 \ell_2 h = \underbrace{\big((\ell_1\ell_2)\rhd h\big)}_{\in \cG}\underbrace{\big((\ell_1\ell_2)\lhd h\big)}_{\in \cG^*}\,,
$$
and
$$
\ell_1 \ell_2 h = \ell_1 (\ell_2\rhd h)(\ell_2\lhd h) 
= \underbrace{\big(\ell_1 \rhd (\ell_2\rhd h)\big)}_{\in \cG}\underbrace{\big(\ell_1 \lhd (\ell_2\rhd h)\big) (\ell_2\lhd h) }_{\in\cG^*}\,.
$$
Comparing these two expressions, and using the uniqueness of the decomposition of $\cD$ into $\cG\times\cG^*$ and focusing on the projection on the left $\cG$ factor, we find the sought result that $
(\ell_1\ell_2)\rhd h = \ell_1 \rhd (\ell_2\rhd h).
$}
\begin{align}\label{eq:dressing1}
    \ell \rhd h := \tilde h\,,
    \quad\text{and}\quad
    \ell \lhd h := \tilde \ell\,,
\end{align}
whence the notation
\begin{align}
    \cD \simeq \cG^*\bowtie \cG\,.
\end{align}
Similarly, flipping the order of $\cG^*$ and $\cG$, $\cD \simeq \cG \bowtie \cG^*$, one finds:
\begin{align}\label{eq:dressing2}
    \tilde h \lhd \tilde \ell := h\,,
\quad\text{and}\quad
    \tilde h \rhd \tilde \ell := \ell\,.
\end{align}

To show that the actions $\lhd$ and $\rhd$ introduced here are consistent with those introduced earlier, it is enough to show that expanding the above expressions for $(h,\ell) = (e^\alpha,e^X)$ close to the identity, one recovers $[\alpha,X]_\mathfrak{d} = \alpha \rhd X + \alpha\lhd X$ (cf. equation \eqref{eq:triangleactions}). 
To do so, notice that from the definitions above
\begin{align}
    h\ell = (h \rhd \ell)(h \lhd \ell)\,.
\end{align}
Consider now the expansion $h = 1 + \alpha + \frac12 \alpha^2 + \dots$ and $\ell = 1+ X + \frac12 X^2 + \dots$ (done as if $\cD = \cG\bowtie\cG^*$ was a matrix group). Then, using that $h \lhd 1 = h$ and $\ell \rhd 1 = 1$ etc and therefore that $\alpha \lhd 1 = \alpha$ and $X \rhd 1 = 0$ etc, a little algebra shows that the expansion of the left and right hand sides of the above equation yields the following expansion in the double:
\begin{equation}
    \begin{aligned}
        1 + X + \alpha + \frac12 X^2 &+ \frac12 \alpha^2 + \alpha X + \dots \\
        &= \left(1 + X + \frac12 X^2 + \alpha\rhd X + \dots\right)\left( 1 + \alpha + \frac12 \alpha^2 +  \alpha \lhd X + \dots\right) \,,
    \end{aligned}
\end{equation}
whence, simplifying, we obtain at second order
\begin{align}
[\alpha,X]_\mathfrak{d} \equiv \alpha X - X \alpha  = \alpha \rhd X + \alpha \lhd X\,.
\end{align}

The double $\cD$ as a group can be equipped with two natural Poisson structures, one which makes $\cD$ a Poisson-Lie group, called the \textit{Drinfel'd} double $\cD_\text{D}$, and another which makes $\cD$ a symplectic space, called the \textit{Heisenberg} double $\cD_\text{H}$ \cite{alekseev_symplectic_1994}. 
The symplectic structure $\Omega$ on $\cD_\text{H}$ is often named after Semenov-Tian-Shanski (STS) and reads
\begin{align}\label{STS form}
    \Omega =  \frac{1}2\Big( 
    \langle  \bbd \ell \ell ^{-1} \wedge \bbd \tilde h  \tilde h ^{-1}\rangle + \langle \tilde \ell^{-1}\bbd \tilde \ell  \wedge h^{-1}\bbd h \rangle\Big)\,.
\end{align}
An expression of the multiplicative Poisson bivector on $\cD_\text{D}$ can be found, e.g., in Proposition 2.34 of \cite{lu_multiplicative_nodate}.

There are then two natural actions to consider, being either the left $L$ or right $R$ multiplication of $\cD$ on itself. Remarkably, these are Poisson actions (Definition \ref{def:poissonaction}):
\begin{align}
    R, L :  \cD_\text{D}\times \cD_\text{H} \to \cD_\text{H}\,,
\end{align}
inducing the left and right infinitesimal actions:
\begin{align}
    \brhoR, \brhoL : \mathfrak{d}  \to \mathfrak{X}(\cD_\text{H})\,.
\end{align}
In particular these actions subsume the dressing transformations defined above. Infinitesimally, assuming a matrix group and denoting $h_0 = 1 + \alpha + \dots$ and $\ell_0 = 1 +  Y + \dots$, one finds  (recall, $G=\ell h = \th\tell$):\footnote{{E.g. the formula $\alpha \rhd \ell = \alpha \ell - \ell(\alpha \lhd \ell)$ can be proved by taking the infinitesimal version $h = 1 + \alpha + \dots $ of the identity $h\rhd \ell = h\ell (h\lhd\ell)^{-1}$ which directly follows from the definition of the dressing action.}} 
\begin{subequations}
    \begin{align} 
 G\mapsto Gh_{0}&\implies \delta^{R}_{\alpha}G = G\alpha  \notag\\
 &\implies 
    \delta^{R}_{\alpha} (\ell, h)  = (0 , h\alpha) \,, \quad 
      \delta^{R}_{\alpha} (\th,\tilde\ell) = \Big(\tilde{h}(\tilde{\ell}\rhd \alpha), \tilde{\ell}\alpha - (\tilde{\ell}\rhd \alpha)\tilde{\ell}\Big)\,,\label{eq:drinfeldsymrightG}
 \\
 G\mapsto G\ell_{0}&\implies \delta^{R}_{Y}G = GY  \notag\\
 &\implies \delta^{R}_{Y}(\ell,h) = \Big( \ell(h\rhd Y),hY-(h\rhd Y)h\Big) \,,\quad
 \delta^{R}_{Y}(\th, \tilde \ell)= (0,\tilde \ell Y)\,, \label{eq:drinfeldsymrightG*}\\
 G\mapsto h_{0}G&\implies \delta^{L}_{\alpha}G = \alpha G  \notag\\
 &\implies \delta^{L}_{\alpha} (\ell,h) =\Big( \alpha \ell - \ell(\alpha\lhd \ell) , (\alpha \lhd \ell) h \Big) \,,\quad
 \delta^{L}_{\alpha} (\th,\tilde\ell) =  ( \alpha \tilde h ,0)\,,\label{eq:drinfeldsymleftG}\\
 G\mapsto \ell_{0}G&\implies \delta^{L}_{Y}G = Y G \notag\\
 &\implies 
 \delta^{L}_{Y} (\ell,h)= (Y \ell  ,0)\,,
 \quad\delta^{L}_{Y}(\th,\tilde \ell) = \Big( Y \tilde h - \tilde h (Y \lhd \tilde h),(Y \lhd \tilde h)\tilde \ell\Big)\,.\label{eq:drinfeldsymleftG*}
\end{align}
\end{subequations}

It is easy to check using these expressions that the left and right actions are Poisson because they respectively lead to the Poisson-Lie flow equation of Theorem \ref{thm:PL=flatness}:
\begin{align}
    \mathbb{i}_{\brhoR(\alpha)}\Omega &=-  \langle \alpha, \tilde \ell^{-1}\bbd\tilde \ell  \rangle\,,&    \mathbb{i}_{\brhoR(Y)}\Omega &=  \langle Y, h^{-1}\bbd h \rangle\,,\label{eq:STSchargesRight}\\
     \mathbb{i}_{\brhoL(\alpha)}\Omega &= -\langle \alpha, \bbd \ell \ell^{-1}\rangle\,,&
    \mathbb{i}_{\brhoL(Y)}\Omega &=  \langle Y, \bbd \tilde h\tilde{h}^{-1} \rangle\,. \label{eq:STSchargesLeft}
\end{align}
\begin{example}[The Cotangent Bundle as a Heisenberg Double] 
Taking $\pi_\cG = 0$ and thus $\cG^* \simeq (\fg^*,+)$ Abelian, one finds the double $\cD_{\text{H}} \simeq \cG^* \bowtie \cG \simeq \fg^* \rtimes \cG \simeq  T^*\cG$ in the right trivialization. In particular in this case, the Heisenberg double symplectic structure is isomorphic to the canonical one on $T^*\cG$:
\begin{align}
    \Omega &= \langle \bbd X, \bbd hh^{-1} \rangle+  \frac12 \left\langle  X, \comm{\bbd hh^{-1} }{  \bbd hh^{-1}} \right\rangle\,. \label{TG form}
\end{align}
Note in particular that $\Omega = \bbd \Theta$ with $\Theta := \langle X, \bbd h h^{-1}\rangle$, is in this case $\bbd$-exact. This is generally not the case e.g., for $\cG^{*}$ non-Abelian and hence fully non-Abelian Heisenberg Doubles. As for the Drinfel'd double Poisson bivector $\pi_\cD$, written in a right trivialization $\cD \simeq \fg^* \rtimes\cG$, it vanishes along $\cG$ and restricts to the Kirillov-Kostant-Souriau bivector on $\fg^*$:
\begin{align}
    \pi_\cD(h,X) = c_{ab}{}^cX_c \pdv{X_a}\wedge \pdv{X_b}\,.
\end{align}
In this case, the action of Drinfel'd double symmetries $\cD_\text{D}$ on the Heisenberg double $\cD_\text{H}\simeq T^*\cG$, are given by the cotangent lift of the right/left action of $\cG$ on itself, which is Hamiltonian and therefore Poisson, and by the action of $\cG^*\simeq \fg^*$ on $T^*\cG$ described in Example \ref{ex:R3}.

The infinitesimal left/right actions of $G_0 = ( Y , 1 + \alpha + \dots) \in \cD_\text{D}$  on $G = (X, h)\in\cD_\text{H}$ are given by:
\begin{subequations}
\begin{align}
    \delta_{\alpha}^{R}(X,h) &= (0,h\alpha)\,, &\delta_{\alpha}^{R}(\tilde{h},\tilde{X}) &= (\tilde{h}\alpha,\ad^{*}_{\alpha}\tilde{X})\,,\label{eq:drinfeldsymrightGbis}\\
   \delta_{Y}^{R}(X,h) &= (h\rhd Y,0)\,,& \delta_{Y}^{R}(\tilde{h},\tilde{X}) &= (0,Y)\,,\label{eq:drinfeldsymrightG*bis} \\
   \delta_{\alpha}^{L}(X,h)&=(-\ad^{*}_{\alpha}X,\alpha h)\,,&\delta_{\alpha}^{L}(\tilde{h},\tilde{X}) &=(\alpha\tilde{h},0)\,,\label{eq:drinfeldsymleftGbis}\\
   \delta_{Y}^{L}(X,h)&=(Y,0)\,,&\delta_{Y}^{L}(\tilde{h},\tilde{X})&=(0,Y\lhd \tilde{h})\,,\label{eq:drinfeldsymleftG*bis}
\end{align}
\end{subequations}
where the corresponding dressing transformations are given by
\begin{align}
    X \triangleright h = h = \tilde h\,, \quad\text{and}\quad X \triangleleft h = \Ad_{h}^* X = \tilde X\,,
\end{align}
where we denoted by $\Ad^{*}$ the transpose of the adjoint representation $\Ad$ of $\cG$ on $\fg$, namely $\langle X,\Ad_{h}\alpha\rangle 
=: \langle \Ad^{*}_{h}X,\alpha\rangle$ for all $X\in \fg^{*},h\in \cG$ and $\alpha\in \fg$.\footnote{With this notation \cite{ortega_momentum_2004}, the co-adjoint \emph{representation} of $\cG$ over $\fg^{*}$ is given by $(\Ad^{*})^{-1}$, which is a \emph{right} action.\label{fnt:coadjointrep}}
\end{example}
\begin{tolerant}{3000}
\begin{example}[Isometries of $\mathbb{R}^n$]
    The group of isometries of $\mathbb{R}^n$ understood as the $n$-dimensional Euclidean space is ${ISO}(n) \simeq  \mathbb{R}^n \rtimes{SO}(n) \simeq T^*{SO}(n)$ which we studied in the previous example. In this example the Drinfel'd double is the semidirect product of rotational ($\cG \simeq SO(n)$) and translational ($\cG^* \simeq \fg^* \simeq \mathbb{R}^n$) symmetries. 
\end{example}
\end{tolerant}

Using this example as an analogy, we will keep this terminology even when dealing with general Drinfel'd doubles, viz. if $\cD \simeq \cG^* \bowtie \cG $ then $\cG \subset \cD_\text{D}$ will be referred to as the \emph{rotational symmetries} of $\cD_\text{H}$ and $\cG^* \subset \cD_\text{D}$ as its \emph{translational symmetries}.

\addtocontents{toc}{\protect\newpage}
\section{Poisson-Lie symmetry for a 0+1D field theory}\label{sec:PLin0+1D}

The formulation of Poisson-Lie symmetries in a 0+1D field theory corresponds to the treatment of Poisson-Lie symmetries in which they were originally formulated for mechanical systems \cite{babelon-book_2003, balachandran-book}. Here, we consider two examples: the generalized spinning top in \S\ref{sec:part}, and the deformed generalized spinning top in \S\ref{subsec:deformedspinningtop}. By ``generalized'' spinning top we refer to a mechanical model with phase space $T^*\cG\simeq  \fg^*\rtimes \cG$, with $\cG$ not necessarily $SO(3)$. This phase space supports both Noetherian and Poisson-Lie symmetries, with the dynamics (Hamiltonian) determining which are physically realized. We instead refer to a ``deformed'' generalized spinning top if its phase space is not the cotangent bundle of $\cG$ but a non-trivial Heisenberg double $\cD_\text{H} \simeq \cG^*\bowtie\cG$, i.e., having both curved configuration and momentum spaces.

\subsection{The (generalized) spinning top}\label{sec:part}

Let $\fg$ be a finite dimensional (real) semi-simple Lie algebra and $\cG$ its associated connected and simply connected Lie group. Denote the fields $(X,h): \mathbb{R} \to T^*\cG\simeq\fg^*\rtimes \cG$ (using the right trivialization of $T^{*}\cG$), and let\footnote{Note that the kinetic term in this Lagrangian arises from the familiar coadjoint orbit method for constructing a geometric action associated to the group $\cG$, with $\langle\cdot,\cdot \rangle$ the natural pairing between $\fg$ and its dual $\fg^*$.} 
\begin{align}{\label{eq:spinningtoplagrangian}}
    \mathcal{L} = \left\langle X, \partial_{t}hh^{-1}\right\rangle  -\cH(X,h)\,,
\end{align}
be the associated first order Lagrangian of the generalized spinning top whose dynamics and symmetries are determined by the choice of Hamiltonian $\cH$. We will study two main families of Hamiltonians.

We start by studying the generalized spinning top's phase space using the CPS method. 
The field space is $\cF = \{ (X,h) : \mathbb{R}\to T^*\cG\}$.
Denote the variation of $\cH$ in $X$ and $h$ as
\begin{align}
    \bbd \cH = \left\langle \bbd X, \cH'_X \right\rangle + \left\langle \cH'_h, h^{-1}\bbd h\right\rangle\,,
\end{align}
where $\cH'_X $ and $ \cH'_h$ are naturally valued in $\fg$ and $\fg^*$, as per
\begin{align}
    \delta_Y^{L} \cH := \left\langle Y, \cH'_X \right\rangle = \frac{d}{dt} \cH(X + tY,h)\rvert_{t=0}\,,
    \quad\text{and}\quad
    \delta_{\alpha}^{R}\cH := \left\langle \cH'_h, \alpha\right\rangle = \frac{d}{dt} \cH(X,he^{t\alpha} )\rvert_{t=0}\,.
\end{align}

Varying \eqref{eq:spinningtoplagrangian} and applying \eqref{eq:basic} we find the following equations of motion and (pre-)sy\-mple\-ctic currents,
\begin{align}
    \mathcal{E} & = \left\langle \bbd X, \partial_{t}hh^{-1} - \cH'_X \right\rangle  -\left\langle \partial_{t}( X\lhd h) + \cH'_h,h^{-1}\bbd h \right\rangle\,,  \label{eompart}\\
    \theta &= \left\langle X, \bbd hh^{-1}\right\rangle \,, \\
    \omega &:=\bbd \theta = \left\langle \bbd X, \bbd hh^{-1}\right\rangle +\tfrac{1}{2}\left\langle X, \comm{\bbd hh^{-1}}{\bbd hh^{-1}}\right\rangle \,,
\end{align}
where $X\lhd h:= \Ad^*_h X$ and $h\rhd X = (\Ad^*_h)^{-1}X$. Recall that according to the CPS method, identifying the symplectic structure $\Omega$ of the theory requires going on-shell $(\approx)$ and integrating $\omega$ over a choice of Cauchy hypersurface. In 0+1D, this amounts to choosing a point, say $t_\text{in}\in \mathbb{R}$, at which to evaluate the pSF current $\omega$, that is
\begin{align}
    \Omega \approx \omega(t_\text{in})\,.
\end{align}
On-shell $\Omega$ is independent of the choice of $t_\text{in}$. The space $(\cF_\text{EL}, \Omega)$ is isomorphic to\linebreak $T^*\cG = \cD_\text{H} \simeq \fg^* \rtimes \cG$ with $\Omega$ corresponding to the semi-Abelian case \eqref{TG form} of the STS symplectic form \eqref{STS form}. 

The kinematical symmetries of $\cD_\text{H}$ are the special case of the Drinfel'd double PL group $\cD_\text{D} \simeq  \fg^*\rtimes\cG$ written in (\ref{eq:drinfeldsymrightGbis})-(\ref{eq:drinfeldsymleftG*bis}) wherein one has $\cG^*\simeq(\fg^*,+)$.  

Since the right and left Drinfel'd double actions ($\brhoR(\cdot)$ and $\brhoL(\cdot)$, respectively) are by construction generating PL type flow equations of the form \eqref{eq:PLcharge-flow} (cf. \eqref{eq:STSchargesRight} and \eqref{eq:STSchargesLeft}), to check which are PL symmetries compatible with the dynamics of the generalized spinning top we need to check which actions preserve the equations of motion i.e., which of the actions define vector fields on $\cF_\text{EL}$, thereby satisfying point $1'$ of Definition \ref{def:PL}. For this, we need to fix the Hamiltonian. We are now going to study a few explicit cases.

\paragraph*{Case 1: Noetherian symmetries.} A natural choice of Hamiltonian is given by 
\begin{align}
\cH(X,h)=\frac12 \kappa(X, X)\,,
\end{align}
where the $\kappa$ is a non-degenerate, $\Ad^*$-invariant, bilinear form on $\fg^*$, e.g. the inverse of the Killing bilinear form on $\fg$.

Identifying $\fg^*$ and $\fg\simeq\fg^{**}$ through $\kappa:X\mapsto X_\kappa:=\kappa(X,\cdot)$, the equations of motion \eqref{eompart} with this Hamiltonian read:
\begin{align}
    \pp_t h h^{-1} - X_\kappa \approx 0 \,,
    \quad\text{and}\quad
    \pp_t (X\lhd h) \approx 0\,.
\end{align}
In view of the $\Ad^*$-invariance of $\kappa$ one has $(X \lhd h)_\kappa= \Ad_{h^{-1}}(X_\kappa)$, which enables us to rewrite the above as 
\begin{align}
    \pp_t h h^{-1} - X_\kappa \approx 0 \,,
    \quad\text{and}\quad
    \pp_t X_\kappa \approx 0\,,
    \label{eq:spinningtop-eom2}
\end{align}
after exploiting the linearity and time independence of $\kappa$, as well as going on-shell of the first equation of motion. The first equation essentially defines the canonical momenta ($X$) in terms of the velocities in configuration space ($\pp_t h$). The second equation is a momentum conservation equation. The two versions of momentum conservation are equivalent on-shell of the first equation owing to the $\Ad^*$ invariance of $\kappa$.

The present choice of Hamiltonian is clearly not compatible with the translational symmetries \eqref{eq:drinfeldsymrightG*bis} and \eqref{eq:drinfeldsymleftG*bis}, which shift $X$ and leave $h$ invariant. However, the rotational symmetries \eqref{eq:drinfeldsymrightGbis}  and \eqref{eq:drinfeldsymleftGbis} are easily seen to be Noetherian symmetries, since 
\begin{align}
    \delta^R_{\alpha} \mathcal{L} = 0 =  
    \delta^L_{\alpha} \mathcal{L}\,.
\end{align}
Recall, as mentioned in \S\ref{subsec:NewNoetherToPLgroupSymm}, this implies (on-shell) preservation of the equations of motion. Proceeding with the CPS analysis one quickly arrives at the $\fg^*$-valued Noether charges (momentum maps) $q^R = X\lhd h$ and $q^L = X$, viz.
\begin{align}
    \mathbb{i}_{\brhoR(\alpha)}\Omega & := \mathbb{Q}^R(\alpha)= -\bbd \langle  (X\lhd h)(t_\text{in}), \alpha\rangle\,, \\
     \mathbb{i}_{\brhoL(\alpha)}\Omega & := \mathbb{Q}^L(\alpha) =  -\bbd \langle   X(t_\text{in}),  \alpha  \rangle\,.
\end{align}
The conservation of these charges corresponds to the momentum conservation equation expressed above and, as noted there, they are essentially equivalent on-shell of $\pp_t h h^{-1} \approx X_\kappa$.

Physically, the system we just described corresponds to a particle moving on a group manifold with geodesic motion associated to the metric obtained from $\kappa$ by right transport. Since $\kappa$ is $\Ad$-invariant, this metric has global symmetries given by right and left ``rotations'' on the group. One then obtains conserved charges generating such transformations in the usual way.

Another physical interpretation of this system is that of a (generalized) spherically symmetric spinning top, with $X$ being its angular momentum vector and $\kappa$ (the inverse of) its rotationally invariant moment of inertia. 

From the perspective of the spinning top it is natural to consider a moment of inertia that is not rotationally invariant, thus replacing $\cH(X) = \frac12 \kappa(X,X)$ with $\cH(X) = \frac12 I(X,X)$ for $I$ a non-degenerate symmetric bilinear form which fails to be $\Ad^*$-invariant. This breaks the left rotational symmetry of the system (cf. \eqref{eq:drinfeldsymleftGbis}), while continuing to leave $\mathcal{L}$ invariant under right rotations \eqref{eq:drinfeldsymrightGbis}, $\delta^R_\alpha \mathcal{L}=0$. Thus one concludes that only $q^R = X\lhd h$ is a conserved Noetherian charge and momentum map. 

The physical interpretation of this fact is that $X$ and $\tilde X := X\lhd h$ are the angular momenta as seen respectively in the Lagrangian (comoving) and Eulerian (laboratory) frames of the spinning top. Only the latter is expected to be conserved for a non-symmetric top.

\paragraph*{Case 2: Poisson-Lie symmetries.}
Switching the role of configuration and momentum space, one can choose a Hamiltonian that only depends on $h$ and not on $X$. Assuming $\cG$ is a matrix Lie group, let us consider 
\begin{equation}
    \cH(h)=c_1{\rm Tr}(h) - c_2\,,
\end{equation} 
where $c_1$ and $c_2$ are appropriate normalization constants.
In this case, $\cH(h)$ is neither invariant under the right or left rotations (\eqref{eq:drinfeldsymrightGbis} and \eqref{eq:drinfeldsymleftGbis}), but it is invariant under the right and left translation symmetries of $X$ that leave $h$ invariant (\eqref{eq:drinfeldsymrightG*bis} and \eqref{eq:drinfeldsymleftG*bis}).  

These translational symmetries are an illustration of non-trivial Poisson-Lie symmetries, i.e. Poisson-Lie symmetries which are truly non-Noetherian.
For example, the corresponding Lagrangian is \textit{not} invariant under either right or left translations, not even up to a total derivative, since
\begin{equation}
    \delta_{Y}^R \mathcal{L}=  \langle Y, h^{-1}\partial_t h\rangle\,, \quad     \delta_{Y}^L \mathcal{L}=  \langle Y, \partial_t h h^{-1}\rangle\,.
\end{equation}

To prove that both of these transformations are PL symmetries as per Definition \ref{def:PL}, we need to prove that they preserve the equations of motion and that they satisfy the PL flow equation.

First, we plug our choice of Hamiltonian $\cH(h) = c_1{\rm Tr}(h) - c_2$ into the equations of motion \eqref{eompart}, to find
\begin{align}\label{eq:3.16}
    \pp_t h \approx 0\,,
    \quad\text{and}\quad
    \langle\pp_t (X\lhd h),h^{-1}\bbd h\rangle + c_1{\rm Tr}(h \ (h^{-1}\bbd h)) \approx 0\,.
\end{align}
Next, we prove that they are preserved (on-shell) by the symmetry action. As mentioned, any function of $h$ is invariant under these right/left translations, so we only need to check that  
\begin{align}
  \delta^R_{Y} \partial_{t}(X \lhd h) &= \partial_{t}(\delta^{R}_{Y}X\lhd h) = \partial_{t}((h\rhd Y)\lhd h) = \partial_{t}((Y \lhd h^{-1})\lhd h) = \partial_{t}Y = 0\,,
  \\\delta^L_{Y} \partial_{t}(X\lhd h) &=\partial_{t}(\delta^{L}_{Y}X \lhd h) =\ad^*_{h^{-1}\pp_th} (Y\lhd h) \approx 0 \,,
\end{align}
where in the last equation we went on-shell of the first equation of motion $\pp_t h\approx 0$. 

Proceeding with the CPS analysis, it is not hard to recover the appropriate two corresponding Poisson-Lie momentum maps of the possible  actions determined  by (\ref{eq:STSchargesRight}) and (\ref{eq:STSchargesLeft}), namely
\begin{align}
    \mathbb{i}_{\brhoR(Y)}\Omega &:= \mathbb{Q}^R (Y) = \langle  h^{-1}(t_{\text{in}})\bbd h(t_{\text{in}}), Y\rangle\,, \\
  \mathbb{i}_{\brhoL(Y)}\Omega &:=\mathbb{Q}^L (Y)=  \langle \bbd h(t_{\text{in}})  h^{-1}(t_{\text{in}}), Y\rangle\,,
  \end{align}
where $Y\in\fg^*$, which are associated to the conserved group-valued Poisson-Lie charge
\begin{align}
    h(t_{\text{in}}):\cF_{EL}\to \cG\,,
\end{align}
thereby satisfying our Definition \ref{def:PL} of Poisson-Lie symmetries.

Notice the dualization of the roles of $\cG$ and $\fg^*$ with respect to the previous example, where (configuration space $\sim$ symmetry parameter) $\leftrightarrow$ (momentum $\sim$ target of the momentum map).

We note that in this system, the would-be canonical momentum is encoded in $h$ while the configuration variable is morally given by 
$\tilde X := X\lhd h$ \eqref{eq:3.16}. Taking e.g. $\cG \simeq SU(2)$, this system could be interpreted therefore as a particle moving in $\mathbb{R}^3$, but subject to a \emph{curved, and in fact closed, momentum space} $\cG\simeq S_3$. In particular, this results in a non-trivial Poisson bracket between the coordinates on configuration space: inverting $\Omega$, one finds in particular $\{ \tilde X^i, \tilde X^j\} = \epsilon^{ij}{}_k \tilde X^k$. This specific system encodes the notion of a spinless particle in 3D Euclidean gravity \cite{deSousaGerbert:1990yp}. It can be used to encode the notion of particle excitations for a field theory over of a non-commutative spacetime of the Lie algebra type. The (Poisson) non-commutativity can also be interpreted as a relative notion of locality \cite{Amelino-Camelia:2011hjg, Amelino-Camelia:2011lvm}. The quantization of this system would be done using the representation theory of the quantum group arising from the Poisson-Lie symmetry \cite{stern_lie-poisson_1996}. 

\begin{tolerant}{3000}
From this perspective, taking a sensible Abelian limit, means taking a limit of small momenta that do not ``see'' the curvature of $\cG$. Writing $h = e^{p} = 1 +  p + \frac12 p^2 + \dots$, for $p = \frac{i}{\sqrt{2}}\vec \sigma \cdot \vec p \in\mathfrak{su}(2)$ with $\vec \sigma$ the three Pauli matrices, and keeping the first non-vanishing orders, we have that $\cH = 2- {\rm Tr}(h) = \frac12 |\vec p|^2 + \dots $, $\delta_Y^{R,L} (h,X) = (0, Y)$, and $\mathbb{Q}^{R,L}(Y) = \bbd \vec p\cdot \vec Y$. Thus, in the limit of small momenta one recovers a free particle moving in $\mathbb{R}^3$, subject to a translational symmetry generated by its linear momentum. 
\end{tolerant}

\subsection{The deformed (generalized) spinning top}\label{subsec:deformedspinningtop}

We now turn to the \emph{deformed} generalized spinning top. The phase space of this 0+1D field theory is described by a non-trivial Heisenberg double $\cD_\text{H}\simeq \cG^*\bowtie\cG$ acted upon by the corresponding Drinfel'd double Poisson-Lie group $\cD_\text{D}$. By non-trivial, we mean that $\cD_{\text{H}}$ is \emph{not} isomorphic to $T^*\cG$, but has both $\cG$ and $\cG^*$ non-Abelian. Since the symplectic 2-form on $\cD_\text{H}$ is not exact, a priori there is not an obvious way to recover this phase space through a CPS analysis, not even from a first order Lagrangian. 

In this section, to accommodate for a non-exact symplectic 2-form in the CPS,  we divert from the usual action formulation for 0+1D field theories and adopt a different formalism inspired by \cite{zaccaria_universal_1983,marmo_CompactHamSystems_2024}. It involves an action functional for a non-covariant ``topological string'' with target space being the deformed spinning top's phase space $\cD_\text{H}$. 
The word ``topological'' here simply means that ultimately only the end points of the string matter, while the word ``non-covariant'' refers to the worldsheet theory heavily relying on a choice of 1+1 foliation.

The  article \cite{zaccaria_universal_1983} only looks at the ``action principle'' as a way to generate the equations of motion. However, the imposition of the initial/final conditions for the variational action principle to be well-defined requires more care. This is because the proposed initial/final conditions in \cite{zaccaria_universal_1983} are too strong for its equations of motion, for they end up fixing the ``$q$'s and $p$'s'' at both the initial and final times. We discuss more of the subtleties of this point in \S\ref{subsubsec:varPrincipleGenSpinTop}.

Here, we improve on this setup, and find a rather different picture, namely a string that closes on-shell. In particular, when going on-shell, the two endpoints collapse on a single trajectory that runs along the physical history of the mechanical system, while the bulk of the string is free to wobble unconstrained by the equations of motion.\footnote{Note that this is a drastically different scenario from \cite{zaccaria_universal_1983} where it was proposed that one end point of the string is kept fixed at all times.}

This example will serve as an introduction to the realm of 1+1D field theories in the next section. 

We start by putting forward the general framework for a mechanical system with (finite dimensional) phase space $P$, (non-exact) symplectic 2-form $\omega$, and Hamiltonian $\cH$, before specializing it to the case of the deformed generalized spinning top's finite dimensional phase space $(P,\omega)=(\cD_\text{H},\Omega)$ where $\Omega$ is the Heisenberg double sympelctic form \eqref{STS form}.

\subsubsection{Setup: Action principle}\label{subsubsec:genSpinTopAction}

Let $(P,\omega)$ be a symplectic manifold, and denote its points by $z$. A history is a curve $\xi:\mathbb{R}\to P$, $t\mapsto z=\xi(t)$. 
We ``extend'' such curves to a worldsheet $\Sigma \simeq \mathbb{R}\times [0,1]$:
\begin{align}
    {\label{eq:Zaccaria-history}}
    \gamma : \Sigma \to P\,, 
    \quad 
    (t,\sigma) \mapsto z = \gamma(t,\sigma)\,,
\end{align}
by thinking of the two end points of the worldsheet as spanning two histories of the mechanical system:
\begin{align}
    \begin{cases}
        \xi_0(t) := \gamma(t, 0)\,,\\
        \xi_1(t) := \gamma(t, 1) \,.
    \end{cases}
\end{align}
 There are three relevant manifolds: the string's worldsheet $\Sigma$, the string's target space $P$, and the infinite dimensional string's field space
\begin{align}
    \cF := \{ \gamma : \Sigma \to P\} = C^\infty(\Sigma, P)\,.
\end{align}

We denote the Cartan's calculus operations on $\Sigma$, $P$, and $\cF$, respectively by $(d,i_\bullet, L_\bullet)$, $(d^{(P)},i^{(P)}_\bullet,L^{(P)}_\bullet)$ and $(\bbd,\bbi_\bullet,\bbL_\bullet)$.
The closure of $\omega$ over $P$ gives us the following useful identities:
\begin{lemma}\label{lemma:claim}
Let $\omega \in \Omega^2(P)$ be closed, and $\gamma \in \cF = C^{\infty}(\Sigma,P)$. Then, in an abstract index notation where $\bullet^a$ are indices on $\Sigma \ni x=(t,\sigma)$ and $\bullet^I$ are indices on $P\ni z$, the following identities hold:
\begin{align}
    \bbd(\gamma^*\omega) = - d \vartheta \,,
    \quad\text{and}\quad 
    \bbd\vartheta = -d\varpi\,,
\end{align}
where $\gamma^*\omega\in\Omega_{\text{(loc)}}^{2,0}(\Sigma\times \cF)$, $\vartheta \in \Omega^{1,1}_{\text{(loc)}}(\Sigma\times\cF)$ and $\varpi \in \Omega_{\text{(loc)}}^{0,2}(\Sigma \times\cF)$ are given by
\begin{align}
    (\gamma^*\omega)(x) =&\, \frac12 \omega_{IJ}(\gamma(x)) d \gamma^I(x)\wedge d\gamma^J(x) = \omega_{IJ}(\gamma(x)) \pp_t \gamma^I(x) \pp_\sigma \gamma^J(x) dt \wedge d\sigma\,, \label{eq:explicit}\\
    \vartheta(x)  :=&\, \omega_{IJ}(\gamma(x)) \bbd \gamma^I(x) \wedge d \gamma^J(x) = \omega_{IJ}(\gamma(x))  \bbd \gamma^I(x) \wedge \pp_a \gamma^J(x) dx^a\,,\label{eq:varthetadeformedparticle}\\
    \varpi(x)   :=&\, \frac12 \omega_{IJ}(\gamma(x)) \bbd \gamma^I(x) \wedge \bbd \gamma^J(x)\,.
\end{align}
\end{lemma}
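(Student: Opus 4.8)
The plan is to recognize the three forms $\gamma^*\omega$, $\vartheta$, and $\varpi$ as the bidegree components of a single object on the bicomplex $\Omega^\bullet(\Sigma\times\cF)$: the pullback of $\omega$ along the evaluation map
\[
    \mathrm{ev}:\Sigma\times\cF\to P,\qquad (x,\gamma)\mapsto\gamma(x)\,.
\]
Since pullback intertwines exterior differentials, $\mathrm{ev}^*\omega$ is $\mathbf{d}$-closed precisely because $\omega$ is closed on $P$; reading off this single closure identity bidegree by bidegree will then yield both equations at once.

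First I would compute $\mathrm{ev}^*\omega$ explicitly. Writing the spacetime and field-space pieces of the total differential of the component functions $\gamma^I$ as $\mathbf{d}\gamma^I=d\gamma^I+\bbd\gamma^I$, one has
\[
    \mathrm{ev}^*\omega=\tfrac12\,\omega_{IJ}(\gamma)\,\mathbf{d}\gamma^I\wedge\mathbf{d}\gamma^J\,.
\]
Expanding and splitting by bidegree produces a $(2,0)$-piece equal to $\gamma^*\omega$, a $(0,2)$-piece equal to $\varpi$, and a $(1,1)$-piece coming from the two cross-terms. Using the antisymmetry $\omega_{IJ}=-\omega_{JI}$ together with the sign rule for swapping a horizontal and a vertical $1$-form (both of total degree one, hence anticommuting), the two cross-terms combine, and relabelling $I\leftrightarrow J$ identifies the $(1,1)$-piece with $\vartheta=\omega_{IJ}(\gamma)\,\bbd\gamma^I\wedge d\gamma^J$. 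Thus $\mathrm{ev}^*\omega=\gamma^*\omega+\vartheta+\varpi$.

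Next I would invoke closure: because $d^{(P)}\omega=0$ on $P$ and pullback commutes with the differential, $\mathbf{d}(\mathrm{ev}^*\omega)=\mathrm{ev}^*(d^{(P)}\omega)=0$. Decomposing $\mathbf{d}=d+\bbd$ and collecting terms of fixed bidegree then isolates each claim: the $(2,1)$-component reads $\bbd(\gamma^*\omega)+d\vartheta=0$, giving the first identity, while the $(1,2)$-component reads $\bbd\vartheta+d\varpi=0$, giving the second. The $(3,0)$-component vanishes trivially since $\dim\Sigma=2$, and the $(0,3)$-component yields the bonus relation $\bbd\varpi=0$. Locality of all three forms is manifest from their defining expressions, as each depends only on $\gamma$ and its first spacetime derivatives evaluated at the same point $x$, so no separate argument is needed to place them in the relevant $\Omega^{\bullet,\bullet}_{\text{(loc)}}(\Sigma\times\cF)$.

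The only genuinely delicate point is sign bookkeeping in the bicomplex, namely the convention $d\bbd=-\bbd d$ and the anticommutation of horizontal with vertical $1$-forms, which must be tracked carefully both when identifying the $(1,1)$-component with $\vartheta$ and when splitting the closure identity by bidegree. An alternative, more pedestrian route would compute $\bbd(\gamma^*\omega)$ and $\bbd\vartheta$ directly in coordinates using $\bbd\, d\gamma^I=-d\,\bbd\gamma^I$ and the chain rule $\bbd[\omega_{IJ}(\gamma)]=\partial_K\omega_{IJ}(\gamma)\,\bbd\gamma^K$, closing the $(0,3)$-type combination of first derivatives of $\omega$ with $d\omega=0$; this bypasses the evaluation map but is considerably more sign-prone, so I would favour the intrinsic argument above.
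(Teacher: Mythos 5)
Your proposal is correct and coincides with the paper's own second proof: both pull back $\omega$ along the evaluation map $\mathrm{ev}:\Sigma\times\cF\to P$, identify $\mathrm{ev}^*\omega=\gamma^*\omega+\vartheta+\varpi$, and read off the identities bidegree by bidegree from $\mathbf{d}(\mathrm{ev}^*\omega)=\mathrm{ev}^*(d^{(P)}\omega)=0$. The coordinate computation you mention as an alternative is precisely the paper's first proof, so nothing is missing.
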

We provide two proofs of this Lemma: one computational but elementary, and one more conceptual.
\begin{proof}[Proof 1]    
    Equation \eqref{eq:explicit} is obvious. Recalling that $\bbd$ and $d$ anticommute, we can vary it to find:
    \begin{align}
        \bbd (\gamma^*\omega)(x)
        &= \frac12 \pp_K \omega_{IJ}(\gamma(x))   \bbd \gamma^K(x)\wedge d \gamma^I(x) \wedge d\gamma^J(x)  
        - \omega_{IJ}(\gamma(x)) d\bbd \gamma^I(x) \wedge d\gamma^J(x)\,. 
    \end{align}
    Similarly, from the definition of $\vartheta$, we get:
\begin{equation}
        \begin{aligned}
            d\vartheta(x)& =d \Big(\omega_{IJ}(\gamma(x))  \bbd \gamma^I(x) \wedge d \gamma^J(x)\Big) \\
            &=\pp_K \omega_{IJ}(\gamma(x))  d \gamma^K(x) \wedge \bbd \gamma^I(x) \wedge d \gamma^J(x)
            + \omega_{IJ}(\gamma(x)) d \bbd \gamma^I(x) \wedge d \gamma^J(x) \,.
        \end{aligned}
\end{equation}
    The first identity of the Lemma follows from the closure of $\omega$, since (henceforth suppressing $x$):
\begin{equation}
        \begin{aligned}
           \bbd(\gamma^{*}\omega)+d\vartheta &=\frac12  \pp_K \omega_{IJ}(\gamma)\bbd \gamma^K \wedge d \gamma^I \wedge d \gamma^J +
           \pp_K\omega_{IJ}(\gamma)d \gamma^K\wedge \bbd \gamma^I \wedge d \gamma^J   \\
            & = \gamma^*\left(\pp_I\omega_{JK} + \frac12 \pp_K \omega_{IJ}\right)\bbd \gamma^K \wedge  d \gamma^I \wedge d \gamma^J  \\
            & = \frac12 \gamma^*(\pp_I\omega_{JK} - \pp_J\omega_{IK} + \pp_K\omega_{IJ})\bbd \gamma^K \wedge d \gamma^I \wedge d \gamma^J  \\
            &=\frac 12 \gamma^*(d\omega)_{IJK} d\gamma^I\wedge d\gamma^J\wedge \bbd \gamma^K= 0\,,
        \end{aligned}
\end{equation}
    whence
    \begin{align}
         \bbd (\gamma^*\omega) = - d \vartheta\,.
    \end{align}
    The second identity can be proven in a similar way.
\end{proof}

\begin{proof}[Proof 2]
Consider the evaluation map $\mathrm{ev} : \Sigma \times \cF \to P$, $(x,\gamma) \mapsto z = \gamma(x)$. Then, since the space of local forms on $\Sigma\times\cF$ can be equipped with the total differential $\mathbf{d}:= d + \bbd $, $\mathbf{d}^2 = d \bbd + \bbd d = 0$, it is immediate to verify that:
\begin{align}
    (\mathrm{ev}^*\omega)(x,\gamma) = \frac12 \omega_{IJ}(\gamma(x)) \mathbf{d}\gamma^I(x)\wedge\mathbf{d}\gamma^J(x) = (\gamma^{*}\omega)(x) + \vartheta(x) + \varpi(x)\,.
\end{align}
The closure of $\omega$ then implies:
\begin{equation}
    \begin{aligned}
        0&=\mathrm{ev}^*(d^{(P)}\omega) 
        = \mathbf{d}(\mathrm{ev}^*\omega)\\
        &= (d+\bbd)(\gamma^*\omega + \vartheta + \varpi)\\
        &= \underbrace{d \gamma^*\omega}_{(3,0)} + \underbrace{(\bbd (\gamma^*\omega) +  d\vartheta)}_{(2,1)} + \underbrace{(\bbd \vartheta + d \varpi)}_{(1,2)} + \underbrace{\bbd \varpi}_{(0,3)}\,.
    \end{aligned}
\end{equation}
In the last line we organized the forms by their mixed $(p,q)$-degree as per 
\begin{align}
    \Omega^3_{\text{(loc)}}(\Sigma\times \cF) = \sum_{\substack{p,q\geq0\\ p+q=3}}\Omega_{\text{(loc)}}^{p,q}(\Sigma\times \cF)\,.
\end{align}
The conclusion follows from the fact that terms in different (mixed) degree must vanish independently. Note that the remaining equations, $d(\gamma^*\omega)=0$ and $\bbd \varpi = 0$, also follow from the $d^{(P)}$ closure of $\omega$.
\end{proof}

We are now ready to introduce the ``topological string'' action $\Psi : \cF \to \mathbb{R}$ associated to the mechanical system $(P,\omega,\cH)$ and perform its CPS analysis.
Let 
\begin{align}{\label{eq:Zaccaria-action}}
    \Psi(\gamma) := \int_\Sigma \gamma^*\omega - \pp_\sigma \cH(\gamma(t,\sigma)) d\sigma dt
    ={\int_\Sigma \gamma^*\omega - \int_\mathbb{R}(\cH(\xi_1(t)) - \cH(\xi_0(t)))dt }\,.
\end{align}
Note that the first term is coordinate independent whereas the second one heavily relies on the $t$-foliation of $\Sigma$. The second term also concentrates at end points of the string, i.e. depends only on the two histories of the mechanical system $(\xi_0,\xi_1)$ -- rather than on the entire string's worldsheet.

If $\cH = 0$, then we obtain the action $\Psi_0(\gamma) = \int_\Sigma \gamma^*\omega$ which is the topological string action obtained by integrating out the field $A_I$ from the (topological) Poisson $\sigma$-model 
\begin{align}\label{eq:Amodel}
    \Psi_0^\text{Poiss}(\gamma, A) := \int_\Sigma A_I \wedge d \gamma^I + \frac{1}{2} \pi(\gamma)^{IJ}A_I \wedge A_J\,,
\end{align}
where $\pi^{IJ}\in\mathfrak{X}^2(P)$ is the Poisson bivector on $P$ obtained from inverting $\omega_{IJ}\in\Omega^2(P)$, and $A_I$ is a $(T^*P)$-valued 1-form on $\Sigma$. The (supersymmetric extension of the) resulting action is sometimes referred to as the A-model (e.g. \cite{Ikeda:2012pv}). 
\begin{proposition}\label{prop:psi}
    The action $\Psi$ for $\gamma\in\cF$ over $\Sigma = [t_\text{in},t_\text{fin}]\times[0,1]$ yields the following: equations of motion $(\mathcal{E}^0_I, \mathcal{E}_I^1)$ for the string's endpoints $(\xi_0(t),\xi_1(t))$; pre-symplectic potential (pSP) $\Theta$ for the worldsheet; and pre-symplectic 2-form $\Omega$ on $C = [0,1]$, where $C$ is the time $t_\text{in}$ Cauchy surface of $\Sigma$ stretching from $\gamma(t_\text{in},0)=\xi_0(t_\text{in})$ to $\gamma(t_\text{in},1) = \xi_1(t_\text{in}) $:
    \begin{align}
        \mathcal{E}^{\bullet}_I = &\,- \omega_{IJ}(\xi_{\bullet}) \pp_t \xi_{\bullet}^J - \pp_I \cH(\xi_{\bullet}) \,, \quad \text{for} \quad {\bullet} \in \{0,1\}\,,\\
        \Theta = &\,- \int_C \vartheta = -\int_0^1 d\sigma \ \omega_{IJ}(\gamma)  \pp_\sigma \gamma^I \bbd \gamma^J\,,
    \label{eq:ZacTheta}\\
        \Omega :=&\,\bbd \Theta = \varpi\Big\vert^{(t_\text{in},\sigma=1)}_{(t_\text{in},\sigma=0)}= \frac12 \omega_{IJ}(\xi_\bullet)\bbd \xi_\bullet^I \wedge \bbd \xi^J_\bullet\Big\vert^{\bullet =1}_{\bullet=0}\,.
    \end{align}
\end{proposition}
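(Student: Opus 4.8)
The plan is to run the covariant phase space algorithm of \S\ref{subsec:TheCPS} on the action $\Psi$, treating the worldsheet $\Sigma$ as the ``spacetime'' and $\cF = C^\infty(\Sigma,P)$ as the field space, so that identity \eqref{eq:basic} becomes $\bbd\Psi = (\text{source}) + (\text{boundary})$. The only nontrivial input is Lemma~\ref{lemma:claim}. Indeed, the topological term varies as $\bbd\int_\Sigma\gamma^*\omega = \int_\Sigma\bbd(\gamma^*\omega) = -\int_\Sigma d\vartheta = -\int_{\pp\Sigma}\vartheta$, so the first identity of the Lemma pushes its entire variation onto the boundary $\pp\Sigma$ via Stokes. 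The second, foliation-dependent term contributes $-\int_\mathbb{R}\big(\pp_I\cH(\xi_1)\bbd\xi_1^I - \pp_I\cH(\xi_0)\bbd\xi_0^I\big)\,dt$, which already lives on the endpoint worldlines. The rest is bookkeeping: split $\pp\Sigma$ into the two constant-$\sigma$ edges (the endpoint histories $\xi_0,\xi_1$) and the two constant-$t$ edges (the slices $C$), and read off the source from the former and the pre-symplectic potential from the latter.

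On a constant-$\sigma$ edge only the $dt$-component of $\vartheta$ survives, namely $\omega_{IJ}(\xi_\bullet)\pp_t\xi_\bullet^J\bbd\xi_\bullet^I$. Combining these with the $\cH$-variation and using the antisymmetry of $\omega$, I would identify the bulk source as $\int dt\,\big(\mathcal{E}^1_I\bbd\xi_1^I - \mathcal{E}^0_I\bbd\xi_0^I\big)$ with $\mathcal{E}^\bullet_I = -\omega_{IJ}(\xi_\bullet)\pp_t\xi_\bullet^J - \pp_I\cH(\xi_\bullet)$, the relative sign between the two endpoints being forced by the induced orientation of $\pp\Sigma$. Setting the source to zero yields the two stated equations of motion, one per endpoint.

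For the symplectic data, on a constant-$t$ edge only the $d\sigma$-component of $\vartheta$ survives, so the leftover boundary contribution takes the form $\big[-\int_{C_t}\vartheta\big]_{t_\text{in}}^{t_\text{fin}}$. Matching against the CPS template $\bbd\Psi = (\text{source}) + [\Theta]_{t_\text{in}}^{t_\text{fin}}$ identifies the pre-symplectic potential on the slice $C=[0,1]$ as $\Theta = -\int_C\vartheta$, which equals $-\int_0^1 d\sigma\,\omega_{IJ}(\gamma)\pp_\sigma\gamma^I\bbd\gamma^J$ after relabeling indices and using the antisymmetry of $\omega$. Taking one further field-space differential, I would compute $\Omega = \bbd\Theta = -\int_C\bbd\vartheta = \int_C d\varpi = \int_{\pp C}\varpi$, where the middle step invokes the second identity $\bbd\vartheta = -d\varpi$ of Lemma~\ref{lemma:claim} and the last step is Stokes on the one-dimensional interval $C$. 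Since $\pp C=\{\sigma=0\}\cup\{\sigma=1\}$, the result localizes at the two endpoints, giving $\Omega = \varpi\big|^{(t_\text{in},1)}_{(t_\text{in},0)} = \tfrac12\omega_{IJ}(\xi_\bullet)\bbd\xi_\bullet^I\wedge\bbd\xi_\bullet^J\big|^{\bullet=1}_{\bullet=0}$, exactly as claimed.

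The main obstacle is purely combinatorial rather than conceptual: consistently tracking the signs produced by the anticommutation $d\bbd = -\bbd d$ (which governs how $\bbd$ passes through $\int_\Sigma$ and $\int_C$), by the antisymmetry of $\omega_{IJ}$, and above all by the induced orientations of the four edges of $\pp\Sigma$ and the two endpoints of $\pp C$. These orientations are precisely what generate the relative minus sign between $\mathcal{E}^0$ and $\mathcal{E}^1$, and between the $\sigma=0$ and $\sigma=1$ contributions to $\Omega$, and they also single out which constant-$t$ edge serves as the initial slice. I would therefore fix conventions once and for all (orientation $dt\wedge d\sigma$ on $\Sigma$, outward-normal induced orientation on boundaries) and carry them uniformly; no analytic difficulty arises, since $\Sigma$ is compact with corners and every manipulation is a finite Stokes argument performed fiberwise over $\cF$.
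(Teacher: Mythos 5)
Your proposal is correct and follows essentially the same route as the paper's own proof: vary $\Psi$, use both identities of Lemma \ref{lemma:claim} to convert $\bbd(\gamma^*\omega)$ into $-d\vartheta$, apply Stokes to split $\pp\Sigma$ into the constant-$\sigma$ edges (yielding $\mathcal{E}^\bullet_I$, with the relative sign from orientation) and the constant-$t$ slices (yielding $\Theta = -\int_C\vartheta$), then compute $\Omega = \bbd\Theta = \int_C d\varpi = \varpi\vert^{\sigma=1}_{\sigma=0}$ via the second identity. The only cosmetic difference is that you pre-integrate the Hamiltonian term to the endpoint worldlines before varying, whereas the paper varies $-\pp_\sigma\cH\,d\sigma\,dt$ directly and absorbs it into the $[\,\cdot\,]_{\sigma=0}^{\sigma=1}$ bracket; the result is identical.
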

Before proceeding to the proof, we note that the equations of motion are nothing else than two copies of Hamilton's equations of motion expressed as the Hamiltonian flow equations for the \emph{0+1D} dynamical system $(P,\omega,\cH)$ traced by the two endpoints of the string. Similarly, the CPS symplectic structure $\Omega$ only depends on the worldlines of the two endpoints $(\xi_0(t),\xi_1(t))$, and not on the entire worldsheet $\gamma(t,\sigma)$.

\begin{proof}
We start by varying $\Psi$ using the identities of Lemma \ref{lemma:claim}:
\begin{equation}
        \begin{aligned}
            \bbd \Psi(\gamma) 
            &= \int_\Sigma \bbd (\gamma^*\omega) - \pp_\sigma \bbd \cH(\gamma) d\sigma dt  \\
            &= \int_\Sigma - d\vartheta - \pp_\sigma \pp_I\cH(\gamma) \bbd \gamma^I d\sigma dt\\
            & = 
            \underbrace{\int_{0}^1 \left[- \vartheta_\sigma \right]_{t=t_\text{in}}^{t=t_\text{fin}}}_{=:\Theta(t_\text{fin})-\Theta(t_\text{in})}d\sigma
            +
            \int_{t_\text{in}}^{t_\text{fin}} 
            \underbrace{\left[  -\vartheta_t - \pp_I \cH(\gamma) \bbd\gamma^I
            \right]_{\sigma=0}^{\sigma=1} }_{=:\mathcal{E}}dt \,.
        \end{aligned}
\end{equation}
Using \eqref{eq:varthetadeformedparticle}, we recognize the sought formula for $\Theta$. To recognize the claimed form of $\mathcal{E}$, using the boundary conditions $\gamma(t,\sigma=0,1) = \xi_{0,1}(t)$, we write:
\begin{align}
    \mathcal{E} 
    & = \left( - \vartheta_t - \pp_I \cH(\gamma) \bbd\gamma^I
        \right)\Big\vert_{\sigma=0}^{\sigma=1}
    = \left(- \omega_{IJ}(\xi_\bullet)   \pp_t \xi_\bullet^J - \pp_I \cH(\xi_\bullet) \right) \bbd \xi_\bullet^I \Big\vert^{\bullet=1}_{\bullet=0}
    =:  \mathcal{E}^1_I \bbd \xi_1^I - \mathcal{E}^0_I \bbd \xi_0^I \,.
\end{align}
Finally, using again Lemma \ref{lemma:claim} and that $\bbd \gamma(t,\sigma=0,1)=\bbd \xi_{0,1}$, we find:
\begin{align}
    \Omega := \bbd \Theta 
    = - \int_C \bbd \vartheta = \int_C d\varpi 
    =  \varpi\Big\vert^{\sigma=1}_{\sigma=0} 
    =\frac12 \omega_{IJ}(\xi_\bullet)\bbd \xi_\bullet^I \wedge \bbd \xi^J_\bullet\Big\vert^{\bullet =1}_{\bullet=0}\,.
\end{align}
\end{proof}

\subsubsection{Aside: The variational action principle for $\Psi(\gamma)$}\label{subsubsec:varPrincipleGenSpinTop}

We have just shown how a covariant phase space (CPS) analysis of the ``topological string'' action $\Psi(\gamma)$ yields the symplectic 2-form and the Hamilton equations of motion of any mechanical system, even one whose phase space fails to be a cotangent bundle -- like a system with a compact phase space. More precisely, what one arrives at is \emph{two copies} of such a system, with opposite symplectic structure.

In the following we will simply focus on one of these copies. 

The CPS derives the equations of motion and the symplectic structure from the bulk $(\mathcal{E})$ and boundary $(d\Theta)$ decomposition of $\bbd \Psi(\gamma)$. This decomposition is solely dictated by Takens' theorem \eqref{eq:basic} \cite{Zuckerman}.

Traditionally, one thinks of the action as generating the equations of motion based on a variational action principle \emph{at fixed initial/final configuration}. 
To provide a well-defined variational principle for $\Psi(\gamma)$ one then has to prescribe appropriate initial/final conditions that make the boundary contribution of $\bbd \Psi$, namely $d\Theta$, vanish.

Looking at the expression \eqref{eq:ZacTheta} for $\Theta$, one naively might expect that fixing $\bbd \gamma^I=0$ at the initial and final times is a good boundary condition. However, this means fixing in particular $\bbd \xi_{0,1} =0$ at the initial and final times and, since $\xi$ is valued in the \emph{phase space} of the mechanical system, rather than its configuration space, this is too strong a boundary condition for Hamilton's equations of motion, for it corresponds to fixing ``the $q$'s \emph{and} the $p$'s'' at both initial and final times.\footnote{Missing this point is the oversight committed in \cite{zaccaria_universal_1983}.}

Upon a closer look -- assuming for a moment that global Darboux coordinates $z^I=(q^i,p_i)$ exist, namely that one can write globally $\omega = \bbd q^i\wedge \bbd p_i$ -- one sees that, e.g., at $t=t_\text{in}$, the boundary term
\begin{align}
    \Theta(t_\text{in}) = -\int_0^1 d\sigma \ \left( \pp_\sigma q^i  \bbd p_i - \pp_\sigma p_i  \bbd q^i \right) \,,
\end{align}
can be made to vanish provided that one fixes
\begin{align}\label{eq:bdrycond-topostring}
    \begin{cases}
\bbd q^i(t_\text{in},\sigma) = 0 \,,\\ 
\pp_\sigma q^i(t_\text{in},\sigma) = 0\,.
\end{cases}
\end{align}
One can proceed similarly at $t=t_\text{fin}$, where the same condition can be imposed -- possibly with respect to a different set of Darboux coordinates. The crucial point is that this condition fixes only \emph{half} of the degrees of freedom
\begin{align}
    \bbd q^i(t_\text{in/fin})=0\,,
\end{align}
and not all of them as the condition $\bbd \gamma(t_\text{in/fin}) = 0$ would do.

Abstracting from these considerations, we arrive at the following initial/final conditions to be imposed on variations of $\gamma$ to obtain a well defined action principle for $\Psi(\gamma)$:\footnote{In adapted (local) Darboux coordinates $z = (q_\bullet^i,p^\bullet_i)$, where $L_\bullet = \{ q^i_\bullet = 0\}$, this is the same as asking that $q^i_\bullet(t_\bullet,\sigma) = 0$ at all $\sigma\in[0,1]$. Since the coordinates are adapted, this does not necessarily mean that the initial and final configurations coincide, just that ``the initial and final $q$'s are each fixed and constant wrt to $\sigma$''.} 
given a pair of Lagrangian submanifolds $(L_\text{in}, L_\text{fin})\subset(P,\omega)$, one demands that
\begin{align}\label{eq:bdrycond-Lagrangian}
    \gamma^I(t_\bullet, \sigma) \in L_\bullet\,,
\quad\text{for}\quad \bullet \in \{ \text{in},\text{fin}\}\,.    
\end{align}

Note that these initial/final conditions involve \emph{both} the initial and final value of $\xi_0(t)$ and $\xi_1(t)$, which are those relevant for solving the equations of motion \emph{of the topological string}, and also the initial and final values of the \emph{bulk} of the string, even though the bulk of the string is not governed by said equations of motion. 
This situation is not only unusual, but it also has an unexpected consequence: the histories solving the variational action principle correspond to two \emph{identical} copies of the same mechanical system, including the fact that they must satisfy the \emph{same initial/final conditions}. 
This is because the initial/final conditions \eqref{eq:bdrycond-topostring} (or \eqref{eq:bdrycond-Lagrangian}) require that $q^i(t_\text{in/fin},\sigma)$ are constant in $\sigma$ and thus that they are the same at $\sigma=0$ and 1.

For histories satisfying the variational action principle, then, the string is a closed string with \emph{(i)} a marked point following the physical 0+1D trajectory $\xi(t) :\approx \xi_0(t)\approx \xi_1(t)$, and \emph{(ii)} an unconstrained bulk, free to wobble in phase space as long as at the initial/final time it all lies in $L_\text{in/fin}$.
It is important to note that the string closes only on-shell and provided that the initial/final conditions are imposed. Had we imposed that the string was closed from the start, we would have found a completely trivial system, possessing identically vanishing equations of motions and symplectic 2-form.

We conclude this section by noting a couple of issues with this variational setup. Firstly, although the boundary conditions \eqref{eq:bdrycond-topostring} (or \eqref{eq:bdrycond-Lagrangian}) are necessary to induce a well-defined variational action principle, they are an overkill from a dynamical perspective, where it suffices to ask that $\xi_0(t_\text{in/fin})$ and $\xi_1(t_\text{in/fin})$ belong \emph{separately} to pairs of Lagrangian submanifolds of $P$, $(L^0_\text{in},L^0_\text{fin})$ and $(L^1_\text{in},L^1_\text{fin})$ respectively.
Second, and maybe more seriously, one finds that strings which satisfy the variational action principle carry -- \textit{on-shell} -- vanishing symplectic currents and thus vanishing charges (whatever their symmetries might be), since the two copies of the identical mechanical system appear with opposite signs and their respective contributions cancel out.

For these reasons, in the following, we shall focus on a single copy of the mechanical system, e.g. the one attached to the $\sigma=0$ end. Whether the second one is taken identical to the first, or independent from it, depends on how much importance one attaches to the variational action principle. We will henceforth ignore this question. 

\subsubsection{Symmetries of the deformed (generalized) spinning top}\label{subsubsec:symofGenSpinTop}

We now specialize these results to the deformed generalized spinning top, where $\omega$ is the STS form \eqref{STS form} on the Heisenberg double $P=\cD_\mathrm{H} $.
We recall our notation according to which an element of  $\cD_\text{H} $ is denoted $z \equiv G$. 

Using the decompositions $\cD_\text{H} \simeq \cG^* \bowtie \cG \simeq \cG\bowtie\cG^*$ we get maps $z \mapsto (\ell, h)$ and $z\mapsto (\tilde h,\tilde \ell)$ respectively, where $G = \ell h = \tilde h \tilde \ell$ with $\ell,\tilde \ell \in \cG^*$ and $h,\tilde h\in\cG$. 

With this notation, the action is determined by a Hamiltonian $\cH(\ell,h)$, which we will address soon, together with the pullback of the STS form \eqref{STS form} to the topological ``worldsheet'' $\Sigma$:
\begin{equation}
    \begin{aligned}{\label{eq:deformedtopSTS}}
    \gamma^{*}\omega &= \frac{1}{2}\Big(\langle d\ell\ell^{-1},d\th\th^{-1}\rangle +\langle \tilde{\ell}^{-1}d\tilde{\ell},h^{-1}d h\rangle\Big) \\
    &= \frac{1}{2}\Big(\langle \pp_I\ell\ell^{-1},\pp_J\th\th^{-1}\rangle +\langle \tilde{\ell}^{-1}\pp_I\tilde{\ell},h^{-1}\pp_J h\rangle\Big) d \gamma^I \wedge d\gamma^J\,,
    \end{aligned}
\end{equation}
where $\pp_I :=\pp/\pp z^I$ and $d = dx^\mu \pp_\mu = dx^\mu \pp/\pp x^\mu$, with $z^I$ and $x^\mu=(t,\sigma)$ corresponding to coordinates on $P=\cD_\text{H}$ and $\Sigma$ respectively.

Focusing on either endpoint of the string, say $\sigma=0$, the CPS analysis then yields the following two-form on the space of its worldlines $\xi(t)$:
\begin{align}
    \Omega|_{\sigma=0} = \frac{1}{2}\Big(\langle  \bbd \ell \ell ^{-1} \wedge \bbd \tilde h  \tilde h ^{-1}\rangle + \langle \tilde \ell^{-1}\bbd \tilde \ell  \wedge h^{-1}\bbd h \rangle\Big)\,.
\end{align}
Note that this 2-form fails to be $\bbd$-exact. 

We now briefly consider some choices of Hamiltonian $\cH(\ell,h)$ and investigate the corresponding PL symmetries. Recall that a PL symmetry is a transformation of the (off-shell) worldline $\xi(t)$ which must preserve the equations of motion and yield a PL flow equation in the CPS, for some group valued momentum map $Q$.

Since the symmetries we are interested in are in fact symmetries of the string's target space, they act pointwise on $\xi(t)$ and therefore admit an obvious extension to the entire string $\gamma(t,\sigma)$, where the symmetry parameters are taken to be constant in $\sigma$ and $\tau$.

As in the case of the (non-deformed) generalized spinning top, we can choose either $\cH=\cH(\ell)$ or $\cH=\cH(h)$ which respectively select as symmetries a subset of the possible Drinfel'd double actions. These choices are of course ``dual'' to each other, and combine a situation where both configuration and momentum space are deformed, or ``curved''. The discussions from the previous section should make clear what the resulting PL momentum map and group valued PL charges are associated to these choices. For example, choosing the dynamics $\cH = c_1{\rm Tr}(h)-c_2$ e.g., $\cG=SU(2)$, leads to the general case of a deformed spinning top whose (angular) momentum space fails to be linear if $\cG^*\neq (\fg^*,+)$.

A notable example closely related to the deformed spinning top is that of a deformed spinor, wherein one can choose $\cG=SU(2)$ and $\fg^*= \mathfrak{an}(2)$, so that $\cD_\text{H} = SL(2,\mathbb{C})$ (seen as a 6-dimensional real group) \cite{Biedenharn:1996vv}. 

In the next section, we will consider a genuinely 1+1D field theory, i.e. the \klimcik-\severa{} (KS) model for an open string, rather than a 0+1D theory with an auxiliary non-covariant dimension introduced to deal with a non-exact symplectic structure as was considered here. Although the second order formulation of the KS model looks quite distant from the example treated here, when written in a first order formulation the similarities are quite striking. Curiously, in the second order formulation the PL symmetries act non-locally on the space of fields, while in the first order formulation they act locally, but the charges localize at the string's endpoint in a way not so dissimilar to the formulation considered in this section. The non-locality, in this case, hides in the physical interpretation of the first order variables, since the relevant ``Legendre'' transform is itself non-local (cf. \S\ref{sec:1st-order} for details).

\section{Poisson-Lie symmetry for a 1+1D field theory}\label{sec:KS}

In this section we will be revisiting the Poisson-Lie symmetric non-linear $\sigma$-model\footnote{Here, we only consider the KS model in its version \emph{without} ``spectator coordinates''. See \cite{severa_CS_2016} and references therein.}  of \klimcik{} and \severa{} \cite{klimcik_dual_1995,klimcik_poisson-lie-loop_1996,klimcik_poisson-lie-t-duality_1996,klimcik_t-duality_1996}. The model describes a string moving in a Lie group target manifold $\cG$. The model can be used to study an open or closed string, with the resulting symmetries being heavily dependent on this choice. Unlike the original analyses which were done from a Hamiltonian point of view using a convenient foliation (or space-time coordinatization) of the 2D worldsheet, we will adopt the manifestly-covariant and coordinate-independent CPS formalism.

To the best of our knowledge, a CPS analysis of such a model has not been done before. Beside offering a better understanding of its covariance properties and the origin of the ``Poisson-Lie T-duality'' discovered by \klimcik{} and \severa{}, the analysis presented here suggests some (slight) generalization of the original model and most importantly sheds new light on issues of locality in the Poisson-Lie symmetry context.

\subsection{Formulating the covariant \klimcik{} \& \severa{} open string}\label{subsec:CovariantKSModel}

This section serves as a first introduction to the \klimcik{} \& \severa{} (KS) model as we formulate it covariantly. We establish the notations relevant to this entire chapter and construct the Lagrangian of the model. A key proposition establishing a tight link between the geometry of the target space and the Poisson-Lie notion of a classical double is proven, and a preliminary CPS analysis of the model including its (pre-)symplectic structure and equations of motion is given.

\paragraph*{Notations.} Consider a field
\begin{equation}
\th: \Sigma \to \cG\,, \quad x \mapsto \th(x)\,,
\end{equation}
valued in the (real) finite dimensional Lie group $\cG$, with Lie algebra $\fg$, on a 2-dimensional spacetime\footnote{Note the change in notation for ``spacetime'' from $M \leadsto \Sigma$.} $\Sigma$. 
We call $\Sigma$ the \textit{worldsheet}, and $\cG$ the \textit{target space}. The (Lorentzian) metric on $\Sigma$ is denoted $\gamma_{\mu\nu}$ and is here considered fixed once and for all. 
We use Greek indices $\mu,\nu,\dots$ on the worldsheet and Latin indices $a,b,c,\dots$ on $\fg$ or $\fg^{*}$. 

In the following, unless otherwise and explicitly specified, we consider open strings only, meaning
\begin{align}
    \Sigma \simeq [0,\pi] \times \mathbb{R}\,,
\end{align}
leaving our discussion of the closed string for \S\ref{sec:closedKS}. This will require introducing boundary conditions to ensure the conservation of the charges.

There are two different spaces on which we want to be doing differential geometry.\footnote{Differential geometry on the finite dimensional target space $\cG$ is also important, but can be somewhat circumvented. For this reason, we avoid introducing a third, separate, notation for it.} Following our notation for the CPS formalism, we will use $(d, i_{\bullet}, L_{\bullet})$ for Cartan's calculus on $\Sigma$, and $(\bbd, \mathbb{i}_{\bullet}, \mathbb{L}_{\bullet})$ for the calculus on the infinite dimensional field space:
\begin{equation}
    \cF := C^{\infty}(\Sigma, \cG) \ni \th\,.
\end{equation}
Denote by $\Delta \th$ the pullback to $\Sigma$ by a field $\th:\Sigma \to \cG$ of the right Maurer-Cartan form on $\cG$:
\begin{equation}
\Delta \th  := d \th \th^{-1}  \in \Omega^{1}(\Sigma)\otimes \fg\,.
\end{equation}
We will sometimes adopt the following mixed notations:
\begin{equation}\label{eq:Delta}
\Delta \th \equiv  (\Delta^a_\mu \th) \tau_a dx^\mu \equiv \nabla_\mu \th \th^{-1} dx^\mu\,,
\end{equation}
with $\nabla_\mu$ the Levi-Civita covariant derivative on $(\Sigma,\gamma_{\mu\nu})$ and $\tau_a$ a basis of $\fg$. 

The group multiplication on $\cG$ lifts pointwise to field space, turning $\cF$ into an infinite dimensional Lie group. Denote ${\rm Lie}(\cF) := C^\infty(\Sigma, \fg)$ its Lie algebra, and $\bbDelta \th$ the left Maurer-Cartan form on $\cF$:
\begin{equation}\label{eq:bbDelta}
\bbDelta \th := \th^{-1} \bbd \th \in \Omega^1(\cF)\otimes {\rm Lie}(\cF)\,.
\end{equation}
The two ``deltas'', $\Delta$ and $\bbDelta$, satisfy the following useful identity:
\begin{equation}
\bbd \Delta \th = \Ad_\th (d \bbDelta \th)=\th(d \bbDelta \th)\th ^{-1}\,,
\label{eq:MC-relation}
\end{equation}
where $\Ad_\th$ is the adjoint action of $\th\in \cG$ on $\fg$, lifted pointwise to the adjoint action (denoted with the same symbol) of $\cF$ on ${\rm Lie}(\cF)$.

\paragraph*{The model.}
Over the field space $\cF$, consider the following second order Lagrangian of \klimcik{} and \severa{} \cite{klimcik_dual_1995,klimcik_t-duality_1996}, here only slightly generalized, and written in a manifestly covariant form:
\begin{equation}\label{eq:Lagrangean}
\mathcal{L}(x,\th) = \frac12 E^{\mu\nu}_{ab}(\th(x)) \Delta^a_\mu \th(x) \Delta^b_\nu \th(x) \,.
\end{equation}
Henceforth, more succinctly:
\begin{equation}
\mathcal{L} := \frac{1}{2} E^{\mu\nu}_{ab} \Delta^a_\mu \th \Delta^b_\nu \th\,.
\end{equation}
Without loss of generality, we assume that $E^{\mu\nu}_{ab}$ is ``symmetric'' in $(a,\mu)\leftrightarrow(b,\nu)$, namely $E^{\mu\nu}_{ab} = E^{\nu\mu}_{ba}$. Moreover we demand that $E^{\mu\nu}_{ab}$ is \emph{invertible}, meaning that there exists a $\tE_{\mu\nu}^{ab}$ such that
\begin{align}\label{eq:E-invertible}
    \delta^{\mu}_\rho \delta^c_a = E^{\mu\nu}_{ab} \tE^{bc}_{\nu\rho}\,,
\end{align}
and that $E^{\mu\nu}_{ab}$ satisfies the \textit{KS condition}: 
\begin{equation}{\label{eq:Evariation}}
\bbd E^{\mu\nu}_{ab} =: E^{\mu\nu}_{abc} \bbDelta^c \th\,,\quad
E^{\mu\nu}_{abc} :=  \tilde c^{a'b'}{}_c E^{\mu\mu'}_{aa''} E^{\nu\nu'}_{bb''} (\Ad_\th)^{a''}{}_{a'}(\Ad_\th)^{b''}{}_{b'}\epsilon_{\mu'\nu'}\,,
\end{equation}
for $\epsilon_{\mu\nu}$ the covariant Levi-Civita tensor on the worldsheet $(\Sigma,\gamma_{\mu\nu})$, and $\tilde{c}^{ab}{}_{c}$ structure constants of a (finite dimensional) Lie algebra defined on the dual vector space $\fg^{*}$ in the dual basis $\tau^a_\ast$. We have also introduced the following notation for the matrix elements of the adjoint action in $\fg$: 
\begin{align}{\label{eq:adjointcomponents}}
    (\Ad_\th)^b{}_c := \langle\tau_\ast^b , \Ad_\th\tau_c \rangle\,. 
\end{align}

The yet-unspecified tensor $E_{ab}^{\mu\nu}$ encodes all the important information of the non-linear $\sigma$-model. If $E^{\mu\nu}_{ab}$ were equal to $\gamma^{\mu\nu} \kappa_{ab}$, for $\kappa_{ab}$ a non-degenerate, symmetric, bilinear, $\Ad$-invariant 2-form on $\fg$, then we would be looking at the principal chiral model for which $\bbd E^{\mu\nu}_{ab}=0$ (cf. \S\ref{subsec:KSNoether}). The KS condition is meant to generalize this case while preserving, and deforming, all its relevant symmetries.
In this sense, the KS condition \eqref{eq:Evariation} is responsible for the Poisson-Lie symmetries of the model as we will soon see in detail. 

As mentioned, by considering an open string we require appropriate boundary conditions in order to have a well-posed (and non-Abelian) Poisson-Lie symmetry with conserved charges. Namely, the following ``Neumann'' open-string boundary conditions need to be imposed at the string's timelike boundary $B=\partial\Sigma \simeq \{0,\pi\}\times \mathbb{R}$ \cite{klimcik_t-duality_1996}:
\begin{align}\label{eq:bdrycondition}
    s_\mu E^{\mu\nu}_{ab}\Delta^b_\nu \th |_B= 0\,,
\end{align}
where $s_\mu$ is the spacelike unit conormal to $B$.
As we will see, these open-string boundary conditions have two advantages: they give rise to a conserved symplectic structure and they are compatible with the relevant symmetries -- thus leading to conserved charges. (Note that Dirichlet boundary conditions, $\th|_{\pp\Sigma} = 1$, would not be symmetry-compatible; this will become obvious later on.)\footnote{See \cite{KS-Dbrane} for a discussion of Dirichlet boundary conditions, generalized to $\th|_{\pp\Sigma}$ belonging to a dressing orbit, or D-brane. These D-brane conditions emerge naturally in the dual model from the present Neumann conditions in the original model.}

The following Proposition shows that for the KS condition \eqref{eq:Evariation} to have solutions, the structure constants $c_{ab}{}^c$ and $\tilde{c}^{ab}{}_c$ of $\fg$ and $\fg^*$ are such that they define a classical double $\mathfrak{d} = \fg\bowtie\fg^*$ in the sense of Definition \ref{def:classicaldouble}:

\begin{proposition}[\klimcik{} and \severa{}\cite{klimcik_dual_1995}]\label{prop:KSEtransformIFFdouble}
    Consider an invertible mixed tensor $E^{\mu\nu}_{ab}$, then the KS condition (\ref{eq:Evariation}) is satisfied only if $\fg$ and $\fg^*$ form a classical double $\mathfrak{d}:=\fg\bowtie \fg^{*}$.
\end{proposition}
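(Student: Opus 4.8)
The plan is to read the Proposition as an \emph{integrability} statement. The KS condition \eqref{eq:Evariation} prescribes the field-space variation $\bbd E^{\mu\nu}_{ab}$ as an explicit function of $E$, $\Ad_\th$, the structure constants $c_{ab}{}^c$ and $\tilde c^{ab}{}_c$, and the fixed worldsheet data $\epsilon_{\mu\nu}$. For an actual function $E^{\mu\nu}_{ab}(\th)$ on $\cF$ to realize this prescription, the prescribed $1$-form must be $\bbd$-closed, i.e. $\bbd^2 E^{\mu\nu}_{ab}=0$. Imposing this closure and stripping the (invertible) prefactors should leave a purely algebraic relation among $c$ and $\tilde c$, which I expect to be exactly the compatibility condition \eqref{eq:Jacobidrinfeld}. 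Since the setup already postulates that $(\fg^*,\tilde c)$ is a Lie algebra and $(\fg,c)$ is one, obtaining \eqref{eq:Jacobidrinfeld} is precisely what upgrades the pair $(\fg,\fg^*)$ to a classical double $\mathfrak{d}=\fg\bowtie\fg^*$ in the sense of Definition \ref{def:classicaldouble}, with the invariant pairing supplied automatically by Manin's theorem.

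Concretely, I would apply $\bbd$ to \eqref{eq:Evariation}, writing $\bbd^2 E^{\mu\nu}_{ab}=(\bbd E^{\mu\nu}_{abc})\wedge\bbDelta^c\th+E^{\mu\nu}_{abc}\,\bbd\bbDelta^c\th=0$. Two elementary field-space identities feed this: the Maurer--Cartan equation for the left-invariant form $\bbDelta\th=\th^{-1}\bbd\th$ on $\cF$, namely $\bbd\bbDelta^c\th=-\tfrac12 c_{de}{}^c\,\bbDelta^d\th\wedge\bbDelta^e\th$; and the variation of the adjoint matrix $\bbd(\Ad_\th)^a{}_b=(\Ad_\th)^a{}_d\,c_{eb}{}^d\,\bbDelta^e\th$, obtained by differentiating $\Ad_\th\xi=\th\xi\th^{-1}$ and using $\bbd\th\,\th^{-1}=\Ad_\th\bbDelta\th$. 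Expanding $\bbd E^{\mu\nu}_{abc}$ by the product rule over its two $E$-factors (each re-expressed through \eqref{eq:Evariation}) and its two $\Ad$-factors produces three families of terms: a single-$\tilde c$, single-$c$ family from the two $\Ad$-variations; a matching single-$\tilde c$, single-$c$ contribution from the Maurer--Cartan term $E^{\mu\nu}_{abc}\,\bbd\bbDelta^c\th$; and a double-$\tilde c$ family from re-varying the two $E$-factors. Every term is a field-space $2$-form proportional to $\bbDelta^d\th\wedge\bbDelta^e\th$, so its coefficient must vanish once antisymmetrized in $(d,e)$.

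To extract the algebra I would then strip the common invertible data: the factors of $E$ tying up the free indices $a,b$ are removed with the inverse $\tE^{ab}_{\mu\nu}$ of \eqref{eq:E-invertible}, while the $\Ad_\th$ factors are disposed of by evaluating the (already $\th$-independent) relation at $\th=1$, where $(\Ad_\th)^a{}_b=\delta^a_b$. By counting powers of structure constants, the double-$\tilde c$ sector decouples: it carries a product $\epsilon_{\mu'\nu'}\epsilon_{\rho'\sigma'}$ that reduces in two dimensions via $\epsilon_{\mu'\nu'}\epsilon_{\rho'\sigma'}=\gamma_{\mu'\rho'}\gamma_{\nu'\sigma'}-\gamma_{\mu'\sigma'}\gamma_{\nu'\rho'}$ and should reorganize into the Jacobi identity of $(\fg^*,\tilde c)$, hence vanishing under the standing assumption that $\fg^*$ is a Lie algebra; what survives from the single-$\tilde c$, single-$c$ families is then the mixed-Jacobi compatibility \eqref{eq:Jacobidrinfeld} between the structure constants of $\fg$ and $\fg^*$. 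Feeding this back into Definition \ref{def:classicaldouble} completes the identification $\mathfrak{d}=\fg\bowtie\fg^*$.

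The main obstacle is purely organizational: the bookkeeping of the mixed worldsheet/algebra indices, keeping track of the symmetry $E^{\mu\nu}_{ab}=E^{\nu\mu}_{ba}$ and of the antisymmetry of $\tilde c$ in its upper indices (itself forced by requiring that $E^{\mu\nu}_{abc}\,\bbDelta^c\th$ respect this symmetry of $E$), and in particular the careful reduction of the two-$\epsilon$ terms so that the $\fg^*$-Jacobi piece separates cleanly from the cross terms. As a conceptual cross-check I would keep in mind the geometric content of \eqref{eq:Evariation}: it states that the graph of $E$ inside $\fg\oplus\fg^*$ varies over $\cG$ by the dressing/adjoint action, and such a family integrates to a genuine $\th$-dependent subspace precisely when $\fg\oplus\fg^*$ closes into the Lie algebra of the double --- an independent way of seeing why \eqref{eq:Jacobidrinfeld} must appear.
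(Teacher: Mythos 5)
Your overall strategy is the same as the paper's: read \eqref{eq:Evariation} as a Frobenius-type prescription for $\bbd E^{\mu\nu}_{ab}$, impose closure of the prescribed one-form (self-substituting the KS condition where $\bbd E$ reappears), strip the invertible $E$- and $\Ad_\th$-factors, and land on the mixed compatibility \eqref{eq:Jacobidrinfeld}, with Manin's theorem then supplying the invariant pairing. The executional difference is that the paper first passes to the inverse: from \eqref{eq:E-invertible} one gets $\bbd \tE^{ab}_{\mu\nu} = -\tilde{c}^{a'b'}{}_{c}\,(\Ad_{\th})^{a}{}_{a'}(\Ad_{\th})^{b}{}_{b'}\,\bbDelta^{c}\th\,\epsilon_{\mu\nu}$, whose right-hand side contains \emph{no} $E$- or $\tE$-factors at all. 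Consequently no self-substitution is ever needed, the ``double-$\tilde c$'' family that occupies your third paragraph never arises, and contracting with arbitrary $X,Y\in\fg^*$ before applying $\bbd$ a second time yields directly the $E$-independent pointwise condition \eqref{eq:computationDdouble}, which marginalizes to \eqref{eq:Jacobidrinfeld} with no $\epsilon\epsilon$ reductions and no appeal to the $\fg^*$-Jacobi identity. The inversion is exactly the move that dissolves what you correctly identify as the main organizational obstacle.

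The genuinely shaky step in your route is the handling of the double-$\tilde c$ sector, on two counts. First, ``by counting powers of structure constants, the double-$\tilde c$ sector decouples'' is not licensed by the hypothesis: the proposition assumes one fixed solution $E(\th)$, so at a given point you cannot vary the value of $E$ freely, and an identity mixing terms quadratic and cubic in $E$ cannot be split degree by degree. Second, the mechanism you conjecture---reorganization via $\epsilon_{\mu'\nu'}\epsilon_{\rho'\sigma'}=\gamma_{\mu'\rho'}\gamma_{\nu'\sigma'}-\gamma_{\mu'\sigma'}\gamma_{\nu'\rho'}$ into the $\fg^*$-Jacobiator---is not what actually happens. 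Using the symmetry $E^{\mu\nu}_{ab}=E^{\nu\mu}_{ba}$ to write the KS prescription in sandwich form $\bbd E = -E\,\xi\,E$ with $\xi$ the ($E$-independent) one-form built from $\tilde c$, $\Ad_\th$, $\bbDelta\th$ and $\epsilon_{\mu\nu}$, the graded Leibniz rule gives $\bbd(E\,\xi\,E) = (\bbd E)\,\xi\,E + E\,(\bbd\xi)\,E - E\,\xi\wedge(\bbd E)$; self-substituting $\bbd E = -E\,\xi\,E$ makes the first and third terms \emph{cancel pairwise}, leaving only $E(\bbd\xi)E$. So the two double-$\tilde c$ contributions (from re-varying the first and the second $E$-factor) annihilate each other by wedge bookkeeping, independently of any Jacobi identity, and what survives is exactly the $c\tilde c$ family sandwiched between invertible factors---i.e., the paper's condition $\bbd\xi=0$ in disguise. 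With this repair your route closes and reproduces \eqref{eq:Jacobidrinfeld}; as written, the justification of the decisive cancellation would not survive scrutiny, even though its conclusion is true.
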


\begin{proof}
First, using the invertibility condition \eqref{eq:E-invertible}, we have    
\begin{equation}
\bbd \tE^{bc}_{\nu\rho} = -\tE^{bd}_{\nu\tau} \bbd E_{de}^{\tau\sigma} \tE^{ec}_{\sigma\rho}\,,
\end{equation}
allowing \eqref{eq:Evariation} to be re-written more succinctly in terms of the inverse $\tE^{ab}_{\mu\nu}$:
\begin{align}
    \bbd \tE^{ab}_{\mu\nu} = -\tensor{\tilde{c}}{^{a'b'}_{c}}\tensor{(\Ad_{\th})}{^{a}_{a'}}\tensor{(\Ad_{\th})}{^{b}_{b'}}\bbDelta^{c} \th\, \epsilon_{\mu\nu}\,.
\end{align}
This is equivalent to demanding that, for all $X := X_a \tau^a_\ast$ and $Y := Y_a\tau^a_*$ elements of $\fg^*$,
\begin{equation} \label{eq:bbdE2}
\bbd (X_a Y_b \tE^{ab}_{\mu\nu}) = X_a Y_b \bbd \tE^{ab}_{\mu\nu} = -\left\langle \big[ \Ad_\th^* X , \Ad_\th^*Y\big]_{*}, \bbDelta \th\right\rangle \epsilon_{\mu\nu}\,.
\end{equation}
Therefore, a necessary condition for \eqref{eq:Evariation} to have solutions, is that for all $X,Y \in\fg^*$ the variation of the right hand side vanishes:
\begin{equation}
0 \equiv \bbd^2(X_a Y_b \tE^{ab}_{\mu\nu}) = -\bbd \left\langle \big[ \Ad_\th^* X , \Ad_\th^*Y\big]_{*}, \bbDelta \th\right\rangle \epsilon_{\mu\nu} \,.
\end{equation}
After some algebra this gives the necessary condition:
\begin{equation}
\left\langle \big[ X^\th , Y^\th\big]_*, \frac12 \big[ \bbDelta \th, \bbDelta \th\big]\right\rangle 
= \left\langle \big[ \ad^*_{\bbDelta \th} X^\th , Y^\th\big]_*, \bbDelta \th\right\rangle + 
\left\langle \big[ X^\th , \ad^*_{\bbDelta \th}Y^\th\big]_*, \bbDelta \th\right\rangle \,,
\end{equation}
where we used $\bbd \Ad_\th^* X = \ad^*_{\bbDelta \th} \Ad_\th^* X$ and introduced for the sake of this proof the more compact notation $X^\th \equiv \Ad_\th^* X =  X \lhd \th $ etc.

Evaluating this at a fixed $x_0 \in \Sigma$, and noting that $\bbDelta \th(x_0)$ can be identified with the left Maurer-Cartan form on $\cG$ at $g=\th(x_0)\in\cG$, we find that the above is satisfied if and only if for all $\alpha,\beta\in\fg$
\begin{equation}
    \begin{aligned}
    \left\langle \big[ X^g , Y^g\big]_*, \big[ \alpha , \beta\big]\right\rangle
      = \ & \left\langle \big[ \ad^*_{\alpha} X^g , Y^g\big]_*, \beta\right\rangle + 
    \left\langle \big[ X^g , \ad^*_{\alpha}Y^g\big]_*, \beta\right\rangle \cr
    & - \left\langle \big[ \ad^*_{\beta} X^g , Y^g\big]_*, \alpha\right\rangle - 
    \left\langle \big[ X^g , \ad^*_{\beta}Y^g\big]_*, \alpha\right\rangle\,.
    \label{eq:computationDdouble}
    \end{aligned}
\end{equation}
Marginalizing over $X,Y,\alpha,\beta$,  this integrability condition gives the following condition on the structure constants $c_{ab}{}^c$ and $\tilde c^{ab}{}_c$ of $\fg$ and $\fg^*$, respectively:
\begin{equation}\label{eq:Jacobidrinfeld1}
 \tensor{c}{_{ab}^{c}}\tensor{\tilde{c}}{^{de}_{c}} = \tensor{c}{_{ac}^{d}}\tensor{\tilde{c}}{^{ce}_{b}}+\tensor{c}{_{ac}^{e}}\tensor{\tilde{c}}{^{dc}_{b}}-\tensor{c}{_{bc}^{d}}\tensor{\tilde{c}}{^{ce}_{a}}-\tensor{c}{_{bc}^{e}}\tensor{\tilde{c}}{^{dc}_{a}}\,.
\end{equation}
This is the Jacobi identity \eqref{eq:Jacobidrinfeld} for the classical double. 
\end{proof}

This proposition gives us a necessary condition for the existence of solutions to the KS equation. \klimcik{} and \severa{} showed how to build explicit solutions to it by writing $\tE^{ab}_{\mu\nu}$ in terms of the corresponding Drinfel'd Poisson bivector on $\cG$ \cite{klimcik_dual_1995, klimcik_poisson-lie-loop_1996, klimcik_poisson-lie-t-duality_1996,klimcik_t-duality_1996}. 

We review and slightly generalize their construction in Appendix \ref{app:constructing-E}.

As discovered by \klimcik{} and \severa{}, the relationship between the KS model and Poisson-Lie groups does not stop here. We will see in the later sections that the CPS, as determined by the dynamics of the model, displays a Poisson-Lie symmetry as a generalized Noether symmetry. For this we need to study the equations of motion and (pre-)symplectic structure of the model. 

Using that $\bbDelta \th$ is a basis of source forms on $\cF$, we write the variation of the Lagrangian in terms of the equations of motion $\mathcal{E}_c$ and the pSP current $\theta^\mu$ according to:
\begin{equation}{\label{eq:KSTakens}}
\bbd \mathcal{L} =: \mathcal{E}_c \bbDelta^c \th + \nabla_\mu \theta^\mu\,,
\end{equation}
where
\begin{equation}{\label{eq:varData}}
\mathcal{E}_c =  \frac12 E^{\mu\nu}_{abc} \Delta^a_\mu \th \Delta^b_\nu \th  - \nabla_\mu j^\mu_c\,,
\quad
\text{and}
\quad
\theta^\mu = \langle j^\mu , \bbDelta \th\rangle\,,
\end{equation}
with\footnote{Recall that in our notation $\Ad^*_\th$ is the dual map to $\Ad_\th$, which acts from the right: $\langle \Ad_{\th}^{*} X, \alpha\rangle = \langle X, \Ad_{\th}\alpha\rangle$.}
\begin{equation}\label{eq:jnoether}
j^\nu := \Ad_{\th}^* \left(  E^{\mu\nu}_{ab}\Delta_\mu^a \th \, \tau_*^b\right)\,, \quad \text{or, equivalently, }\quad j^\nu_c  =  \left( E^{\mu\nu}_{ab}\Delta_\mu^a \th\right) (\Ad_\th)^b{}_c \,.
\end{equation}
The pSF current is then
\begin{equation}{\label{eq:KSpSF}}
    \omega^{\mu} = \bbd\theta^{\mu} = \langle \bbd j^{\mu},\bbDelta \th\rangle - \frac{1}{2}\left\langle j^{\mu},\comm{\bbDelta \th}{ \bbDelta \th}\right\rangle\,.
\end{equation}
For an open string, the worldsheet $\Sigma$ has a timelike boundary $B = \{0,\pi\}\times \mathbb{R}$. It is then not a priori guaranteed that the symplectic structure $\Omega \approx \int_{C} n_\mu \omega^\mu$ is conserved, where $C\simeq[0,\pi]$ is a Cauchy surface. However, conservation is indeed achieved if the open-string boundary conditions \eqref{eq:bdrycondition} are imposed. Note that they can be rewritten as 
\begin{align}{\label{eq:bdrycondition-j}}
    s_\mu j^\mu|_{\partial\Sigma} =0\,.
\end{align}  
Indeed, from the on-shell conservation of the pSF current $\omega^\mu$ \eqref{eq:omega-conserved} it follows that:
\begin{align}{\label{eq:KS-Omega-conservation}}
    \nabla_\mu \omega^\mu \approx 0 
    \implies \Omega|_{C_2}-\Omega|_{C_1} \approx
    \int_{C_2 \sqcup \overline C_1} n_\mu \omega^\mu
    \approx \int_{B_\pi \sqcup B_0} s_\mu \omega^\mu
    \approx \int_{B_\pi \sqcup B_0} \bbd \langle s_\mu  j^\mu, \bbDelta \th\rangle =0\,.
\end{align}
This concludes our formulation of the covariant KS model. We begin our exposition of the symmetries of the model in the next section with the simpler ``warm up'' case in which $E^{\mu\nu}_{ab}$ is taken to be field independent.

\subsection{Warm up: Noetherian symmetries of the principal chiral model}\label{subsec:KSNoether}

To highlight the distinction between Noetherian symmetries and Poisson-Lie symmetries, we first illustrate the usual Noetherian symmetry of the principal chiral model obtained by considering a field-independent $E^{\mu\nu}_{ab}$. In section \S\ref{subsec:twistedsymmetry}, we will then see how the Poisson-Lie symmetries of the KS model arise as a ``deformation'' of the Noetherian symmetries discussed here.

By definition $E^{\mu\nu}_{ab}$ is field-independent if and only if $\bbd E^{\mu\nu}_{ab}=0$. From \eqref{eq:Evariation} it is clear this holds if and only if the dual structure constants vanish, $\tensor{\tilde{c}}{^{ab}_{c}} =0$.
This, in particular, implies that $E^{\mu\nu}_{ab}(\th)$ is invariant under ($\Sigma$-global) right rotations, $\th\to \th h_0$ with $h_0\in \cG$:
\begin{equation}{\label{eq:Einvariance}}
    E^{\mu\nu}_{ab}(\th h_0) = E^{\mu\nu}_{ab}(\th)\,.
\end{equation}
Since $\Delta_\mu^{a} \th$ is invariant under right rotations, it is clear then that right rotations are Noetherian symmetries that leave the Lagrangian \eqref{eq:Lagrangean} invariant. Taking the infinitesimal right rotation $h_0= 1+\alpha+\cdots$,  $\alpha\in \fg$, we denote its action
\begin{align}
\brhoR : \fg \to \mathfrak{X}(\cF)\,, \ \ \  \alpha \mapsto \brhoR(\alpha)\,,
\ \ \ \text{with}\ \ \ 
\left(\brhoR(\alpha)f\right)(\th) := \left.\frac{d}{dt}\right|_{t=0} f(\th e^{t\alpha})\,,
\ \ \ \text{iff}\ \ \ 
\bbi_{\brhoR(\alpha)}\bbDelta \th = \alpha\,,
\end{align}
so that indeed
\begin{align}{\label{eq:sym-pcmodel}}
    \bbL_{\brhoR(\alpha)} \mathcal{L} = 0\,.
\end{align}

The corresponding conserved Noether current equals $j^\mu$ itself,
\begin{align}
    j^\mu = \bbi_{\brhoR(\cdot)} \theta^\mu\,,
\end{align}
while the equations of motion \eqref{eq:varData} reduce to the conservation equations for $j^\mu_c$ \eqref{eq:jnoether}:
\begin{equation}\label{eq:Noetherstd}
\bbd E^{\mu\nu}_{ab}=0 \implies \mathcal{E}_c = -\nabla_\mu j^\mu_c \approx 0\,.
\end{equation}
Moreover, as one can see from the first expression in \eqref{eq:jnoether}, $j^{\mu}$ is equivariant under right rotations, that is 
\begin{equation}
    j^{\mu}\mapsto \Ad^{*}_{h_0}j^{\mu}\,.
\end{equation}
With this, we see that the infinitesimal transformations of $\th$ and $j$ are
\begin{align}
   \th^{-1} \delta^{R}_{\alpha}\th &= \mathbb{i}_{\brhoR(\alpha)}\bbDelta \th = \alpha\,, \\
    \delta^{R}_{\alpha}j^{\mu} &= \bbL_{\brhoR(\alpha)} j^\mu =  \ad^{*}_{\alpha}j^{\mu}\,,
\end{align}
thus implying that (cf. (\ref{eq:KSpSF}))
\begin{equation}{\label{eq:preHamflowNoether}}
\mathbb{i}_{\brho(\alpha)}\omega^{\mu} = -\bbd \langle j^{\mu},\alpha\rangle\,.
\end{equation}
Assuming $\Sigma$ is globally hyperbolic and choosing a Cauchy surface $C\hookrightarrow \Sigma$ of the worldsheet we recover a flow equation (cf. \eqref{eq:noethercharge-flow}) by integrating  (\ref{eq:preHamflowNoether}) over $C$,
\begin{equation}{\label{eq:q-pcmodel}}
    \mathbb{i}_{\brho(\alpha)}\Omega = -\bbd\langle q, \alpha\rangle \,,\quad q:= \int_{C} n_{\mu}j^{\mu}\,,
\end{equation}
where on-shell, neither $q$ nor $\Omega$ depends on the choice of $C\hookrightarrow \Sigma$. We emphasize again that this $C$-invariance is the covariant field space version of the statement that $q$ is conserved and generates a Noetherian symmetry, reflecting the usual notion of momentum map.

If the target space is taken to be $\cG \simeq(\mathbb{R}^n,+)$, then this corresponds to the conservation of the total linear momentum of the string moving in $\mathbb{R}^n$.

These results are reminiscent of our comments in \S\ref{subsec:NewNoetherToPLgroupSymm}, ascertaining that if $\fg^{*}$ is Abelian, then the Poisson-Lie variational charge reduces to a $\bbd$-exact Noether variational charge. Unsurprisingly then, in the next sections we will see how having $\fg^{*}$ non-Abelian deforms the Noether symmetry discussed here into a Poisson-Lie symmetry.

\subsection{Poisson-Lie symmetries of the covariant \klimcik{} \& \severa{} open string}\label{subsec:twistedsymmetry}

We will now illustrate how Poisson-Lie symmetries arise in the \klimcik{} and \severa{} model. Just as the invariance condition (\ref{eq:Einvariance}) gave rise to a Noether symmetry, it is the generalization of this condition into one of the form of (\ref{eq:Evariation}) with non-zero structure constants $\tensor{\tilde{c}}{^{ab}_{c}}\neq 0$ of the dual Lie algebra $\fg^{*}$ which will give rise to a Poisson-Lie symmetry. There are three properties of the model which are beautifully tied to each other in this generalization:
\emph{(i)} $\fg$ and $\fg^{*}$ form a classical double \emph{(ii)} the equations of motion take on the form of a spacetime flatness equation for $j^{\mu}$ \emph{(iii)} there exists a symmetry admitting a Poisson-Lie momentum map.

Crucially, the corresponding symmetry action \emph{fails} both to leave the Lagrangian invariant (even up to a boundary term), 
{and (in a second order formalism) to be spacetime local. Therefore, the PL symmetries of the KS model go genuinely beyond the usual Noetherian framework.

\subsubsection{Deformed conservation: Flatness}\label{subsubsec:deformedconservation:flatness}

Imposing (\ref{eq:Evariation}) with $\tensor{\tilde{c}}{^{ab}_{c}}\neq 0$ causes the usual Noether conservation equation (\ref{eq:Noetherstd}) to be ``deformed'' into a \textit{spacetime} flatness condition.\footnote{This spacetime flatness equation, which is highly dependent on the model at hand, is not to be confused with the field space flatness condition \eqref{eq:PLflatness} satisfied in general by Poisson-Lie variational charges. Although, the two are related as we will soon see.} A $\fg^*$-valued 1-form $J = J_{a\mu} \tau_*^a d \sigma^\mu$ is said to be flat if and only if the 2-form 
\begin{equation}{\label{eq:Jflatness}}
F_c := d J_{c} - \frac{1}{2} \tilde c^{ab}{}_c   J_a \wedge J_b  \,,
\end{equation}
vanishes (the relative sign is conventional, since it can be flipped by sending $J\mapsto -J$).

By means of a Hodge dualization with respect to the (Lorentzian) worldsheet metric $\gamma_{\mu\nu}$, the 1-form $J$ can be mapped onto a current $j$: 
\begin{equation}\label{eomJ}
J = \star (\gamma_{\mu\nu} j^\mu d x^\nu) \,,
\quad\text{i.e.}\quad
J_{\nu} =  j^\mu \epsilon_{\mu\nu}\,.
\end{equation}
Then, flatness of $J$ holds if and only if $j$ satisfies the ``deformed'' conservation law,
\begin{equation}{\label{eq:jflatness}}
- \frac12 \epsilon^{\mu\nu} F_{\mu\nu} = \nabla_\mu j^\mu - \frac12  [ j^\mu, j^\nu]_* \epsilon_{\mu\nu} \approx 0\,.
\end{equation}
Counting form degrees, it becomes clear that \emph{this deformation of the conservation equation for $j$ is specific to 2D spacetimes}.\footnote{Recall all currents, by namesake, are codimension-1 forms in spacetime. This fact plus the fact that $j^{\mu}$ must be a 1-form to satisfy the flatness equation (\ref{eq:jflatness}), restricts these \klimcik{} \& \severa{} type models to 2D spacetimes. As motivated in \S\ref{subsec:NewNoetherToPLgroupSymm} and as we will later discuss in \S\ref{sec:PLin2+1D}, not all Poisson-Lie symmetric field theories are restricted to 2d (or lower dimensional) spacetimes.} 

The key point is that the deformed conservation law for the current $j^{\mu}$ is exactly the equation of motion of the KS model (\ref{eq:Lagrangean}) if and only if (\ref{eq:Evariation}) holds:

\begin{proposition}
    The KS model (\ref{eq:Lagrangean}) admits the deformed conservation law as its equation of motion, 
    \begin{equation}{\label{eq:flatEOM}}
        \mathcal{E}_c = \frac12 \epsilon^{\mu\nu} F_{\mu\nu c} = -\left(\nabla_{\mu}j^{\mu}_{c}- \frac{1}{2}\comm{j^{\mu}}{j^{\nu}}_{* c}\epsilon_{\mu\nu}\right)\,,
    \end{equation}
    if and only if $\bbd E^{\mu\nu}_{ab}$ is given by the KS condition (\ref{eq:Evariation}).
\end{proposition}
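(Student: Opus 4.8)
The plan is to reduce the claimed equivalence to a single pointwise algebraic identity and then verify that identity in both directions. The starting observation is that the Takens decomposition \eqref{eq:varData} of $\bbd\mathcal{L}$ holds for \emph{any} field-dependent $E^{\mu\nu}_{ab}$ and unconditionally produces $\mathcal{E}_c = \tfrac12 E^{\mu\nu}_{abc}\Delta^a_\mu\th\,\Delta^b_\nu\th - \nabla_\mu j^\mu_c$, where $E^{\mu\nu}_{abc}$ is merely the coefficient defined by $\bbd E^{\mu\nu}_{ab} = E^{\mu\nu}_{abc}\bbDelta^c\th$. On the other hand, unwinding the definitions \eqref{eq:Jflatness}--\eqref{eq:jflatness} gives $\tfrac12\epsilon^{\mu\nu}F_{\mu\nu c} = -\nabla_\mu j^\mu_c + \tfrac12[j^\mu,j^\nu]_{*c}\,\epsilon_{\mu\nu}$. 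First I would equate these two expressions for $\mathcal{E}_c$: the divergence terms $\nabla_\mu j^\mu_c$ cancel identically, so the statement $\mathcal{E}_c = \tfrac12\epsilon^{\mu\nu}F_{\mu\nu c}$ is equivalent to the purely field-algebraic identity
\begin{equation}
E^{\mu\nu}_{abc}\,\Delta^a_\mu\th\,\Delta^b_\nu\th = [j^\mu,j^\nu]_{*c}\,\epsilon_{\mu\nu}\,,
\end{equation}
required to hold at every point of $\Sigma$ and for every field configuration.

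For the direction assuming the KS condition, I would substitute \eqref{eq:Evariation} into the left-hand side and regroup the factors. The key bookkeeping step is to recognize, from the definition of the current in \eqref{eq:jnoether}, that $E^{\mu\mu'}_{aa''}(\Ad_\th)^{a''}{}_{a'}\Delta^a_\mu\th = j^{\mu'}_{a'}$ and likewise for the second pair of indices. The left-hand side then collapses to $\tilde c^{a'b'}{}_c\, j^{\mu'}_{a'} j^{\nu'}_{b'}\,\epsilon_{\mu'\nu'}$, which by definition of the dual bracket is exactly $[j^{\mu'},j^{\nu'}]_{*c}\,\epsilon_{\mu'\nu'}$. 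This establishes the identity, and hence the flat form of the equation of motion.

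For the converse I would use that the first derivatives $\Delta^a_\mu\th$ can be prescribed arbitrarily and independently of the value $\th\in\cG$ at any chosen point, so that the identity is an equality of two quadratic forms in the velocities, valid for all velocities and all $\th$. Expanding the right-hand side with \eqref{eq:jnoether} and comparing symmetric coefficient tensors recovers precisely the KS expression for $E^{\mu\nu}_{abc}$. To upgrade this quadratic-form equality to the tensor equality \eqref{eq:Evariation}, I would note that both $E^{\mu\nu}_{abc}$ (inherited from $E^{\mu\nu}_{ab}=E^{\nu\mu}_{ba}$) and the candidate KS tensor are symmetric under $(a,\mu)\leftrightarrow(b,\nu)$---in the latter the sign produced by swapping $\epsilon_{\mu'\nu'}$ cancels the one from the antisymmetry of $\tilde c^{a'b'}{}_c$---so that matching the quadratic forms forces the full tensors to agree.

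I expect the only genuinely delicate point to be this last upgrade in the converse: since the equation of motion only ever sees $E^{\mu\nu}_{abc}$ contracted into the symmetric velocity bilinear, one must argue carefully that the symmetric parts determine the tensors and verify the shared $(a,\mu)\leftrightarrow(b,\nu)$ symmetry. Everything else is a direct index computation resting on the identification of $E^{\mu\mu'}_{aa''}(\Ad_\th)^{a''}{}_{a'}\Delta^a_\mu\th$ with the current components $j^{\mu'}_{a'}$.
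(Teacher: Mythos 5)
Your proof is correct, and in the ``if'' direction it is essentially identical to the paper's: the paper also starts from the Takens form $\mathcal{E}_c = \tfrac12 E^{\mu\nu}_{abc}\Delta^a_\mu\th\,\Delta^b_\nu\th - \nabla_\mu j^\mu_c$, substitutes the KS condition, and regroups the factors $E^{\mu\mu'}_{aa''}\Delta^a_\mu\th\,(\Ad_\th)^{a''}{}_{a'}$ into $j^{\mu'}_{a'}$ to land on $-\nabla_\mu j^\mu_c + \tfrac12[j^\mu,j^\nu]_{*c}\,\epsilon_{\mu\nu}$. Where you genuinely add something is the converse. The paper's displayed proof consists \emph{only} of this forward computation; the ``only if'' half of the proposition is left implicit, presumably on the grounds that the chain of equalities looks reversible. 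But reversing it is not automatic: the equation of motion only ever sees $E^{\mu\nu}_{abc}$ contracted into the velocity bilinear $\Delta^a_\mu\th\,\Delta^b_\nu\th$, which by itself determines only the part of the tensor symmetric under $(a,\mu)\leftrightarrow(b,\nu)$. Your three-step repair is exactly what is needed and is sound: (\emph{i}) arbitrary prescribability of the 1-jet $(\th(x), \Delta^a_\mu\th(x))$ reduces the statement to an equality of quadratic forms at each point of $\cG$; (\emph{ii}) $E^{\mu\nu}_{abc}=E^{\nu\mu}_{bac}$ follows from varying $E^{\mu\nu}_{ab}=E^{\nu\mu}_{ba}$ and the linear independence of the $\bbDelta^c\th$; (\emph{iii}) the KS candidate shares this symmetry because the sign from $\tilde c^{a'b'}{}_c \mapsto \tilde c^{b'a'}{}_c$ cancels against the sign from $\epsilon_{\mu'\nu'}\mapsto\epsilon_{\nu'\mu'}$, after which polarization forces the full tensors to coincide. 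This marginalization-over-jets style is, incidentally, the same move the paper itself uses in the proof of Proposition \ref{prop:KSEtransformIFFdouble} (evaluation at a fixed $x_0$ and stripping off arbitrary $X,Y,\alpha,\beta$), so your completion sits naturally alongside the paper's methods while making the biconditional actually earned rather than asserted.
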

\begin{proof}
    From the general equations of motion (\ref{eq:varData}), plug in (\ref{eq:Evariation}), group terms into $j^\nu_c$ factors:
\begin{equation}
        \begin{aligned}
            \mathcal{E}_{c} &= \frac{1}{2}E^{\mu\nu}_{abc}\Delta_{\mu}^{a}h\Delta_{\nu}^{b}h - \nabla_{\mu}j^{\mu}_{c} \\
            &= 
      \frac{1}{2}\tilde c^{a'b'}{}_c \Delta_{\mu}^{a}h\Delta_{\nu}^{b}hE^{\mu\mu'}_{aa''} E^{\nu\nu'}_{bb''} (\Ad_h)^{a''}{}_{a'}(\Ad_h)^{b''}{}_{b'}\epsilon_{\mu'\nu'} - \nabla_{\mu}j^{\mu}_{c}\\
     &=\frac{1}{2}\tilde{c}^{a'b'}{}_{c}(E^{\mu\mu'}_{aa''}\Delta^{a}_{\mu}h(\Ad_h)^{a''}{}_{a'})(E^{\nu\nu'}_{bb''}\Delta^{b}_{\nu}h(\Ad_h)^{b''}{}_{b'})\epsilon_{\mu'\nu'}-\nabla_{\mu}j^{\mu}_{c} \\
     &= \frac{1}{2}\tilde{c}^{a'b'}{}_{c}j^{\mu'}_{a'}j^{\nu'}_{b'}\epsilon_{\mu'\nu'}-\nabla_{\mu}j^{\mu}_{c}
     =-\nabla_{\mu}j^{\mu}_{c}+\frac{1}{2}\comm{j^{\mu}}{j^{\nu}}_{*c}\epsilon_{\mu\nu}\,.
        \end{aligned}
\end{equation}
\end{proof}
An important implication of this equation of motion which will be crucial to our discussion is the form of its solutions. If $\Sigma \simeq [0,\pi] \times \mathbb{R}$ is the worldsheet of an open string (more generally, if it is simply connected) then we have that, on-shell, (\ref{eq:flatEOM}) implies that there exists a map
\begin{equation}
    \tell:\cF\to C^{\infty}(\Sigma, \cG^*)\,,
\end{equation}
such that
\begin{equation}{\label{eq:jasell}}
    J\approx \Delta \tell 
    \iff 
    j^{\mu}\approx \nabla_{\nu}\tell\,\tell^{-1} \epsilon^{\nu\mu}\,,
\end{equation}
where $\cG^*$ denotes the (simply connected) Lie group induced by exponentiating $(\fg^{*},\comm{\cdot}{\cdot}_{*})$. There is of course an ambiguity in the definition of $\tell$ from $J$. We fix it by picking on $\partial\Sigma$ a point $x_0$, and defining $\tell$ at any $x\in\Sigma$ as the path-ordered exponential of $J$ from $x_0$ to $x$:
\begin{equation}{\label{eq:htildedef}}
\tell(x) :\approx \mathrm{Pexp} \int_{{x \leftarrow x_0}} J\,.
\end{equation}
This way, $\tell(x_0) \equiv 1 \in \cG^*$. 
Since on-shell $J$ is flat, all homotopically equivalent paths from $x_0$ to $x$ will give the same result. 

In fact, due to the boundary conditions \eqref{eq:bdrycondition-j}, the pullback of $J$ to the boundary $\partial\Sigma =$ $ \acomm{0}{\pi}\times \mathbb{R}$ vanishes\footnote{Indeed, if $t^\mu$ is tangent to $\partial \Sigma$ and $\epsilon_{\mu\nu} \propto ( \underline{n}\times \underline{s})_{\mu\nu}$ with $n_\mu$ the conormal to  $C$ and $s_\nu$ the conormal to $\partial\Sigma$, using $t^\mu s_\mu = 0$, it is easy to see that $t^\mu J_\mu \propto s_\mu j^\mu$. This does not depend on the choice of $C\hookrightarrow \Sigma$.}, thus implying that\
\begin{align}{\label{eq:ellBC}}
    \iota_{\partial\Sigma}^* J = 0 \implies
    \tell|_{\{0\}\times \mathbb{R}} \approx 1\,,
    \quad \text{and} \quad
    \tell|_{\{\pi\}\times \mathbb{R}} \approx \text{const.},
\end{align}
where we assumed that $x_0$ belongs to the $\{0\}\times \mathbb{R}$ boundary component. As we will see the constant value $Q:=\tell(\pi,\tau) \in \cG^*$, with $\tau \in \mathbb{R}$ arbitrary but fixed (to denote a specific Cauchy slice in the foliation $\Sigma \simeq C\times \mathbb{R}$, $C\simeq [0,\pi]$), is the conserved Poisson-Lie charge of the open string {(cf. Theorem \ref{thm:KSPLsymmetry})}. Had we chosen $x_0$ in the bulk of $\Sigma$, $\tell$ would still be constant on each boundary component, while the relevant $Q$ would be $\tell(\pi,\tau)\tell^{-1}(0,\tau)$.

\subsubsection{Twisted right rotations \& their Poisson-Lie charge}\label{twisted}

In the previous section we have seen that for the equations of motion to take the form of a (dual) flatness equation, $E^{\mu\nu}_{ab}$ must have a specific field-dependence expressed by the KS condition \eqref{eq:Evariation}. 
Moreover, we established with Proposition \ref{prop:KSEtransformIFFdouble} that this field dependence of $E^{\mu\nu}_{ab}$ has solutions only if $\fg\bowtie \fg^{*}$ has the structure of a (classical) Drinfel'd double:
\begin{align}
    \mathfrak{d} = \fg\bowtie\fg^*\,.
\end{align}
This condition is the main ingredient behind the results of this section.

We now show that the so-defined KS model enjoys a generalization of the right rotational symmetry of the principal chiral model. As we shall see, this symmetry is non-local and can be expressed as a ``twisted'' right rotation of the field $\th^{-1}\delta_\alpha \th = \overline\alpha$ where the right rotation parameter $\overline{\alpha}$ is \emph{field dependent}, namely $\overline\alpha := \widetilde{\Ad}_{\tello}^*\alpha$, with $\widetilde{\Ad}{}^*$ denoting the transpose of the (dual) adjoint action $\widetilde{\Ad}$ of $\cG^{*}$ on $\fg^{*}$. That is, for all $\tell\in \cG^{*},X\in \fg^{*}$ and $\alpha\in \fg$, $\langle \widetilde{\Ad}_{\tello}(X),\alpha\rangle 
=: \langle X, \widetilde{\Ad}_{\tello}^{*}\alpha\rangle$.\footnote{With this notation \cite{ortega_momentum_2004}, the co-adjoint \emph{representation} of $\cG^*$ over $\fg$ is given by $(\widetilde{\Ad^*})^{-1}$, which is a \emph{left} action.}

More in detail, recall that the action of a right rotation on $\cF$ is defined as the map
\begin{align}\label{eq:brhoR}
    \brhoR : \fg \to \mathfrak{X}(\cF) \,,
    \quad \text{such that}\quad
    \left(\bbi_{\brhoR(\alpha)}\bbDelta \th\right)(x) = \alpha\,, \quad \forall x\in\Sigma\,.
\end{align}

\begin{definition}[Twisted right rotation]\label{def:KStwistedsymmetry}
\textit{Twisted} right rotations\footnote{At this stage we do not know whether twisted right rotations preserve the equations of motion. Therefore, it would be more accurate to define twisted right rotations as a map $\fg \times \cF_\text{EL}\to T_{\cF_\text{EL}} \cF$, $(\alpha,\th) \mapsto (\th, \brho(\alpha)|_\th)$. However, we will prove in Proposition \ref{prop:welldefined} that they are indeed well defined as vector fields on $\cF_\text{EL}$.}
\begin{align}
    \brho : \fg \to \mathfrak{X}(\cF_\text{EL})\,,
\end{align}
are the field space vector fields defined by
\begin{align}{\label{eq:varbbDeltag}}
    \left(\bbi_{\brho(\alpha)}\bbDelta \th\right)(x) :\approx \widetilde{\Ad}_{\tello(x)}^*\alpha\,, \quad \forall \alpha \in \fg, x\in \Sigma\,,
\end{align}
where $\tell(x) \approx \text{Pexp}\int_{{x \leftarrow x_0}} J$ \eqref{eq:htildedef} is a point-dependent, non-local, function of $\th\in\cF_\text{EL}$. 
\end{definition}
Note that twisted right rotations differ from ``standard'' right rotations only if $\cG^*$ is non-Abelian, i.e. $\widetilde{\Ad}\neq \mathrm{id}$. In particular, seen as vector fields on $\cF_\text{EL}\subset \cF$, they are non-local.

We now prove that this action is well-defined, namely it is a homomorphism of Lie algebras, and defines a Poisson-Lie symmetry of the KS model. This means in particular that it is compatible with the equations of motion and, in the case of the open string, compatible with its Neumann boundary conditions. These facts rely on  $(\fg,\fg^*)$ forming a Drinfel'd double structure. 

Before formulating the main theorem, we prove some preliminary results.

\begin{lemma}\label{lemma:alphabar}
    Recall that $\tell(x) \approx \text{Pexp}\int_{{x \leftarrow x_0}} J$. Then,
    $\overline{\alpha}(x) \approx \widetilde{\Ad}^*_{\tello(x)}\alpha$ iff 
    \begin{align}{\label{eq:covder}}
        d_J \overline\alpha := d \overline{\alpha} + \widetilde{\ad}{}^*_J(\overline{\alpha})\approx 0\,,
        \quad\text{and}\quad
        \overline\alpha(x_0) \approx \alpha\,.
    \end{align}
\end{lemma}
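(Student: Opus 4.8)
The plan is to prove the two implications separately. The forward direction ($\overline\alpha \approx \widetilde{\Ad}^*_{\tello}\alpha \implies d_J\overline\alpha\approx 0$ together with $\overline\alpha(x_0)\approx\alpha$) I would establish by a direct on-shell differentiation; the converse I would get from uniqueness of solutions to the resulting linear transport equation. The only inputs needed are the defining relation $J \approx \Delta\tell = d\tell\,\tell^{-1}$ from \eqref{eq:jasell}, the normalization $\tell(x_0)\approx 1$, the on-shell flatness of $J$, and the elementary fact that each $\widetilde{\Ad}_{\tell}$ is a Lie-algebra automorphism of $(\fg^*,[\cdot,\cdot]_*)$.

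For the forward direction I would first record that, since $d\tell \approx J\tell$, one has $d(\tell^{-1})\,\tell \approx -\tell^{-1}d\tell \approx -\widetilde{\Ad}_{\tello}J$, so differentiating the (matrix-group) identity $d\widetilde{\Ad}_{\ell} = \widetilde{\ad}_{d\ell\,\ell^{-1}}\circ\widetilde{\Ad}_{\ell}$ along $\ell=\tello$ gives $d\widetilde{\Ad}_{\tello}\approx -\widetilde{\ad}_{\widetilde{\Ad}_{\tello}J}\circ\widetilde{\Ad}_{\tello}$. Rather than transpose operators directly, I would test $\overline\alpha$ against an arbitrary constant $X\in\fg^*$: writing $\langle X,\overline\alpha\rangle = \langle \widetilde{\Ad}_{\tello}X,\alpha\rangle$ and differentiating, the automorphism identity $\widetilde{\ad}_{\widetilde{\Ad}_{\tello}J}\circ\widetilde{\Ad}_{\tello} = \widetilde{\Ad}_{\tello}\circ\widetilde{\ad}_J$ collapses the transported bracket back onto $J$, yielding $\langle X, d\overline\alpha\rangle \approx -\langle \widetilde{\Ad}_{\tello}(\widetilde{\ad}_J X),\alpha\rangle = -\langle \widetilde{\ad}_J X, \overline\alpha\rangle = -\langle X, \widetilde{\ad}{}^*_J\overline\alpha\rangle$. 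Since $X$ is arbitrary this is exactly $d\overline\alpha + \widetilde{\ad}{}^*_J\overline\alpha \approx 0$, and the initial condition is immediate from $\widetilde{\Ad}^*_{\tello(x_0)}\alpha\approx\widetilde{\Ad}^*_{1}\alpha=\alpha$.

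For the converse I would observe that $d_J\overline\alpha\approx 0$ is a linear first-order (parallel-transport) equation for the $\fg$-valued field $\overline\alpha$; along any path starting at $x_0$ it admits a unique solution fixed by $\overline\alpha(x_0)\approx\alpha$, and the on-shell flatness of $J$ (equivalently $F\approx 0$, the equation of motion \eqref{eq:flatEOM}) guarantees this solution is path-independent and hence a well-defined function on the simply connected $\Sigma$. Since the forward computation exhibits $\widetilde{\Ad}^*_{\tello}\alpha$ as such a solution with the same initial datum, uniqueness forces $\overline\alpha\approx\widetilde{\Ad}^*_{\tello}\alpha$. I expect the only delicate point to be pure bookkeeping: keeping the transpose/sign conventions consistent (in particular the orientation built into $\mathrm{Pexp}\int_{x\leftarrow x_0}$, which fixes $J\approx d\tell\,\tell^{-1}$ rather than $\tell^{-1}d\tell$) and invoking equivariance at the right moment, so that the extraneous $\widetilde{\Ad}_{\tello}J$ produced by differentiation reduces back to a plain $\widetilde{\ad}{}^*_J$.
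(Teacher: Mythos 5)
Your proof is correct, and its overall skeleton coincides with the paper's: exhibit $\widetilde{\Ad}{}^*_{\tello}\alpha$ as an explicit solution of the linear transport equation $d_J\overline\alpha\approx 0$ with initial datum $\alpha$ at $x_0$, then conclude by uniqueness of solutions to a first-order linear equation on the (path-)connected worldsheet. Where you differ is in the computational engine for the key differentiation. The paper identifies $\overline\alpha = \widetilde{\Ad}{}^*_{\tello}\alpha = \tell\rhd\alpha$ with the $\fg$-projection of the adjoint action of $\tell\in\cG^*\subset\cD$ on $\alpha$, and computes in the double: $d\overline\alpha = d(\tell\,\alpha\,\tell^{-1})|_{\fg} = [\Delta\tell,\overline\alpha]_{\mathfrak{d}}|_{\fg} = -\widetilde{\ad}{}^*_{\Delta\tell}\overline\alpha \approx -\widetilde{\ad}{}^*_J\overline\alpha$, a one-line argument that makes the Drinfel'd-double structure manifest. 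You instead never leave $\cG^*$: you pair against a constant $X\in\fg^*$, use $d\widetilde{\Ad}_{\tello}\approx -\widetilde{\ad}_{\widetilde{\Ad}_{\tello}J}\circ\widetilde{\Ad}_{\tello}$ together with the automorphism identity $\widetilde{\ad}_{\widetilde{\Ad}_{\tello}J}\circ\widetilde{\Ad}_{\tello} = \widetilde{\Ad}_{\tello}\circ\widetilde{\ad}_J$, and transpose. Both computations are valid (I checked your signs, including $d\ell\,\ell^{-1} = -\widetilde{\Ad}_{\tello}J$ for $\ell=\tello$); yours is more elementary in that it requires only the coadjoint calculus of $\cG^*$ and not the bracket \eqref{eq:Ddoublealg}, at the cost of obscuring the dressing-action interpretation that the paper exploits immediately afterwards in Theorem \ref{thm:twisted+dressing}. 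One small streamlining remark: for the converse you invoke on-shell flatness of $J$ to get path-independence, but strictly speaking uniqueness of linear ODE solutions along paths already suffices once the explicit global solution is in hand --- flatness enters only upstream, in making $\tell$ itself well-defined via \eqref{eq:htildedef}, exactly as the paper notes.
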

\begin{proof}
    The first of \eqref{eq:covder} is an over determined first order PDE. If a solution exists, it is uniquely determined by the value of $\overline\alpha$ at (any) one point $x_0 \in \Sigma$. Crucially,  $J \approx d \tell\,\tell^{-1}$ is flat \textit{on-shell} whence this equation has solutions for any ``initial'' value $\alpha$ of $\overline\alpha$ at $x_0$. It is easily checked that this unique solution is $\overline\alpha :\approx \widetilde{\Ad}{}^*_{\tello} \alpha$. Indeed, $\overline{\alpha}(x_0) = \alpha$ and, observing that $\overline{\alpha} = \widetilde{\Ad}{}^*_{\tello} \alpha = \alpha \lhd \tello = \tell \rhd \alpha$ coincides with the projection on the first factor of $\mathfrak{d} = \fg \bowtie \fg^*$ of the adjoint action of $\tell \in \cG^{*}\subset \cD:={\rm exp}(\mathfrak{d})$ on $\alpha \in\mathfrak{g}$ \eqref{eq:triangleactions}, we readily compute:
    \begin{align}
        d \overline\alpha = d (\tell\, \alpha\, \tell^{-1})|_{\fg}= [\Delta\tell,\overline\alpha]_{\mathfrak{d}}{|_{\fg}}= -  \widetilde{\ad}^*_{\Delta\tell}\overline\alpha \approx -  \widetilde{\ad}^*_{J}\overline\alpha\,.
    \end{align}
\end{proof}
As a consequence of \eqref{eq:brhoR} and \eqref{eq:varbbDeltag} we can write the twisted right rotations as 
\begin{equation}
\brho(\alpha)\approx \brhoR(\overline{\alpha})\,,
\end{equation}
which is nothing else than a convenient short-hand.
\begin{proposition}{\label{prop:jequivariance}}
   Under the action of a twisted right rotation, the $\fg^*$-valued momentum current $j$ transforms as follows
   \begin{equation}{\label{eq:varj}}
       \delta_\alpha j :\approx \bbi_{\brho(\alpha)}\bbd j \approx\mathrm{ad}_{\overline{\alpha}}^* j\,.
   \end{equation}
\end{proposition}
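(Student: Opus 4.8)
The plan is to compute $\bbi_{\brho(\alpha)}\bbd j$ directly from the explicit expression $j^\nu = \Ad_\th^*\!\left(E^{\mu\nu}_{ab}\Delta_\mu^a\th\,\tau_*^b\right)$ and to show that every contribution beyond the expected equivariance term cancels on-shell. Introducing the shorthand $K^\nu := E^{\mu\nu}_{ab}\Delta_\mu^a\th\,\tau_*^b\in\fg^*$, so that $j^\nu = \Ad_\th^* K^\nu$, I would first vary the prefactor $\Ad_\th^*$. The elementary Maurer-Cartan computation gives $\bbd\Ad_\th^* = \ad^*_{\bbDelta\th}\Ad_\th^*$, and contracting with $\brho(\alpha)$, using $\bbi_{\brho(\alpha)}\bbDelta\th \approx \overline\alpha$ from Definition \ref{def:KStwistedsymmetry}, produces precisely the target term $\ad^*_{\overline\alpha}\,j^\nu$ together with a remainder $\Ad_\th^*\,\delta_\alpha K^\nu$, where $\delta_\alpha K^\nu := \bbi_{\brho(\alpha)}\bbd K^\nu$. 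The whole proposition then reduces to the single claim that $\delta_\alpha K^\nu \approx 0$.

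To evaluate $\delta_\alpha K^\nu$ I would split it by the Leibniz rule into a term $T_1$ from the variation of $E^{\mu\nu}_{ab}$ and a term $T_2$ from the variation of $\Delta_\mu^a\th$. For $T_1$, insert the KS condition \eqref{eq:Evariation}, which yields $\bbi_{\brho(\alpha)}\bbd E^{\mu\nu}_{ab} = E^{\mu\nu}_{abc}\,\overline\alpha^c$; then, using the explicit form of $E^{\mu\nu}_{abc}$ together with the definition \eqref{eq:jnoether} of the current, one recognises a factor $j^{\mu'}_{a'}$, so that $T_1 \sim \tilde c^{a'b'}{}_c\,\overline\alpha^c\,j^{\mu'}_{a'}\,E^{\nu\nu'}_{bb''}(\Ad_\th)^{b''}{}_{b'}\,\epsilon_{\mu'\nu'}$. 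For $T_2$, I would use the identity $\bbi_{\brho(\alpha)}\bbd\Delta\th \approx \Ad_\th(d\overline\alpha)$, which follows from treating $\brho(\alpha)\approx\brhoR(\overline\alpha)$ as a right translation by the field-dependent parameter $\overline\alpha$ and applying the Maurer-Cartan relation \eqref{eq:MC-relation}. One then replaces $d\overline\alpha$ by the covariant-constancy relation of Lemma \ref{lemma:alphabar}, $\nabla_\mu\overline\alpha \approx -\widetilde{\ad}{}^*_{J_\mu}\overline\alpha$, and rewrites $J_\mu$ in terms of the current via the Hodge relation $J_\nu = j^\mu\epsilon_{\mu\nu}$ of \eqref{eomJ}. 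This casts $T_2$ into the same index structure as $T_1$.

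The crux of the argument, and the step I expect to demand the most careful bookkeeping, is the cancellation $T_1 + T_2 \approx 0$. After relabelling the dummy indices, both terms share the identical kernel $\tilde c^{a'b'}{}_c\,\overline\alpha^c\,j^{\mu'}_{a'}\,(\Ad_\th)^{b''}{}_{b'}\,\epsilon_{\mu'\nu'}$ and differ only in which slot of $E$ carries the free output index: $T_1$ contains $E^{\nu\nu'}_{bb''}$, while $T_2$ contains $-E^{\nu'\nu}_{b''b}$. The assumed symmetry $E^{\mu\nu}_{ab} = E^{\nu\mu}_{ba}$ renders these equal and opposite, so the remainder vanishes on-shell. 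Conceptually, this expresses that the \emph{field-dependence} of $E$ imposed by the KS condition is exactly compensated by the \emph{non-constancy} of the twisted parameter $\overline\alpha$ enforced by flatness; it is precisely this balance that rescues, for the KS model, the naive equivariance $\delta_\alpha j \approx \ad^*_{\overline\alpha} j$ familiar from the principal chiral model. Assembling the two pieces gives $\bbi_{\brho(\alpha)}\bbd j^\nu \approx \ad^*_{\overline\alpha}\,j^\nu$, as claimed.
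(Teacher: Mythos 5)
Your proposal is correct and follows essentially the same route as the paper's proof: the paper likewise varies $j^\nu_c = E^{\mu\nu}_{ab}\Delta^a_\mu\th\,(\Ad_\th)^b{}_c$ term by term, extracts the $\ad^*_{\overline\alpha}j$ piece from the variation of the $\Ad_\th$ factor, and shows that the $\bbd E$ and $\bbd\Delta\th$ contributions combine — using the symmetry $E^{\mu\nu}_{ab}=E^{\nu\mu}_{ba}$ — into a common kernel multiplying $d_J\overline\alpha$, which vanishes by Lemma \ref{lemma:alphabar}. Your factorization $j^\nu=\Ad^*_\th K^\nu$ with the reduction to $\delta_\alpha K^\nu\approx 0$ is only a cosmetic repackaging of the same computation, all key ingredients (KS condition, Maurer--Cartan identities, covariant constancy of $\overline\alpha$) coinciding with the paper's.
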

\begin{proof} 
We start by computing the variation of $j^\nu_c = E^{\mu\nu}_{ab} \Delta^a_\mu \th (\Ad_\th)^b{}_c$. For this, we use the identities
\begin{align}
\bbd E^{\mu\nu}_{ab} & =  \tilde c^{a'b'}{}_f E^{\mu\mu'}_{aa''} E^{\nu\nu'}_{bb''} (\Ad_\th)^{a''}{}_{a'}(\Ad_\th)^{b''}{}_{b'}\epsilon_{\mu'\nu'} \bbDelta^f \th\,, \\
\bbd \Delta^a_\mu \th &= (\Ad_\th)^a{}_{b'}\nabla_\mu \bbDelta^{b'} \th\,, \\
\bbd (\Ad_\th)^b{}_c & = \bbDelta^f \th \ c_{fc}{}^d (\Ad_\th)^b{}_d\,,
\end{align} 
to find:
\begin{equation}
    \begin{aligned}
    \bbd j^\nu_c 
     = \ &  \tilde c^{a'b'}{}_f  E^{\mu\mu'}_{aa''}E^{\nu\nu'}_{bb''} (\Ad_\th)^{a''}{}_{a'} (\Ad_\th)^{b''}{}_{b'} \epsilon_{\mu'\nu'} \bbDelta^f \th \Delta^a_\mu \th (\Ad_\th)^b{}_c  \\ & 
     + E^{\nu'\nu}_{b''b}(\Ad_\th)^{b''}{}_{b'} (\nabla_{\nu'} \bbDelta^{b'} \th ) (\Ad_\th)^b{}_c
     + E^{\mu\nu}_{ab} \Delta^{a}_\mu \th \bbDelta^f \th \ c_{fc}{}^d (\Ad_\th)^b{}_d\\
     = \ &   E^{\nu\nu'}_{bb''}(\Ad_\th)^{b''}{}_{b'} J_{\nu' a'} \ \tilde c^{a'b'}{}_{f} \bbDelta^f \th (\Ad_\th)^b{}_c \\ &
     +  E^{\nu'\nu}_{b''b} (\Ad_\th)^{b''}{}_{b'}(\nabla_{\nu'} \bbDelta^{b'} \th )(\Ad_\th)^b{}_c
     +  \bbDelta^f \th \ c_{fc}{}^d j^\nu_d\\
     = \ & E^{\mu\nu}_{ab} (\Ad_\th)^a{}_{b'} (\Ad_\th)^b{}_c \left( \nabla_\mu \bbDelta^{b'} \th + J_{\mu a'} \ \tilde c^{a'b'}{}_{f} \bbDelta^f \th \right) +  \bbDelta^f \th \ c_{fc}{}^d j^\nu_d\,,
    \end{aligned}
\end{equation}
where in the second step we used that $E^{\mu\nu}_{a b} = E^{\nu\mu}_{ba}$. Using the definition \eqref{eq:varbbDeltag} of twisted right rotations, we infer then  how $j$ transforms under $\brho$:
\begin{equation}
\delta_\alpha j^\nu_c :\approx \bbi_{\brho(\alpha)}\bbd j^\nu_c
= 
E^{\mu\nu}_{ab} (\Ad_\th)^a{}_{b'} (\Ad_\th)^b{}_c \left( {\nabla_\mu \overline\alpha^{b'}} +  J_{\mu a'} \ \tilde c^{a'b'}{}_{f} \overline\alpha^f \right) +  \overline\alpha^f \ c_{fc}{}^d j^\nu_d\,.
\end{equation}
We can conclude our proof if we show that the first term on the right hand side above vanishes. Indeed, since $\overline\alpha \approx \widetilde{\Ad}{}^*_{\tello}(\alpha)$, this follows from Lemma \ref{lemma:alphabar}. To see that this is the case, namely that 
\begin{equation}
    \nabla_\mu \overline{\alpha}^{b'} +  J_{\mu a'} \ \tilde c^{a'b'}{}_{f} \overline{\alpha}^f \approx0\,,
\end{equation}
we write it using explicitly the dual basis  $\tau_{*}^{b'}\in\fg^{*}$ and re-arrange, giving,
\begin{align}
0 \approx \left\langle \tau_{*}^{b'},\nabla_{\mu}\overline{\alpha} \right\rangle +\left\langle \widetilde{\ad}_{J_{\mu}}(\tau_{*}^{b'}) ,\overline{\alpha}\right\rangle  := \left\langle \tau_{*}^{b'},\nabla_{\mu}\overline{\alpha}  +  \widetilde{\ad}^{*}_{J_{\mu}}(\overline{\alpha})\right\rangle\,.
\end{align}
We recognize in the second term the co-adjoint action of $J\in \fg^*$ on $\alpha \in \fg$, so that we can rewrite this expression in the desired form: 
\begin{equation}
d \overline{\alpha} +  \widetilde{\mathrm{ad}}_{J}^{*}(\overline{\alpha}) \approx 0\,.
\end{equation}
\end{proof}
The proof of this proposition could be run backwards showing that there is only one twisting of the right action compatible with the ``equivariance'' of $j$, namely: $\bbi_{\brhoR(\overline\alpha)}\bbd j \approx\mathrm{ad}_{\overline{\alpha}}^* j$ if and only if $d_J\overline\alpha \approx  0$.
\begin{proposition}[Well-definedness]\label{prop:welldefined}
    The twisted right rotations preserve the equations of motion, i.e. $\bbL_{\brho(\alpha)}\mathcal{E}\approx 0$ or equivalently 
    \begin{equation} 
    \delta_{\alpha} F\approx 0\,,
    \end{equation}
    and---in the case of an open string---the Neumann boundary conditions\linebreak $s_\mu j^\mu|_B \approx0$ at $B = \{0,\pi\}\times \mathbb{R}$ (\eqref{eq:bdrycondition} and \eqref{eq:bdrycondition-j}). Twisted right rotations are therefore well-defined as vector fields over $\cF_\text{EL}$. 
    
    Moreover, they define an infinitesimal action of $\fg$ on $\cF_\text{EL}$ since they are a Lie algebra homomorphism, i.e.
    \begin{equation}\label{eq:brho-homo}
\left[\brho\left({\alpha_1}\right),\brho\left({\alpha_2}\right)\right] \approx \brho\left({[\alpha_1,\alpha_2]}\right)\,, \quad \forall \alpha_1,\alpha_2\in \fg\,,
\end{equation}
where the bracket on the left hand side is a Lie bracket of vector fields on $\cF_\text{EL}$, and the bracket on the right hand side is the Lie bracket on $\fg$.
\end{proposition}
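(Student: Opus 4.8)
The plan is to prove the three assertions --- invariance of the shell, invariance of the open-string Neumann boundary conditions, and the Lie-algebra homomorphism property --- in turn, taking as the two basic inputs Proposition \ref{prop:jequivariance}, i.e. $\delta_\alpha j \approx \ad^*_{\overline\alpha}j$ (equivalently $\delta_\alpha J \approx \ad^*_{\overline\alpha}J$ after the Hodge dualisation \eqref{eomJ}), and Lemma \ref{lemma:alphabar}, i.e. the covariant constancy $d_J\overline\alpha\approx 0$ with $\overline\alpha\approx\widetilde{\Ad}^*_{\tello}\alpha$. The boundary condition is the quickest: since $s_\mu$ is field-independent and $\ad^*_{\overline\alpha}$ is linear, $\delta_\alpha(s_\mu j^\mu) \approx \ad^*_{\overline\alpha}(s_\mu j^\mu)$; restricting to $B$ and using the on-shell condition $s_\mu j^\mu|_B\approx0$ \eqref{eq:bdrycondition-j}, this vanishes, so the flow preserves the boundary-condition-restricted shell.

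The linchpin for both of the remaining claims is the observation that, on-shell, the non-local twisted rotation acts on the dual field $\tell$ of \eqref{eq:jasell}--\eqref{eq:htildedef} simply by a \emph{dressing action} of $\cG$ on $\cG^*$. Concretely, I would write $\mu := \tell\lhd\alpha\in\fg^*$ for the $\fg^*$-component of $\Ad_\tell\alpha = \overline\alpha + \mu\in\mathfrak{d}$ and show that the variation $\delta_\alpha\tell\,\tell^{-1}\approx\mu$ reproduces Proposition \ref{prop:jequivariance}. This follows by differentiating $\Ad_\tell\alpha$ along $\Sigma$: from $d(\Ad_\tell\alpha) = [\Delta\tell,\Ad_\tell\alpha]_{\mathfrak{d}}\approx[J,\overline\alpha+\mu]_{\mathfrak{d}}$ and the bracket \eqref{eq:Ddoublealg}, projecting onto $\fg^*$ gives $d\mu + [\mu,J]_* \approx \ad^*_{\overline\alpha}J$ (which is exactly $\delta_\alpha J$ under $\delta\tell\,\tell^{-1}=\mu$), while projecting onto $\fg$ recovers $d\overline\alpha\approx-\widetilde{\ad}^*_J\overline\alpha$ as a consistency check with Lemma \ref{lemma:alphabar}; moreover $1\lhd e^{s\alpha}=1$ shows the normalisation $\tell(x_0)=1$ is preserved. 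Granting this, Part 1 is immediate: on-shell $J\approx\Delta\tell$, so under $\delta_\alpha$ the current $J$ stays of Maurer--Cartan form built from a new $\cG^*$-valued field, whose curvature $F$ vanishes identically; hence $\delta_\alpha F\approx 0$ and $\bbL_{\brho(\alpha)}\mathcal{E}\approx0$ by \eqref{eq:flatEOM}.

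For the homomorphism property I would use the commutator identity for right-rotation vector fields carrying field-dependent parameters. Writing $\brho(\alpha)\approx\brhoR(\overline\alpha)$ and using that $\alpha\mapsto\brhoR(\alpha)$ is a homomorphism for constant parameters (the generators of right translations being left-invariant vector fields on $\cF$), one obtains
\[
[\brho(\alpha_1),\brho(\alpha_2)] \approx \brhoR\big([\overline\alpha_1,\overline\alpha_2] + \delta_{\alpha_1}\overline\alpha_2 - \delta_{\alpha_2}\overline\alpha_1\big),
\]
where $\delta_{\alpha_1}\overline\alpha_2 := \bbi_{\brho(\alpha_1)}\bbd\overline\alpha_2$. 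The claim thus reduces to the $\fg$-valued identity $[\overline\alpha_1,\overline\alpha_2] + \delta_{\alpha_1}\overline\alpha_2 - \delta_{\alpha_2}\overline\alpha_1 \approx \overline{[\alpha_1,\alpha_2]}$. The extra terms follow from the dressing action above: since $\delta_{\alpha_1}\tell\,\tell^{-1}\approx\mu_1=\tell\lhd\alpha_1$, I get $\delta_{\alpha_1}(\Ad_\tell\alpha_2)\approx[\mu_1,\Ad_\tell\alpha_2]_{\mathfrak{d}}$, hence $\delta_{\alpha_1}\overline\alpha_2\approx[\mu_1,\overline\alpha_2+\mu_2]_{\mathfrak{d}}|_{\fg}=[\mu_1,\overline\alpha_2]_{\mathfrak{d}}|_{\fg}$ (the $[\mu_1,\mu_2]_*$ piece being $\fg^*$-valued). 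Substituting, the left-hand side becomes $[\overline\alpha_1,\overline\alpha_2]+[\mu_1,\overline\alpha_2]_{\mathfrak{d}}|_{\fg}-[\mu_2,\overline\alpha_1]_{\mathfrak{d}}|_{\fg}$, while expanding $\overline{[\alpha_1,\alpha_2]}=\Ad_\tell[\alpha_1,\alpha_2]|_{\fg}=[\overline\alpha_1+\mu_1,\overline\alpha_2+\mu_2]_{\mathfrak{d}}|_{\fg}$ (using that $\Ad_\tell$ is an automorphism of $\mathfrak{d}$) and discarding the $\fg^*$-valued $[\mu_1,\mu_2]_*$ term produces the same three $\fg$-projections; the two sides coincide.

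The main obstacle is Part 3: correctly accounting for the field-dependence of the rotation parameter $\overline\alpha$ in the commutator and, in particular, pinning down $\delta_\alpha\tell$. Everything hinges on recognising that the non-local twisted rotation is realised on $\tell$ by a dressing transformation, which reorganises the whole computation into bracket operations in the double $\mathfrak{d}$ where $\Ad_\tell$ is an automorphism. The remaining risk is purely bookkeeping: tracking the $\fg$/$\fg^*$ projections and the signs in \eqref{eq:Ddoublealg}, and ensuring every manipulation is used only on-shell, where $J\approx\Delta\tell$ is flat and $\tell$ exists. An alternative, more pedestrian proof of Part 1 would instead expand $\delta_\alpha F_c$ in components, substitute $dJ_c\approx\tfrac12\tilde c^{ab}{}_c J_a\wedge J_b$ and $d\overline\alpha\approx-\widetilde{\ad}^*_J\overline\alpha$, and verify cancellation directly from the compatibility condition \eqref{eq:Jacobidrinfeld}; this avoids introducing $\delta_\alpha\tell$ but is computationally heavier.
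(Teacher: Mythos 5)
Your proposal is correct and establishes all three claims, but on the two substantive parts it follows a genuinely different route from the paper. The paper proves preservation of the equations of motion by a direct computation of $\delta_\alpha F$ from \eqref{eomJ}, Lemma \ref{lemma:alphabar} and Proposition \ref{prop:jequivariance}, and proves the homomorphism property by reducing $\llbracket \overline{\alpha_1},\overline{\alpha_2}\rrbracket \approx \overline{[\alpha_1,\alpha_2]}$ to the two characterizing conditions of Lemma \ref{lemma:alphabar}---the value at $x_0$ plus the covariant-constancy equation $d_J\llbracket \overline{\alpha_1},\overline{\alpha_2}\rrbracket\approx 0$---the latter verified by a differential bracket computation that terminates in the double compatibility condition \eqref{eq:Jacobidrinfeld}. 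You instead front-load the dressing action on $\tell$: your identification $\delta_\alpha\tell\,\tello \approx \mu := (\Ad_\tell\alpha)|_{\fg^*}$ is precisely the content of the paper's \emph{later} Theorem \ref{thm:twisted+dressing}, which you are in effect proving early from Proposition \ref{prop:jequivariance}. Granting it, your part 1 is the clean observation that $\delta_\alpha J \approx d\mu + [\mu,J]_*$ is tangent to the space of Maurer--Cartan forms, so linearized flatness is automatic (your displayed computation checks out: on-shell $dJ\approx\frac12[J,J]_*$ and the $\fg^*$ Jacobi identity kill $\delta_\alpha F$); and your part 3 becomes a \emph{pointwise algebraic} identity obtained by projecting the automorphism property $\Ad_\tell[\alpha_1,\alpha_2]=[\Ad_\tell\alpha_1,\Ad_\tell\alpha_2]_{\mathfrak{d}}$ onto $\fg$, with the $[\mu_1,\mu_2]_*$ term dropping out---no PDE manipulation at all, the Drinfel'd condition \eqref{eq:Jacobidrinfeld} entering only implicitly through $\mathfrak{d}$ being a Lie algebra on which $\Ad_\tell$ acts by automorphisms. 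What each approach buys: yours is shorter and more conceptual, but it silently rests on a uniqueness step that you should state explicitly---given Proposition \ref{prop:jequivariance}, the variation $\delta_\alpha\tell$ is pinned down because the difference of two candidates $\nu$ satisfies $d\nu+[\nu,J]_*\approx 0$ with $\nu(x_0)=0$, hence vanishes by flatness of $J$ (this is the same uniqueness mechanism as Lemma \ref{lemma:alphabar}, just for the coadjoint-type equation on $\fg^*$); without it, checking that $\mu$ \emph{reproduces} $\delta_\alpha J$ does not yet show that $\mu$ \emph{is} $\delta_\alpha\tell\,\tello$. The paper's route avoids introducing $\delta_\alpha\tell$ altogether, keeping Proposition \ref{prop:welldefined} logically independent of Theorem \ref{thm:twisted+dressing}, at the cost of a heavier differential computation. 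Your treatment of the Neumann boundary conditions and your use of the field-dependent-parameter commutator formula coincide with the paper's.
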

\begin{proof}
To see that the twisted rotations preserve the equations of motion recall that\linebreak
$\mathcal{E} = - \frac12 \epsilon^{\mu\nu}F_{\mu\nu} \approx 0$. Therefore, $\bbL_{\brho(\alpha)}\mathcal{E}\approx 0$ iff $\delta_\alpha F \equiv \bbL_{\brho(\alpha)} F \approx 0$. This follows through a straightforward computation using the expression \eqref{eomJ} for $J$, Lemma \ref{lemma:alphabar}, and Proposition \ref{prop:jequivariance}. 
That they preserve the boundary conditions follows directly from Proposition \ref{prop:jequivariance}.

To prove $\brho$ is a Lie algebra homomorphism, we start from the fact that the right rotation $\brhoR$ is known to be a Lie algebra homomorphism. Then, using the relationship $\brho(\alpha) \approx\brhoR(\overline{\alpha})$, we compute\footnote{The third equality in this formula is valid for any (arbitrary) field-dependent symmetry parameters $\overline\alpha_{1,2}:\cF \to C^\infty(\Sigma,\fg)$, and not just those of the form we are interested in here. This formula can also be easily checked for matrix groups by explicitly computing $\brhoR(\overline{\alpha_1})\brhoR(\overline{\alpha_2})h - \brhoR(\overline{\alpha_2})\brhoR(\overline{\alpha_1}) h$, using $\brhoR(\overline\alpha)h = h \overline\alpha$.}
\begin{equation}
\begin{aligned}
    \left[\brho\left({\alpha_1}\right),\brho\left({\alpha_2}\right)\right] \approx
&\,    \left[\brhoR\left(\overline{\alpha_1}\right),\brhoR\left(\overline{\alpha_2}\right)\right] \approx 
    \brhoR( \llbracket \overline{\alpha_1},\overline{\alpha_2} \rrbracket ) 
    \,,\\ \text{where}\quad
    \llbracket \overline{\alpha_1},\overline{\alpha_2} \rrbracket :\approx&\,[\overline{\alpha_1},\overline{\alpha_2}] + \delta_1\overline{\alpha_2}  - \delta_2\overline{\alpha_1}\,,
\end{aligned}
\end{equation}
and $\delta_1 \overline{\alpha_2} \equiv \brhoR(\overline{\alpha_1})\overline{\alpha_2}$ etc.
Therefore, our claim then follows if we can prove that
\begin{equation}\label{eq:Liealgebroid}
\llbracket \overline{\alpha_1},\overline{\alpha_2} \rrbracket \approx \overline{[\alpha_1, \alpha_2]}\,.
\end{equation}
Owing to Lemma \ref{lemma:alphabar} and using its notation, this holds iff
\begin{equation}\label{eq:proofhomo}
d_J \llbracket \overline{\alpha_1},\overline{\alpha_2} \rrbracket \approx 0\,,
\quad\text{and}\quad
\llbracket \overline{\alpha_1},\overline{\alpha_2} \rrbracket(x_0) \approx[\alpha_1, \alpha_2]\,.
\end{equation}
We will now show that these two equations hold. We start from the second one.

Since $\tell(x_0)\equiv 1$ for every (on-shell) configuration $h$, one has $\overline\alpha (x_0)= \widetilde{\Ad}^*_{\tello(x_0)}\alpha \equiv \alpha$ and thus:
\begin{equation}
[\overline{\alpha_1},\overline{\alpha_2}](x_0) \equiv  [\overline{\alpha_1}(x_0),\overline{\alpha_2}(x_0)] \approx [\alpha_1,\alpha_2]\,,
\quad\text{and}\quad
\delta_1\overline{\alpha_2}(x_0) \equiv \bbi_{\brhoR(\overline{\alpha_1})}\bbd \overline{\alpha_2}(x_0) \approx 0\,,
\end{equation}
where in the last equality we used the fact that the value $\alpha_2$ of $\overline{\alpha_2}$ at $x_0$ is independent of the field configuration: namely $\bbd \overline{\alpha_2}(x_0) = \bbd \alpha_2 = 0$, even before contraction with ${\brho(\alpha_1)}$.
Combing these equations, it is then easy to see that the second of \eqref{eq:proofhomo} follows from the definition of the bracket $\llbracket\cdot,\cdot\rrbracket$.

To show the first of the equations \eqref{eq:proofhomo}, we first observe that varying the defining equation of $\overline{\alpha_2}$, i.e. $d_J\overline{\alpha_2} \approx 0$, yields
\begin{equation}
0 \approx  \bbd (d_J\overline{\alpha_2}) = d_J (\bbd \overline{\alpha_2}) + \widetilde{\mathrm{ad}}{}^*_{\bbd J}\overline{\alpha_2}\,,
\end{equation}
which, when contracted with ${\brho(\alpha_1)}$, gives:
\begin{equation}
d_J ( \delta_1\overline{\alpha_2} ) \approx-\widetilde{\mathrm{ad}}{}^*_{\delta_1 J}\overline{\alpha_2}\,.
\end{equation}
Using this result together with the equivariance of $J$, namely $\delta_1 J\approx \mathrm{ad}^*_{\overline{\alpha_1}}J$ (Proposition \ref{prop:jequivariance}),
we can now compute:
\begin{equation}
    \begin{aligned}
    d_J \llbracket \overline{\alpha_1},\overline{\alpha_2} \rrbracket 
    & \approx d_J [\overline{\alpha_1},\overline{\alpha_2}] + \widetilde{\mathrm{ad}}{}^*_{\delta_1 J}\overline{\alpha_2} - \widetilde{\mathrm{ad}}{}^*_{\delta_2 J}\overline{\alpha_1} \\
    & \approx [d\overline{\alpha_1},\overline{\alpha_2}] + [\overline{\alpha_1},d \overline{\alpha_2}] 
    + \widetilde{\mathrm{ad}}{}^*_J[\overline{\alpha_1},\overline{\alpha_2}] + \widetilde{\mathrm{ad}}{}^*_{\delta_1 J}\overline{\alpha_2} - \widetilde{\mathrm{ad}}{}^*_{\delta_2 J}\overline{\alpha_1}\\
    & \approx- [\widetilde{\mathrm{ad}}{}^*_{J}\overline{\alpha_1},\overline{\alpha_2} ] 
    - [\overline{\alpha_1},\widetilde{\mathrm{ad}}{}^*_{J}\overline{\alpha_2} ] 
    + \widetilde{\mathrm{ad}}{}^*_J[\overline{\alpha_1},\overline{\alpha_2}] + \widetilde{\mathrm{ad}}{}^*_{\delta_1 J}\overline{\alpha_2} - \widetilde{\mathrm{ad}}{}^*_{\delta_2 J}\overline{\alpha_1}\\
    & \approx [ [\overline{\alpha_1},\overline{\alpha_2}]_\mathfrak{d}, J]_\mathfrak{d} 
    +  [ [\overline{\alpha_2}, J]_\mathfrak{d}, \overline{\alpha_1}]_\mathfrak{d}
    + [ [J, \overline{\alpha_1}]_\mathfrak{d},\overline{\alpha_2}]_\mathfrak{d}
    \equiv 0\,,
    \end{aligned}
\end{equation}
where the last step is most easily proved backwards and using equation \eqref{eq:Ddoublealg} or, equivalently follows---after some careful bookkeeping---from the Drinfel'd double compatibility condition \eqref{eq:Jacobidrinfeld}.
This proves that equation \eqref{eq:Liealgebroid} follows from \eqref{eq:Jacobidrinfeld}, and therefore so does equation \eqref{eq:brho-homo}.
\end{proof}
Note that Dirichlet boundary conditions, $\bbd \th |_{\pp\Sigma} = 0$, for the open string would not be compatible with the twisted right rotations. As emphasized throughout, this is a consequence of dynamics really selecting the symmetries.

We can now prove our two main theorems of this section. 
In the first, we show how the non-local twisted right rotations encode a right dressing action on the Drinfel'd double. In the second, we provide a manifestly covariant, second order, version of results by \klimcik{} and \severa{} \cite{klimcik_t-duality_1996}, namely we prove that twisted right rotations are Poisson-Lie symmetries of the second order KS model. These two theorems will be brought together in the next section.
\begin{theorem}[Dressing action from twisted rotations]{\label{thm:twisted+dressing}}
    Recall $\cD \simeq\cG \bowtie \cG^*$ and define the map between infinite dimensional manifolds
    \begin{align}
    \Phi : \ \ \cF_\text{EL} &\to C^\infty(\Sigma, \cG \bowtie \cG^*)\,,\\
    \th &\mapsto(\tilde H, \tilde L) = (\tilde h,  \tell(\tilde h))    \,, 
    \end{align}
    where $\tell$ is the non-local function of the on-shell field $\th$ defined in \eqref{eq:jasell}. 
    Then, the pushforward along $\Phi$ of the twisted right rotation with parameter $\alpha \in \fg$ is a \emph{pointwise} dressing transformation on $C^\infty(\Sigma, \cG\bowtie\cG^*)$ with the same parameter, i.e.
    \begin{equation}
        \Phi_*\brho(\alpha) = \delta^R_\alpha\,.
    \end{equation}
\end{theorem}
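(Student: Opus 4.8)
The plan is to exploit the uniqueness of the decomposition $\cD\simeq\cG\bowtie\cG^*$: writing the image field as $G=\th\,\tell$ with $\th\in\cG$ the $\cG$-component and $\tell\in\cG^*$ the $\cG^*$-component, the identity $\Phi_*\brho(\alpha)=\delta^R_\alpha$ is equivalent to matching the two components of the variation against the dressing formula \eqref{eq:drinfeldsymrightG}, namely $\delta_\alpha\th=\th(\tell\rhd\alpha)$ and $\delta_\alpha\tell=\tell\alpha-(\tell\rhd\alpha)\tell$. The first of these is immediate: by Definition \ref{def:KStwistedsymmetry} one has $\th^{-1}\delta_\alpha\th=\overline\alpha$, and in the proof of Lemma \ref{lemma:alphabar} it was already observed that $\overline\alpha=\tell\rhd\alpha$, so the $\cG$-component matches with no further work. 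The whole content of the theorem therefore sits in the $\cG^*$-component, i.e.\ in computing the variation of the \emph{non-local} Wilson line $\tell(x)\approx\mathrm{Pexp}\int_{x\leftarrow x_0}J$ defined in \eqref{eq:jasell}.

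To obtain $\delta_\alpha\tell$ I would introduce $\eta:=(\bbi_{\brho(\alpha)}\bbd\tell)\tello\in\fg^*$ and derive the first-order equation it satisfies. Varying the on-shell relation $J\approx d\tell\,\tello$ with the standard identity $\delta(dg\,g^{-1})=d(\delta g\,g^{-1})+[\delta g\,g^{-1},dg\,g^{-1}]$ gives $\delta_\alpha J\approx d\eta+[\eta,J]_*$, while Proposition \ref{prop:jequivariance} (equivariance of $j$, hence of its Hodge dual $J$) gives $\delta_\alpha J\approx\ad^*_{\overline\alpha}J$. Equating the two yields the linear transport equation $d\eta+[\eta,J]_*\approx\ad^*_{\overline\alpha}J$, supplemented by the boundary value $\eta(x_0)\approx0$, which holds because $\tell(x_0)\equiv 1$ for every on-shell configuration and hence $\bbd\tell(x_0)=0$.

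Next I would verify that $\eta=\tell\lhd\alpha$ --- the $\fg^*$-component of $\Ad_\tell\alpha\in\mathfrak{d}$ --- solves this equation. Since $\alpha\in\fg$ is constant and $\Delta\tell\approx J$, differentiating $\Ad_\tell\alpha=\tell\alpha\tello$ gives $d(\Ad_\tell\alpha)\approx[J,\Ad_\tell\alpha]_{\mathfrak{d}}$; decomposing $\Ad_\tell\alpha=\overline\alpha+\eta$ and expanding the double bracket \eqref{eq:Ddoublealg} along $\fg\oplus\fg^*$, the $\fg$-projection reproduces exactly $d_J\overline\alpha\approx0$ of Lemma \ref{lemma:alphabar} (a reassuring consistency check), while the $\fg^*$-projection is precisely the transport equation above. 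Because $J$ is flat on-shell, this over-determined equation has a unique solution once its value at $x_0$ is fixed, so $\eta\approx\tell\lhd\alpha$, i.e.\ $\delta_\alpha\tell\approx(\tell\lhd\alpha)\tell=\tell\alpha-\overline\alpha\tell$, matching \eqref{eq:drinfeldsymrightG}. The cleanest way to conclude is then to reassemble $G=\th\,\tell$ and compute $G^{-1}(\bbi_{\brho(\alpha)}\bbd G)\approx\tello(\overline\alpha+\tell\lhd\alpha)\tell=\tello\,\Ad_\tell\alpha\,\tell=\alpha$, i.e.\ $\Phi_*\brho(\alpha)\cdot G=G\alpha=\delta^R_\alpha G$.

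I expect the main obstacle to be the careful handling of the non-local, on-shell-defined field $\tell$: the transport equation and its uniqueness are available only because $J$ is flat on-shell (so that $\tell$ is path-independent and $\eta$ is fixed by its value at $x_0$), and the boundary datum $\eta(x_0)\approx0$ rests on the normalization $\tell(x_0)\equiv1$ used to fix $\tell$ from $J$. Keeping the bookkeeping of $\rhd/\lhd$ and of the $\fg$ versus $\fg^*$ projections of the double bracket straight is the other delicate point, but it becomes purely algebraic once the transport equation is in hand.
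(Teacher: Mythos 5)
Your proof is correct, and it rests on the same pillars as the paper's own argument: the immediate matching of the $\cG$-component via $\overline\alpha=\tell\rhd\alpha$, the equivariance $\brho(\alpha)J\approx \ad^*_{\overline\alpha}J$ from Proposition \ref{prop:jequivariance}, and the fact that $\tell$ is pinned down by the two defining conditions $\Delta\tell\approx J$ and $\tell(x_0)\equiv 1$. Where you diverge is in how the $\cG^*$-component is closed. The paper runs the uniqueness argument ``backwards'': it takes the \emph{candidate} dressing variation $\delta^R_\alpha\tell=\tell\alpha-\overline\alpha\,\tell$ and verifies by a direct double-bracket computation that $\delta^R_\alpha(\Delta\tell)=\ad^*_{\overline\alpha}\Delta\tell$ and $\delta^R_\alpha\tell(x_0)=0$, then concludes $\brho(\alpha)\tell=\delta^R_\alpha\tell$ via an ``iff'' that tacitly contains the uniqueness statement you make explicit. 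You run it ``forwards'': you derive the transport equation $d\eta+[\eta,J]_*\approx\ad^*_{\overline\alpha}J$ with $\eta(x_0)\approx 0$ for the \emph{actual} variation $\eta=(\bbi_{\brho(\alpha)}\bbd\tell)\,\tello$, and then exhibit $\eta=(\Ad_\tell\alpha)|_{\fg^*}$ as its solution by projecting the single covariant-constancy equation $d(\Ad_\tell\alpha)\approx[J,\Ad_\tell\alpha]_{\mathfrak{d}}$ onto $\fg\oplus\fg^*$---with the nice byproduct that the $\fg$-projection re-derives Lemma \ref{lemma:alphabar} en passant, so both components of the statement descend from one equation in the double. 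I checked your bracket bookkeeping against \eqref{eq:triangleactions}: $[J,\overline\alpha]_{\mathfrak{d}}=\ad^*_{\overline\alpha}J-\widetilde{\ad}{}^*_J\overline\alpha$ splits exactly as you claim, and the final reassembly $G^{-1}(\bbi_{\brho(\alpha)}\bbd G)\approx\tello(\overline\alpha+\eta)\tell=\alpha$ is a compact global check that the paper states only componentwise. Your approach makes explicit what the paper leaves implicit---namely that uniqueness of the solution given its value at $x_0$ is what licenses the identification, with integrability guaranteed on-shell since the obstruction for the homogeneous equation $d\zeta+[\zeta,J]_*=0$ is $[F,\zeta]_*$ and $F\approx 0$ (the same mechanism underlying Lemma \ref{lemma:alphabar})---while the paper's route avoids introducing and solving a PDE for the actual variation altogether, verifying the candidate against the defining data of $\tell$ instead.
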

\begin{proof}
    Let $f$ be a real-valued function of $(\tilde H,\tilde L)$. Using the notation $\bbd f =: \langle f'_{\tilde H},\bbDelta \tilde H\rangle + \langle f'_{\tilde L},\bbDelta \tilde L\rangle$, one has that at $(\tilde H,\tilde L)=\Phi(\th)$:
\begin{equation}
        \begin{aligned}
            \Phi_*\brho(\alpha) f|_{(\tilde H,\tilde L)}
             = \brho(\alpha) (f \circ \Phi)|_{\th}
            & = \langle f'_{\tilde H}, \bbi_{\brho(\alpha)}\bbDelta \th\rangle 
                + \langle f'_{\tilde L}, \bbi_{\brho(\alpha)}\bbDelta \tell\rangle \\
            & \stackrel{?}{=} \
            \langle f'_{\tilde H}, \tilde H^{-1}\delta_\alpha^R\tilde H \rangle + \langle f'_{\tilde L}, \tilde L^{-1}\delta_\alpha^R \tilde L \rangle 
            = \delta_\alpha^Rf|_{(\tilde H,\tilde L)}\,,
        \end{aligned}
\end{equation}
    where the equality with the question mark is the one we need to prove.

    Recall \eqref{eq:drinfeldsymrightG}, and note that the (pointwise) infinitesimal dressing action of $\fg$ on $(\tilde H,\tilde L) \in $ $C^{\infty}(\Sigma,\cG\bowtie\cG^*)$ is given by $\delta_\alpha^RG(x) = G(x) \alpha$ for $G(x)=\tilde H(x) \tilde L(x)$, namely
   \begin{equation}
       {
       \delta_\alpha^R\left(\tilde H,\tilde L\right) = \left( \tilde H (\tilde L \rhd \alpha), \tilde L \alpha - (\tilde L \rhd \alpha) \tilde L \right)\,.
       }
   \end{equation}
    Recall Definition \ref{def:KStwistedsymmetry} together with \eqref{eq:triangleactions}, and note that the (non-local) twisted right action of $\fg$ on $\tilde{h}$ is given by:
    \begin{align}\label{eq:177flo}
        \brho(\alpha) \th = \th \, \overline \alpha = \th \, \widetilde{\Ad}{}^*_{\tell^{-1}} \alpha = \th \,(\tell\rhd \alpha) \,.
    \end{align}
    
    Using the previous two equations, we readily see that $ \bbi_{\brho(\alpha)}\bbDelta\th = \tell\rhd\alpha = \tilde L \rhd \alpha = \tilde H^{-1}\delta_\alpha^R\tilde H$ as desired.
    
    To show that also $ \bbi_{\brho(\alpha)}\bbDelta\tell =  \alpha - \tell^{-1}(\tilde \ell \rhd \alpha) \tilde \ell = \alpha - \tilde L^{-1}(\tilde L \rhd \alpha) \tilde L = \tilde L^{-1}\delta_\alpha^R \tilde L$, we use the following observation. Since $\tell$ is the (non-local) function of $\th$ given by $\tell(\th)(x) := \mathrm{Pexp}\int_{x\leftarrow x_0} J(\th)$ \eqref{eq:htildedef}, which is in turn defined by the conditions 
    \begin{align}
    J = \Delta \tell\,, \quad \text{\emph{and}}\quad  \tell(x_0)=1\,,
    \end{align} 
    we observe that $\brho(\alpha)\tell = \tell \alpha - (\tell \rhd \alpha) \tell \equiv \delta_\alpha^R\tell$ as desired iff $   \brho(\alpha) J = \delta^R_\alpha \Delta \tell$ \emph{and} $\delta_\alpha^R\tell(x_0) =0$.
    Moreover, from Proposition \ref{prop:jequivariance} we also know that $\brho(\alpha) J = \ad^*_{\overline{\alpha}} J$. Therefore, to conclude the proof of the theorem, what we need to show is  
    \begin{align}
        \delta_\alpha^R(\Delta \tell) \stackrel{?}{=} \ad_{\overline{\alpha}}^* \Delta \tell\,,\quad\text{\emph{and}}\quad \delta_\alpha^R\tell(x_0)\stackrel{?}{=}0\,.
    \end{align}
    The second condition is readily proved using that $\tell(x_0)=1$:
    \begin{align}
        \delta^R_\alpha \tell (x_0)
        = \tilde\ell (x_0) \alpha - (\tilde \ell(x_0) \rhd \alpha) \tilde \ell(x_0)  = \alpha - \alpha = 0\,.
    \end{align}
    To prove the first condition, we compute instead:
\begin{equation}
        \begin{aligned}
            \delta_\alpha^R (\Delta \tell)
            & = \tell d (\tell^{-1} \delta_\alpha^R\tell ) \tell^{-1} 
             = \tell d ( \alpha - \tell^{-1}  (\tilde \ell \rhd \alpha) \tilde \ell ) \tell^{-1} \\
        & =  0 +  [\Delta\tell\,,\, \tell \rhd \alpha]_\mathfrak{d} - (\Delta \tell)\rhd (\tell \rhd \alpha ) \\
             &=   [\Delta\tell\,,\, \overline{\alpha}]_\mathfrak{d} - \Delta \tell\rhd \overline{\alpha}  =({\Delta \tell)\lhd  \overline{\alpha}}
              = \ad^*_{\overline{\alpha}} \Delta\tell\,,
        \end{aligned}
\end{equation}
    where in the second line we used the definition of the Drinfel'd double algebra, that is\linebreak $[X,\beta]_\mathfrak{d} = X \lhd \beta + X\rhd \beta = X \lhd \beta - \beta \lhd X = \ad_\beta^* X - \widetilde{\ad}{}^*_X \beta$ (cf. \eqref{eq:triangleactions}).
\end{proof}
\begin{theorem}[PL symmetry of the second order KS model]\label{thm:KSPLsymmetry}
    Twisted right rotations $\brho$ are a Poisson-Lie symmetry (Definition \ref{def:PL}) of the second order KS model \eqref{eq:Lagrangean} for the open string $\Sigma \simeq [0,\pi] \times \mathbb{R}$ with Neumann boundary conditions \eqref{eq:bdrycondition-j}. In particular, the PL flow equation\footnote{Up to an inconsequential sign.} \eqref{eq:PLcharge-flow}, 
    \begin{align}{\label{eq:KS-PLflow}}
    \mathbb{i}_{\brho(\alpha)}\Omega \approx - \langle Q^{-1}\bbd Q,\alpha\rangle\,,
    \end{align}
    holds for the \emph{conserved} PL charge
    \begin{align}{\label{eq:KS-PLcharge}}
    Q\approx \text{Pexp} \int_{\pi\leftarrow 0}J \approx  \tell|_{\{\pi\}\times \mathbb{R}}\,,
    \end{align}
    where the integral is performed along $C \simeq [0,\pi]\subset \Sigma$.
    
Contact with equation \eqref{eq:PLmommap2} is obtained by choosing the constant reference solution\linebreak $\varphi_0\equiv \th_0=1$, which is such that $Q(\varphi_0)=1$ for all choices of Cauchy surface. 
\end{theorem}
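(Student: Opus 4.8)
The plan is as follows. Point $1'$ of Definition \ref{def:PL} has already been secured by Proposition \ref{prop:welldefined}, which shows that $\brho$ is a well-defined Lie-algebra action on $\cF_\text{EL}$ preserving both the equations of motion and the Neumann boundary conditions. It therefore remains to establish the left Poisson-Lie flow equation \eqref{eq:KS-PLflow} with the stated charge (point $2'$) and the locality/conservation requirement (point $3'$). My strategy is to first contract the pSF current $\omega^\mu$ \eqref{eq:KSpSF} with $\brho(\alpha)$, then integrate over a Cauchy surface $C\simeq[0,\pi]$, and finally recognize the resulting one-dimensional integral as a total $\sigma$-derivative that telescopes to the endpoints of the string.

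First step. I would compute $\mathbb{J}^\mu:\approx\bbi_{\brho(\alpha)}\omega^\mu$ directly from \eqref{eq:KSpSF}. Applying the Leibniz rule for $\bbi_{\brho(\alpha)}$ to the two field-space one-forms $\bbd j^\mu$ and $\bbDelta\th$, and using the equivariance of Proposition \ref{prop:jequivariance}, $\bbi_{\brho(\alpha)}\bbd j^\mu\approx\ad^*_{\overline{\alpha}}j^\mu$, together with the defining relation \eqref{eq:varbbDeltag}, $\bbi_{\brho(\alpha)}\bbDelta\th\approx\overline{\alpha}$, I expect the two $\ad^*$-type contributions to cancel, leaving
\[
\bbi_{\brho(\alpha)}\omega^\mu \approx -\langle\bbd j^\mu,\overline{\alpha}\rangle\,.
\]
Concretely, the term $\langle\ad^*_{\overline\alpha}j^\mu,\bbDelta\th\rangle$ produced by $\langle\bbd j^\mu\stackrel{\wedge}{,}\bbDelta\th\rangle$ is cancelled exactly by $-\tfrac12\langle j^\mu,2[\overline\alpha,\bbDelta\th]\rangle=-\langle\ad^*_{\overline\alpha}j^\mu,\bbDelta\th\rangle$ coming from the bracket term, while the relative minus sign in the Leibniz rule leaves $-\langle\bbd j^\mu,\overline{\alpha}\rangle$. (As a sanity check, in the Abelian/untwisted limit $\overline\alpha\to\alpha$ this reproduces \eqref{eq:preHamflowNoether}.) Since $\bbi_{\brho(\alpha)}$ commutes with $\nabla_\mu$ and $\nabla_\mu\omega^\mu\approx0$, the current $\mathbb{J}^\mu$ is conserved, which already guarantees $C$-independence of the integrated charge.

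Second step. I would then integrate over $C$. Writing $j^\mu\approx\epsilon^{\mu\nu}J_\nu$ with $J\approx\Delta\tell=d\tell\,\tell^{-1}$ \eqref{eq:jasell}, the contraction $n_\mu\epsilon^{\mu\nu}$ is proportional to the unit tangent of $C$, so $\int_C n_\mu(\cdots)^\mu$ becomes, up to an overall orientation sign, the pullback integral $\int_0^\pi(\cdots)_\sigma\,d\sigma$ with $J_\sigma=\partial_\sigma\tell\,\tell^{-1}$, giving $\bbi_{\brho(\alpha)}\Omega\approx\mp\int_0^\pi\langle\bbd J_\sigma,\overline{\alpha}\rangle\,d\sigma$. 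The heart of the argument is to exhibit the integrand as a total derivative. Using $\partial_\sigma\tell=J_\sigma\tell$, a short computation gives $\partial_\sigma(\tell^{-1}\bbd\tell)=\widetilde{\Ad}_{\tell^{-1}}\bbd J_\sigma$, whence, since $\alpha$ is constant and $\overline{\alpha}=\widetilde{\Ad}^*_{\tello}\alpha$,
\[
\partial_\sigma\langle\tell^{-1}\bbd\tell,\alpha\rangle=\langle\widetilde{\Ad}_{\tell^{-1}}\bbd J_\sigma,\alpha\rangle=\langle\bbd J_\sigma,\widetilde{\Ad}^*_{\tell^{-1}}\alpha\rangle=\langle\bbd J_\sigma,\overline{\alpha}\rangle\,.
\]
The $\sigma$-integral then telescopes to $\langle\tell^{-1}\bbd\tell,\alpha\rangle\big|_0^\pi$. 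Invoking the boundary behaviour \eqref{eq:ellBC}, namely $\tell(0)\approx1$ (so $\bbd\tell(0)=0$) and $\tell(\pi)\approx Q$, only the upper endpoint survives and I obtain
\[
\bbi_{\brho(\alpha)}\Omega\approx\mp\langle Q^{-1}\bbd Q,\alpha\rangle\,,
\]
which is the claimed flow equation \eqref{eq:KS-PLflow} up to the inconsequential sign.

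Finally. For point $3'$ I would note that $Q=\tell(\pi)$ depends only on $\th|_C$ and its first normal derivative (through $j^\mu$, built from $\Delta\th$), so $r=1$ suffices; the map is non-local along $C$, being a path-ordered exponential, but this is exactly what Definition \ref{def:PL} permits. Conservation, i.e. $C$-independence of $Q$, follows from the on-shell flatness of $J$ (equivalently $\nabla_\mu\mathbb{J}^\mu\approx0$) together with $\iota^*_{\partial\Sigma}J=0$, which annihilates the boundary contributions when deforming $C$. The remark on the reference solution is immediate: $\th_0\equiv1$ gives $\Delta\th_0=0$, hence $j^\mu=0$, $J=0$, $\tell\equiv1$, and $Q(\th_0)=1$ for every $C$. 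I expect the main obstacle to be the bookkeeping in the first step, i.e. tracking the Leibniz signs in $\bbi_{\brho(\alpha)}$ and verifying the clean cancellation of the $\ad^*$ terms, together with pinning down the orientation in the conversion $j^\mu\leftrightarrow J_\sigma$; once the integrand is recognized as $\partial_\sigma\langle\tell^{-1}\bbd\tell,\alpha\rangle$, the rest is routine.
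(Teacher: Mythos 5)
Your proposal is correct and follows essentially the same route as the paper's proof: point $1'$ via Proposition \ref{prop:welldefined}, the cancellation of the $\ad^*$ terms using Proposition \ref{prop:jequivariance} to get $\bbi_{\brho(\alpha)}\omega^\mu \approx -\langle \bbd j^\mu,\overline{\alpha}\rangle$, and then the recognition of the integrand over $C$ as the exact form $d\langle \tell^{-1}\bbd\tell,\alpha\rangle$ (your identity $\partial_\sigma\langle\tell^{-1}\bbd\tell,\alpha\rangle = \langle\bbd J_\sigma,\overline{\alpha}\rangle$ is precisely the paper's step $\langle \tell\, d(\tell^{-1}\bbd\tell)\tell^{-1},\widetilde{\Ad}{}^*_{\tello}\alpha\rangle = d\langle\tell^{-1}\bbd\tell,\alpha\rangle$ written dually), with the telescoping to $\pp C$, the use of $\tell(\tau,0)=1$, conservation from flatness plus $\iota^*_{\pp\Sigma}J=0$ (cf. \eqref{eq:ellBC}), and the $r=1$ justification of point $3'$ all matching the paper's argument. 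The orientation sign you leave open is exactly the ``inconsequential sign'' the paper acknowledges in its footnote.
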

\begin{proof}
In Proposition \ref{prop:welldefined} we have already shown that $\brho$ preserves the equations of motion and, for the open string, the Neumann boundary conditions. Therefore $\brho$ is a vector field on $\cF_\text{EL}$, satisfying point 1' of Definition \ref{def:PL}. 

We also have a Drinfel'd double $\mathfrak{d}=\fg\bowtie\fg^*$ by construction. We now need to prove the PL flow equation \eqref{eq:KS-PLflow} for the charge \eqref{eq:KS-PLcharge} through a CPS analysis.

Recall the pSF current $\omega^\mu$ \eqref{eq:KSpSF}, the definition of the twisted right rotation of equation (\ref{eq:varbbDeltag}), and the corresponding transformation property of $j^{\mu}$ (Proposition \ref{prop:jequivariance}). With these, we compute the variational current
\begin{equation}
\begin{aligned}
        \langle\mathbb{J}^{\mu},\alpha\rangle 
        :\approx \mathbb{i}_{\brho({\alpha})}\omega^{\mu} 
        &\approx \langle \ad^{*}_{\overline{\alpha}}j^{\mu},\bbDelta \th \rangle - \langle \bbd j^{\mu}  ,\overline{\alpha}\rangle - \langle j^{\mu}, \comm{\overline{\alpha}}{\bbDelta \th}\rangle  \\
        &\approx -\langle \bbd j^{\mu},\overline{\alpha}\rangle \approx -\langle \bbd(\nabla_{\nu}\tell\,\tell^{-1}\epsilon^{\nu\mu}) ,\overline{\alpha}\rangle \,,
\end{aligned}
\end{equation}
where in the third equality we used the invariance property of the pairing to cancel the first and last terms, and in the last equality we used the on-shell expression \eqref{eq:jasell} of $j^{\mu}$ in terms of $\tell$.

To define the main object of interest, the variational charge $\mathbb{Q}$, we integrate the pSP current over a Cauchy hypersurface $C\simeq[0,\pi]$ with unit surface conormal $n_{\mu}$,
\begin{equation}
    \Omega = \int_C n_{\mu}\omega^{\mu}\,.
\end{equation}
Combining this and the previous equation, we compute the variational charge to be,
\begin{equation}
    \begin{aligned}{\label{calc:KScharge}}
       \mathbb{i}_{\brho(\alpha)}\Omega  
       \approx \langle \mathbb{Q},\alpha\rangle 
       & \approx \int_{C}n_{\mu}\langle \mathbb{J}^{\mu},\alpha\rangle\approx  -\int_C\langle n_\mu\bbd j^\mu,\overline{\alpha}\rangle  \\
       &\approx -\int_C n_{\mu}\langle \bbd (\nabla_{\nu}\tell\,\tell^{-1}\epsilon^{\nu\mu}),\overline{\alpha}\rangle 
       \approx - \int_C\langle \tell d(\tell^{-1}\bbd \tell)\tell^{-1}, \overline{\alpha}\rangle \\
        &=-\int_C\langle \tell d(\tell^{-1}\bbd \tell)\tell^{-1}, \widetilde{\Ad}^*_{\tello} {\alpha}\rangle= -\int_C d\langle \tell^{-1}\bbd \tell,\alpha\rangle   \\
        &= -\int_{\partial C}\langle \tell^{-1} \bbd \tell ,\alpha\rangle= -\langle \tell^{-1}(\tau,\pi) \bbd\tell(\tau,\pi),\alpha\rangle \,,
    \end{aligned}
\end{equation}
where $\tau \in \mathbb{R}$ in this equation is arbitrary but fixed (it denotes the specific Cauchy slice in the foliation $\Sigma \simeq C\times \mathbb{R}$, $C\simeq [0,\pi]$). In the last step, we used that according to our definition, and the Neumann boundary condition on $\{0\}\times \mathbb{R} \subset \pp\Sigma$, $\tell(\tau,0) = 1$. 

Note that the Neumann boundary condition on $\{\pi\}\times \mathbb{R}\subset \pp\Sigma$ guarantees the constancy in $\tau$ of $\tell(\pi,\tau)$, namely the conservation of $Q$ \cite{klimcik_t-duality_1996} (cf. \eqref{eq:ellBC}).

Therefore, not only is $\brho$ generated by the group-valued charge $Q$ (point 2' of Definition \ref{def:PL}), but the charge 
\begin{equation}
    Q\approx \text{Pexp} \int_{\pi\leftarrow 0}J(\th) \approx  \tell(\th)|_{\{\pi\}\times \mathbb{R}}\,,
\end{equation} 
can be expressed as a (non-local) function of $\th$ and its derivatives at (any) surface\linebreak $C\simeq [0,\pi] \hookrightarrow \Sigma$ (point 3' of Definition \ref{def:PL}). This results relies on the Neumann boundary conditions and the expression \eqref{eq:jnoether} of the current $J_\nu = j^\mu\epsilon_{\mu\nu}$.

We conclude that $(\cF_\text{EL},\Omega,\brho,Q)$ defines a Poisson-Lie symmetry in the covariant framework of Definition \ref{def:PL}.
\end{proof}

Conservation of $Q$ from that of $\mathbb{Q}$ was discussed in general terms in \S\ref{subsec:NewNoetherToPLgroupSymm}, where $Q$ was defined as a path-ordered exponential of $\mathbb{Q}$ over a path in field space from a reference configuration $\varphi_0$ \eqref{eq:PLmommap2}. Here, that construction follows from equation \eqref{eq:KS-Omega-conservation} and the choice of the reference configuration  $\varphi_0\in\cF_\text{EL}$ to be $\tell=1$ throughout $\Sigma$, from which one can easily verify that $J(\varphi_0)=0$ and thus $Q(\varphi_0)=1$, as desired.

As twisted right rotations generalize the right rotation symmetry of the principal chiral model \eqref{eq:sym-pcmodel}, the non-Abelian PL charge generalizes the principal chiral model's one, wherein $\fg^*$ is Abelian and the corresponding Noetherian charge was found to be $q = \int_C n_\mu j^\mu $ $= \int_C J $ \eqref{eq:q-pcmodel}, viz:
\begin{align}
    Q \approx \text{Pexp}\int_C J \ \xrightarrow{\text{$\fg^*$ Abelian}}\ \exp \int_C J = \exp (q)\,.
\end{align}
We recall that in the Abelian limit of $\cG^*$, $q$ is the total momentum (charge) generating the ``rotational'' symmetry of the target space; if the target space is also Abelian, $\cG \simeq(\mathbb{R}^n,+)$, then target space ``rotations'' are nothing else than global translations, and $Q$ is nothing else than the exponential of the total linear momentum of the string moving in $\mathbb{R}^n$. 

As we have previously emphasized in \S\ref{subsec:NewNoetherToPLgroupSymm}, it is important to realize that, contrary to Noetherian symmetries, PL symmetries do not leave the Lagrangian invariant. In fact, in the present case the situation is even more serious, since the twisted right rotations PL symmetry is not even defined off-shell and, since on-shell all variations of the Lagrangian are pure-boundary, the Noetherian criterion is void for twisted right rotations. 

Concerning the question of locality, we stress that a twisted right rotation is a \emph{non-local} action, providing an (infinitesimal) non-local redefinition of the field $\th$, namely $\th^{-1} \delta_\alpha \th = \widetilde{\Ad}{}^*_{\tello} \alpha$ for $\tell \approx \text{Pexp}\int J$ a \emph{non-local} function of $\th$. 

As discussed in \S\ref{subsec:NewNoetherToPLgroupSymm} below \eqref{eq:PLflatness}, the need for some non-locality can be gleaned already from the analog of equation \eqref{eq:symplectomorphism-new} as computed from \eqref{eq:KS-PLflow} of Theorem \ref{thm:KSPLsymmetry}, i.e.\footnote{As observed in the theorem, there is an inconsequential sign different between the PL flow equation in the KS model and the one discussed in section \S\ref{subsec:NewNoetherToPLgroupSymm}.}
\begin{align}\label{eq:4.79}
    \mathbb{L}_{\brho(\alpha)}\Omega = \bbd \langle \mathbb{Q},\alpha\rangle \approx  \frac12 \langle [ \mathbb{Q} , \mathbb{Q} ]_* , \alpha\rangle\,.
\end{align}
 Indeed, whereas $\Omega$ on the left hand side is by construction a local functional of the fields, the commutator on the right hand side, a priori involves a bi-local expression (if the number of spatial dimension is strictly larger than zero, each $\mathbb{Q}$ is a priori an integral over $C$). 

Beside the case where $\fg^*$ is Abelian so that the commutator on the right hand side simply vanishes, we are aware of two possible solutions to this tension. In the first scenario, the action $\brho$ is itself non-local; this is what happens for the twisted right rotations of the second order KS model discussed here.\footnote{In the second order model, $Q = \tell(\pi) = \mathrm{Pexp}\int_C J$ where $J$ is a local function of the field $\th$.}
In the second scenario, the integral defining $\mathbb{Q}$ localizes at a point of the Cauchy surface/string; this is what happens in the first order KS model (see \S\ref{sec:1st-order}). In this case, although everything appears local\footnote{Essentially because now $\tell$ is treated as an independent ``conjugate momentum'' field with respect to the ``configuration field'' $\th$.} (the action of the symmetry, the commutator in \eqref{eq:4.79}), it is the physical interpretation of the charge $Q$ that is in tension with locality, since then ``\emph{total} momentum'' of the string with respect to the given (global!) symmetry is captured by the value of the worldsheet fields at a \emph{single} point of the string. In the first order KS model this happens because, in passing from the second- to the first order formulation, one employs a non-local field redefinition. See \S\ref{sec:1st-order}.  

In the following sections we briefly discuss two variations of the above results. One involves a closed string version of the model (either in second or first order), and the other a first order Lagrangian for the open string KS model.
The closed string reveals the difficulties of implementing non-Abelian Poisson-Lie symmetries in the absence of boundaries.
The first order formulation, on the other hand, is the one mostly used by \klimcik{} and \severa{} in their works for it enjoys a non-Abelian T-duality. 

\subsection{Some comments on the \klimcik{} \& \severa{} closed string}\label{sec:closedKS}

To close the KS string as it was formulated in \S\ref{subsec:CovariantKSModel}, we need only replace the open string Neumann boundary conditions \eqref{eq:bdrycondition} (equivalently \eqref{eq:bdrycondition-j}) with the appropriate closed string periodic boundary condition:
\begin{align}{\label{eq:KSclosedstringBC}}
    \th(0,\tau)=\th(\pi,\tau)\,.
\end{align}
All of our CPS symmetry analysis from the previous sections still holds, up to the point of checking the compatibility of our twisted right rotations (Definition \ref{def:KStwistedsymmetry}) with the closed string boundary condition given above. 

Recall from \S\ref{twisted} that $\overline\alpha(x) :\approx \widetilde{\Ad}{}^*_{\tello(x)} \alpha$ where $\tell$ is the non-local function of the on-shell field $\th$ defined in \eqref{eq:jasell}. It can be easily checked that imposing \eqref{eq:KSclosedstringBC} leads to the following additional condition on the symmetry parameter $\alpha$:
\begin{align}
    \overline{\alpha}(\pi) = \overline{\alpha}(0) \,,
    \quad\text{i.e.}\quad
    \widetilde{\Ad}^*_{\tello(\pi)} \alpha = \widetilde{\Ad}^*_{\tello(0)} \alpha \equiv \alpha\,.
\end{align}
Namely, $\alpha$ needs to be stabilized by $\tell(\pi)\in\cG^{*}$ (recall that by construction $\tell(0)=1$).

For this to be a global symmetry in the usual sense, this must happen for all symmetry generators $\alpha$ and all $\tell(\pi)$ in the phase space. 
This condition is trivially satisfied whenever $\cG^*$ is Abelian, since then $\widetilde{\Ad} = \widetilde{\Ad}^{*} \equiv 1$, which is the usual Noetherian scenario of a ``trivial'' PL symmetry.
If $\cG^*$ is non-Abelian, on the other hand, the above condition generally fails and one is left with the choice of whether to restrict the set of symmetries (constrain $\alpha$) or the set of allowed configurations (constrain $\tell(\pi)$). 

For $\fg^*$ semisimple, however, there is no $\alpha$ that satisfies the above condition for all $\tell\in \cG^*$. Indeed, consider the same condition for $\tell = 1+ X + ...$, then contracting with an arbitrary $Y\in \fg^*$ we obtain 
\begin{equation}
    \langle \alpha, \comm{X}{Y}_{*}\rangle = 0\,,\quad \forall X,Y\in \fg^{*}\,,
\end{equation}
which means that $\alpha$ annihilates the derived subalgebra $[\fg^*,\fg^*]_* \subset \fg^*$. But, for $\fg^*$ semisimple $[\fg^*,\fg^*]_* = \fg^*$, and thus $\alpha$ must be zero since the canonical pairing is non-degenerate.

Alternatively, one can constrain $\tell(\pi)$ to be in the center of $\cG^*$, whence $\widetilde{\Ad}{}_{\tello(\pi)} \equiv 1$. Assuming $\cG^*$ semisimple, $\tell(\pi)$ is therefore constrained to take a discrete set of values. Since $\tell(\pi)$ is also the momentum map for the open string, imposing this constraint while keeping a non-degenerate sympelctic structure over phase space requires a symplectic reduction procedure. In particular, the demand that $\tell(\pi)=1$ trivializes the charge and closes the string in momentum space (see the previous section). This way, one obtains, as proposed by \klimcik{} and \severa{} \cite[Section 2]{klimcik_poisson-lie-loop_1996}, a closed string with $\cD$ as a target space where $\cG$ and $\cG^*$ play symmetric roles. The ``meta-symmetry'' of interchanging $\cG$ and $\cG^*$ is called (Poisson-Lie) T-duality \cite{klimcik_t-duality_1996, klimcik_poisson-lie-t-duality_1996,KS-Dbrane}, {see \S\ref{sec:1st-order}.}

A different perspective on $\tell(\pi)$ being in the center of $\cG^*$ can be obtained by thinking about what it would take to have conservation of a (\emph{group}!) $\cG^*$-valued charge $Q$ on the worldsheet of a closed string, $\Sigma \simeq S^1 \times \mathbb{R}$, if this charge is given by the (path-ordered) integral of an (on-shell) flat current $J$ on $S^1$ \eqref{eq:KS-PLcharge}.
In this scenario, it is clear that the value of $Q$ depends in general on the choice of a base point for the path-ordered integral, with different choices yielding charges conjugate to each other. Then, conservation can at best be expected up to conjugation, namely $Q_\text{fin} \approx LQ_\text{in} L^{-1}$ for some\footnote{The element $L\in\cG^*$ is nothing else than the path ordered integral of $J$ along a path connecting the base points on $\Sigma_\text{in}$ and $\Sigma_\text{fin}$.} $L\in\cG^*$---\emph{unless} $Q = LQL^{-1}$ for arbitrary $L\in\cG^*$, which is precisely the condition that $Q$ is valued in the center of $\cG^*$. 

In sum, this means that in general a group-valued charge $Q$ for the closed-string can be expected to be conserved only if $\cG^*$ is Abelian or if it is somehow constrained to be valued in its center. It is thus relevant to recall that Poisson-Lie symmetries for an Abelian dual group $\cG^{*}$ fall into the Noetherian formulation. We conclude that the closed string KS model can possess no  Poisson-Lie symmetries {with non-trivial charge}.

\subsection{Some comments on the first order formulation of the \klimcik{} \& \severa{} model}\label{sec:1st-order}

To express the string KS model in the first order formalism one first takes a canonical 1+1 split of the worldsheet $\Sigma$, wherein $\Sigma = \mathbb{R}\times C\ni (\tau,\sigma)$, with fixed choice of foliation specified by the Cauchy surface $C\simeq [0,\pi]$. 

Doing the Legendre transform of the KS Lagrangian \eqref{eq:Lagrangean}, a first order action can then be written in terms of the configuration variable $\th$ and canonical momentum $\tilde{p} := j^\tau(\th)$ valued in $\fg^*$ (cf. \eqref{eq:varData} and \eqref{eq:jnoether}). However, as \klimcik{} and \severa{} comment [the words are theirs, the notation is adapted to ours]: ``Our crucial trick is the following: we parametrize the canonically conjugated momentum $\tilde{p}$ by a field $\tell$ (valued in
$\cG^*$) such that $\tilde{p}=\pp_\sigma \tell\tello$. The ambiguity of this parametrization is fixed by requiring that $\tell(\tau,\sigma=0)=1$'' \cite{klimcik_t-duality_1996}.
In other words, they observe that in 1-dimension (we are on the Cauchy surface $C\simeq [0,\pi]$), the $\fg^*$-valued field $\tilde{p}$ can always\footnote{Note: the field redefinition from $\tilde{p} \mapsto \tell$ does not require using the equation of motion for $\tilde{p}$, i.e. the flatness of, $j^{\mu}$.} be replaced by a $\cG^*$-valued field $\tell(\sigma) = \mathrm{Pexp}\int_{\sigma\leftarrow 0}\tilde{p}$, so that $\tilde{p}=\pp_\sigma \tell\tello$.
The re-parametrization $\tilde p \mapsto \tell$ involves an integration and is therefore \emph{non-local} on $C$.

Therefore, following \klimcik{} and \severa{} in passing to a first order formulation, instead of $(\th,\tilde p)$, we introduce the pair of fields $(\th,\tell) : \Sigma \to \cG\bowtie\cG^{*} \simeq \cD$. The first order version of the KS Lagrangian \eqref{eq:Lagrangean} then takes the form \cite{klimcik_t-duality_1996}
\begin{align}\label{eq:1stKS}
    \mathcal{L}_\text{1st} = \langle \pp_\sigma \tell\tello, \th^{-1} \pp_\tau \th \rangle -   \cH(\partial_{\sigma}GG^{-1})\,,
\end{align}
where, as a result of Theorem \ref{thm:twisted+dressing}, $G = \th\tell\in \cD$, and an explicit form of the ``${E}$-model'' Hamiltonian $\cH(\partial_{\sigma}GG^{-1})$ can be found in \cite{klimcik_t-duality_1996,klimcik-lecture-notes,Hoare_2022}.

Omitting the details, and using the notations 
\begin{equation}
    \bbd \mathcal{H} =: \langle \mathcal{H}'_{\th}, \bbDelta \th\rangle + \langle \mathcal{H}'_{\tell}, \bbDelta \tell\rangle\,,
    \quad\text{and}\quad
    \mathcal{E} = \langle \mathcal{E}_{\th}, \bbDelta \th\rangle + \langle \mathcal{E}_{\tell}, \bbDelta\tell \rangle\,,
\end{equation}
the CPS data for this Lagrangian are: 
\begin{align}
    \mathcal{E}_{\th} & = - \left(\pp_\tau  + \ad^*_{\Delta_\sigma\th^{-1}}\right) \Delta_\sigma\tell - \mathcal{H}'_{\th}\,,\\
    \mathcal{E}_{\tell} & = - \widetilde{\Ad}{}^*_{\tell}\left( \left(\pp_\sigma  + \widetilde{\ad}{}^*_{\Delta_\sigma \tell}\right) \Delta_\tau \th^{-1}  \right) - \mathcal{H}'_{\tell}\,,\\
    \Theta &= \int_0^\pi d\sigma \ \langle \Delta_\sigma \tell, \bbDelta\th\rangle\,, \label{eq:1stKStheta}
\end{align}
from which
\begin{equation}\label{eq:1stKSomega}
    \Omega := \bbd\Theta = \int_0^\pi d\sigma\
    \langle \widetilde{\Ad}{}_{\tell} \pp_\sigma \bbDelta \tell , \bbDelta \th\rangle - \frac12 \langle \Delta_\sigma\tell , [\bbDelta \th,\bbDelta \th]\rangle\,.
\end{equation}
As proved in Theorem \ref{thm:twisted+dressing}, the \emph{non-local} action of twisted right rotations on the second order Lagrangian is mapped in the first order formalism to the dressing action \eqref{eq:drinfeldsymrightG} of the Drinfel'd double studied by \klimcik{} and \severa{} \cite{klimcik_t-duality_1996}:
\begin{align}\label{eq:KSdressing}
   G\mapsto Gh_{0}&\implies \delta^{R}_{\alpha}G = G\alpha \implies 
      \delta^{R}_{\alpha} (\th,\tilde\ell) = \Big(\tilde{h}(\tilde{\ell}\rhd \alpha), \tilde{\ell}\alpha - (\tilde{\ell}\rhd \alpha)\tilde{\ell} \Big)\,.
\end{align}
We have thus shown that our analysis of the second order KS model and its symmetries matches their original analysis of the first order model.

Indeed, using \eqref{eq:1stKSomega} and \eqref{eq:KSdressing}, it is straightforward to show that (cf. the proof of Theorem \ref{thm:KSPLsymmetry})
\begin{align}\label{eq:1stHamFlow}
    \bbi_{\brho(\alpha)}\Omega  = - \langle \bbDelta Q,\alpha\rangle\,, \quad \text{with}\quad Q = \tell(\pi)\,.
\end{align}

At this point it is worth noting the following facts. The Lagrangian $\mathcal{L}_\text{1st}$ is invariant up-to-boundary terms under the transformation \eqref{eq:KSdressing}, with remainder
\begin{align}{\label{eq:KSdressingNoether1}}
    \delta^R_\alpha \mathcal{L}_\text{1st} =  \nabla_\mu \langle R^\mu ,\alpha\rangle\,,
    \quad\text{where}\quad   R^\mu = - \epsilon^{\mu\nu} \tello\pp_\nu\tell\,,
\end{align}
whence one concludes that the dressing symmetries \eqref{eq:KSdressing} are Noetherian. Moreover, computing the corresponding Noetherian charge one obtains
\begin{align}{\label{eq:KSdressingNoether2}}
    \langle q,\alpha\rangle = \bbi_{\brho(\alpha)}\Theta - \int_0^\pi d\sigma \langle R^\tau , \alpha\rangle = 0\,.
\end{align}
These two facts are puzzling at first sight. Namely, the dressing transformations \eqref{eq:KSdressing} being Noetherian \eqref{eq:KSdressingNoether1} is seemingly in contradiction with the fact \eqref{eq:1stHamFlow} that they are a PL symmetry with group-valued momentum map (cf. the discussions in \S\ref{subsec:NewNoetherToPLgroupSymm}), as well, the corresponding Noetherian charge \eqref{eq:KSdressingNoether2} identically vanishing is per se puzzling for a global symmetry. Of course, the two puzzles and their resolutions are related.

First, the theorem stating that to any (global) Noetherian $\cG$-symmetry one assigns a conserved $\fg^*$-valued charge that generates the corresponding symmetry flow in phase space holds on \emph{closed} Cauchy surfaces---and not necessarily on open ones (cf. postulates $1$--$3$ in \S\ref{subsec:NewNoetherToPLgroupSymm}).
Therefore, in the open string case, there is no contradiction between the last three formulas, since the relation between $\mathbb{Q} := \bbi_{\brho(\alpha)}\Omega$ and $\mathbb{d}q$ holds only \emph{up to corner terms}, consistently with $q$ being identically zero and $Q$ being supported on $\pp C$.

Second, since $\tell(0)=1$, closing the first order string requires not only that $\th(0)=\th(\pi)$ as discussed in the previous section, but also that $Q=\tell(\pi) = 1$. Therefore, in the closed case $\mathbb{Q}=0$ both from \eqref{eq:1stHamFlow} and from \eqref{eq:KSdressingNoether2} computed from Noether's theorem. 

In other words, whereas the open first order KS string possesses a non-trivial PL symmetry and charge, the closed first order string is characterized by a ``gauged'' global symmetry with trivial charge.
Namely, due to the closed string constraint $Q = \tell(\pi)=1$, to obtain a \emph{symplectic} phase space, one has to perform symplectic reduction of the right dressing action by $\cG$. This yields a string model whose phase space is the based loop group $\Omega\cD \simeq L\cD/\cD$ \cite[Section 2]{klimcik_poisson-lie-loop_1996}. 

Note that, in light of this observation, the closed first order KS string provides an important example of a charge constrained to be valued in the center of $\cG^*$ when the Cauchy surface is closed---as per the discussion of \S\ref{sec:closedKS}.

As discovered by \klimcik{} and \severa{} \cite{klimcik_t-duality_1996,klimcik_poisson-lie-t-duality_1996}, one of the most interesting features of the KS closed string model is that it features a \emph{non-Abelian T-duality}, meaning that the model is invariant under the interchange of $\cG$ and $\cG^*$. This feature is highly non-trivial in the second order formulation, but can be proved rather easily in the first order one. This is because in the first order formulation, the non-Abelian T-duality takes the form of a dynamical\footnote{Meaning that not only the kinematics, i.e. symplectic structure, but also the dynamics of the theory are preserved under the symmetry.} symmetry that interchanges configurations and momenta.

\begin{tolerant}{10000}
In the case of a point-particle such a {symmetry}  takes the form $(q,p) \mapsto (\tilde q,\tilde p) = (-p,q)$. This symplectomorphism is a dynamical symmetry of the theory if and only if $H(q,p)=H(\tilde p, - \tilde q)$. 
At the Lagrangian level, this {transformation is seen to be a symmetry because} it leaves the Lagrangian invariant up to a  boundary term {coming from the transformation of the Liouville term $p\dot q$}:
\begin{align}
    L_\text{1st}(q,p) := p\dot q - H(q,p) = L_\text{1st}(\tilde p, - \tilde q)  - \frac{d}{dt}(\tilde p \tilde q)\,.
\end{align} 
For a Lagrangian quadratic in the momenta, this (dynamical) duality is only realized by the harmonic oscillator. {Indeed, $H(p,q) = \frac1{2m}(p^2 + q^2) + Apq $ is bounded from below iff $1-m^2A^2\geq0$, in which case the corresponding Hamilton equations reproduce those for a harmonic oscillator of angular frequency $\sqrt{1-m^2A^2}/m$.}
\end{tolerant}

To see that the KS model enjoys a form of non-Abelian T-duality, it is enough to provide a transformation that exchanges $\cG$ and $\cG^*$ while leaving the Hamiltonian and symplectic structure invariant. This is of course given by the isomorphism 
\begin{align}
    \cG^*\bowtie \cG \simeq \cG\bowtie\cG^*\,, 
    \quad
    (\tell, \th) \mapsto (h,\ell) = (\tell \rhd \th, \tell \lhd \th)\,.
\end{align}
Since $G= \th\tell=\ell h$ is invariant under the above isomorphism, the KS Hamiltonian is manifestly invariant. To prove the well-known fact that the {\emph{closed}-string} symplectic structure is similarly invariant, we make use here of the following property of the KS Liouville term \cite[Eq. (31)]{klimcik_t-duality_1996}:\footnote{This identity is proven by repeatedly using the ribbon identity $\th \tell = \ell h$ and the isotropy of the symmetric pairing $\langle \cdot,\cdot\rangle$ in $\mathfrak{d}$, namely the fact that $\langle \fg,\fg\rangle = 0 = \langle \fg^*,\fg^*\rangle$. In particular, these facts allow one to show that:
\[
\omega_\text{STS} 
:= \langle \tell^{-1}d\tell , h^{-1}d h\rangle + \langle d \ell \ell^{-1}, d\th\th^{-1}\rangle
=  \left( - \langle \tell^{-1}\pp_\tau \tell , h^{-1}\pp_\sigma h \rangle + \langle \pp_\sigma \ell \ell^{-1}, \pp_\tau\th\th^{-1}\rangle \right) d \sigma \wedge d\tau\,,
\]
and that
\[
\langle \pp_\sigma \tell \tell^{-1}, \th^{-1}\pp_\tau \th\rangle - \langle \pp_\sigma hh^{-1},\ell^{-1}\pp_\tau \ell\rangle = \langle \pp_\sigma \ell \ell^{-1}, \pp_\tau\th\th^{-1}\rangle  - \langle \tell^{-1}\pp_\tau \tell , h^{-1}\pp_\sigma h \rangle\,,
\]
which together prove the asserted formula. In the main text, $\omega_\text{STS}(\ell,h)$ is given by the formal expression above with $(\th,\tell) = (\ell\rhd h, \ell\lhd h)$ understood as functions of $(h,\ell)$. With this understanding, $\omega_\text{STS}(\ell,h) = - \omega_\text{STS}(\th,\tell)$.\label{fnt:omSTS}}
\begin{align}\label{eq:omega}
     \langle \pp_\sigma \tell\tell^{-1},\th^{-1}\pp_\tau \th\rangle\ d\sigma\wedge d\tau
    =
      \langle \pp_\sigma hh^{-1},\ell^{-1}\pp_\tau \ell\rangle\ d\sigma\wedge d\tau 
     -  \gamma^*\omega_\text{STS}(\ell,h)\,,
\end{align}
where $\omega_\text{STS}$ is the STS symplectic form on the Heisenberg double $\cD_\text{H}$ \eqref{STS form}  and $\gamma:\Sigma \mapsto \cD_\text{H}$ is the history $x\mapsto G(x) := \th(x)\tell(x)=\ell(x) h(x)$.

This equality, together with the invariance of the KS Hamiltonian, implies that 
\begin{align}
    \mathcal{L}_\text{1st}(\th,\tell) = \mathcal{L}_\text{1st}(\ell,h) -  \gamma^*\omega_\text{STS}(\ell,h)\,.
\end{align}
Since the STS symplectic form is not $d$-exact, the last term $ \gamma^*\omega_\text{STS}$ is not boundary as in the particle mechanics case---it is, however, $d$-closed. We studied the contribution of this term to the action in \S\ref{subsec:deformedspinningtop} where it was denoted $\Psi_0(\gamma) := \int_\Sigma \gamma^*\omega_\text{STS}$. There, we found that its variation is such that it will not contribute to the equations of motion nor to the symplectic structure of the \emph{closed} KS string (cf. Lemma \ref{lemma:claim}, Proposition \ref{prop:psi}, and Equation \eqref{eq:Amodel}). In other words, since the difference between $\mathcal{L}_\text{1st}(\th,\tell)$ and $\mathcal{L}_\text{1st}(\ell,h)$ is a ``topological string action'' (A-model) which bares no consequence on the theory of the closed KS string, the two Lagrangians are dynamically equivalent, hence the first order closed string KS model enjoys a non-Abelian T-duality. 

We close this section with the curious remark that since
$\omega_\text{STS}(\tell,\th) = - \omega_\text{STS}(h,\ell)$, as can be seen from \eqref{eq:omega}, one can introduce a self-T-dual Lagrangian for the closed first order KS string:
\begin{equation}
    \mathcal{L}_\text{self}(\th,\tell) := \mathcal{L}_\text{1st}(\th,\tell) -  \frac12 \gamma^*\omega_\text{STS}(\th,\tell)\,.
\end{equation}
In the open string case \cite{KS-Dbrane}, one must augment the KS action by an appropriate boundary term $\pp_\sigma \mathcal{K}(\tilde h,\tilde \ell)$ to have the desired boundary conditions fixed by means of the equations of motion. If one's starting point is the self-T-dual Lagrangian $\mathcal{L}_\text{self}$, then one ends up with an action consisting of the ``naive'' open KS string coupled at its endpoints to a pair of deformed spinning tops with Hamiltonian $2\mathcal{K}$, see \S\ref{subsec:deformedspinningtop}:
\begin{equation}
    \mathcal{L}_\text{self} + \pp_\sigma \mathcal{K} = \mathcal{L}_\text{1st} - \frac12 \left(\gamma^*\omega_\text{STS} - 2\pp_\sigma \mathcal{K}\right) = \mathcal{L}_\text{1st} - \frac12\mathcal{L}_\text{def.sp.tops}\,.
\end{equation}
Upon dualization, this coupled system yields by the above formulas another open KS string coupled to another pair of deformed spinning tops at its endpoints, this time equipped with different dynamics implementing ``dual'' boundary conditions (we don't expect $\mathcal{K}$ to be self-dual).

\section{Poisson-Lie symmetry for a 2+1D field theory}\label{sec:PLin2+1D}

Before we begin our illustration of a 2+1D field theory with a non-trivial Poisson-Lie symmetry, we find it important to summarize and reflect on some of the lessons learned thus far.

To build examples of field theories with Poisson-Lie symmetries, it is convenient to think of them as deformations of field theories with standard, namely Noetherian, symmetries. In such theories the charges are built as integrals of conserved currents over a Cauchy surface. This is possible because these currents are valued in a linear space (the dual of the Lie algebra), as well as the charges and momentum maps. In theories with (non-trivial) PL symmetries, however, the momentum map is valued in a non-Abelian group, $\cG^*$, which is not a linear space, whence the question arises of how to build such an object from a current.

In 0+1D, currents are simply functions of time, and conserved currents are constant functions of time, which can be readily identified with the associated conserved charge. A non-trivial PL symmetry yields in this case a $\cG^*$-valued conserved charge $Q$, with $\cG^*$ non-Abelian. In 1+1D, Hodge duality relates 1-forms and 1-vectors over spacetime, which allows one to identify currents with 1-forms and therefore ``deform'' the usual conservation equation $dJ\approx0$ into a flatness equation $dJ + \frac12 [J,J]_*\approx0$, which involves spacetime 2-forms, once $\fg^*$ is given a (non-Abelian) Lie algebra structure. This deformed conservation equation is at the root of the PL symmetries in 1+1D, yielding (with appropriate boundary conditions) conserved $\cG^*$-valued charges obtained as path-ordered integrals of $J$ on the (bounded) Cauchy slice, $Q = \mathrm{Pexp}\int J$.\footnote{Recall that contact with equation \eqref{eq:PLmommap2} from the general CPS treatment of PL symmetries is obtained by choosing a constant reference solution $\varphi_0\in \cF_{\text{EL}}$, which is such that $Q(\varphi_0)=1$ for all choices of Cauchy surface.}

In 2+1D, it is unclear how to canonically obtain a non-Abelian $\cG^*$-valued conserved charge by appropriately integrating local quantities on the (bounded) Cauchy surface. Something like a surface-ordered integral seems necessary, and the theory of 2-connections and 2-gauge symmetries comes straight to mind \cite{baez_invitation_2011}. Here, we will not pursue this generalization, but rather work with a 2+1D topological field theory ($BF$ theory/Chern-Simons theory) over a ``filled cylinder'' $M =D^2\times \mathbb{R}$, that ``holographically projects'' to the boundary ``worldsheet'' $\Sigma \simeq \pp M = S^1 \times \mathbb{R}$ and thus reducing, de facto, to a 1+1D theory.

In this setup, boundary PL symmetries emerge as global symmetries of the would-be-gauge boundary degrees of freedom of the topological bulk \emph{gauge} theory. In other words, they correspond to reducibility parameters of the boundary gauge symmetries (which are, here, treated as broken and therefore as encoding physical degrees of freedom). This correspondence with reducibility parameters readily implies that the associated charges must vanish -- as is also expected for a closed string (if $\Sigma = \pp M$, then $\pp\Sigma = \emptyset$). 

The trick to recover non-trivial PL symmetries with non-vanishing charges is to introduce a cellular decomposition of $\Sigma$ and therefore of $\pp\Sigma$. This is natural in a ``lattice'' approach to $BF$ theory, where one considers the theory on a cellular complex and associates data, a priori, to all its $k$-cells. This way, a discretization of $D^2$ (or $S^2$) into e.g. triangles, leaves us to consider data associated not only to $\pp D^2 \simeq S^1$ but also to the sides of the triangles, that is, open strings $C\simeq [0,\pi]$, as well as its vertices, $\pp C\simeq \{0\}\cup\{\pi\}$. From the viewpoint of 2+1D $BF$ theory, the PL charges are associated to codimension-3 objects, in the same way that the PL charges of the 1+1D \klimcik{}-\severa{} model turned out to be associated to codimension-2 objects (Theorem \ref{thm:KSPLsymmetry}).

This idea to recover the \klimcik{}-\severa{} model of a closed string from a ``bulk'' Chern-Simons theory was expounded in great detail by \severa{} in \cite{severa_CS_2016}. Similar ideas were independently developed in the context of 2+1D quantum gravity by one of the authors and his collaborators -- see in particular \cite{dupuis_origin_2020}. Here we follow the latter exposition, where 2+1D (Euclidean, first order Palatini-Cartan) quantum gravity with a negative cosmological constant is described in terms of a deformed $SU(2)$-$BF$ action, rather than by a standard $SL(2,\mathbb{C})$ Chern-Simons theory (the relationship between the two will become clear in due time).

\subsection{3D gravity with cosmological constant}\label{subsec:3DGravitywithCosmologicalConstant} 

The first order Palatini-Cartan action for (Euclidean) 3D gravity with cosmological constant $\Lambda$ is 
\begin{align}{\label{eq:3dgravityaction}}
    S(e,A) = \frac{1}{8\pi G}\int_{M} e^{a}\wedge\ \left(F_{a}(A)+\frac{\Lambda}{6}\epsilon_{abc}e^{b}\wedge e^{c}  \right)\,.
\end{align}
Here, $e = e^a\tau_a^*$ is a 1-form valued in $\mathfrak{su}(2)^*$ and $F(A)$ is the $\mathfrak{su}(2)$-valued curvature 2-form of a $\mathfrak{su}(2)$ connection 1-form $A = A_{a}\tau^{a}$. Note that we have adopted the following change in notation as compared to the previous sections, $\tau_{a}\leadsto \tau^{a},\tau^{a}_{*}\leadsto \tau^{*}_{a}$, so as to keep consistent with the primary reference for this section \cite{dupuis_origin_2020}. We consider the underlying spacetime to be $M \simeq C \times \mathbb{R}$ where $C$ is a 2d surface with boundaries. In what follows $C$ will be a 2-disk $D^2$.

The gauge symmetries of this action are, for $\alpha$ and $Y$ functions valued in $\mathfrak{su}(2)$ and $\mathfrak{su}(2)^*$ respectively, given by:
\begin{align}\label{eq:gaugeBF}
\text{rotations}:
    \begin{cases}
    \delta_\alpha A = d_A \alpha \,,   \\
    \delta_\alpha e = \ad^*_\alpha e\,,
    \end{cases}
\text{and translations}:
    \begin{cases}
    \delta_{Y} A_c = \Lambda \epsilon_{abc} e^a Y^b \,,  \\
    \delta_{Y} e^c = d Y^c + \epsilon^{abc} A_a Y_b =: (d_A Y)^c\,.
    \end{cases}
\end{align}
The equations of motion are:
\begin{align}
    F_a(A) + \frac\Lambda 2 \epsilon_{abc} e^b \wedge e^c &= 0\,,\\
    d_A e^a &= 0\,,
\end{align}
and its associated pSF (pre-symplectic form) current  is
\begin{align}
    \omega = \bbd e^a \wedge \bbd A_a\,.
\end{align}

In the flat case, i.e. if $\Lambda=0$, it is immediate to see that one has a $\mathfrak{su}(2)$ $BF$-theory, whose solutions are given by flat $\mathfrak{su}(2)$ connections $A=h^{-1}dh$ and covariantly constant ``$B$-fields'' $e = \Ad_{h}^*dX$ for some $X\in C^\infty(D^2,\fg^*)$. These solutions are pure-gauge. More compactly, one can express the solutions to this model as being given by flat $\mathfrak{iso}(3)$ connections $\mathcal{A} = A_a \tau^a + e^a \tau^*_a$ where $\tau^a$ and $\tau_a^*$ are a basis of the $\mathfrak{su}(2)$ and $\mathfrak{su}(2)^{*}\simeq (\mathbb{R}^3,+)$ subalgebras of rotations and translations respectively. Since $\mathfrak{iso}(3)$ is nothing else than the semiflat Drinfel'd double of $\mathfrak{su}(2)$, it is not too surprising that 2+1D flat (quantum) gravity exhibits such a symmetry group \cite{meusburger_hilbert_2012,dupuis_discretization_2017}. Another way to assess this finding is to rephrase the $\mathfrak{su}(2)$ $BF$-theory in terms of a $\mathfrak{iso}(3)$ Chern-Simons theory \cite{witten_2_1988}. The advantage of the $BF$ formulation is that it is a first order formulation which splits the fields nicely into the canonical pair $(e,A)$ matching the double structure $\mathbb{R}^3 \rtimes \mathfrak{su}(2)$. This decomposition also matches the spin-network basis of quantum gravity and lattice field theory. We will come back to this point at the end of this section.

Similarly, for $\Lambda\neq 0$, one can rephrase the ($\Lambda$-deformed) $BF$-theory \eqref{eq:3dgravityaction} in terms of a Chern-Simons theory for a double-valued connection $\mathcal{A}$. The relevant double of $\mathfrak{su}(2)$ depends on the signature and the value of the cosmological constant. For Euclidean gravity with a negative cosmological constant -- picking $\Lambda =-1$ without loss of generality -- this is $\mathfrak{sl}(2,\mathbb{C})\simeq\mathfrak{so}(1,3)$. A puzzle then arises by noting that the double structure is not quite reflected in the canonical pair $(e,A)$. Indeed, the on-shell flat connection constructed from $(e,A)$ is given by:
\begin{equation}
    \mathcal{A} = A_a\tau^a + e^a b_a\,,
\end{equation}
where $b_a$ are generators of the \emph{boosts} in $\mathfrak{so}(1,3)$ -- which fail to form a subalgebra of $\mathfrak{so}$(1,3).

In \cite{dupuis_origin_2020} it was shown how to perform a canonical transformation of $(e,A)$ so as to make the PL symmetries of the theory manifest -- once one goes on-shell. {That is, to replace $(e,A)$ with a canonical pair having each component manifestly valued respectively in an isotropic subalgebra of the double}. The first question then is what is the correct double structure in $\mathfrak{sl}(2,\mathbb{C})$. This is given by the Iwasawa decomposition\footnote{The rotations-boosts decomposition of $\mathfrak{sl}(2,\mathbb{C})$ is instead an instance of a Cartan decomposition.}
\begin{equation}
    \mathfrak{sl}(2,\mathbb{C}) \simeq \mathfrak{su}(2) \bowtie \mathfrak{an}(2,\mathbb{C}) \simeq \mathfrak{an}(2,\mathbb{C}) \bowtie \mathfrak{su}(2) \,,
\end{equation}
where $\mathfrak{an}(2,\mathbb{C})$ is given by traceless upper-triangular $2\times 2$ matrices with real numbers on the diagonal, namely
\begin{equation}\label{eq:Iwasawa}
    \mathfrak{an}(2,\mathbb{C}) = \left\{ 
\begin{pmatrix}
    r & z \\ 0 & -r 
\end{pmatrix}\,, \quad z\in\mathbb{C} , r \in \mathbb{R}
    \right\}\,.
\end{equation}
The Iwasawa decomposition requires the choice of an internal direction\footnote{More precisely, this is the choice of a maximal Abelian subalgebra within the boost sector, which is here 1-dimensional and unique up to rotations.} $n = n_a\tau^a$ in $\mathfrak{su}(2)$. 
Explicitly \cite{dupuis_origin_2020},\footnote{This is best read as a definition of $b_a \in \mathfrak{d}$ in terms of $\tau_a^* \in \fg^*$ and $\tau^a\in\fg$.}
\begin{equation}
    \tau_a^* = b_a + \epsilon_{abc}n^b \tau^c \in \mathfrak{an}(2,\mathbb{C}) \subset \mathfrak{sl}(2,\mathbb{C})\,,
\end{equation}
with 
\begin{equation}\label{eq:an-algebra}
    [\tau_a^*, \tau_b^*]_\ast = n_a\tau_b^* - n_b \tau_a^*\,.
\end{equation}
In \eqref{eq:Iwasawa}, we implicitly chose the generators $\tau^a$ to be proportional to the Pauli matrices, and $n = n_a\tau^a$ to be given by $n_a = (0,0,1)$. 

The presence of the vector $n$ is ultimately responsible for the ``kickback'' dressing action $\lhd$ of $\mathfrak{an}(2,\mathbb{C})$ on $\mathfrak{su}(2)$ in $\mathfrak{sl}(2,\mathbb{C}) \simeq \mathfrak{su}(2) \bowtie \mathfrak{an}(2,\mathbb{C})$. Since this action is nothing else than $\widetilde{\mathrm{ad}}{}^*$, we see that it is in fact directly related to the non-Abelian nature of $\fg^* \simeq \mathfrak{an}(2,\mathbb{C})$.

In this new basis of $\mathfrak{sl}(2,\mathbb{C})$, the double connection reads
\begin{align}
    \mathcal{A} = (A_a - \epsilon_{abc}e^b n^c )\tau^a + e^a\tau_a^* =: K_a \tau^a + e^a \tau_a^*\,,
\end{align}
where we introduced the field redefinition:
\begin{align}
    \begin{pmatrix}
        e^a\\ A_a
    \end{pmatrix} \mapsto 
        \begin{pmatrix}
        e^a\\ K_a
    \end{pmatrix}  :=     
    \begin{pmatrix}
        e^a\\ A_a + \epsilon_{abc} n^b e^c
    \end{pmatrix} \,.
\end{align}

Recall that the equations of motion are given by the flatness of the double connection $F(\mathcal{A}) = d \mathcal{A} + \frac12[\mathcal{A},\mathcal{A}]_\mathfrak{d}$. Then, from $\mathfrak{d} = \mathfrak{sl}(2,\mathbb{C}) \simeq \mathfrak{su}(2)\bowtie\mathfrak{an}(2,\mathbb{C})$, it is easy to see that in terms of the new $\mathfrak{su}(2)$ and $\mathfrak{an}(2,\mathbb{C})$-valued variables $(e,K)$, the equations of motion decompose as:
\begin{align}
    d e + \frac12 [e,e]_\ast + K\rhd e &=0\,,\label{eq:de+ee+Ke}\\
 d K + \frac12 [K,K] + K \lhd e &= 0\,,\label{eq:dK+KK+Ke}
\end{align}
where $\lhd e = \widetilde{\ad}{}^*_e$ and $K \rhd =- \ad{}^*_K$ for $\fg=\mathfrak{su}(2)$ and $\fg^* = \mathfrak{an}(2,\mathbb{C})$ (note that $\widetilde{\ad}$ depends on $n^a$ according to \eqref{eq:an-algebra}).

Similarly for the gauge symmetries, the original gauge symmetries \eqref{eq:gaugeBF} are mapped onto standard $\mathfrak{sl}(2,\mathbb{C})$ gauge transformations of the double connection $\mathcal{A}$, which can easily be projected down onto the following $\mathfrak{su}(2)$- and $\mathfrak{an}(2,\mathbb{C})$-transformations of $K$ and $e$: 
\begin{equation}
    \begin{aligned}\label{eq:3D-gauge}
        \text{ rotations} &:
        \begin{cases}
        \delta_\alpha K = d_K \alpha + e\rhd \alpha \,,  \\
        \delta_\alpha e = e \lhd \alpha\,,
        \end{cases}
        \\
\text{and translations} &:
        \begin{cases}
        \delta_Y K = K \lhd Y \,,  \\
    \delta_{Y}e = dY+\comm{e}{Y}_{*}+K\rhd Y\,.
        \end{cases}
    \end{aligned}
\end{equation}
Since $n^a$ is an auxiliary quantity, we choose it   to be field-\emph{in}dependent, that is $\bbd n^a \equiv 0$. Then, it is easy to check that the above field redefinition is a canonical transformation in the sense that it preserves the ``Darboux'' form of $\omega$:
\begin{align}
    \omega  = \bbd e^a \wedge \bbd A_a  = \bbd e^a \wedge \bbd K_a - \epsilon_{abc} n^b \bbd e^a \wedge \bbd e^c = \bbd e^a \wedge \bbd K_a\,, 
    \label{eq:omega-eK}
\end{align}
where we used that since $\bbd e^a$ is a mixed (1,1)-form of total even degree, it wedge-commutes with $\bbd e^c$.
This canonical transformation can be seen to be generated by the change in polarization determined by 
\begin{align}
    \Phi = \frac12 \epsilon_{abc}n^a e^b\wedge e^c \,,
\end{align}
namely:
\begin{align}
    \theta_\text{Cartan} := e^a \bbd A_a 
    \leadsto 
    \theta_\text{Iwasawa} := e^a \bbd K_a = \theta_\text{Cartan} + \bbd \Phi 
    \implies 
    \omega = \bbd \theta_\text{Cartan}= \bbd \theta_\text{Iwasawa}\,.
\end{align}
It is shown in \cite{dupuis_origin_2020} that this canonical transformation can be implemented at the level of the action by adding the pure-boundary term $d\Phi$ (here we also assume that $dn^a=0$):
\begin{equation}
    \begin{aligned}{\label{eq:shifted3Dgravityaction}}
       S_\text{Iwasawa}(e,K) & = S(e,A) + \frac1{8\pi G} \int_M d\Phi \\ &= \frac{1}{8\pi G}\int_{M}\Big(e^{a}\wedge F_{a}(K)-\frac{1}{2}\epsilon_{abc}e^{a}\wedge e^{b}\wedge d_{K}n^{c}\Big)  \,,
    \end{aligned}
\end{equation}
where $F(K) = dK + \frac12 [K,K]$. Note that both $[K,K]$ and $d_K n$ have terms which are linear in $e$ and quadratic in $n$: using that $n^2 \propto \Lambda$ (here set to $-1$), one obtains the cosmological term; see \cite{dupuis_origin_2020} for details.

Finally, let us note that one also has
\begin{align}\label{eq:omega-AA}
 \omega = \frac12 \langle \bbd \mathcal{A} \wedge \bbd \mathcal{A}  \rangle\,, 
\end{align}
\begin{tolerant}{3000}\noindent
where $\langle\cdot,\cdot\rangle$ is the {$\Ad$-invariant} non-degenerate inner product of signature (3,3) over $\mathfrak{d}=\mathfrak{sl}(2,\mathbb{C})$ with respect to which $\fg = \mathfrak{su}(2)$ and $\fg^*=\mathfrak{an}(2,\mathbb{C})$ are isotropic subalgebras, i.e. orthogonal to themselves, and such that $\langle \tau_a^*,\tau^b\rangle  =\delta_a^b$ (note: this also means $\langle b_a,\tau^b\rangle = \delta_a^b$). This form of the symplectic structure will turn out to be most useful when implementing the constraints in the next section, and hence when deriving a discretization of $\omega$ in its form \eqref{eq:omega-eK}.
\end{tolerant}

\subsection{Symmetries \&  the continuum covariant phase space}\label{subsubsec:SymConstraintsDiscretization3Dgravity}

Having laid out all the ingredients needed to study 2+1D gravity \eqref{eq:shifted3Dgravityaction}, we now proceed to study its gauge symmetries and associated constraints.

To maintain consistency with the notation of \cite{dupuis_origin_2020}, in this section we redefine the symbols $\Delta$ and $\bbDelta$, swapping left and right Maurer-Cartan forms. Namely, from now on
\begin{equation}\label{eq:redefineDelta}
    \Delta G := G^{-1} dG\,,
    \quad\text{and}\quad
    \bbDelta G := \bbd G G^{-1}\,.
\end{equation}
It is easy to check that the translation and rotation gauge symmetries \eqref{eq:3D-gauge} are generated, up to boundary terms, by the pullback of the equations of motion \eqref{eq:de+ee+Ke} and \eqref{eq:dK+KK+Ke} to $D^2$.
As said earlier, these are equivalent to the $\mathfrak{sl}(2,\mathbb{C})$ flatness of $\mathcal{A}$:
\begin{equation}{\label{eq:flatnessSol'n}}
    d\mathcal{A} + \frac12 [\mathcal{A},\mathcal{A}]_{\mathfrak{d}} = 0 \implies \mathcal{A} = \Delta G\,,
\end{equation}
for some $G\in C^\infty(D^2,\mathcal{D})$,
where in the last equation we used that $D^2$ is simply connected. 

\begin{tolerant}{3000}
In fact, the space of flat connections $\mathcal{A}$ and $\cD$-valued functions $G$ are not in one-to-one correspondence, unless we consider $G$ to be actually defined in the coset $\cD\backslash C^\infty(D^2,\cD)$, which makes sense since $G(x)$ and $G'(x)=G_0 G(x)$ define the same connection $\mathcal{A}(x)=\Delta G(x) = \Delta G'(x)$. By ignoring this fact here we are surreptitiously introducing an enlargement of the phase space. As we will see shortly this enlargement will be compensated by a new gauge symmetry and thus have no effect -- at least until we will subdivide $\pp D^2$ into segments ending at marked points.
\end{tolerant}

Substituting the expression $\mathcal{A} = \Delta G$ in \eqref{eq:omega-AA}, one obtains the following expression for the CPS symplectic structure 
\begin{align}\label{eq:Omega-GG}
    \Omega \approx \frac12 \int_{D^2} \langle \bbd \Delta G \wedge \bbd \Delta G \rangle = \frac12 \int_{D^2} \langle d\bbDelta G, d \bbDelta G\rangle 
   = \frac12 \int_{\pp D^2} \langle \bbDelta G, d \bbDelta G\rangle\,,
\end{align}
where we used the $\Ad$-invariance of the $\mathfrak{d}$-pairing, $\mathfrak{d}=\mathfrak{sl}(2,\mathbb{C})$, as well as the analog of \eqref{eq:MC-relation} adjusted to the left/right change in notation \eqref{eq:redefineDelta} of the various delta's. 

Thus, on-shell, $\Omega$ is supported on the boundary of $D^2$ and its (bulk- or, equivalently, constraint-reduced) field space is isomorphic to $\cF_\pp := C^\infty(S^1,\mathcal{D})$ \cite{meinrenken_hamiltonian_1998,riello_hamiltonian_2024}.

Using the double decomposition $\cD = \cG^*\bowtie\cG$ of $SL(2,\mathbb{C})$, 
\begin{align}
    G =:\ell h \,, \quad (\ell, h) : D^2 \mapsto AN(2,\mathbb{C})\bowtie SU(2)\simeq SL(2,\mathbb{C}) \,,
\end{align}
one can rewrite\footnote{Writing $\mathcal{A}=\Delta G$, and $G=\ell h $, one sees that the decomposition $\mathcal{A} = K + e$ becomes
\begin{align*}
    K = \Delta h+ \Ad_{h} \Delta \ell |_
    {\mathfrak{su}(2)}\,,
    \quad\text{and}\quad
    e = \Ad_{h} \Delta \ell |_{\mathfrak{an}(2,\mathbb{C})}\,,
\end{align*}
where the projections on $\mathfrak{su}(2)$ and $\mathfrak{an}(2,\mathbb{C})$ are present given that $\mathfrak{an}(2,\mathbb{C})\subset SL(2,\mathbb{C})$ is not stable under the conjugation by elements of $SU(2)\subset SL(2,\mathbb{C})$. In particular, one sees that $h$ and $\ell$ are not quite the holonomies (path-ordered exponentials) of $K$ and $e$.\label{fnt:K-e}} the CPS symplectic structure $\Omega$ as \cite{dupuis_origin_2020} 
\begin{align}\label{eq:Omega-ellh}
    \Omega \approx  - \bbd \int_{\pp D^2} \langle \Delta \ell \wedge \bbDelta h \rangle\,,
\end{align}
which we recognize matches the first order KS model symplectic structure \eqref{eq:1stKSomega} \cite{severa_CS_2016}.\footnote{For an exact match up to an irrelevant global sign, we should map $(\ell,h)\mapsto(\th,\tell)=(h^{-1},\ell^{-1})$. Indeed, the symbols $\Delta$ and $\bbDelta$ used in this section follow opposite left-right conventions compared to the previous section, and  despite $(\ell,h)$ and $(\th,\tell)$ transforming differently under left/right dressing actions, $(\ell^{-1},h^{-1})$ and $(\th,\tell)$ transform the same way.}

\begin{tolerant}{3000}
There are two obvious symplectomorphisms of $\Omega$ written as in \eqref{eq:Omega-GG}: left multiplication by a fixed $G_{0}\in\cD$ and right multiplication by some (generically point dependent) $G_{1}(x)\in C^\infty(S^1,\cD)$,
\begin{equation}
    G(x) \mapsto \begin{cases}
        G_{0}G(x)\,,\\
        G(x)G_{1}{(x)}\,.
    \end{cases}
\end{equation}
Using the decomposition of $\mathcal{D}$ into elements of $AN(2,\mathbb{C})$ and $SU(2)$ to write $G = \ell h$ and $G_{0} = \ell_{0}h_{0}$ etc., we recognize such left/right multiplications as none other than the four Drinfel'd double actions whose infinitesimal version is provided in \eqref{eq:drinfeldsymrightG}-\eqref{eq:drinfeldsymleftG*} (right dressings have here a point-dependent parameter).
\end{tolerant}

The point-dependent right actions by $G_1(x)$ are nothing but gauge transformations $\mathcal{A}\mapsto $ $\Ad_{G_{1}^{-1}(x)}\mathcal{A}+G_{1}^{-1}(x)dG_{1}(x)$. It is easy to check that the corresponding point-dependent right rotations \eqref{eq:drinfeldsymrightG} and translations \eqref{eq:drinfeldsymrightG*} are Noetherian transformations:
\begin{equation}\label{eq:pnshellsymmetries3dgr}
    \bbi_{\brhoR(Y_1{(x)},\alpha_1{(x)})}\Omega  \approx \bbd \int_{\pp D^2} \langle Y_1{(x)} +\alpha_1{(x)}, \mathcal{A}\rangle = \bbd \int_{\pp D^2} \langle Y_1{(x)} , K\rangle +\langle \alpha_1{(x)}, e\rangle\,,
\end{equation}
where $G_1(x)=(\ell_{1}{(x)},h_{1}{(x)})\sim (1,1)+(Y_1{(x)},\alpha_1{(x)})$ and $K$ and $e$ are given in Footnote \ref{fnt:K-e}.

Conversely, since $\Omega$ is ultimately equal to $\frac{1}{2}\int_{D^{2}}\langle\bbd\mathcal{A}\wedge\bbd\mathcal{A}\rangle$, and $\mathcal{A} = \Delta G$ is invariant under left multiplication of $G$ by a constant $G_{0}$, the left rotations \eqref{eq:drinfeldsymleftG} and translations \eqref{eq:drinfeldsymleftG*} have trivial generators, meaning that:
\begin{equation}{\label{eq:gravitytrivialgenerators}}
    \bbi_{\brhoL(Y,\alpha)}\Omega  \approx 0\,.
\end{equation}
{As we said earlier, this} reflects the fact that there is an ambiguity in the choice of the field $G(x)$ given a flat $\mathcal{A}$ which amounts to space-independent left-translations, $G(x)\mapsto G_{0}G(x)$. This ambiguity can be used to make $G(x)$ the identity at a reference point $x_c\in D^2$. This amounts to defining $G_{cx}\equiv G_c(x) := \mathrm{Pexp}\int_{x_c\to x} \mathcal{A}$.

We conclude this discussion by coming back to the fact that, on-shell, $\Omega$ \eqref{eq:Omega-ellh} is the symplectic structure of the first order model of \klimcik{} and \severa{} \eqref{eq:1stKSomega} evaluated on a \emph{closed} string \cite{severa_CS_2016}.
Recalling that the KS closed string supports \emph{no} non-trivial Poisson-Lie symmetries, \S\ref{sec:closedKS}, we are now interested in exploring the 2+1D gravity analog of the \emph{open} KS string. 

In this context, the KS open string becomes meaningful if on $S^1=\pp D^2$ there are marked points. This is the case e.g. if $D^2$ is taken to be one polygon in a cellular decomposition of the Cauchy surface of the 2+1D manifold $M$ \cite{dupuis_origin_2020}.

Therefore, in preparation for the next section it is instructive to compute the zero on the right hand side of equation \eqref{eq:gravitytrivialgenerators} from a slightly different perspective. Starting from the right-most expression of $\Omega$ in \eqref{eq:Omega-GG}, which is supported on the boundary, we compute:
\begin{equation}
    \bbi_{\brhoL(Y,\alpha)}\Omega  
    \approx \frac12 \int_{\pp D^2} \langle Y+\alpha , d\bbDelta G\rangle  
    = \frac12 \int_{\pp D^2} d\langle Y+\alpha , \bbDelta G\rangle =0 \,,
\end{equation}
where we used that $Y$ and $\alpha$ are constant and that $\pp^2( D^2) \equiv 0$. What this computation suggests is that, were we to subdivide $\pp D^2$ into segments, the zero on the right-hand side of the above equation could be obtained as the result of a telescopic sum whose non-zero terms are associated to the segments of the subdivision. In other words, the Drinfel'd left rotations and translations can be consistently assigned non-trivial action on the phase space associated to each such segment. In the next section we will make sense of this idea.

However, one might wonder how constant left multiplications of $G(x)$ by $G_0$ can play a role when they are mere redundancy in the parametrization of $\mathcal{A}$ in terms of $G$.
The answer relies in the fact that upon discretization one is led to consider piecewise constant transformations.
That is, splitting $S^1\simeq\pp D^2$ into segments joining at marked points, it is only when acting \emph{with} the same element on all segments, that one expects an identically vanishing  charge. But this cannot be established when acting on $G(x)$ one segment at a time. The action on $G(x)$ over $S^1$ by a piecewise constant $G_0$ (i.e. by a $G_0$ that is constant on each segment, but not across segments) produces a connection that is singular at the transition between segments. These singularities will play a role in the next section.

\subsection{Discretization \& Poisson-Lie symmetries}\label{subsubsec:PLSymof3Dgravity}

Motivated by our discussions in the previous section, instead of focusing on a single $D^2$ region with boundaries, we now consider a 2D surface $S$ split into polygonal 2-cells $c_i \simeq D^2$ (where the $\simeq$ is a $C^0$ homeomorphism). 
The boundary  $\pp c_i$ of each 2-cell splits into 1-cells (segments) corresponding to the sides of the polygonal 2-cell, so that the boundaries of these 1-cells are the marked points on $\pp c_i$ mentioned at the end of the previous section. 

Each 1-cell corresponds to a pair of adjacent 2-cells,  
\begin{equation}
    \pp c = \bigcup_{c'} [cc']\,,
    \quad\text{where}\quad
    [cc']:=\pp c \cap \pp c'\,.
\end{equation}
Thus, recalling \eqref{eq:omega-eK} and \eqref{eq:Omega-ellh}, 
\begin{align}\label{eq:Omega_S}
    \Omega_S := \int_S \bbd e^a \wedge \bbd K_a = \sum_i \Omega_{c_i} \,,
    \quad \text{where}\quad
    \Omega_{c} := - \bbd \sum_{c'}\int_{[cc']} \langle \Delta \ell_c 
    \wedge \bbDelta h_c\rangle\,.
\end{align}
The label $\bullet_c$ in $(\ell_c,h_c)$ serves as a reminder that these quantities are defined in terms of a choice of function $G_c(x) = \ell_c(x)h_c(x)$ \emph{within} every cell.

The above is thus readily reorganized in terms of contributions associated to the 1-skeleton of the cellular decomposition:
\begin{align}
    \Omega_S = \sum_{c,c'} \Omega_{[cc']}\,,
    \quad \text{where}\quad
    \Omega_{[cc']} := \Omega_{c}-\Omega_{c'}= - \bbd \int_{[cc']} \Big( \langle \Delta \ell_c 
    \wedge \bbDelta h_c\rangle - \langle \Delta \ell_{c'} 
    \wedge \bbDelta h_{c'}\rangle\Big)\,.
\end{align}
Having isolated and independently described the phase space in each cell, we must provide extra ``gluing constraints'' that allow one to put the different cells together, namely:
\begin{align}{\label{eq:gluing}}
 \mathcal{A}_{c}(x) =\mathcal{A}_{c'}(x) \iff   \pp_x \Big(G_c(x) G_{c'}^{-1}(x)\Big) = 0\,, \quad \forall x\in[cc']\,.
\end{align}
These gluing constraints can therefore be expressed in terms of the existence of (\emph{constant}) elements $G_{cc'}\equiv G_{c'c}^{-1} \in \cD = SL(2,\mathbb{C})$, such that $G_{c}(x)G^{-1}_{c'}(x) =G_{cx}G_{xc'}= G_{cc'}$ for all $\forall x\in[cc']$.

We are imposing gluing constraints only at the edges (1-cells) of the cellular complex, not at its vertices (0-cells). Indeed, singularities, which are interpretable as point-particle defects  \cite{Deser:1983tn, deSousaGerbert:1990yp, Freidel:2004vi, Noui:2008tm, Delcamp:2016yix, dupuis_discretization_2017}, might a priori be present at the vertices. Imposition of their absence thus constitutes our discrete equations of motion (in the absence of sources). As we will see this is related to the discussion at the end of the previous section and to the presence of Poisson-Lie symmetries classifying those defects. 

This picture is clarified by thinking of the $G_c(x)$ as holonomies of $\mathcal{A}$ from a reference point $x_c \in c$ to any other $x\in c$, namely $G_c(x) = \mathrm{Pexp}\int_{x_c\to x}\mathcal{A}$. Then it is clear that the gluing constraints \eqref{eq:gluing} state that all holonomies from $x_c$ to $x_{c'}$ through $[cc']$ are equal, which is the statement of flatness in $c\cup c'$, i.e. of flatness across $[cc'] = \pp c\cap \pp c'$. Under this condition, the holonomy around a vertex $v$ sitting at the intersection of three cells $c,\ c',\ c''$, is then given by $G_v := G_{cc'} G_{c'c''}G_{c''c}$ and it equals 1 if and only if the curvature of $\mathcal{A}$ vanishes at $v$. Therefore, if we were to impose flatness everywhere, the elements $G_{cc'}$ would be like ``transition functions'' over the 2d discretized manifold covered by the atlas $\{c\}$. However, since we are here \emph{not} a priori imposing flatness at the vertices, the only relation between the gluing functions $G_{cc'}$'s is $G_{cc'}G_{c'c}=1$, and the vertex monodromies will be among\footnote{If the discretized surface $S$ is not simply connected, other topological degrees of freedom will emerge as monodromies around the non-contractible cycles of $S$.} the residual degrees of freedom one obtains after gluing.

After imposing the gluing constraints, one can show \cite[Theorem 1]{dupuis_origin_2020} that the phase space assigned to each link $[cc']$ is given by a copy of the Heisenberg double $\cD_\text{H}$ equipped with its STS symplectic form \eqref{STS form}:\footnote{From \eqref{eq:redefineDelta} we have $\bbDelta(\bullet)^{-1} = -(\bullet)^{-1}\bbd(\bullet)$.}
\begin{align}\label{eq:3D-Heisenberg}
    \Omega_{[cc']} = \frac12 \Big( \langle \bbDelta H^v_{cc'} \wedge \bbDelta L^c_{vv'}\rangle + \langle \bbDelta (H^{v'}_{cc'})^{-1} \wedge \bbDelta (L^{c'}_{vv'})^{-1}\rangle \Big)\,,
\end{align}
where, for $v,v'$ being 
0-cells (vertices) of the cellular decomposition such that $\pp[cc'] = v - v'$ and $G_c(x) =: \ell_c(x)h_c(x)$. We introduced $L^c_{vv'}, L^{c'}_{vv'}$ and $H^v_{cc'}{, H^{v'}_{cc'}}$:
\begin{align}
    \begin{cases}
    L^c_{vv'} := \ell_{c}(v)^{-1}\ell_{c}(v')\,,\\
   L^{c'}_{vv'} := \ell_{c'}(v)^{-1}\ell_{c'}(v') \,,
    \end{cases}
    \quad\text{and}\quad
    \begin{cases}
        H^v_{cc'} := h_{c}(v)h^{-1}_{c'}(v)\,,\\
        H^{v'}_{cc'}:=h_{c}(v')h^{-1}_{c'}(v')\,.
    \end{cases}
\end{align}
These variables are the classical version of Kitaev's triangular operators \cite{kitaev_fault-tolerant_2003,karlsson_kitaev_2017}. {For example, $H_{cc'}^v$ can \emph{loosely}\footnote{Cf. footnote \ref{fnt:K-e}.} be thought of as a $SU(2)$ holonomy along a path from $x_c$ to $x_{c'}$ through $v$; and similarly $L_{vv'}^c$ can be though of as a $AN(2,\mathbb{C})$ holonomy along a path from $v$ to $v'$ through $x_c$.}
{With this picture in mind one sees that the ``symplectic duality'' between $L_{vv'}^c$ and $H_{cc'}^v$, is reflected onto the duality between a cellular decomposition and its Poincaré dual, for if the $H_{cc'}^v$ are thought of as supported on the 1-skeleton of the cellular decomposition of $S$, the $L_{vv'}^c$ are supported on the 1-skeleton of its Poincaré dual.}

To more easily recognize \eqref{eq:3D-Heisenberg} as the STS symplectic form of equation \eqref{STS form}, note that the \textit{ribbon structure} equation
\begin{align}{\label{eq:ribbonstructureDiscrete}}
    L^{c}_{vv'}H^{v'}_{cc'} = H^{v}_{cc'}L^{c'}_{vv'}\,,
\end{align}
which readily follows from the gluing constraint $G_c(v)G_{c'}^{-1}(v) = G_c(v')G_{c'}^{-1}(v')$  for $v$ and $v'$ the endpoints of $[cc']$ \eqref{eq:gluing}, serves as the discrete analog of the continuum relation $G = \ell h = \tilde{h}\tilde{\ell}$. 

\begin{tolerant}{3000}
Equation \eqref{eq:3D-Heisenberg} shows that, upon a parameterization of the degrees of freedom in terms of holonomies in the bulk of each cell, and after imposing the gluing constraints,\footnote{At rigor, ``imposing the gluing constraints'' requires also the (coisotropic) reduction of such constraints. It is this step that reduces the infinite number of degrees of freedom present in \eqref{eq:Omega_S} to the finite number of degrees of freedom present in \eqref{eq:Omega_S-discrete}. It is also responsible for turning a $\bbd$-exact symplectic structure \eqref{eq:Omega_S} into one that might not be $\bbd$-exact \eqref{eq:Omega_S-discrete} (in the fully non-Abelian case). We recognize this reduction step has not been studied carefully (e.g., what is the flow generated by the gluing constraint?) but leave it for future work.} one is left with a finite dimensional phase space constituted by the product of a copy of the Heisenberg double for each link of the 1-skeleton of the cellular decomposition: 
\begin{equation}\label{eq:Omega_S-discrete}
    \Omega_{S}= \sum_{[cc']}\Omega_{[cc']}\,, \qquad \Omega_{[cc']} = -\Omega_\text{STS}|_{(\ell,h) = (L_{vv'}^c, H_{cc'}^{v'})}\,.
\end{equation}
Taking the semi-Abelian limit of the Heisenberg double, ${AN}(2,\mathbb{C}) \to \mathbb{R}^3$ and thus $SL(2,\mathbb{C}) \to ISO(3)\simeq  SU(2) \ltimes \mathbb{R}^3 \simeq T^*SU(2)$, one readily sees that this generalizes the $\Lambda=0$ result in 2+1D Euclidean gravity, where the discretization procedure famously yields a copy of $T^*\cG$, $\cG = SU(2)$, for each link \cite{Rovelli_2004}. 
\end{tolerant}

Typically, in the $\Lambda=0$ case, one subsequently imposes residual $SU(2)$ invariance at the 2-cells (faces) and flatness at the 0-cells (vertices) of the discretization of $S$. The former corresponds to a torsionless condition on the connection, i.e. to a discrete version of the dual flatness equation \eqref{eq:de+ee+Ke}, also called the Gauss constraint. In the discrete these take the form: 
\begin{align}
    \sum_{[vv'] \in \pp c}X_{vv'}^c = 0\,,
    \quad\text{and}\quad
    \prod_{[cc'] \ : \ v\in \pp[cc']}H_{cc'}^v = 1\,,
\end{align}
where the second is in fact an ordered product, yielding e.g. $H_{cc'}^v H_{c'c''}^v H_{c''c'''}^vH_{c'''c}^v=1$ etc. 

These discrete Gauss (or, $\cG^*$-flatness) and $\cG$-flatness conditions have precise analogs in the general Poisson-Lie case, with the exact same interpretation, namely:
\begin{align}\label{eq:eom-discrete}
    \prod_{[vv'] \in \pp c}L_{vv'}^c = 1\,,
    \quad\text{and}\quad
    \prod_{[cc'] \ : \ v\in \pp[cc']}H_{cc'}^v = 1\,,
\end{align}
where the semiflat limit is recovered by setting $L_{vv'}^c = 1 + X_{vv'}^c + \dots$ as usual (see Figure \ref{fig:triangulation-ribbon}).
\begin{figure}[t]
\centering
\resizebox{0.45\textwidth}{!}{
\begin{circuitikz}
\tikzstyle{every node}=[font=\normalsize]
\draw [line width=1pt, short] (4.25,10.75) .. controls (7,10.75) and (6.75,10.75) .. (9.25,10.75);
\draw [line width=1pt, short] (9.25,10.75) -- (11,7.5);
\draw [line width=1pt, short] (11,7.5) -- (3.25,7.75);
\draw [line width=1pt, short] (3.25,7.75) -- (4.25,10.75);
\draw [line width=1pt, short] (4.25,10.75) -- (6.75,9);
\draw [line width=1pt, short] (6.75,9) -- (9.25,10.75);
\draw [line width=1pt, short] (6.75,9) -- (11,7.5);
\draw [line width=1pt, short] (6.75,9) -- (3.25,7.75);
\draw [line width=1pt, dashed] (4.75,9.25) -- (6.75,10);
\draw [line width=1pt, dashed] (6.75,10) -- (9,9.25);
\draw [line width=1pt, dashed] (9,9.25) -- (6.75,8.25);
\draw [line width=1pt, dashed] (4.75,9.25) -- (6.75,8.25);
\node at (3.25,7.75) [circ] {};
\node at (4.25,10.75) [circ] {};
\node at (9.25,10.75) [circ] {};
\node at (11,7.5) [circ] {};
\node at (6.75,9) [circ] {};
\node at (9,9.25) [circ] {};
\node at (6.75,10) [circ] {};
\node at (6.75,8.25) [circ] {};
\node [font=\normalsize] at (11.25,7.25) {$v''$};
\node [font=\normalsize] at (3,7.5) {$v'$};
\node [font=\normalsize] at (6.75,9.25) {$v$};
\node at (4.75,9.25) [circ] {};
\node [font=\normalsize] at (9.25,9.5) {$c'$};
\node [font=\normalsize] at (7,8.25) {$c$};
\node [font=\normalsize] at (4.5,9) {$c'''$};
\node [font=\normalsize] at (7,10.25) {$c''$};
\draw [line width=1pt, dashed] (4.75,9.25) -- (2.5,9.75);
\draw [line width=1pt, dashed] (6.75,10) -- (6.75,11.75);
\draw [line width=1pt, dashed] (9,9.25) -- (11,9.75);
\draw [line width=1pt, dashed] (6.75,8.25) -- (6.75,6.5);
\end{circuitikz}
}
\resizebox{0.45\textwidth}{!}{
\begin{circuitikz}
\tikzstyle{every node}=[font=\LARGE]
\draw [line width=1pt, ->, >=Stealth] (7,10) -- (4.75,8.75);
\draw [line width=1pt, ->, >=Stealth] (4.75,8.75) -- (8,6.5);
\draw [line width=1pt, ->, >=Stealth] (8,6.5) -- (7,10);
\draw [line width=1pt, ->, >=Stealth, dashed] (7,10) -- (13.25,13.5);
\draw [line width=1pt, ->, >=Stealth, dashed] (8,6.5) -- (14.75,10.75);
\draw [line width=1pt, short] (13.25,13.5) -- (14.75,10.75);
\draw [line width=1pt, short] (13.25,13.5) -- (15.5,14);
\draw [line width=1pt, short] (14.75,10.75) -- (15.5,14);
\draw [line width=1pt, dashed] (15.5,14) -- (18.75,13.75);
\draw [line width=1pt, dashed] (14.75,10.75) -- (18.25,10.5);
\draw [line width=1pt, dashed] (4.75,8.75) -- (2.5,7.5);
\draw [line width=1pt, dashed] (8,6.5) -- (5.5,5);
\draw [line width=1pt, ->, >=Stealth, dashed] (13.25,13.5) -- (13.25,18.5);
\draw [line width=1pt, dashed] (15.5,14) -- (15.5,18.75);
\draw [line width=1pt, short] (13.25,18.5) -- (15.5,18.75);
\draw [line width=1pt, short] (13.25,18.5) -- (12,20.25);
\draw [line width=1pt, short] (12,20.25) -- (15.5,18.75);
\draw [line width=1pt, dashed] (12,20.25) -- (13.5,22.25);
\draw [line width=1pt, dashed] (15.5,18.75) -- (16.75,20.75);
\draw [line width=1pt, ->, >=Stealth, dashed] (13.25,18.5) -- (6.25,16.75);
\draw [line width=1pt, ->, >=Stealth, dashed] (6.25,16.75) -- (7,10);
\draw [line width=1pt, dashed] (12,20.25) -- (5.5,18.75);
\draw [line width=1pt, short] (5.5,18.75) -- (6.25,16.75);
\draw [line width=1pt, short] (5.5,18.75) -- (3.75,16.25);
\draw [line width=1pt, short] (3.75,16.25) -- (6.25,16.75);
\draw [line width=1pt, dashed] (4.75,8.75) -- (3.75,16.25);
\draw [line width=1pt, dashed] (3.75,16.25) -- (1.5,17.75);
\draw [line width=1pt, dashed] (5.5,18.75) -- (3.25,20.25);
\node [font=\LARGE] at (6.5,8.5) {$\alpha^{c}$};
\node [font=\LARGE] at (8.1,8.5) {$L_{v''v}^{c}$};
\node [font=\LARGE] at (5.75,10) {$L_{vv'}^{c}$};
\node [font=\LARGE] at (5.75,7.25) {$L_{v'v''}^{c}$};
\node [font=\LARGE] at (10.5,11.25) {$H_{cc'}^{v}$};
\node [font=\LARGE] at (14,16) {$H_{c'c''}^{v}$};
\node [font=\LARGE] at (9.25,18) {$H_{c''c'''}^{v}$};
\node [font=\LARGE] at (5.9,13.5) {$H_{c'''c}^{v}$};
\end{circuitikz}
}
 \caption{We illustrate how the ribbon variables are associated to a triangulation and its dual complex. In particular, the expression of the constraint makes clear the action of the symmetry transformation on the phase space variables.}
\label{fig:triangulation-ribbon}
\end{figure}
To solidify the various Drinfel'd double actions as Poisson-Lie symmetries of the discrete model, according to our Definition \ref{def:PL}, they must also leave invariant (on-shell) the discrete equations of motion \eqref{eq:eom-discrete}. 
Note that a Drinfel'd dressing action has to act diagonally on various elements of the discrete phase space \eqref{eq:Omega_S-discrete}. E.g. a left $\mathfrak{su}(2)$ action is attached to a 2-cell of the cellular decomposition and thus acts on all and only the $L_{vv'}^c$ and $H_{cc'}^v$ adjacent to it. However, for consistency, the symmetry parameter has to be parallel transported from one cell to the next. Using left rotations \eqref{eq:drinfeldsymleftG} as an example, one has e.g. \cite{dupuis_origin_2020}:
\begin{align}
    \delta_{\alpha}^{L}(\ell_{1}\ell_{2}) = (\delta^{L}_{\alpha}\ell_{1})\ell_{2}+\ell_{1}(\delta_{\alpha\lhd\ell_{1}}^{L}\ell_{2})= \alpha(\ell_{1}\ell_{2})-\ell_{1}\ell_{2}(\alpha\lhd(\ell_{1}\ell_{2})) \,.
\end{align}
Thus, on-shell invariance of the discrete Gauss constraint in \eqref{eq:eom-discrete} under left rotations follows from:
\begin{equation}
    \begin{aligned}
        \delta_{\alpha^c}^{L}(L^c_{vv'}L^c_{v'v''}L^c_{v''v}) &= \alpha^c(L^c_{vv'}L^c_{v'v''}L^c_{v''v})-L^c_{vv'}L^c_{v'v''}L^c_{v''v}(\alpha^c\lhd(L^c_{vv'}L^c_{v'v''}L^c_{v''v})) \\
        &\approx \alpha^c-(\alpha^c\lhd 1) \approx \alpha^c-\alpha^c \approx 0\,.
    \end{aligned}
\end{equation}
The on-shell invariance of the discrete flatness constraint in \eqref{eq:eom-discrete} is slightly more subtle. As a result of the ribbon structure equation \eqref{eq:ribbonstructureDiscrete} the action of $\alpha^{c}$ on a closed loop of holonomies $H_{cc'}^{v}H_{c'c''}^{v}H_{c''c'''}^{v}H_{c'''c}^{v}$ acts from opposite directions on $H^{v}_{cc'}$ and $H^{v}_{c'''c}$, and does not act on the holonomies $H_{c'c''}^{v}$ and $H_{c''c'''}^{v}$ which are not passing through $c$. Thus we have, 
\begin{equation}
    \begin{aligned}
     \delta_{\alpha^c} ( H^v_{cc'}H^v_{c'c''}H^v_{c''c'''}H^v_{c'''c}) &= (\delta^L_{\alpha^c}  H^v_{cc'}  H^v_{c'c''}H^v_{c''c'''}H^v_{c'''c}) + ( H^v_{cc'}H^v_{c'c''}H^v_{c''c'''} \delta^R_{\alpha^c} H^v_{c'''c})  \\
     &= \alpha^c   H^v_{cc'}H^v_{c'c''}H^v_{c''c'''}H^v_{c'''c} -    H^v_{cc'}H^v_{c'c''}H^v_{c''c'''}H^v_{c'''c} \alpha^c \approx 0\,,
    \end{aligned}
\end{equation}
where we have used a left action on $H^v_{cc'}$ and a right action on $H^v_{c'''c}$. Similar calculations proceed for checking the on-shell invariance of the discrete equations of motion under left and right translations generated by some $Y^v\in \mathfrak{an}(2,\mathbb{C})$, now attached to a \emph{vertex} of the cellular decomposition. 

The Poisson-Lie momentum maps associated to these Drinfel'd double actions are then the discrete analogs of the standard results \eqref{eq:STSchargesRight}, \eqref{eq:STSchargesLeft}, up to a global conventional minus sign as was captured in \eqref{eq:Omega_S-discrete}. That is, the various Drinfel'd double actions respectively admit the PL flow equations \cite{dupuis_origin_2020}:
\begin{align}
        \mathbb{i}_{\mathbb{r}(\alpha^{c'})}\Omega_{[cc']} &= \langle\alpha^{c'},\bbDelta (L^{c'}_{vv'})^{-1}\rangle\,,&  \mathbb{i}_{\mathbb{r}(Y^{v'})}\Omega_{[cc']} &= -\langle Y^{v'},\bbDelta (H^{v'}_{cc'})^{-1}\rangle\,,\\
        \mathbb{i}_{\mathbb{l}(\alpha^{c})}\Omega_{[cc']} &= \langle \alpha^{c},\bbDelta L^{c}_{vv'}\rangle\,,& \mathbb{i}_{\mathbb{l}(Y^{v})}\Omega_{[cc']} &=-\langle Y^{v},\bbDelta H^{v}_{cc'}\rangle\,.
\end{align}
Summarizing, the rather non-trivial fact which we have arrived at is that by going from the continuum to the discrete, i.e. by introducing a cellular decomposition of the 2-surface $S$ and subsequently imposing the constraints in the bulk of each cell, we traded a combination of trivial \eqref{eq:gravitytrivialgenerators} and Noetherian (gauge) \eqref{eq:pnshellsymmetries3dgr} symmetries for a set of \textit{residual non-trivial Poisson-Lie symmetries}. These symmetries are constituted by copies of the Drinfel'd double $\cD_\text{D} \simeq AN(2,\mathbb{C})\bowtie SU(2) $ acting on copies of the corresponding Heisenberg double $\cD_H$ associated to the 1-cells of the discretization. This was in turn produced by combining, for each 1-cell $[cc'] = \pp c \cap \pp c'$, two copies of the \klimcik{}-\severa{} phase space, each corresponding to the contribution to $\Omega_{[cc']}$ coming from either one of the two 2-cells $c$ and $c'$. 

\section{Conclusion \& outlook}\label{sec:Conclusion}
In this work we investigated how (global) Poisson-Lie symmetries, governed by the action of a Poisson-Lie group $(\cG,\pi_{\cG})$ on a symplectic phase space, manifest themselves in various low dimensional field theories from a Lagrangian perspective using the covariant phase space framework. 

The hallmark of a Poisson Lie action is that its (generalized) momentum map, or  ``charge'', is valued in a \emph{group}, $\cG^*$. This generalizes the notion of momentum maps for Hamiltonian actions, which  are instead valued in the dual of the Lie algebra, $\fg^*$, a linear space.

As a consequence of this fact, Poisson-Lie actions fail to be symplectomorphisms (\eqref{eq:symplectomorphism-new}, Theorem \ref{thm:PLequiv}). Note that this failure happens in a controlled manner, since the terms encoding the lack of symplectomorphicity take a very specific shape, {namely that of
\eqref{eq:symplectomorphism-new} or \eqref{lusymplectomorphism}:}
\begin{equation}\label{eq:concl-sympl}
    \bbL_{\brho(\alpha)}\Omega \approx {\mp} \frac12 \langle [\mathbb{Q}, \mathbb{Q}]_\ast, \alpha\rangle\,,
{
\quad\text{iff}\quad
\bbL_{\brho(\alpha)}{\pi} \approx ( \brho \wedge\brho ) \delta( \alpha)\,,
}
\end{equation}    
where $\mathbb{Q}=\mathbb{i}_{\brho(\cdot)}\Omega$, while the Lie bracket $[\cdot,\cdot]_\ast : \fg^*\wedge\fg^*\to \fg^*$ and cocycle $\delta:\fg \to \fg\wedge \fg$ are dual objects containing the same information that fully encodes both the Poisson structure on $\cG$ and the Lie group structure on $\cG^* = \exp\fg^*$; both $[\cdot,\cdot]_\ast$ and $\delta$ vanish in the Abelian limit of $\cG^*$. Notice that $[\cdot,\cdot]_\ast$ and $\delta$ must satisfy particular consistency conditions with the Lie algebra $(\fg,[\cdot,\cdot])$ to provide a consistent Poisson-Lie action.

From a Lagrangian perspective, where $\Omega$ is the covariant phase space symplectic structure on the space of solutions to the equations of motion $\cF_\text{EL}$, the above generalization immediately implies that Poisson-Lie actions with a non-Abelian $\cG^*$ go \emph{beyond} the standard Noetherian framework of Lagrangian symmetries \eqref{eq:inv-Lagr}, since the latter naturally yield Hamiltonian actions and $\fg^*$-valued charges. 

Not only does one not expect a Poisson-Lie symmetry to leave the Lagrangian invariant, but from equation \eqref{eq:concl-sympl} one can also argue that a Poisson-Lie symmetry will in general also fail to be a local action---which is a precondition on any Lagrangian symmetry. Indeed in 1+1 or higher dimensions, since $\Omega$ is the integral of a local expression in the fields and their variations, if $\brho$ were a local action then we would have on the left-hand side of \eqref{eq:concl-sympl} the integral of a local expression, but a product of \emph{two} integrals of a local expression on the right-hand side. This tension can be resolved by means of some non-local features of $\brho$ or if, somehow, those integrals localize at points. Both situations arise in the examples analyzed in this article, as we will soon recall. 

In summary, Poisson-Lie symmetries go beyond the Noetherian framework in many respects.

As discussed in \S\ref{subsec:NewNoetherToPLgroupSymm}, the Noether construction of the current, which relies on the invariance of the Lagrangian off-shell up to boundary terms,  is a very powerful tool to ensure that we can construct a charge which does not depend on the chosen slicing, i.e. defined in a covariant way. Not having this prescription at hand considerably complicates the covariant construction of a conserved charge. We thus proposed in Definition \ref{def:PL} a new framework to understand symmetries in the covariant phase space which is capable of encompassing not only Lagrangian but also non-trivial Poisson-Lie symmetries as well. Key to the proposal is a delicate balance on the degree of non-locality of the charges. As mentioned at the end of  \S\ref{subsec:NewNoetherToPLgroupSymm}, it would be interesting in the future to relate the conservation of the conserved charges as defined in Definition \ref{def:PL} with Dirac's hypersurface deformation algebra.

\begin{table}[!b]\label{table1}
\centering
\caption{The different types of action and charges for the different studied models.\label{table}}
\renewcommand*{\arraystretch}{1.1}
\resizebox{\textwidth}{!}{
\begin{tabular}{clcc}
\multicolumn{1}{l}{{\textbf{Dim.}}} & {\textbf{Model}}                                   & \multicolumn{1}{l}{{\textbf{Poisson-Lie action}}} & \multicolumn{1}{l}{{\textbf{Poisson-Lie charge}}} \\ \hline
\multicolumn{1}{|c|}{\multirow{2}{*}{0+1D}}  & \multicolumn{1}{l|}{Generalized Spinning Top}          & \multicolumn{1}{c|}{local, off-shell}                 & \multicolumn{1}{c|}{off-shell}                        \\ \cline{2-4} 
\multicolumn{1}{|c|}{}                       & \multicolumn{1}{l|}{Deformed Generalized Spinning Top} & \multicolumn{1}{c|}{local, off-shell}                 & \multicolumn{1}{c|}{off-shell}                        \\ \hline
\multicolumn{1}{|c|}{\multirow{4}{*}{1+1D}}  & \multicolumn{1}{l|}{Second Order KS String (Open)}     & \multicolumn{1}{c|}{non-local, on-shell}                                 & \multicolumn{1}{c|}{on-shell}                                 \\ \cline{2-4} 
\multicolumn{1}{|c|}{}                       & \multicolumn{1}{l|}{Second Order KS String (Closed)}   & \multicolumn{1}{c|}{non-local, on-shell}                                 & \multicolumn{1}{c|}{{identity (constraint)}}                                 \\ \cline{2-4} 
\multicolumn{1}{|c|}{}                       & \multicolumn{1}{l|}{First Order KS String (Open) (*)}      & \multicolumn{1}{c|}{local, off-shell}                                 & \multicolumn{1}{c|}{off-shell}                                 \\ \cline{2-4} 
\multicolumn{1}{|c|}{}                       & \multicolumn{1}{l|}{First Order KS String (Closed) (*)}    & \multicolumn{1}{c|}{local, off-shell}                                 & \multicolumn{1}{c|}{{identity (constraint)}}                  \\ \hline
\multicolumn{1}{|c|}{2+1D}                   & \multicolumn{1}{l|}{Palatini-Cartan Gravity with C.C.}                 & \multicolumn{1}{c|}{local, {on}-shell(**)}                                 & \multicolumn{1}{c|}{on-shell (**)}                                 \\ \hline
&\multicolumn{1}{l}{{\footnotesize (*) lacks manifest spacetime covariance.}}\\
&\multicolumn{1}{l}{{\footnotesize (**) requires marked points.}}\\
\end{tabular}}
\end{table}

We tested this new framework through in depth analyses of Poisson-Lie symmetric low dimensional field theories. A summary of the different cases is given in the Table \ref{table}. This included the 0+1D generalized spinning top and deformed spinning top, the 1+1D \klimcik{} and \severa{}  (KS) open and closed strings, and 2+1D gravity with cosmological constant. The KS strings were analyzed both in their first order and second order formulations, although most of our focus was laid on the second order formulation because it is the only manifestly spacetime covariant formulation of the model and also the least studied in the literature. 

These models exhibited both proposed resolutions to the aforementioned issue of non-locality. This was most clearly illustrated in our analysis of the second order and first order KS models, respectively through a \emph{non-local} Poisson-Lie symmetry we called ``twisted right rotations'' (cf. \S\ref{subsec:twistedsymmetry}, in particular Definition \ref{def:KStwistedsymmetry} and Theorem \ref{thm:KSPLsymmetry}), and a localization of the conserved Poisson-Lie charges onto the endpoints of the string (cf. \S\ref{sec:1st-order}). 

We would like to conclude by highlighting potential future areas of exploration: Poisson-Lie symmetries in higher dimensional field theories, and the connection between Poisson-Lie symmetric field theories and field theories on non-commutative spacetimes.

The most obvious next step is to find more covariant field theories with Poisson-Lie symmetry, especially in higher dimensions than those studied so far. We are particularly interested in 4D gravity, where early signs suggest that ideas from quantum group representation theory might be important in the quantum version of the theory \cite{major_quantum_1996,dupuis_observables_2014,haggard_encoding_2016,han_complex_2025}. This could help build a strong and general framework for further understanding Poisson-Lie symmetries, and also clarify how quantum group symmetries originate in classical field theory. However, similar challenges are expected to arise, especially those related to non-locality. Solving these issues in higher-dimensional theories may require concepts from higher gauge theory, such as surface-ordered integrals, 2-connections, and 2-gauge symmetries \cite{baez_invitation_2011}. 

Upon quantization, Poisson-Lie symmetries give rise to quantum group symmetries. On the other hand, spacetimes with quantum group symmetries are non-commutative spaces \cite{Connes1995-mq, majid_1995, relativelocality}. We are therefore intrigued by the prospect of an interesting interplay between Poisson-Lie symmetric field theories and the formulation of field theories on non-commutative (quantum) spacetimes (see for example \cite{Szabo_2003, Girelli:2010wi, Freidel:2013rra} for a very limited list of references on this vast topic). With this in mind, we note that Noether symmetries have already been studied for some specific examples of non-commutative geometry \cite{Gerhold:2000ik, agostini_generalizing_noether_hopfsymmetries}. Thus, it would be interesting to explore {if and} how this non-commutative field theory framework can be characterized in terms of our  covariant phase space approach to Poisson-Lie symmetries. This would require carefully distinguishing and relating non-commutative spacetimes with non-commutative field spaces. At first glance, these are expected to match only when the target space of the field theory is spacetime itself, as is the case in particle and string mechanics.

\section*{Acknowledgments}
Research at Perimeter Institute is supported in part by the Government of Canada through the Department of Innovation, Science and Economic Development and by the Province of Ontario through the Ministry of Colleges, Universities, Research Excellence and Security.
 
\paragraph{Funding information}
We acknowledge the support of the Natural Sciences and Engineering Research Council of Canada (NSERC) through the Discovery Grant 110\_2018\_2019\_Q1\_4059 awarded to FG, the Discovery Grant 110\_2025\_2026\_Q1\_2141 awarded to AR, and the Postgraduate Scholarship-Doctoral (PGS D) grant 588591-2024 awarded to CP.

\appendix
\numberwithin{equation}{section}

\section{Constructing $E^{\mu\nu}_{ab}$}\label{app:constructing-E}

In this appendix we show how a construction of \klimcik{} and \severa{} \cite{klimcik_t-duality_1996} can be generalized to produce a tensor $E^{\mu\nu}_{ab}$ that satisfies the fundamental KS conditions \eqref{eq:E-invertible} and \eqref{eq:Evariation},  which we report here for the reader's convenience:
\begin{align}\label{eq:E-invertible-app}
    &\text{there exists a} \ \tE^{bc}_{\mu\nu}\,, \ \text{such that} \ E^{\mu\nu}_{ab} \tE^{bc}_{\nu\rho} = \delta^{\mu}_\rho \delta^c_a\,,
\\
\label{eq:Evariation-app}
&\bbd E^{\mu\nu}_{ab} =  (\bbDelta^c \th )\ \tilde c^{a'b'}{}_c E^{\mu\mu'}_{aa''} E^{\nu\nu'}_{bb''} (\Ad_\th)^{a''}{}_{a'}(\Ad_\th)^{b''}{}_{b'}\epsilon_{\mu'\nu'}\,.
\end{align}
Once this construction is established, we use it to give a first order version of the KS-model in the form of a Poisson-Lie $\sigma$-model featuring a symmetry breaking term.

Henceforth, we will assume that the target space $\cG$ has been coordinatized with arbitrary coordinates $z^\alpha$. Contravariant tensors on $\cG$ are then denoted by $\bullet^{\alpha\beta\cdots}$ indices. These are not to be confused with $\bullet^{ab\cdots}$ indices, which denote elements of $\fg\otimes \fg \otimes \cdots $. However, by means of a (right) trivialization, $T\cG \simeq \cG\times \fg$, contravariant $k$-\emph{tensors} and $\fg^{\otimes k}$-valued \emph{functions} can be appropriately mapped onto each other. This will play a role below.

Following \klimcik{} and \severa{}, consider on $\cG$ a symmetric, invertible, and right-invariant \emph{tensor} $M^{\alpha\beta}$, and a multiplicative Poisson \emph{tensor} $\pi^{\alpha\beta}$,
\begin{align}\label{eq:RandPi-tensors}
    \bbL_{\mathbb{r}(\xi)} M^{\alpha\beta} = 0\,,
    \quad\text{and}\quad
    \bbL_{\mathbb{r}(\xi)} \pi^{\alpha\beta} = -(\mathbb{r}^\alpha \otimes \mathbb{r}^\beta)(\delta (\xi))\,,
\end{align}
where 
\begin{align}
\delta : \fg \to \fg\wedge \fg\,, \quad \xi = \xi^a \tau_a \mapsto \delta \xi= \xi^c \tilde{c}_c{}^{ab}   \tau_a\otimes \tau_b \,,
\end{align}
for $\tilde c_c{}^{ab} = \tilde c_c{}^{[ab]}$ a set of structure constants for the Lie algebra $(\fg^*,[\cdot,\cdot]_\ast)$. By Theorem \ref{thm:Drinfeld}, the condition that the Poisson bivector $\pi$ on $\cG$ is multiplicative, i.e. that $(\cG,\pi)$ is a Poisson-Lie group (Definition \ref{def:PLgroup}), is equivalent to asking that $\mathfrak{d} = \fg\bowtie\fg^*$ is a classical double.

Using the right invariant vector fields $\{\mathbb{r}(\tau_a)\}_{a=1}^{\text{dim}(\fg)}$ as a basis of vector fields on $\cG$, understood as a $C^\infty(\cG)$-module, we can write (with a slight abuse of notation):
\begin{align}
    M^{\alpha\beta} = \sum_{a,b} M^{ab} \mathbb{r}^\alpha(\tau_a)\otimes \mathbb{r}^\beta(\tau_b)\,,
    \quad\text{and}\quad
    \pi^{\alpha\beta} = \sum_{a,b} \pi^{ab} \mathbb{r}^\alpha(\tau_a)\otimes \mathbb{r}^\beta(\tau_b)\,,
\end{align}
where $M^{ab} \doteq (\bbDelta^a \th^{-1}\otimes \bbDelta^b \th^{-1})(R)$ and $\pi^{ab}\doteq  (\bbDelta^a \th^{-1}\otimes \bbDelta^b \th^{-1})(\pi)$ are a collection of $(\fg\otimes \fg)$-valued \emph{scalar functions} on $\cG$, respectively symmetric and skew under the exchange of the indices $a$ and $b$. Since $\bbL_{\mathbb{r}(\xi)} \bbDelta \th^{-1} =0$, one has that the above conditions can be rewritten as 
\begin{align}\label{eq:RandPi-scalars}
    \bbd M^{ab} = 0\,, \quad \text{and} \quad \bbd \pi^{ab} = -\bbDelta^c \th \tilde{c}_c{}^{a'b'} (\Ad_\th)^a_{a'}(\Ad_\th)^b_{b'} \, .
\end{align}

From the $(\fg\otimes \fg)$-valued constant $M^{ab}$ and scalar $\pi^{ab}$, one introduces
\begin{align}
    \widetilde{E}^{ab}\doteq M^{ab} + \pi^{ab}\,,
    \quad\text{and}\quad
    E_{ab} \doteq (\widetilde E^{ab})^{-1}\,,
\end{align}
which exists for $M^{ab}$ ``large enough'', i.e. if $\pi M^{-1}$ has eigenvalues of norm strictly less than 1. In turn, decomposing $E_{ab}$ into its symmetric and skew parts, one also defines
\begin{align}
    G_{ab}\doteq E_{(ab)} \,,
    \quad\text{and}\quad
    B_{ab}\doteq E_{[ab]}\,.
\end{align}
The first can be interpreted as a metric on $\cG$ and the second as a 2-magnetic potential, also known as a Kalb-Ramond potential (see Appendix \ref{app:KS-eom}).

Finally, using the above ingredients together with a non-degenerate (Lorentzian) worldsheet metric $\gamma_{\mu\nu}$ and its compatible Levi-Civita tensor $\epsilon_{\mu\nu}$, one defines:
\begin{align}\label{eq:EGgBe}
    E^{\mu\nu}_{ab}(\th,x) \doteq G_{ab}(\th) \gamma^{\mu\nu}(x) + B_{ab}(\th)\epsilon^{\mu\nu}(x)\,.
\end{align}

The following theorem is a simple generalization of results of \klimcik{} and \severa{}:

\begin{theorem}[\klimcik{} and \severa{}]
    Let $M$ and $\pi$ be as in \eqref{eq:RandPi-tensors} or, equivalently, \eqref{eq:RandPi-scalars}. Then, if $\pi M^{-1}$ has eigenvalues of norm smaller than 1, $E^{\mu\nu}_{ab}(\th,x)$ constructed as above is well-defined and satisfies the KS conditions (\ref{eq:E-invertible-app},\ref{eq:Evariation-app}), with inverse
    \begin{align}
        \widetilde E^{ab}_{\mu\nu} = M^{ab}\gamma_{\mu\nu} + \pi^{ab}\epsilon_{\mu\nu}\,.
    \end{align}
\end{theorem}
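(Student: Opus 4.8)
The plan is to route the whole verification through the candidate inverse $\tE^{ab}_{\mu\nu} = M^{ab}\gamma_{\mu\nu} + \pi^{ab}\epsilon_{\mu\nu}$, exploiting a fact already established in the proof of Proposition \ref{prop:KSEtransformIFFdouble}: once invertibility holds, the KS condition \eqref{eq:Evariation-app} on $E^{\mu\nu}_{ab}$ is \emph{equivalent} to the much simpler ``inverse form'' $\bbd\tE^{ab}_{\mu\nu} = -\,\bbDelta^c\th\,\tilde c_c{}^{a'b'}(\Ad_\th)^a{}_{a'}(\Ad_\th)^b{}_{b'}\,\epsilon_{\mu\nu}$. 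First I would dispose of well-definedness: writing $\tE^{ab} = M^{ab}+\pi^{ab}$ as $M(1+M^{-1}\pi)$, the hypothesis that $\pi M^{-1}$ (equivalently its conjugate $M^{-1}\pi$) has all eigenvalues of modulus $<1$ guarantees that $-1$ is not an eigenvalue of $M^{-1}\pi$, so $1+M^{-1}\pi$, and hence $\tE^{ab}$, is invertible. This renders $E_{ab}=(\tE^{ab})^{-1}$, and thus $G_{ab}=E_{(ab)}$, $B_{ab}=E_{[ab]}$ and $E^{\mu\nu}_{ab}$ of \eqref{eq:EGgBe}, well-defined.

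Next I would establish the invertibility condition \eqref{eq:E-invertible-app} by computing $E^{\mu\nu}_{ab}\tE^{bc}_{\nu\rho}$ directly. Expanding the product into four terms and using the two-dimensional worldsheet identities $\gamma^{\mu\nu}\gamma_{\nu\rho}=\delta^\mu_\rho$, the Lorentzian identity $\epsilon^{\mu\nu}\epsilon_{\nu\rho}=\delta^\mu_\rho$, and $\gamma^{\mu\nu}\epsilon_{\nu\rho}=\epsilon^{\mu\nu}\gamma_{\nu\rho}=\epsilon^\mu{}_\rho$, the spacetime structure collapses to $(GM+B\pi)_a{}^{c}\,\delta^\mu_\rho + (G\pi+BM)_a{}^{c}\,\epsilon^\mu{}_\rho$, where the internal factors are ordinary matrix products. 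Since $\delta^\mu_\rho$ (nonzero trace) and $\epsilon^\mu{}_\rho$ (traceless) are linearly independent, \eqref{eq:E-invertible-app} reduces to the two internal identities $GM+B\pi=1$ and $G\pi+BM=0$. I would prove both at once by setting $P=(M+\pi)^{-1}=E$ and $Q=(M-\pi)^{-1}=E^{T}$ (using $M^{T}=M$, $\pi^{T}=-\pi$), so that $G=\tfrac12(P+Q)$ and $B=\tfrac12(P-Q)$. Then, using $P(M+\pi)=Q(M-\pi)=1$,
\begin{align}
GM+B\pi &= \tfrac12\big[P(M+\pi)+Q(M-\pi)\big]=1\,, \\
G\pi+BM &= \tfrac12\big[P(M+\pi)-Q(M-\pi)\big]=0\,,
\end{align}
which closes this step and simultaneously exhibits $\tE^{ab}_{\mu\nu}$ as the stated inverse.

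Finally, for the KS condition \eqref{eq:Evariation-app} itself I would invoke the equivalence from the proof of Proposition \ref{prop:KSEtransformIFFdouble} (valid now that invertibility is secured) and verify only the inverse form: since $\gamma_{\mu\nu}$ and $\epsilon_{\mu\nu}$ are field-independent, $\bbd\tE^{ab}_{\mu\nu}=(\bbd M^{ab})\gamma_{\mu\nu}+(\bbd\pi^{ab})\epsilon_{\mu\nu}$, and the relations \eqref{eq:RandPi-scalars}, namely $\bbd M^{ab}=0$ and $\bbd\pi^{ab}=-\bbDelta^c\th\,\tilde c_c{}^{a'b'}(\Ad_\th)^a{}_{a'}(\Ad_\th)^b{}_{b'}$, reproduce the required expression immediately. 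I expect the only delicate point—indeed the entire analytic content—to lie in the invertibility computation: keeping the Lorentzian signs correct in the two-dimensional $\epsilon$-contractions (where $\epsilon^{\mu\nu}\epsilon_{\rho\sigma}=-(\delta^\mu_\rho\delta^\nu_\sigma-\delta^\mu_\sigma\delta^\nu_\rho)$) and preserving the order of the noncommuting internal matrices $G,B,M,\pi$, since $G$ and $B$ do \emph{not} commute with $M$ and $\pi$. The $P,Q$ substitution is precisely what tames this noncommutativity and makes the two internal identities transparent.
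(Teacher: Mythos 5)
Your proof is correct, and it follows the same overall architecture as the paper's — reduce everything to two internal matrix identities plus the ``inverse form'' of the KS condition — but it executes the two computational steps by genuinely different means. Where the paper \emph{constructs} $G_{ab}$ and $B_{ab}$ through the convergent power series \eqref{eq:RGandRB} (checking symmetry and skewness term by term, then verifying $(M+\pi)(G+B)=1$, $MG+\pi B=1$ and $\pi G+MB=0$ from those series), you obtain invertibility of $\tE^{ab}=M^{ab}+\pi^{ab}$ abstractly from the fact that $-1\notin\mathrm{spec}(M^{-1}\pi)$, and then derive the internal identities $GM+B\pi=1$, $G\pi+BM=0$ in one line each via the substitution $P=(M+\pi)^{-1}=E$, $Q=(M-\pi)^{-1}=E^{T}$. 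This buys you two things: the noncommutativity of $G,B,M,\pi$ is tamed transparently (symmetry of $G$ and skewness of $B$ are also immediate from $E^{T}=Q$), and your argument in fact goes through under the weaker hypothesis that $-1$ is not an eigenvalue of $M^{-1}\pi$, whereas the paper's geometric series genuinely needs the spectral-radius bound for convergence. Your other shortcut — invoking the equivalence, established inside the proof of Proposition \ref{prop:KSEtransformIFFdouble} via $\bbd\tE=-\tE(\bbd E)\tE$, between \eqref{eq:Evariation-app} and the condition $\bbd\tE^{ab}_{\mu\nu}=(\bbd\pi^{ab})\,\epsilon_{\mu\nu}$ — is legitimate (the reverse substitution reproduces \eqref{eq:Evariation-app} using $E^{\nu'\nu}_{b''b}=E^{\nu\nu'}_{bb''}$), and it replaces the paper's double symmetric/skew decomposition, i.e.\ its intermediate equations for $\bbd G_{ab}$, $\bbd B_{ab}$ and \eqref{eq:A15}, with an immediate read-off from \eqref{eq:RandPi-scalars}. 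Two minor observations: you verify $E^{\mu\nu}_{ab}\tE^{bc}_{\nu\rho}=\delta^{\mu}_{\rho}\delta^{c}_{a}$, matching the stated condition \eqref{eq:E-invertible-app} directly, while the paper checks the opposite order $\tE^{ab}_{\mu\nu}E^{\nu\rho}_{bc}$; the two agree by finite-dimensionality. And your Lorentzian contraction identities ($\epsilon^{\mu\nu}\epsilon_{\nu\rho}=\delta^{\mu}_{\rho}$, with $\epsilon^{\mu\nu}\epsilon_{\rho\sigma}=-(\delta^{\mu}_{\rho}\delta^{\nu}_{\sigma}-\delta^{\mu}_{\sigma}\delta^{\nu}_{\rho})$) are the correct ones for the paper's convention $\epsilon_{01}=-\epsilon^{01}=1$ stated at the end of Appendix \ref{app:constructing-E}.
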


\begin{proof}
    First we decompose \eqref{eq:Evariation-app} onto its symmetric and skew parts:
\begin{equation}
    \begin{aligned}
    \bbd E^{\mu\nu}_{ab}
     =\ & (\bbDelta^c\th)\ \tilde c_{v}{}^{a'b'} \\
     &\times\Big( ( G_{a''a}G_{b''b}  - B_{a''a}B_{b''b} )\epsilon^{\mu\nu} + ( G_{a''a} B_{b''b} - B_{a''a}G_{b''b} )\gamma^{\mu\nu} \Big)(\Ad_{\th})^{a''}_{a'}(\Ad_{\th})^{b''}_{b'}\,,
    \end{aligned}   
\end{equation}
as well as 
\begin{align}
\bbd E^{\mu\nu}_{ab}
& = (\bbd B_{ab}) \epsilon^{\mu\nu}
+ (\bbd G_{ab}) \gamma^{\mu\nu}\,.
\end{align}
Comparing, one obtains the following rewriting of \eqref{eq:Evariation-app}:
\begin{align}
     \begin{dcases}
\bbd B_{ab}  = (\bbDelta^c\th)\ \tilde{c}_c{}^{a'b'} ( G_{a''a}G_{b''b}  -B_{a''a}B_{b''b})(\Ad_{\th})^{a''}_{a'}(\Ad_{\th})^{b''}_{b'}\,,\\
\bbd G_{ab}  =  (\bbDelta^c\th)\ \tilde{c}_c{}^{a'b'} (   G_{a''a} B_{b''b} - B_{a''a}G_{b''b})(\Ad_{\th})^{a''}_{a'}(\Ad_{\th})^{b''}_{b'}\,.
\end{dcases}
\end{align}
It is not hard to see that these are the symmetric and skew components of the following expression:
\begin{align}\label{eq:A15}
    \bbd E_{ab} = (\bbDelta^c\th)\ \tilde{c}_c{}^{a'b'} E_{aa''}E_{b''b} (\Ad_{\th})^{a''}_{a'}(\Ad_{\th})^{b''}_{b'} \,,
\end{align}
which is essentially the KS condition as originally expressed by \klimcik{} and \severa{} (see \cite{klimcik_dual_1995} and the end of this appendix for an explicit comparison).

Contracting this expression from the left and from the right with $\widetilde{E}^{ab} = (E_{ab})^{-1}$, which as we will soon prove exists if $\pi G^{-1}$ has eigenvalues of norm smaller than 1, one can rewrite \eqref{eq:A15}, and therefore \eqref{eq:Evariation-app}, as
\begin{align}
    \bbd \widetilde{E}^{ab} = - (\bbDelta^c\th)\ \tilde{c}_c{}^{a'b'} (\Ad_{\th})^{a}_{a'}(\Ad_{\th})^{b}_{b'}\,.
\end{align}
Finally, we get to the desired equivalence between \eqref{eq:Evariation-app} and  \eqref{eq:RandPi-tensors} (written in the form \eqref{eq:RandPi-scalars}) by decomposing this equation once again on its symmetric and skew components, to obtain respectively:
\begin{align}
    \bbd M^{ab} = 0\,, \quad \text{and}\quad \bbd \pi^{ab} = - (\bbDelta^c\th)\ \tilde{c}_c{}^{a'b'} (\Ad_{\th})^{a}_{a'}(\Ad_{\th})^{b}_{b'}\,.
\end{align}

We have thus shown that the equivalence between \eqref{eq:Evariation-app} and \eqref{eq:RandPi-tensors} provided that $\widetilde{E}^{ab}_{\mu\nu}$ is invertible.
We need now to show that $\widetilde{E}^{ab}$ is invertible if $\pi G^{-1}$ has eigenvalues of norm smaller than 1. We do so by showing that, given this hypothesis, we can define a $G_{ab}$ and $B_{ab}$ with the right properties.

Let $G_{ab}$ and $B_{ab}$ be defined by\footnote{This formula is analogous to the following decomposition of $1/(1+x)$: $$\frac1{1+x} = \frac{1}{1-x^2} - \frac{x}{1-x^2}\,.$$ }
\begin{align}\label{eq:RGandRB}
    MG \doteq (1 - \pi M^{-1} \pi M^{-1})^{-1}\,,
    \quad\text{and}\quad
    MB \doteq - \pi M^{-1}(1 - \pi M^{-1} \pi M^{-1})^{-1}\,,
\end{align}
where the right hand side can be defined through a power series which converges thanks to the assumption on the norm of the eigenvalues of $\pi M^{-1}$. From this power series it is easy to see that $G$ (resp. $B$) is symmetric (resp. skew) since each term in the power series defining it is so. To check that $\widetilde{E}^{ab}=M^{ab}+\pi^{ab}$ is invertible, we show that $E_{ab} = G_{ab} + B_{ab}$ is indeed its inverse:
\begin{equation}
    \begin{aligned}
        (M+\pi)(G+B) &= (1+\pi M^{-1})(MG+MB) \\& = (1+\pi M^{-1}) (1 - \pi M^{-1} )(1 - \pi M^{-1}\pi M^{-1} )^{-1} = 1\,.
    \end{aligned}
\end{equation}

Finally, to conclude the proof of the theorem, we need to show that 
\begin{align}
    \widetilde{E}^{ab}_{\mu\nu}\doteq M^{ab}\gamma_{\mu\nu} + \pi^{ab}\epsilon_{\mu\nu}\,,
\end{align}
is the inverse of $E^{\mu\nu}_{ab} \doteq G_{ab}\gamma^{\mu\nu} + B_{ab}\epsilon^{\mu\nu}$, that is
\begin{equation}
    \begin{aligned}
         \widetilde{E}^{ab}_{\mu\nu} E^{\nu\rho}_{bc}
        & = (M^{ab}\gamma_{\mu\nu} + \pi^{ab}\epsilon_{\mu\nu})(G_{bc}\gamma^{\nu\rho} + B_{bc}\epsilon^{\nu\rho})\\
        & = (MG + \pi B)^a{}_c \delta_\mu^\rho + (\pi G + MB)^a{}_c \epsilon_\mu{}^\rho \,.
    \end{aligned}
\end{equation}
However, using \eqref{eq:RGandRB}, one has
\begin{equation}
    \begin{aligned}
        MG + \pi B &= MG + \pi M^{-1} MB \\& = (1-\pi M^{-1}\pi M^{-1})^{-1} + \pi M^{-1}(1-\pi M^{-1})(1-\pi M^{-1}\pi M^{-1})^{-1}  = 1\,.
    \end{aligned}
\end{equation}
Similarly, 
\begin{equation}
    \begin{aligned}
        \pi G + MB &= \pi M^{-1} MG + MB \\&= \pi M^{-1}(1-\pi M^{-1}\pi M^{-1})^{-1} - \pi M^{-1}(1-\pi M^{-1}\pi M^{-1})^{-1} = 0\,.
    \end{aligned}
\end{equation}
Whereby, one concludes as desired that
\begin{align}
    \widetilde{E}^{ab}_{\mu\nu} E^{\nu\rho}_{bc}
     = (MG + \pi B)^a{}_c \delta_\mu^\rho + (\pi G + MB)^a{}_c \epsilon_\mu{}^\rho 
    = \delta^a_c \delta_\mu^\rho\,.
\end{align}
\end{proof}
Whence one readily deduces the following corollary whereby the KS model can be written as the sum of the (would-be topological) Poisson $\sigma$-model \cite{SchallerStrobl} over the Poisson-Lie group $(\cG,\pi)$ and a Proca-like term that breaks its gauge symmetry and topological nature:
\begin{proposition}
    The KS sigma model \eqref{eq:Lagrangean} is obtained by integrating out $A$ from the following non-topological $M$-deformation of the Poisson-Lie $\sigma$-model on  $C^\infty(\Sigma,\cG)\times \Omega^1(\Sigma,\fg^*) $ $\ni (\th,A)$:
    \begin{equation}\label{eq:L-PLP}
        \mathcal{L}_{\text{dPL}}(\th, A) = A_a \wedge  \Delta^a \th - \frac12 \pi^{ab} (\th) A_a\wedge A_b - \frac12 M^{ab} A_a \wedge \star_{\gamma} A_b\,,
    \end{equation}
    where $\star_\gamma$ is the Hodge star operator associated to the worldsheet metric $\gamma_{\mu\nu}$.
\end{proposition}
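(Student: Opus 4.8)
The plan is to treat $A$ as an auxiliary field that enters $\mathcal{L}_{\text{dPL}}$ \eqref{eq:L-PLP} at most quadratically, so that ``integrating it out'' means solving its algebraic (non-differential) Euler--Lagrange equation and substituting back. First I would vary \eqref{eq:L-PLP} with respect to $A$. Using that $\pi^{ab}(\th)$ is skew and $M^{ab}$ symmetric in $(a,b)$, and that $u\wedge\star_\gamma v = v\wedge\star_\gamma u$ for $1$-forms, the variation collapses to the single $\fg$-valued $1$-form equation
\begin{equation}
\Delta^c\th = \pi^{cb}(\th)\,A_b + M^{cb}\,\star_\gamma A_b\,.
\end{equation}
The key observation is that on-shell of this equation the quadratic structure of $\mathcal{L}_{\text{dPL}}$ simplifies: since $\pi^{cb}A_b + M^{cb}\star_\gamma A_b = \Delta^c\th$, the two quadratic terms combine with the linear one to give $\mathcal{L}_{\text{dPL}}\big|_{A} = \tfrac12\,A_a\wedge\Delta^a\th$. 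The whole problem thus reduces to solving the linear equation for $A_a$ and inserting it into $\tfrac12 A_a\wedge\Delta^a\th$.

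To invert the equation I would exploit that the operator acting on $A$ factorizes as $\pi\otimes\mathrm{id} + M\otimes\star_\gamma$, the first factor acting on the $\fg^*$ index and the second on the worldsheet $1$-form index. In $2$D Lorentzian signature $\star_\gamma^2 = +1$ on $1$-forms, so the projectors $P_\pm := \tfrac12(\mathrm{id}\pm\star_\gamma)$ split any $1$-form into (anti-)self-dual parts and the operator becomes $(\pi+M)\otimes P_+ + (\pi-M)\otimes P_-$. Since $\pi+M = \widetilde E$ has inverse $E = G + B$ by the theorem of this appendix (with $G$ symmetric, $B$ skew), while $\pi - M = -\widetilde E^{\mathsf T}$ has inverse $-(G-B)$, the inverse operator is $E\otimes P_+ - E^{\mathsf T}\otimes P_-$. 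Writing this out yields the explicit solution $A_a = B_{ac}\Delta^c\th + G_{ac}\star_\gamma\Delta^c\th$. As a cross-check I would verify directly that this $A_a$ solves the equation: the coefficients of $\Delta^c\th$ and $\star_\gamma\Delta^c\th$ that must equal $\delta$ and $0$ reproduce precisely the algebraic identities $MG+\pi B = \mathrm{id}$ and $\pi G + MB = 0$ established in the proof of the theorem (equations \eqref{eq:RGandRB}).

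Substituting $A_a = B_{ac}\Delta^c\th + G_{ac}\star_\gamma\Delta^c\th$ into $\tfrac12 A_a\wedge\Delta^a\th$, and using $u\wedge v = u_\mu v_\nu\,\epsilon^{\mu\nu}\boldsymbol\epsilon$ together with $(\star_\gamma u)\wedge v \propto \langle u,v\rangle_\gamma\,\boldsymbol\epsilon$, the $B$-term produces the $\epsilon^{\mu\nu}$ piece and the $G$-term the $\gamma^{\mu\nu}$ piece, assembling into $\tfrac12\big(G_{ab}\gamma^{\mu\nu} + B_{ab}\epsilon^{\mu\nu}\big)\Delta^a_\mu\th\,\Delta^b_\nu\th\,\boldsymbol\epsilon$. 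By the definition \eqref{eq:EGgBe} of $E^{\mu\nu}_{ab}$ this is exactly the KS Lagrangian \eqref{eq:Lagrangean}. The computation is essentially routine Gaussian elimination; I expect the only genuinely delicate point to be the bookkeeping of orientation, Hodge, and antisymmetrization signs in $2$D Lorentzian signature (fixing conventions so that $\star_\gamma^2 = +1$ and that the overall sign comes out $+$), and using the symmetry of $G$ and skewness of $B$ consistently when relabelling the contracted $\fg$-indices.
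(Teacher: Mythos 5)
Your proof is correct and is precisely the argument the paper leaves implicit: the proposition is presented as ``readily deduced'' from the preceding theorem, whose content---the identities $MG+\pi B=\mathrm{id}$ and $\pi G+MB=0$, equivalently the inverse relation $E^{\mu\nu}_{ab}\tE^{bc}_{\nu\rho}=\delta^{\mu}_{\rho}\delta^{c}_{a}$---is exactly what you invoke to solve the algebraic equation of motion $\Delta^a\th=\pi^{ab}A_b+M^{ab}\star_\gamma A_b$, obtain $A_a=B_{ac}\Delta^c\th+G_{ac}\star_\gamma\Delta^c\th$, and substitute into the correctly derived on-shell value $\tfrac12 A_a\wedge\Delta^a\th$ to recover $\tfrac12 E^{\mu\nu}_{ab}\Delta^a_\mu\th\,\Delta^b_\nu\th$. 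Your splitting via the projectors $P_\pm=\tfrac12(\mathrm{id}\pm\star_\gamma)$---legitimate since $\star_\gamma^2=+1$ on $1$-forms in 2D Lorentzian signature---is merely an equivalent repackaging of that mixed-tensor inversion, and the sign-convention bookkeeping you flag is indeed the only delicate point.
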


To the best of our knowledge this Lagrangian has not appeared in the literature before.
It features some notable limits:

\begin{enumerate}
    \item If $M^{ab}=0$, then one obtains the topological Poisson $\sigma$-model over the Poisson-Lie group $(\cG,\pi)$; this model is equivalent to the gauged Wess-Zumino-Witten model \cite{AlekseevStrobl,GawedzkiFalceto,Calvo_2003,BONECHI2005173}; in this limit one cannot integrate out the $A$ field from $\mathcal{L}_\text{dPL}$ since $\pi$ is degenerate and thus $\tilde E^{ab}_{\mu\nu}=\pi^{ab}\epsilon_{\mu\nu}$ fails to be invertible.

\begin{tolerant}{3000}
    \item If $(\cG,\pi)$ is linear, i.e. $(\cG,\pi)$ is given by $(\fg,\delta)$, then upon integration by parts, the model reduces to a 2d non-topological deformation of the BF model, i.e. $\mathcal{L}_\text{dBF} =$ $ B^a \wedge F_a(A) -  \frac12 M^{ab} A_a \wedge \star_\gamma A_b$ \cite[Eq. 6.15]{Hoare_2022}. This model is not topological since the term $-\frac12 M^{ab}A_a\wedge \star A_b$ breaks the $\fg^*$-gauge symmetry $\delta_X(A, B)= (d_A X, \widetilde{\ad}{}^*_X B) $. The 2d (non-Abelian) $\fg^*$  Proca Lagrangian of mass tensor $M$ can be found by adding to the deformed BF Lagrangian the extra term $\star_\gamma \kappa_{ab}B^a  B^b$, for an $\ad$-invariant inner product $\kappa_{ab}$ over $\fg$, and integrating out the $B$-field \cite{SchallerStrobl}.
\end{tolerant}

    \item If $\cG^*\simeq \fg^*$ is Abelian and $M^{ab}$ is positive definite and $\Ad$-invariant, then $\pi^{ab}=0$ and, upon integration of the fields $A_a$, the above model reduces to the principal chiral $\sigma$-model with target $\cG$ and ``trace'' given by $G_{ab}=(M^{ab})^{-1}$, i.e. $\mathcal{L}_\text{chiral} = \frac12 G_{ab} \gamma^{\mu\nu}\Delta^a_\mu \th \Delta_\nu^b \th$.

    \item If $M^{ab}=0$ and $\pi^{ab}$ were symplectic---which is \emph{never} the case if $(\cG,\pi)$ is a Poisson-Lie group as in all the cases considered so far---then the Lagrangian \eqref{eq:L-PLP} would give back the A-model Lagrangian considered in \S\ref{subsec:deformedspinningtop}, equation \eqref{eq:Amodel}.

\end{enumerate}

Finally, to make the relationship with the KS model as presented in most of the literature explicit, we proceed as follows. First, let $z^\alpha$ be arbitrary coordinates on $\cG$ so that, with slight abuse of notation:
\begin{align}
    \Delta^a_\mu \th = \left(\frac{\pp \th}{\pp z^\alpha} \th^{-1}\right)^a \pp_\mu z^\alpha \doteq (\Delta_\alpha^a \th) \pp_\mu z^\alpha\,,
\end{align}
and the (KS) Lagrangian \eqref{eq:Lagrangean} can be written as
\begin{align}
    \mathcal{L} = \frac12 E^{\mu\nu}_{ab}(\Delta_\mu^a \th) (\Delta_\nu^b \th ) = \frac12 E^{\mu\nu}_{ab}(\Delta_\alpha^a \th) (\Delta_\beta^b \th ) \pp_\mu z^\alpha \pp_\nu z^\beta
    \doteq \frac12 E^{\mu\nu}_{\alpha\beta} \pp_\mu z^\alpha \pp_\nu z^\beta\,.\label{eq:KS-Lagr-z}
\end{align} 
We first specialize to $\gamma_{\mu\nu} = \eta_{\mu\nu}$, the 2d Minkowski metric, and express it in coordinates where $\eta_{\mu\nu}=\rm{diag}(-1,1)$; the corresponding Levi-Civita tensor $\epsilon_{\mu\nu}$ is thus given by $1 = \epsilon_{01} = - \epsilon^{01}$; and then we choose lightcone coordinates $x^\pm = x^1 \pm x^0$ on the worldsheet; whence the corresponding lightcone derivatives are $\pp_\pm = \frac12(\pp_1 \pm \pp_0)$. Putting everything together, the KS Lagrangian \eqref{eq:Lagrangean} finally reads
\begin{equation}
\begin{aligned}
    \mathcal{L}(z) 
    & = \frac12 E_{\alpha\beta}^{\mu\nu} \pp_\mu z^\alpha \pp_\nu z^\beta \\
    & = \frac12 E_{(\alpha\beta)}\eta^{\mu\nu} \pp_\mu z^\alpha \pp_\nu z^\beta + \frac12 E_{[\alpha\beta]} \epsilon^{\mu\nu} \pp_\mu z^\alpha \pp_\nu z^\beta \\
    & = \frac12 E_{\alpha\beta}(\pp_1 z^\alpha \pp_1 z^\beta - \pp_0 z^\alpha \pp_0 z^\beta  + \pp_1 z^\alpha \pp_0 z^\beta - \pp_0z^\alpha \pp_1 z^\beta)  \\
    & = 2 E_{\alpha\beta} \pp_-z^\alpha \pp_+z^\beta\,,    
\end{aligned}
\end{equation}
which is the form most commonly used in the literature, possibly up to a global normalization factor.

\section{Kalb-Ramond 2-gauge invariance of the (KS) string}\label{app:KS-eom}

We start by computing the equations of motion of the (KS) string.
Using arbitrary coordinates $z^\alpha$ on $\cG$, as at the end of Appendix \ref{app:constructing-E}, the (KS) string Lagrangian \eqref{eq:Lagrangean} can be written as in \eqref{eq:KS-Lagr-z}, namely
\begin{equation}
    \mathcal{L} = \frac12 E^{\mu\nu}_{\alpha\beta} z^\alpha_{,\mu}z^\beta_{,\nu} \,,
\end{equation}
where, for this appendix, we introduced the compact notation $z^\alpha_{,\mu}\doteq \pp_\mu z^\alpha$.

With respect to the worldsheet coordinates $z^\alpha$, the equations of motion of the (KS) string take the form 
\begin{equation}
   \mathcal{E}_\gamma = \frac12 ( \pp_\gamma E^{\mu\nu}_{\alpha\beta}  - \pp_\alpha E_{\gamma\beta}^{\mu\nu} - \pp_\beta E_{\alpha\gamma}^{\mu\nu} ) z^\alpha_{,\mu} z^\beta_{,\nu} - E^{\mu\nu}_{\gamma\alpha} z^\alpha_{,\mu\nu}\,.
\end{equation}
Decomposing $E$ into its symmetric and skew parts as per \eqref{eq:EGgBe} in the previous appendix, one finds:
\begin{equation}\label{eq:EOM}
\mathcal{E}_\gamma =- G_{\gamma\gamma'} \left(z^{\gamma'}_{,\mu\nu} - {\Gamma}^{\gamma'}_{\alpha\beta} z^\alpha_{,\mu} z^\beta_{,\nu} \right)\gamma^{\mu\nu}  + \frac{1}{2}{F}_{\gamma\alpha\beta} z^\alpha_{,\mu}z^\beta_{,\nu} \epsilon^{\mu\nu} \,,
\end{equation}
where $\Gamma$ is the target space Christoffel symbol for the metric $G_{ab}$, and $F$ is the the ``magnetic 3-form'' Kalb-Ramond flux,
\begin{align}
    {\Gamma}^{\gamma'}_{\alpha\beta} & \doteq \frac12 G^{\gamma'\gamma}( \pp_\gamma G_{\alpha\beta}  - \pp_\alpha G_{\gamma\beta} - \pp_\beta G_{\alpha\gamma})\,,\\
    {F}_{\gamma\alpha\beta} & \doteq  ( \pp_\gamma B_{\alpha\beta}  + \pp_\alpha B_{\beta\gamma} + \pp_\beta B_{\gamma\alpha}) = 3 \pp_{[\gamma} B_{\alpha\beta]}\,.
\end{align}
As well known, equation \eqref{eq:EOM} is the stringy generalization of the geodesic equation for a point particle moving in a target space coupled to a magnetic field $F$ (Lorentz force). Note that these equations are tensorial with respect to both the worldsheet and the target space, and invariant with respect to the ``2-gauge'' transformation of the background Kalb-Ramond potential 
\begin{align}
    B_{\alpha\beta} \mapsto B_{\alpha\beta} + 2\pp_{[\alpha}\Xi_{\beta]}\,.
\end{align}

Under such a transformation, the presymplectic form (pSF) current $\Omega^\mu$ changes  as follows. Denoting by $x^\mu$ a set of coordinates on the worldsheet $\Sigma$, the presymplectic potential (pSP) current is
\begin{equation}
    \boldsymbol{\theta} \doteq \star_\gamma (\gamma_{\mu\nu} \theta^\mu d x^\nu) = \epsilon_{\mu\nu} \theta^\mu dx^\nu\,,
\end{equation}
where $\star_\gamma$ is the Hodge star operator associated to the worldsheet metric $\gamma_{\mu\nu}$. Whence, explicitly,
\begin{equation}
    \boldsymbol{\theta}  = - \bbd z^\alpha  ( G_{\alpha\beta} z^\beta_{,\mu} \epsilon^{\mu}{}_\nu + B_{\alpha\beta} z^\beta_{,\nu} )dx^\nu\,.
\end{equation}
Then, if $B_{\alpha\beta} \mapsto B_{\alpha\beta} + 2\pp_{[\alpha}\Xi_{\beta]}$, 
\begin{equation}
    \boldsymbol{\theta} \mapsto \boldsymbol{\theta} + d ( \Xi_\alpha \bbd z^\alpha) + \bbd( \Xi_\alpha d z^\alpha)\,,
\end{equation}
where we used that $d z^\alpha = z^\alpha_{,\mu} d x^\mu$. Thus, the pSC is changed by a boundary term:
\begin{equation}
    \boldsymbol{\Omega} \mapsto \boldsymbol{\Omega} - d ( \bbd \Xi_\alpha \wedge \bbd z^\alpha) \,.
\end{equation}

The closed KS string is therefore left completely unaffected by these ``gauge'' transformations of the worldsheet tensor $B_{\alpha\beta}$, since neither its equations of motion not its symplectic structure are affected. Presently, the significance of these Kalb-Ramond 2-gauge transformations for the KS string, e.g. in terms of $M^{\alpha\beta}$ and $\pi^{\alpha\beta}$, remains  unclear to us.

\end{document}